\documentclass[11pt]{article}
\usepackage{vitercik}
\usepackage{subcaption}
\usepackage{url}
\usepackage{todonotes}
\usepackage{multirow}
\usepackage{tabularx}

\newcommand{\hyp}{\mathcal{H}}
\newcolumntype{L}[1]{>{\hsize=#1\hsize\raggedright\arraybackslash}X}%

\begin{document}
\title{Estimating Approximate Incentive Compatibility}
\author{
Maria-Florina Balcan \\ \small Carnegie Mellon University \\ \small \texttt{ninamf@cs.cmu.edu}
\and 
Tuomas Sandholm \\ \small Carnegie Mellon University \\
\small Optimized Markets, Inc.\\
\small Strategic Machine, Inc.\\
\small Strategy Robot, Inc.\\
 \small \texttt{sandholm@cs.cmu.edu}
\and 
Ellen Vitercik \\ \small Stanford University \\ \small \texttt{vitercik@stanford.edu}}
\date{\today}

\maketitle

\begin{abstract}
In practice, most mechanisms for selling, buying, matching, voting, and so on are not incentive compatible. We present techniques for estimating how far a mechanism is from incentive compatible. Given samples from the agents' type distribution, we show how to estimate the extent to which an agent can improve his utility by misreporting his type.  We do so by first measuring the maximum utility an agent can gain by misreporting his type on average over the samples, assuming his true and reported types are from a finite subset---which our technique constructs---of the type space. The challenge is that by measuring utility gains over a finite subset of the type space, we might miss type pairs $\theta$ and $\hat{\theta}$ where an agent with type $\theta$ can greatly improve his utility by reporting type $\hat{\theta}$. Indeed, our primary technical contribution is proving that the maximum utility gain over this finite subset nearly matches the maximum utility gain overall, despite the volatility of the utility functions we study. We apply our tools to the single-item and combinatorial first-price auctions, generalized second-price auction, discriminatory auction, uniform-price auction, and second-price auction with spiteful bidders.
\end{abstract}

\section{Introduction}
Incentive compatibility~\citep{Hurwicz72:Informationally} is a fundamental concept in mechanism design. Under an incentive compatible mechanism, it is in every agent's best interest to report their type truthfully.
Nonetheless, practitioners have long employed mechanisms that are not incentive compatible, also called \emph{manipulable} mechanisms. This is the case in many settings for selling, buying, matching
(such as school choice),
voting, and so on. For example, most real-world auctions are implemented using the first-price mechanism. In multi-unit sales settings, the U.S. Treasury has used discriminatory auctions, a variant of the first-price auction, to sell treasury bills since 1929~\citep{Krishna02:Auction}. Similarly, electricity generators in the U.K. use discriminatory auctions to sell their output~\citep{Krishna02:Auction}. Sponsored search auctions are typically implemented using variants of the generalized second-price auction~\citep{Edelman07:Internet, Varian07:Position}. In the past year, many major display ad exchanges including AppNexus, Rubicon, and Index Exchange 
have transitioned to first-price auctions, driven by ad buyers who believe it offers a higher degree of transparency~\citep{Parkin18:Year, Harada18:Ad}. Finally, nearly all fielded combinatorial auctions are manipulable. Essentially all combinatorial sourcing auctions are implemented using the first-price mechanism~\citep{Sandholm13:Very-Large-Scale}. Combinatorial spectrum auctions are conducted using a variety of manipulable mechanisms. Even the ``incentive auction'' used to source spectrum licenses back from low-value broadcasters---which has sometimes been hailed as obviously incentive compatible---is manipulable once one takes into account the fact that many owners own multiple broadcasting stations, or the fact that stations do not only have the option to keep or relinquish their license, but also the option to move to (two) less desirable spectrum ranges~\citep{Nguyen15:Multi-option}.

A variety of reasons have been suggested to help explain why manipulable mechanisms are used in practice. To name a few,
\begin{enumerate} 
	\item Oftentimes, the rules are easier to explain, as exemplified by the exceptionally simple first-price auction.
	\item Incentive compatibility can cease to hold when determining one's own valuation is costly (for example, due to computation or information gathering effort), even for the classic Vickrey auction~\citep{Sandholm00:Issues}. This is also true of the Vickrey-Clarke-Groves (VCG) mechanism, which requires bidders to submit bids for every bundle, a task that generally requires a prohibitive amount of valuation computation or information acquisition~\citep{Sand93,Parkes99:Optimal,Conen01:Preference,Sandholm06:Preference}.
	\item Single-shot incentive compatible mechanisms are generally not incentive compatible when the bid-taker uses bids from one auction to adjust the parameters of later auctions~\citep{Acquisti:Varian:05,Sandholm13:Very-Large-Scale,Amin13:Learning}.
	\item Incentive compatible mechanisms may leak the agents' sensitive private information~\citep{Rothkopf90:Why}. 
	\item Incentive compatibility typically ceases to hold if agents are not risk neutral (i.e., the utility functions are not quasi-linear). 
\end{enumerate}

Due in part to the ubiquity of manipulable mechanisms, a growing body of additional research has explored mechanisms that are not incentive compatible~\citep{Kothari03:Approximately, Archer04:Approximate,Conitzer07:Incremental,Dekel10:Incentive, Lubin12:Approximate, Mennle14:Axiomatic,Dutting2015:Payment, Azevedo18:Strategy, Golowich18:Deep, Duetting19:Optimal, Feng18:Deep}.
A popular and widely-studied relaxation of incentive compatibility is \emph{$\gamma$-incentive compatibility}~\citep{Kothari03:Approximately, Archer04:Approximate, Dekel10:Incentive, Lubin12:Approximate, Mennle14:Axiomatic,Dutting2015:Payment, Azevedo18:Strategy}, which requires that no agent can improve his utility by more than $\gamma$ when he misreports his type.

\subsection{Our contributions}

Much of the literature on $\gamma$-incentive compatibility rests
on the strong assumption that the agents' type distribution is known.
In reality, this information is rarely available.
We relax this assumption and instead assume we only have samples from the distribution~\citep{Likhodedov04:Boosting, Likhodedov05:Approximating,Sandholm15:Automated}.
We present techniques with provable guarantees that the mechanism designer can use to estimate how far a mechanism is from incentive compatible. We analyze the \emph{ex-interim}\footnote{We do not study \emph{ex-post} or \emph{dominant-strategy} approximate incentive compatibility because they are worst-case, distribution-independent notions. Therefore, we cannot hope to measure the \emph{ex-post} or dominant-strategy approximation factors using samples from the agents' type distribution.}
setting: we bound the amount any agent can improve his utility by misreporting his type, in expectation over the other agents' types, no matter his true type. We adopt the common assumption that each agent's type is drawn independently from the other agents' types.

 Our estimate is simple: it measures the maximum utility an agent can gain by misreporting his type on average over the samples, whenever his true and reported types are from a finite subset of the type space. We bound the difference between our incentive compatibility estimate and the true incentive compatibility approximation factor $\gamma$.
 
Accurately estimating the approximation factor $\gamma$ provides the mechanism designer with a data-dependent metric for choosing among manipulable mechanisms, which can be used in conjunction with other metrics such as expected revenue.
 When the distribution over agents' types is unknown, as we assume it is, one cannot use a distribution-dependent, analytic approach to deriving the approximation factor $\gamma$. However, a distribution-independent bound on the approximation factor that holds even for a worst-case distribution over types may be uninformative when the underlying distribution is not worst-case. The estimate of the distribution's incentive compatibility approximation factor that we provide allows a mechanism designer to make a more informed decision when selecting a manipulable mechanism to field. As a concrete example, machine learning is increasingly being used to learn high-revenue mechanisms from within classes of manipulable mechanisms (as in the work by~\citet{Duetting19:Optimal} in the multi-agent, multi-item setting\footnote{In contrast, in the single-item setting,~\citet{Duetting19:Optimal} use a neural network architecture that is incentive compatible by design.}). Our guarantees can be used to better understand how susceptible the resulting mechanism is to strategic manipulation.

We apply our estimation technique to a variety of auction classes. We begin with the first-price auction, in both single-item and combinatorial settings. Our guarantees can be used by display ad exchanges, for instance, to measure the extent to which incentive compatibility will be compromised if the exchange transitions to using first-price auctions~\citep{Parkin18:Year, Harada18:Ad}. In the single-item setting, we prove that the difference between our estimate and the true incentive compatibility approximation factor is $\tilde O \left(\left(n + \kappa^{-1}\right) / \sqrt{N}\right)$, where $n$ is the number of bidders, $N$ is the number of samples, and $[0, \kappa]$ contains the range of the density functions defining the agents' type distribution. We prove the same bound for the second-price auction with \emph{spiteful bidders}~\citep{Brandt07:Spiteful, Morgan03:Spite, Sharma10:Asymmetric, Tang12:Optimal}, where each bidder's utility not only increases when his surplus increases but also decreases when the other bidders' surpluses increase.

In a similar direction, we analyze the class of generalized second-price auctions~\citep{Edelman07:Internet}, where $m$ sponsored search slots are for sale. The mechanism designer assigns a real-valued weight per bidder, collects a bid per bidder indicating their value per click, and allocates the slots in order of the bidders' weighted bids. In this setting, we prove that the difference between our incentive compatibility estimate and the true incentive compatibility approximation bound is $\tilde O \left(\left(n^{3/2} + \kappa^{-1}\right) / \sqrt{N}\right)$.

We also analyze multi-parameter mechanisms beyond the first-price combinatorial auction, namely, the uniform-price and discriminatory auctions, which are used extensively in markets around the world~\citep{Krishna02:Auction}. In both, the auctioneer has $m$ identical units of a single good to sell. Each bidder submits $m$ bids indicating their value for each additional unit of the good. The number of goods the auctioneer allocates to each bidder equals the number of bids that bidder has in the top $m$ bids. Under a discriminatory auction, each agent pays the amount she bid for the number of units she is allocated. Under a uniform-price auction, the agents pay the market-clearing price, meaning the total amount demanded equals the total amount supplied. In both cases, we prove that the difference between our incentive compatibility estimate and the true incentive compatibility approximation bound is $\tilde O \left(\left(nm^2 + \kappa^{-1}\right) / \sqrt{N}\right)$.

A strength of our estimation techniques is that they are application-agnostic.
For example, they can be used as a tool in \emph{incremental mechanism design}~\citep{Conitzer07:Incremental}, a subfield of \emph{automated mechanism design}~\citep{Conitzer02:Mechanism, Sandholm03:Automated}, where the mechanism designer gradually adds incentive compatibility constraints to her optimization problem until she has met a desired incentive compatibility guarantee.

\paragraph{Key challenges.} To prove our guarantees, we must estimate the value $\gamma$ defined such that no agent can misreport his type in order to improve his expected utility by more than $\gamma$, no matter his true type. We propose estimating $\gamma$ by measuring the extent to which an agent can improve his utility by misreporting his type on average over the samples, whenever his true and reported types are from a finite subset $\cG$ of the type space. We denote this estimate as $\hat{\gamma}$.
The challenge is that by searching over a subset of the type space, we might miss pairs of types $\theta$ and $\hat{\theta}$ where an agent with type $\theta$ can greatly improve his expected utility by misreporting his type as $\hat{\theta}$. Indeed, utility functions are often volatile in mechanism design settings.  For example, under the first- and second-price auctions, nudging an agent's bid from below the other agents' largest bid to above will change the allocation, causing a jump in utility. Thus, there are two questions we must address: which finite subset should we search over and how do we relate our estimate $\hat{\gamma}$ to the true approximation factor $\gamma$?

We provide two approaches to constructing the cover $\cG$. The first is to run a greedy procedure based off a classic algorithm from theoretical machine learning. This approach is extremely versatile: it provides strong guarantees no matter the setting. However, it may be difficult to implement. 

Meanwhile, implementing our second approach is straightforward: the cover is simply a uniform grid over the type space (assuming the type space equals $[0,1]^m$ for some integer $m$). The efficacy of this approach depends on a ``niceness'' property that holds under mild assumptions. To analyze this second approach, we must understand how the edge-length of the grid effects our error bound relating the estimate $\hat{\gamma}$ to the true approximation factor $\gamma$. To do so, we rely on the notion of \emph{dispersion}~\citep{Balcan18:Dispersion}. Roughly speaking, a set of piecewise Lipschitz functions is $(w,k)$-dispersed if every ball of radius $w$ in the domain contains at most $k$ of the functions' discontinuities. Given a set of $N$ samples from the distribution over agents' types, we analyze the dispersion of function sequences from two related families. In both families, the function sequences are of length $N$: each function is defined by one of the $N$ samples. In the first family, each sequence is defined by a true type $\vec{\theta}$ and measures a fixed agent's utility as a function of her reported type $\hat{\vec{\theta}}$, when the other agents' bids are defined by the samples. In the second family, each sequence is defined by a reported type $\vec{\hat{\theta}}$ and measures the agent's utility as a function of her true type $\vec{\theta}$.
We show that if all function sequences from both classes are $(w,k)$-dispersed, then we can use a grid with edge-length $w$ to discretize the agents' type space.

We show that for a wide range of mechanism classes, dispersion holds under mild assumptions. As we describe more in Section~\ref{sec:grid}, this requires us to prove that with high probability, each function sequence from several infinite families of sequences is dispersed. This facet of our analysis is notably different from prior research by \citet{Balcan18:Dispersion}: in their applications, it is enough to show that with high probability, a single, finite sequence of functions is dispersed. Our proofs thus necessitate that we carefully examine the structure of the utility functions that we analyze. In particular, we prove that across all of the infinitely-many sequences, the functions' discontinuities are invariant. As a result, it is enough to prove dispersion for single generic sequence in order to guarantee dispersion for all of the sequences.

Finally, we prove that for both choices of the cover $\cG$, so long as the intrinsic complexities of the agents' utility functions are not too large (as measured by the learning-theoretic notion of \emph{pseudo-dimension}~\citep{Pollard84:Convergence}), then our estimate $\hat{\gamma}$ quickly converges to the true approximation factor $\gamma$ as the number of samples grows.

\subsection{Additional related research}\label{sec:related}

\paragraph{Strategy-proofness in the large.}
\citet{Azevedo18:Strategy} propose a variation on approximate incentive compatibility called \emph{strategy-proofness
in the large (SP-L)}. SP-L requires that it is approximately optimal for agents to report their types truthfully
in sufficiently large markets. As in our paper, SP-L is a condition on \emph{ex-interim} $\gamma$-incentive compatibility. The authors argue that SP-L approximates, in large markets, attractive properties of a mechanism such as strategic simplicity and robustness. They categorize a number of mechanisms as either SP-L or not. For example, they show that the discriminatory auction is manipulable in the large whereas the uniform-price auction is SP-L. Measuring a mechanism's SP-L approximation factor requires knowledge of the distribution over agents' types, whereas we only require sample access to this distribution. Moreover, we do not make any large-market assumptions: our guarantees hold regardless of the number of agents.

\paragraph{Comparing mechanisms by their vulnerability to manipulation.} \citet{Pathak13:School} analyze \emph{ex-post} incentive compatibility without any connection to approximate incentive compatibility. They say that one mechanism $M$ is at least as manipulable as another $M'$ if every type profile that is vulnerable to manipulation under $M$ is also vulnerable to manipulation under $M'$. They apply their formalism in the context of school assignment mechanisms, the uniform-price auction, the discriminatory auction, and several keyword auctions. We do not study \emph{ex-post} approximate incentive compatibility because it is a worst-case, distribution-independent notion. Therefore, we cannot hope to measure an \emph{ex-post} approximation factor using samples from the agents' type distribution. Rather, we are concerned with providing data-dependent bounds on the \emph{ex-interim} and \emph{ex-ante} approximation factors. Another major difference is that our work provides quantitative results on manipulability while theirs provides boolean comparisons as to the relative manipulability of mechanisms. Finally, our measure applies to all mechanisms while theirs cannot rank all mechanisms because in many settings, pairs of mechanisms are incomparable according to their boolean measure.

\paragraph{Mechanism design via deep learning.} In mechanism design via deep learning~\citep{Feng18:Deep, Duetting19:Optimal, Golowich18:Deep}, the learning algorithm receives samples from the distribution over agents' types. The resulting allocation and payment functions are characterized by neural networks, and thus the corresponding mechanism may not be incentive compatible. In an attempt to make these mechanisms nearly incentive compatible, the authors of these works add constraints to the deep learning optimization problem enforcing that the resulting mechanism be incentive compatible over a set of buyer values sampled from the underlying, unknown distribution.

In research that was conducted simultaneously, \citet{Duetting19:Optimal} provide guarantees on the extent to which the resulting mechanism---characterized by a neural network---is approximately incentive compatible, in the sense of \emph{expected ex-post incentive compatibility}: in expectation over all agents' types, what is the maximum amount that an agent can improve her utility by misreporting her type? This is a different notion of approximate incentive compatibility than those we study. In short, we bound the maximum expected change in utility an agent can induce by misreporting her type (Definition~\ref{def:interim}), no matter her true type, where the expectation is over the other agents' types. In contrast, \citet{Duetting19:Optimal} bound the expected maximum change in utility an agent can induce, where the expectation is over all agents' types. As a result, our bounds are complementary, but not overlapping. Moreover, the sets of mechanisms we analyze and \citet{Duetting19:Optimal} analyze are disjoint: \citet{Duetting19:Optimal} provide bounds for mechanisms represented as neural networks, whereas we instantiate our guarantees for single-item and combinatorial first-price auctions, generalized second-price auction, discriminatory auction, uniform-price auction, and second-price auction with spiteful bidders.

\paragraph{Sample complexity of revenue maximization.} A long line of research has studied revenue maximization via machine learning from a theoretical perspective (including, for example, work by~\citet{Elkind07:Designing,
Balcan08:Reducing},
and \citet{Cole14:Sample}). The mechanism designer receives samples from the type distribution which she uses to find a mechanism that is, ideally, nearly optimal in expectation. That research has only studied incentive compatible mechanism classes.
Moreover, in this paper, it is not enough to provide generalization guarantees; we must both compute our estimate of the incentive compatibility approximation factor and bound our estimate's error. This type of error has nothing to do with revenue functions, but rather utility functions. These factors in conjunction mean that our research differs significantly from prior research on generalization guarantees in mechanism design.

In a related direction, \citet{Chawla14:Mechanism, Chawla16:A, Chawla17:Mechanism} study counterfactual revenue estimation. Given two
auctions, they provide techniques for estimating one of the auction's equilibrium revenue from the other auction's equilibrium bids. They also study social welfare in this context. Thus, their research is tied to selling mechanisms, whereas we study more general mechanism design problems from an application-agnostic perspective.

\paragraph{Incentive compatibility from a buyer's perspective.}
\citet{Lahaie18:Testing} also provide tools for estimating approximate incentive compatibility, but from the buyer's perspective rather than the mechanism designer's perspective. As such, the type of information available to their estimation tools versus ours is different. Moreover, they focus on ad auctions, whereas we study mechanism design in general and apply our techniques to a wide range of settings and mechanisms.

\section{Preliminaries and notation}
There are $n$ agents who each have a \emph{type}. We denote agent $i$'s type as $\theta_i$, which is an element of a (potentially infinite) set $\Theta_i$. A mechanism takes as input the agents' \emph{reported types}, which it uses to choose an outcome. We denote agent $i$'s reported type as $\hat{\theta}_i \in \Theta_i$. We denote all $n$ agents' types as $\vec{\theta} = \left(\theta_1, \dots, \theta_n\right)$ and reported types as $\hat{\vec{\theta}} = \left(\hat{\theta}_1, \dots, \hat{\theta}_n\right)$. For $i \in [n]$, we use the standard notation $\vec{\theta}_{-i} \in \times_{j \not= i}\Theta_j$ to denote all $n-1$ agents' types except agent $i$. Using this notation, we denote the type profile $\vec{\theta}$ representing all $n$ agents' types as $\vec{\theta} = \left(\theta_i, \vec{\theta}_{-i}\right)$. Similarly, $\hat{\vec{\theta}} = \left(\hat{\theta}_i, \hat{\vec{\theta}}_{-i}\right)$. We assume there is a distribution $\dist$ over all $n$ agents' types, and thus the support of $\dist$ is contained in $\times_{i = 1}^n \Theta_i$. We use $\dist|_{\theta_i}$ to denote the conditional distribution given $\theta_i$, so the support of $\dist|_{\theta_i}$ is contained in $\times_{j \not= i}\Theta_j$.
We assume that we can draw samples $\vec{\theta}^{(1)}, \vec{\theta}^{(2)}, \dots$ independently from $\dist$. This is the same assumption made in a long line of work on mechanism design via machine learning (including, for example, work by~\citet{Elkind07:Designing,
Balcan08:Reducing}, and \citet{Cole14:Sample}).

Given a mechanism $M$ and agent $i \in [n]$, we use the notation $u_{i,M}\left(\vec{\theta}, \hat{\vec{\theta}}\right)$ to denote the utility agent $i$ receives when the agents have types $\vec{\theta}$ and reported types $\hat{\vec{\theta}}$. We assume\footnote{Our estimation error only increases by a multiplicative factor of $H$ if the range of the utility functions is $[-H,H]$ instead of $[-1,1]$.} it maps to $[-1,1]$. When $\vec{\theta}_{-i} = \hat{\vec{\theta}}_{-i}$, we use  the simplified notation $u_{i, M}\left(\theta_i, \hat{\theta}_i, \vec{\theta}_{-i}\right) := u_{i, M}\left(\left(\theta_i, \vec{\theta}_{-i}\right), \left(\hat{\theta}_i, \vec{\theta}_{-i}\right)\right)$.

At a high level, a mechanism is incentive compatible if no agent can ever increase her utility by misreporting her type. A mechanism is \emph{$\gamma$-incentive compatible} if each agent can increase her utility by an additive factor of at most $\gamma$ by misreporting her type~\citep{Kothari03:Approximately, Archer04:Approximate,Conitzer07:Incremental,Dekel10:Incentive, Lubin12:Approximate, Mennle14:Axiomatic,Dutting2015:Payment, Azevedo18:Strategy, Feng18:Deep, Golowich18:Deep, Duetting19:Optimal}. In the main body of this paper, we concentrate on \emph{ex-interim} approximate incentive compatibility~\citep{Azevedo18:Strategy,Lubin12:Approximate}.
\begin{definition}\label{def:interim}
A mechanism $M$ is \emph{\emph{ex-interim} $\gamma$-incentive compatible} if for each $i\in [n]$ and all $\theta_i, \hat{\theta}_i \in \Theta_i$, agent $i$ with type $\theta_i$ can increase her expected utility by an additive factor of at most $\gamma$ by reporting her type as $\hat{\theta}_i$, so long as the other agents report truthfully. In other words, $\E_{\vec{\theta}_{-i} \sim \dist|_{\theta_i}}\left[u_{i, M}\left(\theta_i, \theta_i, \vec{\theta}_{-i}\right)\right] \geq E_{\vec{\theta}_{-i} \sim \dist|_{\theta_i}}\left[u_{i, M}\left(\theta_i, \hat{\theta}_i, \vec{\theta}_{-i}\right)\right] - \gamma.$
\end{definition}

We do not study \emph{ex-post}\footnote{A mechanism $M$ is \emph{ex-post incentive compatible} if for each $i\in [n]$ and all $\theta_i, \hat{\theta}_i \in \Theta_i$, and all $\vec{\theta}_{-i} \in \times_{j \not=i} \Theta_j$, agent $i$ with type $\theta_i$ cannot increase her utility by reporting her type as $\hat{\theta}_i$, so long as the other agents report truthfully. In other words, $u_{i, M}\left(\theta_i, \theta_i, \vec{\theta}_{-i}\right) \geq u_{i, M}\left(\theta_i, \hat{\theta}_i, \vec{\theta}_{-i}\right).$} or \emph{dominant-strategy}\footnote{A mechanism $M$ is \emph{dominant-strategy incentive compatible} if for each $i\in [n]$ and all $\theta_i, \hat{\theta}_i \in \Theta_i$, agent $i$ with type $\theta_i$ cannot increase her utility by reporting her type as $\hat{\theta}_i$, no matter which types the other agents report. In other words, for all $\vec{\theta}_{-i},\hat{\vec{\theta}}_{-i} \in \times_{j \not=i} \Theta_j$, $u_{i, M}\left(\left(\theta_i, \vec{\theta}_{-i}\right), \left(\theta_i, \hat{\vec{\theta}}_{-i}\right)\right) \geq u_{i, M}\left(\left(\theta_i, \vec{\theta}_{-i}\right), \left(\hat{\theta}_i, \hat{\vec{\theta}}_{-i}\right)\right).$} approximate incentive compatibility because they are worst-case, distribution-independent notions. Therefore, we cannot hope to measure the \emph{ex-post} or dominant-strategy approximation factors using samples from the agents' type distribution.

\section{Estimating approximate ex-interim incentive compatibility}\label{sec:interim}
In this section, we show how to estimate the \emph{ex-interim} incentive compatibility approximation guarantee using data. We assume there is an unknown
distribution $\dist$ over agents' types, and we operate under the common assumption~\citep{Lubin12:Approximate,Azevedo18:Strategy, Cai17:Learning, Yao14:n, Cai16:Duality, Goldner16:Prior, Babaioff17:Menu,  Hart12:Approximate} that the agents' types are independently distributed. In other words, for each agent $i \in [n]$, there exists a distribution $\phi_i$ over their type space $\Theta_i$ such that $\dist = \times_{i = 1}^n \phi_i$. For each agent $i \in [n]$, we receive a set $\sample_{-i}$ of samples independently drawn from $\dist_{-i} = \times_{j \not= i} \phi_j$. For each mechanism $M \in \mclass$, we show how to use the samples to estimate a value $\hat{\gamma}_M$ such that:
\smallskip
\begin{quote}\emph{With probability $1-\delta$ over the draw of the $n$ sets of samples $\sample_{-1}, \dots, \sample_{-n}$, for any agent $i \in [n]$ and all pairs $\theta_i,\hat{\theta}_i \in \Theta_i$, $\E_{\vec{\theta}_{-i} \sim \dist_{-i}}\left[u_{i, M}\left(\theta_i, \hat{\theta}_i, \vec{\theta}_{-i}\right) - u_{i, M}\left(\theta_i, \theta_i, \vec{\theta}_{-i}\right)
\right] \leq \hat{\gamma}_M.$}\end{quote}

To this end, one simple approach, informally, is to estimate $\hat{\gamma}_M$ by measuring the extent to which any agent $i$ with any type $\theta_i$ can improve his utility by misreporting his type, averaged over all profiles in $\sample_{-i}$. In other words, we can estimate $\hat{\gamma}_M$ by solving the following optimization problem:
\begin{equation}\max_{\theta_i, \hat{\theta}_i \in \Theta_i}\left\{ \frac{1}{N} \sum_{j = 1}^N u_{i,M}\left(\theta_i, \hat{\theta}_i, \vec{\theta}_{-i}^{(j)}\right) -u_{i,M}\left(\theta_i, \theta_i, \vec{\theta}_{-i}^{(j)}\right)\right\}.\label{eq:nonconvex}\end{equation} Unfortunately, in full generality, there might not be a finite-time procedure to solve this optimization problem, so in Section~\ref{sec:cover}, we propose more nuanced approaches based on optimizing over finite subsets of $\Theta_i \times \Theta_i$.
As a warm-up and a building block for our main theorems in that section, we prove that with probability $1-\delta$, for all mechanisms $M \in \cM$, \begin{align}&\max_{\theta_i, \hat{\theta}_i \in \Theta_i}\left\{\E_{\vec{\theta}_{-i} \sim \dist_{-i}}\left[ u_{i,M}\left(\theta_i, \hat{\theta}_i, \vec{\theta}_{-i}\right) - u_{i,M}\left(\theta_i, \theta_i, \vec{\theta}_{-i}\right)\right]\right\}\nonumber\\
 \leq &\max_{\theta_i, \hat{\theta}_i \in \Theta_i}\left\{\frac{1}{N} \sum_{j = 1}^N u_{i,M}\left(\theta_i, \hat{\theta}_i, \vec{\theta}_{-i}^{(j)}\right) -u_{i,M}\left(\theta_i, \theta_i, \vec{\theta}_{-i}^{(j)}\right)\right\} + \epsilon_{\mclass}(N, \delta),\label{eq:epsilon}\end{align} where $\epsilon_{\mclass}(N, \delta)$ is an error term that converges to zero as the number $N$ of samples grows. Its convergence rate depends on the \emph{intrinsic complexity} of the utility functions corresponding to the mechanisms in $\cM$, which we formalize using the learning-theoretic tool \emph{pseudo-dimension.} We define pseudo-dimension below for an abstract class $\cA$ of functions mapping a domain $\cX$ to $[-1,1]$.

\begin{definition}[Pseudo-dimension \citep{Pollard84:Convergence}]
Let $\sample = \left\{x^{\left(1\right)}, \dots, x^{\left(N\right)}\right\} \subseteq \cX$ be a set of elements from the domain of the functions in $\cA$ and let $z^{\left(1\right)}, \dots, z^{\left(N\right)} \in \R$ be a set of \emph{targets}. We say that $z^{\left(1\right)}, \dots, z^{\left(N\right)}$ \emph{witness} the shattering of $\sample$ by $\cA$ if for all subsets $T \subseteq \sample$, there exists some function $a_T \in \cA$ such that for all $x^{\left(j\right)} \in T$, $a_T\left(x^{\left(j\right)}\right) \leq z^{\left(j\right)}$ and for all $x^{\left(j\right)} \not\in T$, $a_T\left(x^{\left(j\right)}\right) > z^{\left(j\right)}$. If there exists some vector $\vec{z} \in \R^N$ that witnesses the shattering of $\sample$ by $\cA$, then we say that $\sample$ is \emph{shatterable} by $\cA$. Finally, the pseudo-dimension of $\cA$, denoted $\pdim\left(\cA\right)$, is the size of the largest set that is shatterable by $\cA$.
\end{definition}
Theorem~\ref{thm:pdim} provides an abstract generalization bound in terms of pseudo-dimension.
\begin{theorem}[\citet{Pollard84:Convergence}]\label{thm:pdim}
Let $\Phi$ be a distribution over $\cX$. With probability $1-\delta$ over $x^{(1)}, \dots, x^{(N)} \sim \Phi$, for all $a \in \cA$, $\left|\frac{1}{N} \sum_{j = 1}^N a\left(x^{(j)}\right) - \E_{x \sim \Phi}[a(x)]\right| \leq 2\sqrt{\frac{2 d}{N} \ln \frac{eN}{d}} + 2\sqrt{\frac{1}{2N}\ln\frac{2}{\delta}},$ where $d = \pdim(\cA)$.
\end{theorem}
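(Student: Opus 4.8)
The plan is to control the uniform deviation $\Psi(x^{(1)}, \dots, x^{(N)}) := \sup_{a \in \cA} \left| \frac{1}{N} \sum_{j=1}^N a(x^{(j)}) - \E_{x \sim \Phi}[a(x)] \right|$ by decomposing it as its expectation plus the fluctuation $\Psi - \E[\Psi]$ around that expectation. The fluctuation will be bounded by a bounded-differences concentration inequality and accounts for the confidence term $\sqrt{\frac{1}{2N} \ln \frac{2}{\delta}}$; the expectation will be bounded by symmetrization together with a purely combinatorial estimate of the richness of $\cA$ in terms of $d = \pdim(\cA)$, and accounts for the complexity term $\sqrt{\frac{2d}{N} \ln \frac{eN}{d}}$.

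\textbf{Concentration around the mean.} First I would note that because every $a \in \cA$ maps into $[-1,1]$, replacing a single sample $x^{(j)}$ changes $\frac{1}{N} \sum_k a(x^{(k)})$ by at most $\frac{2}{N}$ uniformly over $a \in \cA$, and hence changes $\Psi$ by at most $\frac{2}{N}$ as well, since a supremum of functions each of which moves by at most $c$ itself moves by at most $c$. McDiarmid's bounded-differences inequality then gives that with probability at least $1 - \delta$, $\Psi \le \E[\Psi] + \sqrt{\frac{1}{2N} \ln \frac{2}{\delta}}$, modulo the precise bookkeeping of the increment constants carried out in Pollard's original argument to obtain exactly this coefficient. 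It therefore suffices to prove that $\E[\Psi] \le \sqrt{\frac{2d}{N} \ln \frac{eN}{d}}$.

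\textbf{Bounding the expectation.} To bound $\E[\Psi]$ I would symmetrize. Introducing an independent ``ghost'' sample $y^{(1)}, \dots, y^{(N)} \sim \Phi$ and i.i.d.\ Rademacher signs $\sigma_1, \dots, \sigma_N$, the standard argument---Jensen's inequality to bring the ghost sample inside the supremum, then using that each $a(x^{(j)}) - a(y^{(j)})$ is symmetric to insert the signs---bounds $\E[\Psi]$ by a universal constant times the Rademacher complexity $\E_{x, \sigma}\!\left[\sup_{a \in \cA} \left| \frac{1}{N} \sum_{j=1}^N \sigma_j a(x^{(j)}) \right| \right]$. Conditioning on the sample, this supremum ranges effectively over the finite set $\cA|_{x^{(1)}, \dots, x^{(N)}} \subseteq [-1,1]^N$ of distinct value-vectors. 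The combinatorial heart of the argument is that this set is small: the family of subgraph regions $\{(x,z) \in \cX \times \R : a(x) > z\}$ over $a \in \cA$ has VC dimension exactly $\pdim(\cA) = d$, which is immediate from the witness-shattering definition of pseudo-dimension, so the Sauer--Shelah lemma bounds the number of subsets it cuts out of any $N$ points by $\sum_{i=0}^d \binom{N}{i} \le (eN/d)^d$. Passing through a sufficiently fine discretization of the function values, this yields an $\ell_\infty$-cover of $\cA|_{x^{(1)}, \dots, x^{(N)}}$ of cardinality $M \le (eN/d)^d$ at a resolution that can be driven below $N^{-1/2}$. Massart's finite-class lemma, applied to $M$ vectors of Euclidean length at most $\sqrt{N}$, then bounds the conditional Rademacher complexity by $\frac{\sqrt{2 \ln M}}{\sqrt{N}} \le \sqrt{\frac{2d}{N} \ln \frac{eN}{d}}$; taking expectations over the sample (and absorbing the negligible discretization term) gives the claimed bound on $\E[\Psi]$.

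\textbf{Expected main obstacle.} The delicate step is the last one: the passage from the real-valued class $\cA$ to a finite combinatorial count. The discretization of function values has to be chosen so that the discretized class's subgraph regions still have VC dimension at most $d$ (so Sauer--Shelah applies) while simultaneously being fine enough that the resulting $\ell_\infty$-cover has radius contributing only lower-order terms to the rate; and the universal constants from the symmetrization step and this discretization must be tracked so that the complexity term emerges as exactly $\sqrt{\frac{2d}{N} \ln \frac{eN}{d}}$ rather than with a spurious constant factor. This constant-optimization is precisely the content of Pollard's original analysis, which I would invoke for the final coefficients rather than re-deriving them from scratch.
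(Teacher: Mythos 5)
This theorem is stated in the paper as a citation to \citet{Pollard84:Convergence}; the paper gives no proof of its own, so there is nothing internal to compare your argument against. On its own merits, your sketch is the standard and essentially correct route to such a bound: a bounded-differences (McDiarmid) step to reduce to the expected uniform deviation, then symmetrization, a combinatorial count of the restricted class via the subgraph/VC characterization of pseudo-dimension together with Sauer--Shelah, and Massart's finite-class lemma. Three places where the sketch as literally written would not reproduce the stated constants are worth flagging, all of which you partially acknowledge: (i) for $[-1,1]$-valued functions the bounded-differences increment is $2/N$, so McDiarmid yields a deviation term of order $\sqrt{(2/N)\ln(1/\delta)}$ rather than $\sqrt{(1/(2N))\ln(2/\delta)}$ --- the stated coefficient is the one for $[0,1]$-valued classes, so a rescaling or a sharper inequality is needed; (ii) symmetrization costs a factor of $2$ in front of the Rademacher complexity, which your final display drops; and (iii) discretizing the range at resolution $\alpha$ means Sauer--Shelah is applied to roughly $N \cdot (2/\alpha)$ augmented points $\left(x^{(j)}, z_k\right)$, so the cover cardinality is of order $\left(cN/(\alpha d)\right)^d$ rather than $\left(eN/d\right)^d$; this only perturbs the bound inside the logarithm but does change the displayed constant. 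Since you explicitly defer the constant bookkeeping to Pollard's original analysis and the paper itself treats the result as a black box, these are presentation-level caveats rather than gaps in the argument.
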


We now use Theorem~\ref{thm:pdim} to prove that the error term $\epsilon_{\cM}(N, \delta)$ in Equation~\eqref{eq:epsilon} converges to zero as $N$ increases.
To this end, for any mechanism $M$, any agent $i \in [n]$, and any pair of types $\theta_i, \hat{\theta}_i \in \Theta_i$, let $u_{i, M, \theta_i, \hat{\theta}_i} : \times_{j \not= i} \Theta_j \to [-1,1]$ be a function that maps the types $\vec{\theta}_{-i}$ of the other agents to the utility of agent $i$ with type $\theta_i$ and reported type $\hat{\theta}_i$ when the other agents report their types truthfully. In other words,  $u_{i, M, \theta_i, \hat{\theta}_i}\left(\vec{\theta}_{-i}\right) =  u_{i,M}\left(\theta_i, \hat{\theta}_i, \vec{\theta}_{-i}\right)$. Let $\cF_{i, \cM}$ be the set of all such functions defined by mechanisms $M$ from the class $\cM$. In other words, $\cF_{i, \cM} = \left\{\left. u_{i, M, \theta_i, \hat{\theta}_i} \right| \theta_i, \hat{\theta}_i \in \Theta_i, M \in \cM\right\}$. We now analyze the convergence rate of the error term $\epsilon_{\cM}(N, \delta)$. The full proof is in Appendix~\ref{app:interim}.

\begin{restatable}{theorem}{AIC}\label{thm:AIC}
With probability $1-\delta$ over the draw of the $n$ sets $\sample_{-i} = \left\{\vec{\theta}_{-i}^{(1)}, \dots, \vec{\theta}_{-i}^{(N)}\right\} \sim \dist_{-i}^N$, for all mechanisms $M \in \cM$ and agents $i \in [n]$, \begin{align*}&\max_{\theta_i, \hat{\theta}_i \in \Theta_i}\left\{\E_{\vec{\theta}_{-i} \sim \dist_{-i}}\left[ u_{i,M}\left(\theta_i, \hat{\theta}_i, \vec{\theta}_{-i}\right) - u_{i,M}\left(\theta_i, \theta_i, \vec{\theta}_{-i}\right)\right]\right\}\\
 \leq &\max_{\theta_i, \hat{\theta}_i \in \Theta_i}\left\{\frac{1}{N} \sum_{j = 1}^N u_{i,M}\left(\theta_i, \hat{\theta}_i, \vec{\theta}_{-i}^{(j)}\right) -u_{i,M}\left(\theta_i, \theta_i, \vec{\theta}_{-i}^{(j)}\right)\right\} + \epsilon_{\cM}(N, \delta),\end{align*} where $\epsilon_{\cM}(N, \delta) = 4\sqrt{\frac{2 d_i}{N} \ln \frac{eN}{d_i}} + 4\sqrt{\frac{1}{2N}\ln\frac{2n}{\delta}}$ and $d_i = \pdim\left(\cF_{i, \cM}\right)$.
\end{restatable}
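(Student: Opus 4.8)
The plan is to apply Theorem~\ref{thm:pdim} to the function class $\cF_{i, \cM}$ for each agent $i$, and then convert the resulting two-sided uniform convergence guarantee into the one-sided bound on the difference of maxima that the theorem claims. First I would fix an agent $i \in [n]$. Since $\sample_{-i}$ consists of $N$ i.i.d.\ draws from $\dist_{-i}$, Theorem~\ref{thm:pdim} applied to $\cA = \cF_{i, \cM}$ with $\Phi = \dist_{-i}$ gives: with probability at least $1 - \delta/n$ over the draw of $\sample_{-i}$, for every $M \in \cM$ and every pair $\theta_i, \hat{\theta}_i \in \Theta_i$,
\[
\left|\frac{1}{N} \sum_{j=1}^N u_{i,M}\left(\theta_i, \hat{\theta}_i, \vec{\theta}_{-i}^{(j)}\right) - \E_{\vec{\theta}_{-i} \sim \dist_{-i}}\left[u_{i,M}\left(\theta_i, \hat{\theta}_i, \vec{\theta}_{-i}\right)\right]\right| \leq \sqrt{\frac{2 d_i}{N} \ln \frac{eN}{d_i}} + \sqrt{\frac{1}{2N}\ln\frac{2n}{\delta}},
\]
where $d_i = \pdim(\cF_{i, \cM})$. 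Note the function $u_{i, M, \theta_i, \theta_i}$ (the truthful report) is itself a member of $\cF_{i, \cM}$, so the same inequality holds simultaneously for the truthful-report terms with no extra union bound needed.

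Next I would take a union bound over the $n$ agents, so that with probability at least $1 - \delta$ the above holds for all $i \in [n]$ simultaneously. Fix such a good event. Now fix $i$, $M$, and a maximizing pair $(\theta_i^*, \hat{\theta}_i^*)$ for the left-hand expression
\[
\max_{\theta_i, \hat{\theta}_i \in \Theta_i}\left\{\E_{\vec{\theta}_{-i} \sim \dist_{-i}}\left[u_{i,M}\left(\theta_i, \hat{\theta}_i, \vec{\theta}_{-i}\right) - u_{i,M}\left(\theta_i, \theta_i, \vec{\theta}_{-i}\right)\right]\right\}.
\]
(If no maximizer exists, take a pair achieving the supremum up to an arbitrarily small slack, which is absorbed in the limit; I would note this but not belabor it.) Applying the uniform convergence bound to $u_{i, M, \theta_i^*, \hat{\theta}_i^*}$ and to $u_{i, M, \theta_i^*, \theta_i^*}$ and adding the two inequalities, the population difference at $(\theta_i^*, \hat{\theta}_i^*)$ is at most the empirical difference at $(\theta_i^*, \hat{\theta}_i^*)$ plus $2\sqrt{\frac{2 d_i}{N} \ln \frac{eN}{d_i}} + 2\sqrt{\frac{1}{2N}\ln\frac{2n}{\delta}}$. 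Finally, since the empirical difference at the particular pair $(\theta_i^*, \hat{\theta}_i^*)$ is no larger than the empirical maximum over all pairs $\theta_i, \hat{\theta}_i \in \Theta_i$, we obtain exactly the claimed inequality with $\epsilon_{\cM}(N,\delta) = 2\sqrt{\frac{2 d_i}{N} \ln \frac{eN}{d_i}} + 2\sqrt{\frac{1}{2N}\ln\frac{2n}{\delta}}$.

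I do not expect a serious obstacle here; this is a "warm-up" theorem and the argument is a standard uniform-convergence-to-max-gap reduction. The only points requiring care are: (i) recognizing that a single invocation of Theorem~\ref{thm:pdim} on $\cF_{i, \cM}$ controls \emph{both} the $(\theta_i, \hat{\theta}_i)$ terms and the truthful $(\theta_i, \theta_i)$ terms uniformly, so the factor of $2$ in $\epsilon_{\cM}$ comes from summing two instances of the same bound rather than from a second union bound; (ii) the $\delta \mapsto \delta/n$ split and the resulting $\ln(2n/\delta)$ inside the second square root; and (iii) the standard subtlety that the maximum over $\Theta_i \times \Theta_i$ may not be attained, handled via a supremum-approximation argument. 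The pseudo-dimension $d_i = \pdim(\cF_{i, \cM})$ is left abstract in this theorem and is bounded separately for each concrete mechanism class in the later sections.
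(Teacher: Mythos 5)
Your proposal is correct and follows essentially the same route as the paper: a single application of Theorem~\ref{thm:pdim} to $\cF_{i,\cM}$ with confidence parameter $\delta/n$, controlling both the misreport term and the truthful term simultaneously, then a three-way telescoping decomposition (equivalently, your ``fix a near-maximizer and add two one-sided bounds'' step), followed by a union bound over the $n$ agents. The factor-of-$2$ bookkeeping in $\epsilon_{\cM}(N,\delta)$ and the $\ln(2n/\delta)$ term are exactly as in the paper.
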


\begin{proof}[Proof sketch]
Fix an arbitrary bidder $i \in [n]$. By Theorem~\ref{thm:pdim}, we know that with probability $1- \delta/n$, for every pair of types $\theta_i, \hat{\theta}_i \in \Theta_i$ and every mechanism $M \in \cM$, the expected utility of agent $i$ with type $\theta_i$ and reported type $\hat{\theta}_i$ (in expectation over $\vec{\theta}_{-i} \sim \dist_{-i}$) is $\epsilon$-close to his average utility (averaged over $\vec{\theta}_{-i} \in \sample_{-i}$), where $\epsilon = \tilde{O}\left(\sqrt{d_i/N}\right).$ We use this fact to show that when agent $i$ misreports his type, he cannot increase his expected utility by more than a factor of $2\epsilon + \max_{\theta_i, \hat{\theta}_i \in \Theta_i}\left\{\frac{1}{N} \sum_{j = 1}^N u_{i,M}\left(\theta_i, \hat{\theta}_i, \vec{\theta}_{-i}^{(j)}\right) -u_{i,M}\left(\theta_i, \theta_i, \vec{\theta}_{-i}^{(j)}\right)\right\}$.
\end{proof}

\subsection{Incentive compatibility guarantees via finite covers}\label{sec:cover}
In the previous section, we presented an empirical estimate of the \emph{ex-interim} incentive compatibility approximation factor (Equation~\eqref{eq:nonconvex}) and we showed that it quickly converges to the true approximation factor. However, there may not be a finite-time procedure for computing Equation~\eqref{eq:nonconvex} in its full generality, restated below: \begin{equation}\max_{\theta_i, \hat{\theta}_i \in \Theta_i}\left\{\frac{1}{N} \sum_{j = 1}^N u_{i,M}\left(\theta_i, \hat{\theta}_i, \vec{\theta}_{-i}^{(j)}\right) -u_{i,M}\left(\theta_i, \theta_i, \vec{\theta}_{-i}^{(j)}\right)\right\}.\label{eq:nonconvex2}\end{equation} In this section, we address that challenge. A simple alternative approach 
is to fix a finite \emph{cover} of $\Theta_i \times \Theta_i$, which we denote as $\cG \subset \Theta_i \times \Theta_i$, and approximate Equation~\eqref{eq:nonconvex2} by measuring the extent to which any agent $i$ can improve his utility by misreporting his type when his true and reported types are elements of the cover $\cG$, averaged over all profiles in $\sample_{-i}$. In other words, we estimate Equation~\eqref{eq:nonconvex2} as: \begin{equation}\max_{\left(\theta_i, \hat{\theta}_i\right) \in \cG} \left\{\frac{1}{N} \sum_{j = 1}^N u_{i,M}\left(\theta_i, \hat{\theta}_i, \vec{\theta}_{-i}^{(j)}\right) -u_{i,M}\left(\theta_i, \theta_i, \vec{\theta}_{-i}^{(j)}\right)\right\}.\label{eq:grid}\end{equation} This raises two natural questions: how do we select the cover $\cG$ and how close are the optimal solutions to Equations~\eqref{eq:nonconvex2} and \eqref{eq:grid}? We provide two simple, intuitive approaches to selecting the cover $\cG$. The first is to run a greedy procedure (see Section~\ref{sec:greedy}) and the second is to create a uniform grid over the type space (assuming $\Theta_i = [0,1]^m$ for some integer $m$; see Section~\ref{sec:grid}).

\subsubsection{Covering via a greedy procedure}\label{sec:greedy}
In this section, we show how to construct the cover $\cG$ of $\Theta_i \times \Theta_i$ greedily, based off a classic learning-theoretic algorithm. We then show that when we use the cover $\cG$ to estimate the incentive compatibility approximation factor (via Equation~\eqref{eq:grid}), the estimate quickly converges to the true approximation factor. This greedy procedure is summarized by Algorithm~\ref{alg:greedy}.
\begin{algorithm}[t]
\caption{Greedy cover construction}\label{alg:greedy}
\begin{algorithmic}[1]
   \Require Mechanism $M \in \cM$, set of samples $\sample_{-i} = \left\{\vec{\theta}_{-i}^{(1)}, \dots, \vec{\theta}_{-i}^{(N)}\right\}$, accuracy parameter $\epsilon >0$.
\State Let $U$ be the set of vectors $U \leftarrow \left\{\frac{1}{N} \begin{pmatrix} u_{i,M}\left(\theta_i, \hat{\theta}_i, \vec{\theta}_{-i}^{(1)}\right) - u_{i,M}\left(\theta_i, \theta_i, \vec{\theta}_{-i}^{(1)}\right)\\
\vdots\\
u_{i,M}\left(\theta_i, \hat{\theta}_i, \vec{\theta}_{-i}^{(N)}\right) - u_{i,M}\left(\theta_i, \theta_i, \vec{\theta}_{-i}^{(N)}\right)
\end{pmatrix}: \theta_i, \hat{\theta}_i \in \Theta_i\right\}.$
\State Let $V \leftarrow \emptyset$ and $\cG \leftarrow \emptyset$.
\While{$U \setminus \left(\bigcup_{\vec{v} \in V} B_1\left(\vec{v}, \epsilon\right)\right) \not= \emptyset$}
	\State Select an arbitrary vector $\vec{v}' \in U \setminus \left(\bigcup_{\vec{v} \in V} B_1\left(\vec{v}, \epsilon\right)\right)$.
	\State Let $\theta_i, \hat{\theta}_i \in \Theta_i$ be the types such that $\vec{v}' = \frac{1}{N} \begin{pmatrix} u_{i,M}\left(\theta_i, \hat{\theta}_i, \vec{\theta}_{-i}^{(1)}\right) - u_{i,M}\left(\theta_i, \theta_i, \vec{\theta}_{-i}^{(1)}\right)\\
\vdots\\
u_{i,M}\left(\theta_i, \hat{\theta}_i, \vec{\theta}_{-i}^{(N)}\right) - u_{i,M}\left(\theta_i, \theta_i, \vec{\theta}_{-i}^{(N)}\right)
\end{pmatrix}.$
	\State Add $\vec{v}'$ to $V$ and $\left(\theta_i, \hat{\theta}_i\right)$ to $\cG$.
\EndWhile
 \Ensure The cover $\cG \subseteq \Theta_i \times \Theta_i$.
\end{algorithmic}
\end{algorithm}
For any $\vec{v} \in \R^N$, we use the notation $B_1\left(\vec{v}, \epsilon\right) = \left\{\vec{v}' : \norm{\vec{v}' - \vec{v}}_1 \leq \epsilon\right\}$. To simplify notation, let $U$ be the set of vectors defined in Algorithm~\ref{alg:greedy}: \[U = \left\{\frac{1}{N} \begin{pmatrix} u_{i,M}\left(\theta_i, \hat{\theta}_i, \vec{\theta}_{-i}^{(1)}\right) - u_{i,M}\left(\theta_i, \theta_i, \vec{\theta}_{-i}^{(1)}\right)\\
\vdots\\
u_{i,M}\left(\theta_i, \hat{\theta}_i, \vec{\theta}_{-i}^{(N)}\right) - u_{i,M}\left(\theta_i, \theta_i, \vec{\theta}_{-i}^{(N)}\right)
\end{pmatrix}: \theta_i, \hat{\theta}_i \in \Theta_i\right\}.\] Note that the solution to Equation~\eqref{eq:nonconvex2} equals $\max_{\vec{v} \in U} \sum_{i = 1}^N v[i]$.
The algorithm greedily selects a set of vectors $V \subseteq U$, or equivalently, a set of type pairs $\cG \subseteq \Theta_i \times \Theta_i$ as follows: while $U \setminus \left(\bigcup_{\vec{v} \in V} B_1\left(\vec{v}, \epsilon\right)\right)$ is non-empty, it chooses an arbitrary vector
$\vec{v}'$ in the set, adds it to $V$, and adds the pair 
$\left(\theta_i, \hat{\theta}_i\right) \in \Theta_i \times \Theta_i$ defining the vector $\vec{v}'$ to $\cG$.
 Classic results from learning theory~\citep{Anthony09:Neural} guarantee that this greedy procedure will repeat for at most $\left(8eN/(\epsilon d_i)\right)^{2d_i}$ iterations, where $d_i = \pdim\left(\cF_{i,\cM}\right)$.
 
 We now relate the true incentive compatibility approximation factor to the solution to Equation~\eqref{eq:grid} when the cover is constructed using Algorithm~\ref{alg:greedy}.
 The full proof is in Appendix~\ref{app:interim}.
 
 \begin{restatable}{theorem}{greedy}\label{thm:greedy}
 Given a set $\sample_{-i} = \left\{\vec{\theta}_{-i}^{(1)}, \dots, \vec{\theta}_{-i}^{(N)}\right\}$, a mechanism $M \in \cM$, and accuracy parameter $\epsilon > 0$, let $\cG\left(\sample_{-i}, M, \epsilon\right)$ be the cover returned by Algorithm~\ref{alg:greedy}.
With probability $1-\delta$ over the draw of the $n$ sets $\sample_{-i} \sim \dist_{-i}^N$, for every mechanism $M \in \cM$ and every agent $i \in [n]$, \begin{align}&\max_{\theta_i, \hat{\theta}_i \in \Theta_i}\left\{\E_{\vec{\theta}_{-i} \sim \dist_{-i}}\left[ u_{i,M}\left(\theta_i, \hat{\theta}_i, \vec{\theta}_{-i}\right) - u_{i,M}\left(\theta_i, \theta_i, \vec{\theta}_{-i}\right)\right]\right\}\nonumber\\
 \leq &\max_{\left(\theta_i, \hat{\theta}_i\right) \in \cG\left(\sample_{-i}, M, \epsilon\right)}\left\{\frac{1}{N} \sum_{j = 1}^N u_{i,M}\left(\theta_i, \hat{\theta}_i, \vec{\theta}_{-i}^{(j)}\right) -u_{i,M}\left(\theta_i, \theta_i, \vec{\theta}_{-i}^{(j)}\right)\right\} + \epsilon + \tilde O\left(\sqrt{\frac{d_i}{N}}\right),\label{eq:greedy_inequality}\end{align} where $d_i = \pdim\left(\cF_{i, \cM}\right)$. Moreover, with probability 1, $\left|\cG\left(\sample_{-i}, M, \epsilon\right)\right| \leq \left(8eN/\left(\epsilon d_i\right)\right)^{2d_i}$.
\end{restatable}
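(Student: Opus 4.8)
The plan is to split the gap between the true \emph{ex-interim} approximation factor and the cover-restricted estimate of Equation~\eqref{eq:grid} into (i) a uniform-convergence term, which Theorem~\ref{thm:AIC} already controls, and (ii) a discretization term, which the defining property of the greedy cover controls. Concretely, I would condition on the probability-$(1-\delta)$ event of Theorem~\ref{thm:AIC}: on this event, for every $M \in \cM$ and $i \in [n]$, the left-hand side of Inequality~\eqref{eq:greedy_inequality} is at most the \emph{unrestricted} empirical maximum of Equation~\eqref{eq:nonconvex2} plus $\epsilon_\cM(N,\delta)$. Since $\cG\left(\sample_{-i}, M, \epsilon\right)$ is a deterministic function of the already-drawn sample, it then suffices to prove, deterministically, that the unrestricted empirical maximum exceeds the cover-restricted maximum of Equation~\eqref{eq:grid} by at most $\epsilon$; plugging this in and recalling $\epsilon_\cM(N,\delta) = \tilde O(\sqrt{d_i/N})$ yields Inequality~\eqref{eq:greedy_inequality} with no additional failure probability.

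For the discretization inequality I would work in $\R^N$. As observed after Algorithm~\ref{alg:greedy}, the unrestricted empirical maximum equals $\max_{\vec v \in U} \sum_{j=1}^N \vec v[j]$; likewise, because each pair added to $\cG$ corresponds to exactly the vector simultaneously added to $V$, the cover-restricted maximum equals $\max_{\vec v \in V} \sum_{j=1}^N \vec v[j]$ for the subset $V \subseteq U$ returned by the algorithm. The loop exits only when $U \subseteq \bigcup_{\vec v \in V} B_1(\vec v, \epsilon)$, i.e.\ when $V$ is an $\epsilon$-net of $U$ in the $\ell_1$ norm; so for any $\vec v \in U$ there is $\vec v' \in V$ with $\norm{\vec v - \vec v'}_1 \le \epsilon$, and because $\vec w \mapsto \sum_j \vec w[j]$ is $1$-Lipschitz with respect to $\ell_1$,
\[
\sum_{j=1}^N \vec v[j] \;\le\; \sum_{j=1}^N \vec v'[j] + \norm{\vec v - \vec v'}_1 \;\le\; \max_{\vec v'' \in V}\sum_{j=1}^N \vec v''[j] + \epsilon .
\]
Maximizing the left side over $\vec v \in U$ gives the desired bound, and the reverse inequality is immediate from $V \subseteq U$. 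One minor technical point to handle is that when $\Theta_i$ is infinite the outer maxima are a priori only suprema: one picks a pair attaining the supremum up to an arbitrary $\eta > 0$, runs the argument, and lets $\eta \to 0$.

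For the cardinality claim, note that by construction every vector the algorithm selects lies outside all balls $B_1(\vec v, \epsilon)$ around previously selected vectors, so $V$ is $\epsilon$-separated in $\ell_1$ and $|\cG| = |V|$ is at most the $\ell_1$ $\epsilon$-packing number of $U$; this holds for every realization of $\sample_{-i}$, hence with probability $1$. The elements of $U$ are $1/N$-rescaled difference vectors $\bigl(u_{i,M}(\theta_i,\hat\theta_i,\vec\theta_{-i}^{(j)}) - u_{i,M}(\theta_i,\theta_i,\vec\theta_{-i}^{(j)})\bigr)_{j=1}^N$ built from two members of $\cF_{i,\cM}$, a class of pseudo-dimension $d_i$, so the standard packing-number bounds in terms of pseudo-dimension yield the estimate $\left(8eN/(\epsilon d_i)\right)^{2d_i}$ — precisely the termination count already recorded for Algorithm~\ref{alg:greedy} via \citet{Anthony09:Neural}, the exponent $2d_i$ reflecting that each $U$-vector is a difference of two functions each of pseudo-dimension $d_i$. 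The bulk of the argument is elementary; the one step I would flag as mildly delicate is this pseudo-dimension bookkeeping, i.e.\ confirming that differencing two pseudo-dimension-$d_i$ classes and evaluating on $N$ points produces exactly the stated constant and exponent rather than merely an $O(\cdot)$. Conceptually there is no real obstacle: covering in the evaluated-utility space $\R^N$ rather than in the type space is what neutralizes the volatility of the utility functions, and that idea is already built into Algorithm~\ref{alg:greedy}.
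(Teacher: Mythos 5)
Your proposal is correct and follows essentially the same route as the paper: decompose the gap into a uniform-convergence term controlled by Theorem~\ref{thm:AIC} and a deterministic discretization term controlled by the $\epsilon$-net property of $V$ in $\ell_1$ (you phrase this as $1$-Lipschitzness of $\vec w \mapsto \sum_j \vec w[j]$; the paper applies the cover property directly at the empirical maximizer, but these are the same observation), then bound $|\cG|$ by the $\ell_1$ $\epsilon$-packing number of $U$ and hence the covering number, with the exponent $2d_i$ arising exactly as you flag, from the difference of two pseudo-dimension-$d_i$ classes (paper's Lemma~\ref{lem:cover}).
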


\begin{proof}[Proof sketch]
Inequality~\eqref{eq:greedy_inequality} follows from Theorem~\ref{thm:AIC}, the fact that $V$ is an $\epsilon$-cover of $U$ (where $U$ and $V$ are the sets defined in Algorithm~\ref{alg:greedy}), and the equivalence of $\cG\left(\sample_{-i}, M, \epsilon\right)$ and $V$. The bound on $\left|\cG\left(\sample_{-i}, M, \epsilon\right)\right|$ follows from classic learning-theoretic results~\citep{Anthony09:Neural}.
\end{proof}

\subsubsection{Covering via a uniform grid}\label{sec:grid}
The greedy approach in Section~\ref{sec:greedy} is extremely versatile: no matter the type space $\Theta_i$, when we use the resulting cover to estimate the incentive compatibility approximation factor (via Equation~\eqref{eq:grid}), the estimate quickly converges to the true approximation factor. However, implementing the greedy procedure (Algorithm~\ref{alg:greedy}) might be computationally challenging because at each round, it is necessary to check if $U \setminus \left(\bigcup_{\vec{v} \in V} B_1\left(\vec{v}, \epsilon\right)\right)$ is nonempty and if so, select a vector from the set.  In this section, we propose an alternative, extremely simple approach to selecting a cover $\cG$: using a uniform grid over the type space. The efficacy of this approach depends on a ``niceness'' assumption that holds under mild assumptions, as we prove in Section~\ref{sec:applications}. Throughout this section, we assume that $\Theta_i = [0,1]^m$ for some integer $m$. We propose covering the type space using a \emph{$w$-grid} $\cG_w$ over $[0,1]^m$, by which we mean a finite set of vectors in $[0,1]^m$ such that for all $\vec{p} \in [0,1]^m$, there exists a vector $\vec{p}' \in \cG_w$ such that $\norm{\vec{p} - \vec{p}'}_1 \leq w$. For example, if $m = 1$, we could define $\cG_w = \left\{0, \frac{1}{\lfloor 1/ w\rfloor}, \frac{2}{\lfloor 1/ w\rfloor}, \dots, 1\right\}$. We will estimate the expected incentive compatibility approximation factor using Equation~\eqref{eq:grid} with $\cG = \cG_w \times \cG_w$. Throughout the rest of the paper, we discuss how the choice of $w$ effects the error bound. To do so, we will use the notion of \emph{dispersion}, defined below.
\begin{definition}[\citet{Balcan18:Dispersion}]
  \label{def:dispersion}
  Let $a_1, \dots, a_N : \R^d \to \R$ be a set of functions where
  each $a_i$ is piecewise Lipschitz with respect to the $\ell_1$-norm over a partition $\cP_i$ of $\R^d$. We
  say that $\cP_i$ splits a set $A \subseteq \R^d$ if $A$ intersects with at least two
  sets in $\cP_i$. The set of functions is
  \emph{$(w,k)$-dispersed} if for every point $\vec{p} \in \R^d$, the ball $\left\{\vec{p}' \in
\R^d : \norm{\vec{p}-\vec{p}'}_1 \leq w\right\}$ is split by at most $k$
  of the partitions $\cP_1, \dots, \cP_N$.
\end{definition}

The smaller $w$ is and the larger $k$ is, the more ``dispersed'' the functions' discontinuities are. Moreover, the more jump discontinuities a set of functions has, the more difficult it is to approximately optimize its average using a grid. We illustrate this phenomenon in Example~\ref{ex:FP}.

\begin{example}\label{ex:FP}
Suppose there are two agents and $M$ is the first-price single-item auction. For any trio of types $\theta_1, \theta_2, \hat{\theta}_1 \in [0,1]$, $u_{1,M}\left(\theta_1, \hat{\theta}_1, \theta_2\right) = \textbf{1}_{\left\{\hat{\theta}_1 > \theta_2\right\}}\left(\theta_1 - \hat{\theta}_1\right)$. Suppose $\theta_1 = 1$. Figure~\ref{fig:good_apx} displays the function $\frac{1}{4}\sum_{\theta_2 \in \sample_2}u_{1,M}\left(1,\hat{\theta}_1, \theta_2\right)$ where $\sample_2 = \left\{\frac{1}{5}, \frac{2}{5}, \frac{3}{5}, \frac{4}{5}\right\}$ and Figure~\ref{fig:bad_apx} displays the function $\frac{1}{4}\sum_{\theta_2 \in \sample_2'}u_{1,M}\left(1, \hat{\theta}_1, \theta_2\right)$ where $\sample_2' = \left\{\frac{31}{40}, \frac{32}{40}, \frac{33}{40}, \frac{34}{40}\right\}$.

\begin{figure}
\centering
\begin{subfigure}{0.47\textwidth}
\includegraphics{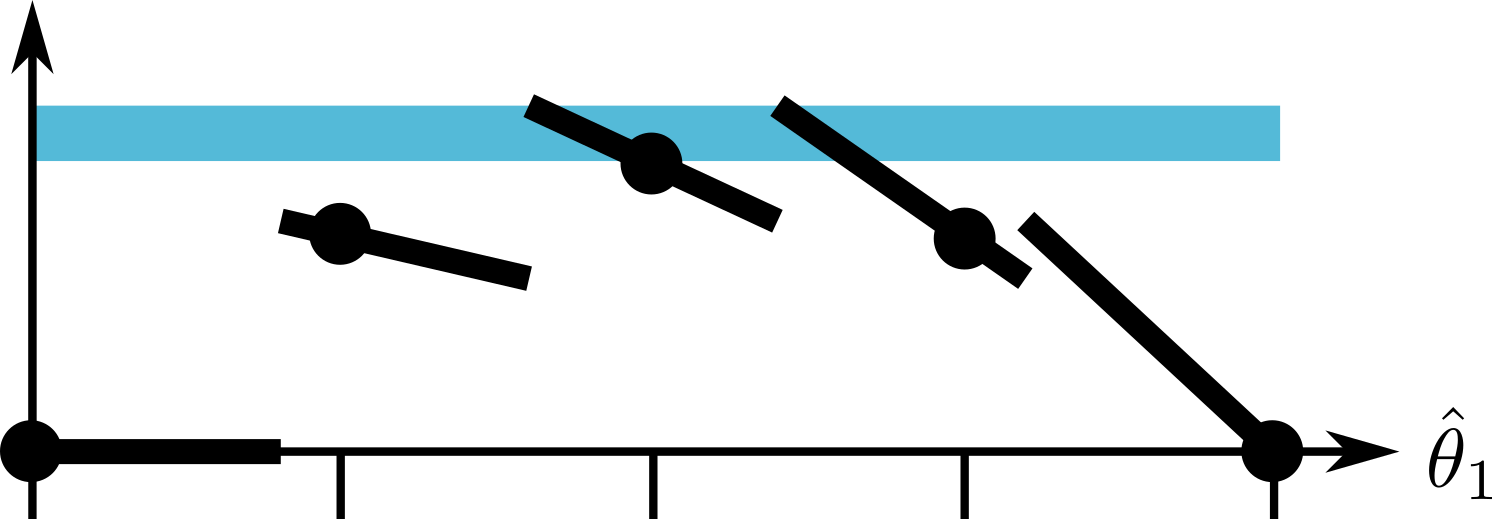} \centering
\caption{Average of dispersed functions.}
\label{fig:good_apx}
\end{subfigure}\qquad
\begin{subfigure}{0.47\textwidth}
\includegraphics{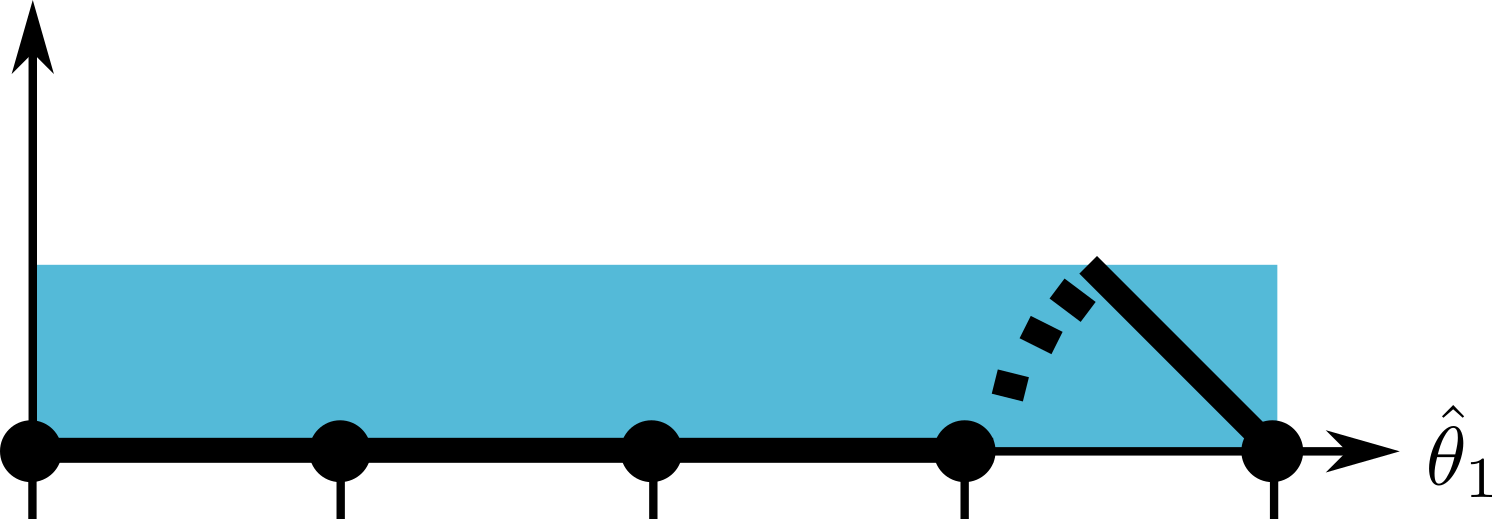}\centering
\caption{Average of functions that are not dispersed.}
\label{fig:bad_apx}
\end{subfigure}\qquad
\caption{Illustration of Example~\ref{ex:FP}. The lines in Figure~\ref{fig:good_apx} depict the average of four utility functions corresponding to the first-price auction. The maximum of this average falls at the top of the blue region. We evaluate the function in Figure~\ref{fig:good_apx} on the grid $\left\{0, \frac{1}{4}, \frac{1}{2}, \frac{3}{4}, 1\right\}$, as depicted by the dots. The maximum over the grid falls at the bottom of the blue region. In Figure~\ref{fig:bad_apx}, we illustrate the same concepts but for a different set of utility functions that are not as dispersed as those in Figure~\ref{fig:good_apx}. As illustrated by the blue regions, the approximation over the grid is better for the dispersed functions than the non-dispersed functions.} 
\label{fig:apx_comparison}
\end{figure}
In Figure~\ref{fig:apx_comparison}, we evaluate each function on the grid $\cG = \left\{0, \frac{1}{4}, \frac{1}{2}, \frac{3}{4}, 1\right\}$. In Figure~\ref{fig:good_apx}, the maximum over $\cG$ better approximates the the maximum over $[0,1]$ compared to Figure~\ref{fig:bad_apx}. In other words, \begin{align}&\max_{\hat{\theta}_1 \in [0,1]}\sum_{\theta_2 \in \sample_2}u_{1,M}\left(1, \hat{\theta}_1, \theta_2\right) - \max_{\hat{\theta}_1 \in \cG}\sum_{\theta_2 \in \sample_2}u_{1,M}\left(1, \hat{\theta}_1, \theta_2\right)\label{eq:good_difference}\\
< &\max_{\hat{\theta}_1 \in [0,1]}\sum_{\theta_2 \in \sample_2'}u_{1,M}\left(1, \hat{\theta}_1, \theta_2\right) - \max_{\hat{\theta}_1 \in \cG}\sum_{\theta_2 \in \sample_2'}u_{1,M}\left(1, \hat{\theta}_1, \theta_2\right).\label{eq:bad_difference}\end{align} Intuitively, this is because the functions we average over in Figure~\ref{fig:good_apx} are more dispersed than the functions we average over in Figure~\ref{fig:bad_apx}. The differences described by Equations~\eqref{eq:good_difference} and \eqref{eq:bad_difference} are represented by the shaded regions in Figures~\ref{fig:good_apx} and \ref{fig:bad_apx}, respectively.
\end{example}

We now state a helpful lemma which we use to prove this section's main theorem. Informally, it shows that we can measure the average amount that any agent can improve his utility by misreporting his type, even if we discretize his type space, so long as his utility function applied to the samples demonstrates dispersion. The full proof is in Appendix~\ref{app:interim}.

\begin{restatable}{lemma}{dispersion}\label{lem:grid_dispersion_prob}
Suppose that for each agent $i \in [n]$, there exist $L_i, k_i, w_i \in \R$ such that with probability $1-\delta$ over the draw of the $n$ sets $\sample_{-i} = \left\{\vec{\theta}^{(1)}_{-i}, \dots, \vec{\theta}^{(N)}_{-i}\right\} \sim \dist_{-i}^N$, for each mechanism $M \in \cM$ and agent $i \in [n]$, the following conditions hold:
\begin{enumerate}
\item For any type $\vec{\theta}_i \in [0,1]^m$, the functions $u_{i, M}\left(\vec{\theta}_i, \cdot, \vec{\theta}_{-i}^{(1)}\right), \dots, u_{i, M}\left(\vec{\theta}_i, \cdot, \vec{\theta}_{-i}^{(N)}\right)$ are piecewise $L_i$-Lipschitz and $(w_i,k_i)$-dispersed.
\item For any reported type $\hat{\vec{\theta}}_i \in [0,1]^m$, the functions $u_{i, M}\left(\cdot, \hat{\vec{\theta}}_i, \vec{\theta}_{-i}^{(1)}\right), \dots, u_{i, M}\left(\cdot, \hat{\vec{\theta}}_i, \vec{\theta}_{-i}^{(N)}\right)$ are piecewise $L_i$-Lipschitz and $(w_i,k_i)$-dispersed.
\end{enumerate}
Then with probability $1-\delta$ over the draw of the $n$ sets $\sample_{-i} \sim \dist_{-i}^N$, for all mechanisms $M \in \cM$ and agents $i \in [n]$, \begin{align}\max_{\vec{\theta}_i, \hat{\vec{\theta}}_i \in [0,1]^m}&\left\{\frac{1}{N} \sum_{j = 1}^N u_{i,M}\left(\vec{\theta}_i,\hat{\vec{\theta}}_i, \vec{\theta}_{-i}^{(j)}\right)
 - u_{i,M}\left(\vec{\theta}_i, \vec{\theta}_i, \vec{\theta}_{-i}^{(j)}\right)\right\}\nonumber\\
\leq \max_{\vec{\theta}_i, \hat{\vec{\theta}}_i \in \cG_{w_i}}&\left\{\frac{1}{N} \sum_{j = 1}^N u_{i,M}\left(\vec{\theta}_i,\hat{\vec{\theta}}_i, \vec{\theta}_{-i}^{(j)}\right)
 - u_{i,M}\left(\vec{\theta}_i, \vec{\theta}_i, \vec{\theta}_{-i}^{(j)}\right)\right\} + 4L_iw_i + \frac{8k_i}{N}\label{eq:grid_dispersion_prob}.\end{align}
\end{restatable}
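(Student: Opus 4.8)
The plan is to condition on the probability-$(1-\delta)$ event furnished by the hypothesis, on which the two piecewise-Lipschitz/dispersion conditions hold simultaneously for every $M \in \cM$ and every $i \in [n]$, and then to argue mechanism-by-mechanism and agent-by-agent. Fix $M$ and $i$. Since the utility functions are only piecewise Lipschitz, the supremum on the left-hand side of \eqref{eq:grid_dispersion_prob} need not be attained, so I would take, for arbitrary $\eta > 0$, a pair $\vec{\theta}_i^*, \hat{\vec{\theta}}_i^* \in [0,1]^m$ whose objective value is within $\eta$ of that supremum, round each to a nearest grid point to get $\vec{\theta}_i', \hat{\vec{\theta}}_i' \in \cG_{w_i}$ with $\norm{\vec{\theta}_i^* - \vec{\theta}_i'}_1 \leq w_i$ and $\norm{\hat{\vec{\theta}}_i^* - \hat{\vec{\theta}}_i'}_1 \leq w_i$ (such grid points exist by the definition of a $w_i$-grid), and show that replacing $(\vec{\theta}_i^*, \hat{\vec{\theta}}_i^*)$ by $(\vec{\theta}_i', \hat{\vec{\theta}}_i')$ decreases the averaged objective by at most $4 L_i w_i + 8 k_i / N$; letting $\eta \to 0$ and noting that the grid maximum is at least the objective at $(\vec{\theta}_i', \hat{\vec{\theta}}_i')$ then yields the claim.

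The workhorse is a one-variable rounding claim: if $g_1, \dots, g_N : [0,1]^m \to [-1,1]$ are piecewise $L_i$-Lipschitz with respect to $\norm{\cdot}_1$ and $(w_i, k_i)$-dispersed, then for any $\vec{p}$ and any $\vec{p}'$ with $\norm{\vec{p}-\vec{p}'}_1 \leq w_i$, $\left|\frac{1}{N}\sum_{j=1}^N g_j(\vec{p}) - \frac{1}{N}\sum_{j=1}^N g_j(\vec{p}')\right| \leq L_i w_i + \frac{2 k_i}{N}$. Indeed, by dispersion the closed $\ell_1$-ball of radius $w_i$ about $\vec{p}$ --- which contains $\vec{p}'$ --- is split by at most $k_i$ of the partitions; for each of the at least $N - k_i$ indices $j$ whose partition does not split this ball, the entire ball lies in a single Lipschitz piece of $g_j$, so $|g_j(\vec{p}) - g_j(\vec{p}')| \leq L_i \norm{\vec{p}-\vec{p}'}_1 \leq L_i w_i$, while for the remaining at most $k_i$ indices the difference is at most $2$; averaging gives the bound.

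I would then apply this claim four times. Write $A(\vec{\theta}_i, \hat{\vec{\theta}}_i) = \frac{1}{N}\sum_j u_{i,M}(\vec{\theta}_i, \hat{\vec{\theta}}_i, \vec{\theta}_{-i}^{(j)})$ and $C(\vec{\theta}_i) = \frac{1}{N}\sum_j u_{i,M}(\vec{\theta}_i, \vec{\theta}_i, \vec{\theta}_{-i}^{(j)})$, so the objective is $A - C$. For $A(\vec{\theta}_i', \hat{\vec{\theta}}_i') - A(\vec{\theta}_i^*, \hat{\vec{\theta}}_i^*)$ I telescope through $(\vec{\theta}_i^*, \hat{\vec{\theta}}_i')$: the first leg varies only the reported type with the true type fixed at $\vec{\theta}_i^*$, so the first condition (with its fixed first argument set to $\vec{\theta}_i^*$) plus the claim gives error at most $L_i w_i + 2 k_i/N$; the second leg varies only the true type with the reported type fixed at $\vec{\theta}_i'$, so the second condition (with fixed second argument $\vec{\theta}_i'$) gives another $L_i w_i + 2 k_i/N$. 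For $C(\vec{\theta}_i') - C(\vec{\theta}_i^*) = \frac{1}{N}\sum_j\bigl[u_{i,M}(\vec{\theta}_i', \vec{\theta}_i', \vec{\theta}_{-i}^{(j)}) - u_{i,M}(\vec{\theta}_i^*, \vec{\theta}_i^*, \vec{\theta}_{-i}^{(j)})\bigr]$ I telescope ``diagonally'' through $u_{i,M}(\vec{\theta}_i^*, \vec{\theta}_i', \cdot)$: the first leg moves the reported slot with the true slot fixed at $\vec{\theta}_i^*$ (first condition), and the second leg moves the true slot with the reported slot fixed at $\vec{\theta}_i'$ (second condition), again $L_i w_i + 2 k_i/N$ each. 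Summing, $\bigl|(A(\vec{\theta}_i', \hat{\vec{\theta}}_i') - C(\vec{\theta}_i')) - (A(\vec{\theta}_i^*, \hat{\vec{\theta}}_i^*) - C(\vec{\theta}_i^*))\bigr| \leq 4 L_i w_i + 8 k_i/N$, which is exactly the slack in \eqref{eq:grid_dispersion_prob}.

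The step I expect to be the main obstacle is the $C$ term, where both arguments of $u_{i,M}$ move at once: the resolution is precisely the diagonal telescoping through $u_{i,M}(\vec{\theta}_i^*, \vec{\theta}_i', \cdot)$, which lets the first condition absorb the perturbation in the reported slot and the second condition absorb the perturbation in the true slot. This is also the conceptual reason the lemma assumes dispersion \emph{both} as a function of the reported type and as a function of the true type rather than just one of the two. The only other bookkeeping points are the non-attainment of the left-hand supremum (handled by the $\eta$-approximate maximizer above) and the fact that all of this must be carried out on a single good event, which is already guaranteed by the form of the hypothesis.
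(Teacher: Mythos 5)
Your proposal is correct and follows essentially the same argument as the paper's: establish the single-argument rounding bound $L_i w_i + 2k_i/N$ from dispersion plus piecewise Lipschitzness, then snap $\vec{\theta}_i$ and $\hat{\vec{\theta}}_i$ to nearby grid points via a four-leg telescope that moves one argument at a time, invoking Condition 1 when the reported slot moves and Condition 2 when the true slot moves. The only cosmetic difference is that the paper runs all five terms (including the grid term) in one chain of absolute values, whereas you split out the $A$ and $C$ telescopes separately and handle non-attainment of the supremum with an $\eta$-maximizer; both resolve to the same $4L_iw_i + 8k_i/N$ slack.
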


\begin{proof}[Proof sketch]
By definition of dispersion, the following conditions hold with probability $1-\delta$:

\paragraph{Condition 1.} For all mechanisms $M \in \cM$, agents $i \in [n]$, types $\vec{\theta}_i \in [0,1]^m$, and pairs of reported types $\hat{\vec{\theta}}_i, \hat{\vec{\theta}}_i' \in [0,1]^m$, if $\norm{\hat{\vec{\theta}}_i - \hat{\vec{\theta}}_i'}_1 \leq w_i$, then $\left|\frac{1}{N}\sum_{j = 1}^N u_{i, M}\left(\vec{\theta}_i, \hat{\vec{\theta}}_i, \vec{\theta}_{-i}^{(j)}\right) - u_{i, M}\left(\vec{\theta}_i, \hat{\vec{\theta}}_i', \vec{\theta}_{-i}^{(j)}\right)\right| \leq L_iw_i + \frac{2k_i}{N}.$

\paragraph{Condition 2.} For all mechanisms $M \in \cM$, agents $i \in [n]$, reported types $\hat{\vec{\theta}}_i \in [0,1]^m$, and pairs of types $\vec{\theta}_i, \vec{\theta}_i' \in [0,1]^m$, if $\norm{\vec{\theta}_i - \vec{\theta}_i'}_1 \leq w_i$, then $\left|\frac{1}{N}\sum_{j = 1}^N u_{i, M}\left(\vec{\theta}_i, \hat{\vec{\theta}}_i, \vec{\theta}_{-i}^{(j)}\right) - u_{i, M}\left(\vec{\theta}_i', \hat{\vec{\theta}}_i, \vec{\theta}_{-i}^{(j)}\right)\right| \leq L_iw_i + \frac{2k_i}{N}.$

We claim that Inequality~\eqref{eq:grid_dispersion_prob} holds so long as Conditions 1 and 2 hold.
To see why, suppose they do both hold, and
fix an arbitrary agent $i \in [n]$, mechanism $M \in \cM$, and pair of types $\vec{\theta}_i, \hat{\vec{\theta}}_i \in [0,1]^m$. Consider the average amount agent $i$ with type $\vec{\theta}_i$ can improve his utility by misreporting his type as $\hat{\vec{\theta}}_i$: \begin{equation}\frac{1}{N} \sum_{j = 1}^N u_{i,M}\left(\vec{\theta}_i,\hat{\vec{\theta}}_i, \vec{\theta}_{-i}^{(j)}\right)
 - u_{i,M}\left(\vec{\theta}_i, \vec{\theta}_i, \vec{\theta}_{-i}^{(j)}\right).\label{eq:fixed_types}\end{equation} By definition of the grid $\cG_{w_i}$, we know there are points $\vec{p}, \hat{\vec{p}} \in \cG_{w_i}$ such that $\norm{\vec{\theta}_i - \vec{p}}_1 \leq w_i$ and $\norm{\hat{\vec{\theta}}_i - \hat{\vec{p}}}_1 \leq w_i$. Based on Conditions 1 and 2, we show that if we ``snap'' $\vec{\theta}_i$ to $\vec{p}$ and $\hat{\vec{\theta}}_i$ to $\hat{\vec{p}}$, we will not increase Equation~\eqref{eq:fixed_types} by more than an additive factor of $O\left(L_iw_i + k_i/N\right)$.
Since $\vec{p}$ and $\hat{\vec{p}}$ are elements of $\cG_{w_i}$, Inequality~\eqref{eq:grid_dispersion_prob} holds so long as Conditions 1 and 2 hold. Since Conditions 1 and 2 hold with probability $1-\delta$, the lemma statement holds.
\end{proof}

In Section~\ref{sec:applications}, we prove that under mild assumptions, for a wide range of mechanisms, Conditions 1 and 2 in Lemma~\ref{lem:grid_dispersion_prob} hold with $L_i = 1$, $w_i = O \left(1 / \sqrt{N}\right)$, and $k_i = \tilde O \left(\sqrt{N}\right)$, ignoring problem-specific multiplicands. Thus, we find that $4L_iw_i + 8k_i/N$ quickly converges to 0 as $N$ grows.

Theorem~\ref{thm:AIC} and Lemma~\ref{lem:grid_dispersion_prob} immediately imply this section's main theorem. At a high level, it states that any agent's average utility gain when restricted to types on the grid is a close estimation of the true incentive compatibility approximation factor, so long as his utility function applied to the samples demonstrates dispersion. The full proof is in Appendix~\ref{app:interim}.

\begin{restatable}{theorem}{main}\label{thm:main_prob}
Suppose that for each agent $i \in [n]$, there exist $L_i, k_i, w_i \in \R$ such that with probability $1-\delta$ over the draw of the $n$ sets $\sample_{-i} = \left\{\vec{\theta}^{(1)}_{-i}, \dots, \vec{\theta}^{(N)}_{-i}\right\} \sim \dist_{-i}^N$, for each mechanism $M \in \cM$ and agent $i \in [n]$, the following conditions hold:
\begin{enumerate}
\item For any $\vec{\theta}_i \in [0,1]^m$, the functions $u_{i, M}\left(\vec{\theta}_i, \cdot, \vec{\theta}_{-i}^{(1)}\right), \dots, u_{i, M}\left(\vec{\theta}_i, \cdot, \vec{\theta}_{-i}^{(N)}\right)$ are piecewise $L_i$-Lipschitz and $(w_i,k_i)$-dispersed.
\item For any $\hat{\vec{\theta}}_i \in [0,1]^m$, the functions $u_{i, M}\left(\cdot, \hat{\vec{\theta}}_i, \vec{\theta}_{-i}^{(1)}\right), \dots, u_{i, M}\left(\cdot, \hat{\vec{\theta}}_i, \vec{\theta}_{-i}^{(N)}\right)$ are piecewise $L_i$-Lipschitz and $(w_i,k_i)$-dispersed.
\end{enumerate} Then with probability $1-2\delta$, for every mechanism $M \in \cM$ and every agent $i \in [n]$, \begin{align*}&\max_{\vec{\theta}_i, \hat{\vec{\theta}}_i \in \Theta_i}\left\{\E_{\vec{\theta}_{-i} \sim \dist_{-i}}\left[u_{i,M}\left(\vec{\theta}_i, \hat{\vec{\theta}}_i, \vec{\theta}_{-i}\right) - u_{i,M}\left(\vec{\theta}_i,\vec{\theta}_i, \vec{\theta}_{-i}\right)\right]\right\}\\
 \leq &\max_{\vec{\theta}_i, \hat{\vec{\theta}}_i \in \cG_{w_i}}\left\{\frac{1}{N} \sum_{j = 1}^N u_{i,M}\left(\vec{\theta}_i,\hat{\vec{\theta}}_i, \vec{\theta}_{-i}^{(j)}\right)
 - u_{i,M}\left(\vec{\theta}_i, \vec{\theta}_i, \vec{\theta}_{-i}^{(j)}\right)\right\} + \epsilon,\end{align*} where $\epsilon = 4L_iw_i + \frac{8k_i}{N} + 4\sqrt{\frac{2 d_i}{N} \ln \frac{eN}{d_i}} + 4\sqrt{\frac{1}{2N}\ln\frac{2n}{\delta}}$ and $d_i = \pdim\left(\cF_{i, \cM}\right)$.
\end{restatable}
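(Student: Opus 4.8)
The plan is to obtain Theorem~\ref{thm:main_prob} as an immediate consequence of Theorem~\ref{thm:AIC} and Lemma~\ref{lem:grid_dispersion_prob}, chained together after a union bound. Concretely, let $E_1$ be the event (over the draw of $\sample_{-1}, \dots, \sample_{-n} \sim \dist_{-i}^N$) on which the conclusion of Theorem~\ref{thm:AIC} holds for every $M \in \cM$ and every $i \in [n]$, and let $E_2$ be the event on which both the dispersion hypotheses (Conditions 1 and 2) \emph{and} the conclusion of Lemma~\ref{lem:grid_dispersion_prob} hold for every $M \in \cM$ and every $i \in [n]$. Theorem~\ref{thm:AIC} gives $\Pr[E_1] \geq 1-\delta$, and the hypothesis of Theorem~\ref{thm:main_prob} together with Lemma~\ref{lem:grid_dispersion_prob} gives $\Pr[E_2] \geq 1-\delta$, so a union bound yields $\Pr[E_1 \cap E_2] \geq 1-2\delta$.

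Next I would argue that on $E_1 \cap E_2$ the claimed inequality holds. Fix any $M \in \cM$ and $i \in [n]$, and write $\Delta_j(\vec{\theta}_i, \hat{\vec{\theta}}_i) := u_{i,M}(\vec{\theta}_i, \hat{\vec{\theta}}_i, \vec{\theta}_{-i}^{(j)}) - u_{i,M}(\vec{\theta}_i, \vec{\theta}_i, \vec{\theta}_{-i}^{(j)})$. Since $\Theta_i = [0,1]^m$, the $\max$ over $\Theta_i$ appearing in Theorem~\ref{thm:AIC} is literally the $\max$ over $[0,1]^m$ appearing on the left-hand side of Inequality~\eqref{eq:grid_dispersion_prob}, so the two results compose without any gap. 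On $E_1$,
\[
\max_{\vec{\theta}_i, \hat{\vec{\theta}}_i \in \Theta_i}\E_{\vec{\theta}_{-i} \sim \dist_{-i}}\!\left[u_{i,M}(\vec{\theta}_i, \hat{\vec{\theta}}_i, \vec{\theta}_{-i}) - u_{i,M}(\vec{\theta}_i, \vec{\theta}_i, \vec{\theta}_{-i})\right]
\leq \max_{\vec{\theta}_i, \hat{\vec{\theta}}_i \in [0,1]^m} \frac{1}{N}\sum_{j=1}^N \Delta_j(\vec{\theta}_i, \hat{\vec{\theta}}_i) + \epsilon_{\cM}(N,\delta),
\]
and on $E_2$,
\[
\max_{\vec{\theta}_i, \hat{\vec{\theta}}_i \in [0,1]^m} \frac{1}{N}\sum_{j=1}^N \Delta_j(\vec{\theta}_i, \hat{\vec{\theta}}_i) \leq \max_{\vec{\theta}_i, \hat{\vec{\theta}}_i \in \cG_{w_i}} \frac{1}{N}\sum_{j=1}^N \Delta_j(\vec{\theta}_i, \hat{\vec{\theta}}_i) + 4 L_i w_i + \frac{8 k_i}{N}.
\]
Substituting the second inequality into the first and recalling $\epsilon_{\cM}(N,\delta) = 2\sqrt{\tfrac{2 d_i}{N}\ln\tfrac{eN}{d_i}} + 2\sqrt{\tfrac{1}{2N}\ln\tfrac{2n}{\delta}}$ yields the stated bound with $\epsilon = 4L_i w_i + \tfrac{8 k_i}{N} + 2\sqrt{\tfrac{2 d_i}{N}\ln\tfrac{eN}{d_i}} + 2\sqrt{\tfrac{1}{2N}\ln\tfrac{2n}{\delta}}$. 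Since $M$ and $i$ were arbitrary, this holds for all $M \in \cM$ and $i \in [n]$ simultaneously on $E_1 \cap E_2$, which has probability at least $1-2\delta$.

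There is essentially no technical obstacle left in this final step: all the work is in the prerequisites. The only things to be careful about are bookkeeping points — that the failure probabilities are each $\delta$ so that the union bound gives $2\delta$ (not, say, $\delta$ split in half, which would change the logarithmic term), that the dispersion hypotheses of Lemma~\ref{lem:grid_dispersion_prob} are exactly the hypotheses assumed here so no extra event is needed, and that the $\max$ over $\Theta_i = [0,1]^m$ is common to both intermediate inequalities so the chain closes. I would keep the write-up short, essentially stating "Theorem~\ref{thm:AIC} and Lemma~\ref{lem:grid_dispersion_prob} together with a union bound over their two ($\delta$-probability) failure events imply the claim," and then display the two-line chain above.
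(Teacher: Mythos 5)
Your proposal is correct and is essentially the same proof that appears in the paper's appendix: the paper defines two ``bad events'' (the negations of your $E_1$ and $E_2$), bounds each by $\delta$ via Theorem~\ref{thm:AIC} and Lemma~\ref{lem:grid_dispersion_prob} respectively, union-bounds to get $1-2\delta$, and then chains the two inequalities exactly as you do. Your bookkeeping caveats (that $\Theta_i = [0,1]^m$ so the two maxima match, and that the failure probabilities are each $\delta$) are both correct and correctly resolved.
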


\begin{proof}[Proof sketch]
Theorem~\ref{thm:AIC} guarantees that with probability at most $\delta$, the extent to which any agent $i$ can improve his utility by misreporting his type, averaged over all profiles in $\sample_{-i}$, does not approximate the true incentive compatibility approximation factor, as summarized below:

\paragraph{Bad event 1.} For some mechanism $M \in \cM$ and agent $i \in [n]$,\begin{align}&\max_{\vec{\theta}_i, \hat{\vec{\theta}}_i \in \Theta_i}\left\{\E_{\vec{\theta}_{-i} \sim \dist_{-i}}\left[u_{i,M}\left(\vec{\theta}_i, \hat{\vec{\theta}}_i, \vec{\theta}_{-i}\right) - u_{i,M}\left(\vec{\theta}_i,\vec{\theta}_i, \vec{\theta}_{-i}\right)\right]\right\}\nonumber\\
 > &\max_{\vec{\theta}_i, \hat{\vec{\theta}}_i \in [0,1]^m}\left\{\frac{1}{N} \sum_{j = 1}^N u_{i,M}\left(\vec{\theta}_i,\hat{\vec{\theta}}_i, \vec{\theta}_{-i}^{(j)}\right)
 - u_{i,M}\left(\vec{\theta}_i, \vec{\theta}_i, \vec{\theta}_{-i}^{(j)}\right)\right\} + \epsilon_{\cM}(N, \delta),\label{eq:full_type}\end{align} where $\epsilon_{\cM}(N, \delta) = 4\sqrt{\frac{2 d_i}{N} \ln \frac{eN}{d_i}} + 4\sqrt{\frac{1}{2N}\ln\frac{2n}{\delta}}$ and $d_i = \pdim\left(\cF_{i, \cM}\right)$.
 
By Lemma~\ref{lem:grid_dispersion_prob}, we also know that with probability at most $\delta$, we cannot approximate Equation~\eqref{eq:full_type} by discretizing the agent's type space, as summarized by the following bad event:
 
 \paragraph{Bad event 2.} For some mechanism $M \in \cM$ and agent $i \in [n]$, \begin{align}\max_{\vec{\theta}_i, \hat{\vec{\theta}}_i \in [0,1]^m}&\left\{\frac{1}{N} \sum_{j = 1}^N u_{i,M}\left(\vec{\theta}_i,\hat{\vec{\theta}}_i, \vec{\theta}_{-i}^{(j)}\right)
 - u_{i,M}\left(\vec{\theta}_i, \vec{\theta}_i, \vec{\theta}_{-i}^{(j)}\right)\right\}\nonumber\\
> \max_{\vec{\theta}_i, \hat{\vec{\theta}}_i \in \cG_{w_i}}&\left\{\frac{1}{N} \sum_{j = 1}^N u_{i,M}\left(\vec{\theta}_i,\hat{\vec{\theta}}_i, \vec{\theta}_{-i}^{(j)}\right)
 - u_{i,M}\left(\vec{\theta}_i, \vec{\theta}_i, \vec{\theta}_{-i}^{(j)}\right)\right\} + 4L_iw_i + \frac{8k_i}{N}.\label{eq:discretize_proof}\end{align}

With probability $1-2\delta$, neither bad event occurs, so Equation~\eqref{eq:discretize_proof} approximates Equation~\eqref{eq:full_type}.
\end{proof}

In Appendix~\ref{app:interim}, we show that if, given a set of samples $\sample_{-i}$, one can measure the dispersion parameters $L_i, k_i, w_i \in \R$ such that Conditions 1 and 2 in Theorem~\ref{thm:main_prob} hold, then the theorem's inequality holds.

We conclude with one final comparison of the greedy approach in Section~\ref{sec:greedy} and the uniform grid approach in this section. In Section~\ref{sec:greedy}, we use the functional form of the utility functions in order to bound the cover size, as quantified by pseudo-dimension. When we use the approach based on dispersion, we do not use these functional forms to the fullest extent possible; we only use simple facts about the functions' discontinuities and Lipschitzness.
 
\subsection{Dispersion and pseudo-dimension guarantees}\label{sec:applications}
\begin{table}
						\begin{tabularx}{\textwidth}{L{1} L{1}}
\toprule
Mechanism(s) & Upper bound on estimation error $\left|\hat{\gamma} - \gamma\right|$ based on dispersion \\\midrule
First-price single-item auction & $\tilde{O}\left(\frac{\kappa^{-1} + n}{\sqrt{N}}\right)$
\\\midrule
First-price combinatorial auction over $\ell$ items & $\tilde{O}\left(\frac{\kappa^{-1} + (n+1)^{2\ell}\sqrt{\ell}}{\sqrt{N}}\right)$\\\midrule
Generalized second-price auction & $\tilde{O}\left(\frac{\kappa^{-1} + n^{3/2}}{\sqrt{N}}\right)$\\\midrule
Discriminatory auction over $m$ units of a single good & $\tilde{O}\left(\frac{\kappa^{-1} + nm^2}{\sqrt{N}}\right)$\\\midrule
Uniform-price auction over $m$ units of a single good & $\tilde{O}\left(\frac{\kappa^{-1} + nm^2}{\sqrt{N}}\right)$\\\midrule
Second-price auction with spiteful bidders & $\tilde{O}\left(\frac{\kappa^{-1} + n}{\sqrt{N}}\right)$\\\bottomrule
			\end{tabularx}
			\caption{Our estimation error upper bounds based on dispersion. The value $\kappa$ is defined such that $[0, \kappa]$ contains the range of the density functions defining the agents' type distribution.}\label{tab:results}
\end{table}
We now provide dispersion and pseudo-dimension guarantees for a variety of mechanism classes. The theorems in this section allow us to instantiate the bounds from the previous section and thus understand how well our empirical incentive compatibility approximation factor matches the true approximation factor. We summarize our guarantees in Table~\ref{tab:results}.

Given an agent $i \in [n]$, a true type $\vec{\theta}_i \in [0,1]^m$, and a set of samples $\vec{\theta}_{-i}^{(1)}, \dots, \vec{\theta}_{-i}^{(N)} \sim \dist_{-i}$, we often find that the discontinuities of each function $u_{i, M}\left(\vec{\theta}_i, \cdot, \vec{\theta}_{-i}^{(j)}\right)$ are highly dependent on the vector $\vec{\theta}_{-i}^{(j)}$. For example, under the first-price single-item auction, the discontinuities of the function $u_{i, M}\left(\theta_i, \cdot, \vec{\theta}_{-i}^{(j)}\right)$ occur when $\hat{\theta}_i = \norm{\vec{\theta}_{-i}^{(j)}}_{\infty}$. Since each vector $\vec{\theta}_{-i}^{(j)}$ is a random draw from the distribution $\dist_{-i}$, these functions will not be dispersed if these random draws are highly-concentrated. For this reason, we focus on type distributions with \emph{$\kappa$-bounded density functions}. A
density function $\phi:\R \to \R$ is
$\kappa$-bounded if $\max\{\phi(x)\} \leq \kappa$. For example, the density function of any normal distribution with standard deviation $\sigma \in \R$ is $\frac{1}{\sigma\sqrt{2\pi}}$-bounded.

The challenge we face in this section is that it is not enough to show that for some $\vec{\theta}_i \in [0,1]^m$, the functions $u_{i, M}\left(\vec{\theta}_i, \cdot, \vec{\theta}_{-i}^{(1)}\right), \dots, u_{i, M}\left(\vec{\theta}_i, \cdot, \vec{\theta}_{-i}^{(N)}\right)$ are dispersed. Rather, we must prove that for all type vectors, the dispersion property holds. This facet of our analysis is notably different from prior work by \citet{Balcan18:Dispersion}: in their applications, it is enough to show that with high probability, a single, finite sequence of functions is dispersed. In contrast, we show that under mild assumptions, with high probability, each function sequence from an infinite family is dispersed.

In a bit more detail, there are two types of sequences we must prove are dispersed. The first are defined by functions over reported types $\hat{\vec{\theta}}_i$, with one sequence for each true type $\vec{\theta}_i$; the second are defined by functions over true types $\vec{\theta}_i$ with one sequence for each reported type $\hat{\vec{\theta}}_i$. In other words, sequences of the first type have the form $u_{i, M}\left(\vec{\theta}_i, \cdot, \vec{\theta}_{-i}^{(1)}\right), \dots, u_{i, M}\left(\vec{\theta}_i, \cdot, \vec{\theta}_{-i}^{(N)}\right)$ and sequences of the second type have the form $u_{i, M}\left(\cdot, \hat{\vec{\theta}}_i, \vec{\theta}_{-i}^{(1)}\right), \dots, u_{i, M}\left(\cdot, \hat{\vec{\theta}}_i, \vec{\theta}_{-i}^{(N)}\right)$. For the first family of sequences, we show that discontinuities occur only when agent $i$'s shift in her reported type causes the allocation to change. These change-points have nothing to do with the true type $\vec{\theta}_i$ (which the mechanism does not see), so all the sequences have the same discontinuities (see, for example in Lemma~\ref{lem:FP_discontinuities}). As a result, it is enough prove dispersion for a single generic sequence, defined by an arbitrary $\vec{\theta}_i$---as we do, for example, in Theorem~\ref{thm:FP_disp}---in order to guarantee dispersion for all of the sequences. Throughout our applications, the second type of sequence is even simpler.  These functions have no discontinuities at all, because the allocation is invariant as we vary the true type $\vec{\theta}_i$ (because, again, the mechanism does not see $\vec{\theta}_i$). Therefore, these functions are either constant if the player was not allocated anything, or linear otherwise (see, for example, Theorem~\ref{thm:FP_disp2}).  As a result, all sequences from this second family are immediately dispersed.

\subsubsection{First-price single-item auction}
To develop intuition, we begin by analyzing the first-price auction for a single item. We then analyze the first-price combinatorial auction in Section~\ref{sec:FP_combinatorial}.
Under this auction, each agent $i \in [n]$ has a value $\theta_i \in [0,1]$ for the item and submits a bid $\hat{\theta}_i \in [0,1]$. The agent with the highest bid wins the item and pays his bid. Therefore, agent $i$'s utility function is $u_{i, M}\left(\vec{\theta}, \hat{\vec{\theta}}\right) = \textbf{1}_{\left\{\hat{\theta}_i > \norm{\hat{\vec{\theta}}_{-i}}_{\infty}\right\}} \left(\theta_i - \hat{\theta}_i\right).$

To prove our dispersion guarantees, we begin with the following helpful lemma.

\begin{lemma}\label{lem:FP_discontinuities}
For any agent $i \in [n]$, any set $\sample_{-i} = \left\{\vec{\theta}^{\left(1\right)}_{-i}, \dots, \vec{\theta}^{\left(N\right)}_{-i}\right\}$, and any type $\theta_i \in [0,1]$, the functions $u_{i, M}\left(\theta_i, \cdot, \vec{\theta}_{-i}^{(1)}\right), \dots, u_{i, M}\left(\theta_i, \cdot, \vec{\theta}_{-i}^{(N)}\right)$ are piecewise 1-Lipschitz and their discontinuities fall in the set $\left\{\norm{\vec{\theta}_{-i}^{(j)}}_{\infty}\right\}_{j \in [N]}$.
\end{lemma}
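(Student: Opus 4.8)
The statement to prove is elementary: for the first-price single-item auction, fix agent $i$, a sample set $\sample_{-i}$, and a true type $\theta_i$; show that each function $\hat\theta_i \mapsto u_{i,M}\bigl(\theta_i, \hat\theta_i, \vec\theta_{-i}^{(j)}\bigr)$ is piecewise $1$-Lipschitz, with discontinuities only at $\norm{\vec\theta_{-i}^{(j)}}_\infty$.

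My plan is to simply unpack the definition of the utility function $u_{i,M}\bigl(\vec\theta, \hat{\vec\theta}\bigr) = \mathbf{1}_{\{\hat\theta_i > \norm{\hat{\vec\theta}_{-i}}_\infty\}}(\theta_i - \hat\theta_i)$. Since the other agents report truthfully in this notation (so $\hat{\vec\theta}_{-i} = \vec\theta_{-i}^{(j)}$), the function of $\hat\theta_i$ is
\[
u_{i,M}\bigl(\theta_i, \hat\theta_i, \vec\theta_{-i}^{(j)}\bigr) = \mathbf{1}_{\bigl\{\hat\theta_i > \norm{\vec\theta_{-i}^{(j)}}_\infty\bigr\}}\bigl(\theta_i - \hat\theta_i\bigr).
\]
First I would split the domain $[0,1]$ at the threshold $t_j := \norm{\vec\theta_{-i}^{(j)}}_\infty$: on $[0, t_j]$ the indicator is zero so the function is identically $0$; on $(t_j, 1]$ the function equals $\theta_i - \hat\theta_i$, which is affine in $\hat\theta_i$ with slope $-1$, hence $1$-Lipschitz. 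Thus the partition $\cP_j = \bigl\{[0,t_j], (t_j,1]\bigr\}$ witnesses that the function is piecewise $1$-Lipschitz, and the only point where continuity can fail is $\hat\theta_i = t_j$ — and indeed it typically does fail there, since the function jumps from $0$ to (a value near) $\theta_i - t_j$. Therefore the discontinuity of the $j$-th function lies in $\{t_j\} \subseteq \bigl\{\norm{\vec\theta_{-i}^{(j)}}_\infty\bigr\}_{j\in[N]}$, which is exactly the claim.

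There is no real obstacle here — the statement is essentially a definitional observation — but the one point to handle carefully is the boundary convention and the definition of ``piecewise Lipschitz.'' One should note that each piece is a closed-or-half-open interval on which the function agrees with a genuinely $1$-Lipschitz function (the constant $0$, or $\hat\theta_i \mapsto \theta_i - \hat\theta_i$), so the piecewise-Lipschitz condition in the sense of Definition~\ref{def:dispersion} is met with the two-cell partition $\cP_j$. It is also worth remarking (though not strictly required by the statement, which only asks that discontinuities ``fall in'' the set) that when $\theta_i \le t_j$ the function is actually continuous at $t_j$ as well, so the set of discontinuities is a subset of $\{t_j\}$, consistent with the ``$\subseteq$''-style phrasing. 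I would close by noting this lemma is exactly what feeds into the dispersion argument of Theorem~\ref{thm:FP_disp}: since all $N$ functions have their (single) discontinuity at the values $\norm{\vec\theta_{-i}^{(j)}}_\infty$, controlling dispersion reduces to controlling how concentrated the random quantities $\norm{\vec\theta_{-i}^{(j)}}_\infty$ can be, which is where the $\kappa$-bounded density assumption enters.
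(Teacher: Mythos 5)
Your proof is correct and follows essentially the same route as the paper's: unpack the utility function $u_{i,M}\left(\theta_i, \hat\theta_i, \vec\theta_{-i}^{(j)}\right) = \mathbf{1}_{\left\{\hat\theta_i > \norm{\vec\theta_{-i}^{(j)}}_{\infty}\right\}}\left(\theta_i - \hat\theta_i\right)$, split the domain at the threshold $\norm{\vec\theta_{-i}^{(j)}}_{\infty}$, and observe that each piece is constant or affine with slope $-1$. Your additional remarks about the boundary convention and the case $\theta_i \le t_j$ (where the function is in fact continuous) are sound and, if anything, more careful than the paper's one-line argument.
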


\begin{proof}
Suppose $i = 1$ and choose an arbitrary sample $j \in [N]$. For any value $\theta_1 \in [0,1]$ and bid $\hat{\theta}_1 \in [0,1]$, we know that $u_{1, M}\left(\theta_1, \hat{\theta}_1, \vec{\theta}_{-1}^{(j)}\right) = \textbf{1}_{\left\{\hat{\theta}_1 > \norm{\vec{\theta}_{-1}^{(j)}}_{\infty}\right\}} \left(\theta_1 - \hat{\theta}_1\right)$, as illustrated in Figure~\ref{fig:FP}.
\begin{figure}
  \begin{subfigure}[t]{0.45\textwidth}
    \centering
    \includegraphics{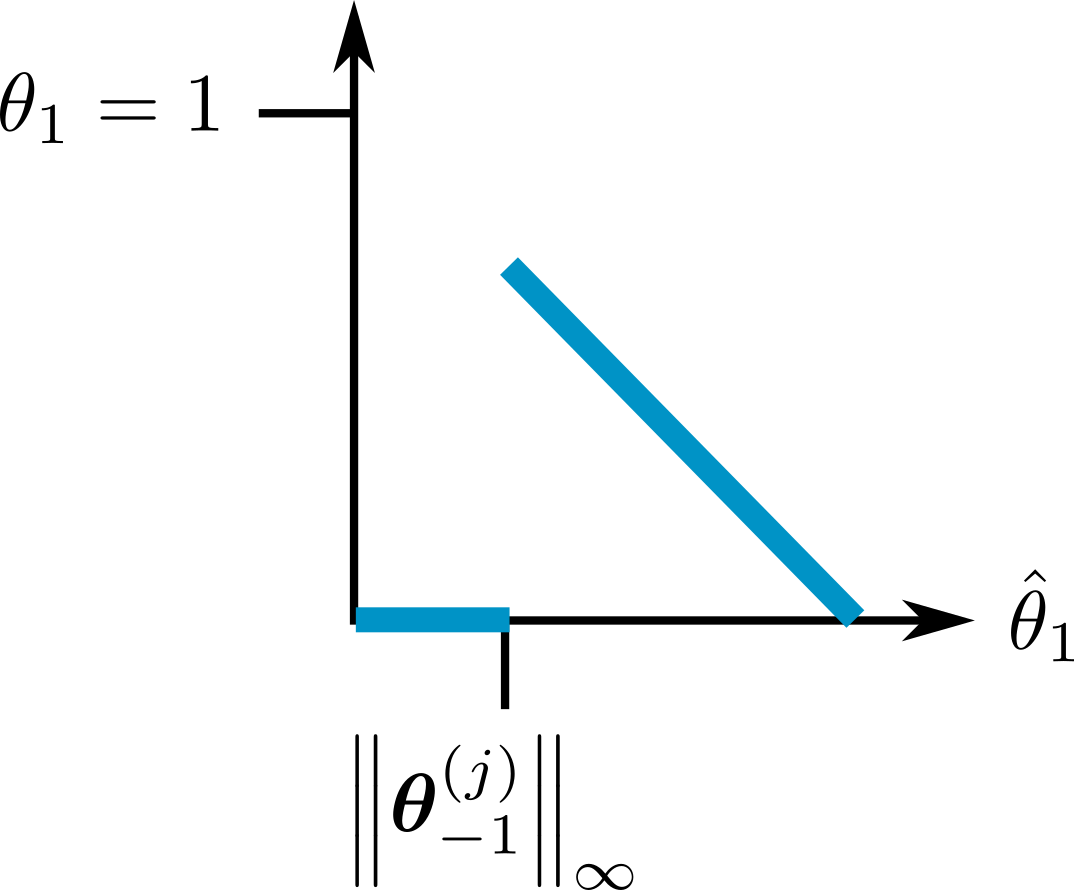}
    \caption{The function $u_{1, M}\left(1, \cdot, \vec{\theta}_{-1}^{(j)}\right)$.}
    \label{fig:FP1}
  \end{subfigure}
  \hfill
  \begin{subfigure}[t]{0.45\textwidth}
    \centering
    \includegraphics{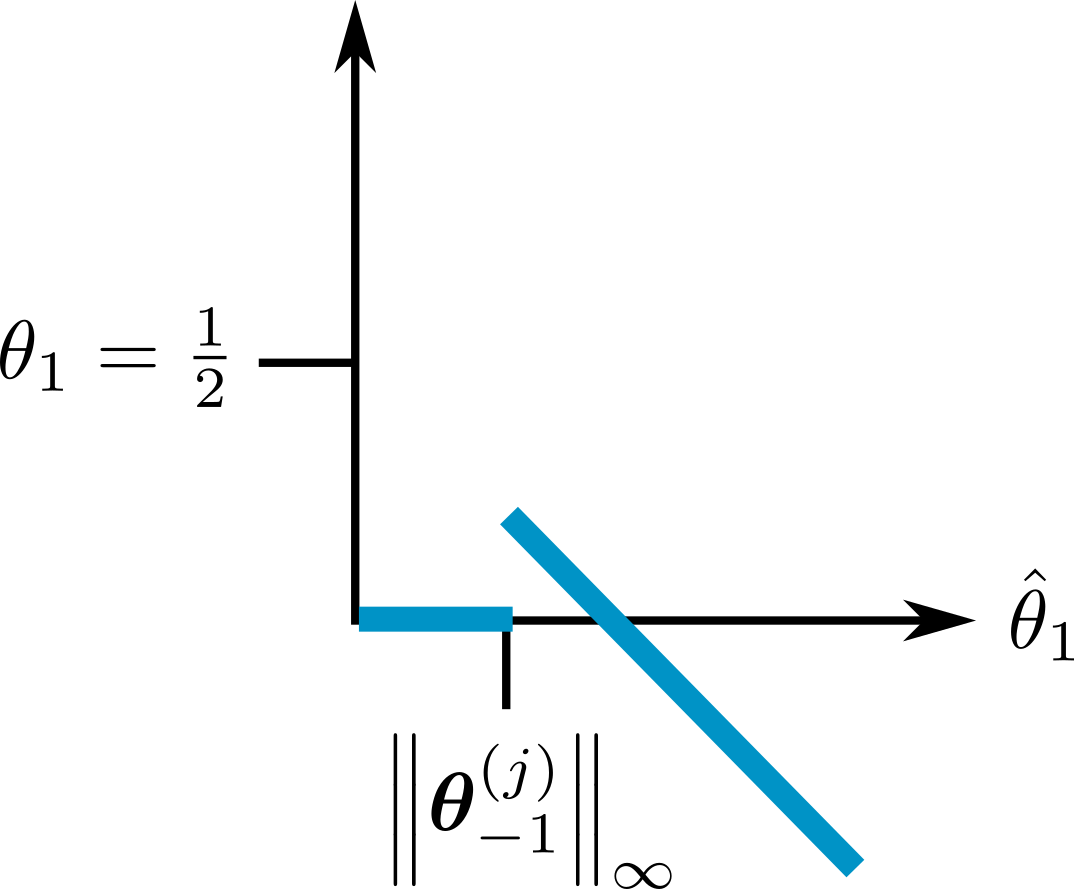}
    \caption{The function $u_{1, M}\left(\frac{1}{2}, \cdot, \vec{\theta}_{-1}^{(j)}\right)$.}
    \label{fig:FP2}
  \end{subfigure}
  \caption{Agent 1's utility function under a first-price single-item auction when the other agents' values and bids are defined by $\vec{\theta}_{-1}^{(j)} \in [0,1]^{n-1}$. The utility is  $u_{1, M}\left(\theta_1, \hat{\theta}_1, \vec{\theta}_{-1}^{(j)}\right) = \textbf{1}_{\left\{\hat{\theta}_1 > \norm{\vec{\theta}_{-1}^{(j)}}_{\infty}\right\}} \left(\theta_1 - \hat{\theta}_1\right)$.}\label{fig:FP}
\end{figure}
Therefore, no matter the value of $\theta_1$, the function $u_{1, M}\left(\theta_1, \cdot, \vec{\theta}_{-1}^{(j)}\right)$ is piecewise 1-Lipschitz with a discontinuity at $\norm{\vec{\theta}_{-1}^{(j)}}_{\infty}$.
\end{proof}

We now use Lemma~\ref{lem:FP_discontinuities} to prove our first dispersion guarantee. The full proof is in Appendix~\ref{app:applications}.

\begin{restatable}{theorem}{FP}\label{thm:FP_disp}
Suppose each agent's type has a $\kappa$-bounded density function.
With probability $1-\delta$ over the draw of the $n$ sets $\sample_{-i} = \left\{\vec{\theta}^{(1)}_{-i}, \dots, \vec{\theta}^{(N)}_{-i}\right\} \sim \dist_{-i}^N$, we have that for all agents $i \in [n]$ and types $\theta_i \in [0,1]$, the functions $u_{i, M}\left(\theta_i, \cdot, \vec{\theta}_{-i}^{(1)}\right), \dots, u_{i, M}\left(\theta_i, \cdot, \vec{\theta}_{-i}^{(N)}\right)$ are piecewise $1$-Lipschitz and $\left(O\left(1/\left(\kappa \sqrt{N}\right)\right),\tilde O\left(n \sqrt{N}\right)\right)$-dispersed.
\end{restatable}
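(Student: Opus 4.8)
The plan is to reduce the statement to a concentration estimate on the locations of the discontinuities, exploiting the fact (from Lemma~\ref{lem:FP_discontinuities}) that these locations do not depend on the true type. By Lemma~\ref{lem:FP_discontinuities}, for every $i$ and every $\theta_i$ the functions $u_{i, M}\left(\theta_i, \cdot, \vec{\theta}_{-i}^{(1)}\right), \dots, u_{i, M}\left(\theta_i, \cdot, \vec{\theta}_{-i}^{(N)}\right)$ are piecewise $1$-Lipschitz, and the $j$-th function has its only discontinuity at $X_j := \norm{\vec{\theta}_{-i}^{(j)}}_{\infty}$. Since the set $\{X_1, \dots, X_N\}$ is independent of $\theta_i$, it suffices to show that with probability $1-\delta$, for every agent $i$ and every interval of length $2w$ with $w = O\!\left(1/(\kappa\sqrt{N})\right)$, at most $k = \tilde O\!\left(n\sqrt{N}\right)$ of the points $X_1, \dots, X_N$ (counted with multiplicity) fall in that interval; by the definition of dispersion this yields $(w,k)$-dispersion of the sequence for all $\theta_i$ simultaneously.

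Next I would control the law of a single $X_j$. Fixing $i$, the variables $X_1, \dots, X_N$ are i.i.d., and $X_j = \max_{\ell \ne i}\theta_\ell^{(j)}$ is a maximum of $n-1$ independent draws with $\kappa$-bounded densities; differentiating the product CDF $\prod_{\ell \ne i} F_\ell$ shows that $X_j$ has a density bounded by $\sum_{\ell \ne i}\phi_\ell \le (n-1)\kappa$. Hence any interval of length $2w$ has probability mass at most $2(n-1)\kappa w$ under the distribution of $X_j$. Then I would apply a uniform-over-intervals concentration bound: the number of sample points in a given interval is a sum of $N$ independent Bernoulli variables with mean at most $2(n-1)\kappa w N$, and since the family of intervals on $[0,1]$ has VC dimension $2$, a standard uniform-convergence argument (as used by \citet{Balcan18:Dispersion}) shows that with probability $1-\delta/n$ every length-$2w$ interval contains at most $2(n-1)\kappa w N + O\!\left(\sqrt{N(\log N + \log(n/\delta))}\right)$ of the $X_j$. (Equivalently, one can cover $[0,1]$ by $O(1/w)$ aligned length-$4w$ intervals, apply a multiplicative Chernoff bound to each, and union bound.) Taking $w = \Theta\!\left(1/(\kappa\sqrt{N})\right)$ makes the first term $\Theta\!\left(n\sqrt{N}\right)$ and the second $\tilde O\!\left(\sqrt{N}\right)$, so $k = \tilde O\!\left(n\sqrt{N}\right)$; a union bound over the $n$ agents contributes only an extra $\log n$ factor, absorbed into $\tilde O(\cdot)$.

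The main obstacle is the requirement that dispersion hold simultaneously across the infinite family of true types $\theta_i$, which is what distinguishes this from the applications in \citet{Balcan18:Dispersion}. This is precisely what Lemma~\ref{lem:FP_discontinuities} resolves: the discontinuity of $u_{i,M}\left(\theta_i, \cdot, \vec{\theta}_{-i}^{(j)}\right)$ is pinned to $\norm{\vec{\theta}_{-i}^{(j)}}_{\infty}$ no matter what $\theta_i$ is, so one concentration argument about the order statistics $\{X_j\}$ covers every sequence in the family. The only other quantitative subtlety is that the relevant density is that of a maximum of $n-1$ coordinates, which is the source of the factor $n$ (rather than a constant) in $k$, and why the $\kappa^{-1}$ scaling of $w$ is what keeps $2(n-1)\kappa w N$ at order $n\sqrt{N}$.
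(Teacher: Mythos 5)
Your proof is correct, and it follows a genuinely different route from the paper's. The paper never isolates the distribution of the maximum: instead it observes that the discontinuity of $u_{1,M}\bigl(\theta_1,\cdot,\vec{\theta}_{-1}^{(j)}\bigr)$ lies in the full set $\bigl\{\theta_2^{(j)},\dots,\theta_n^{(j)}\bigr\}$, collects all $N(n-1)$ coordinates into $n-1$ buckets $\cB_\ell=\bigl\{\theta_\ell^{(j)}\bigr\}_{j\in[N]}$ of i.i.d.\ $\kappa$-bounded samples, and invokes the bucketed concentration bound (Lemma~\ref{lem:dispersion}, part 2, in Appendix~\ref{app:dispersion}) with $P=n-1$ and $M=N$. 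You instead work directly with the $N$ i.i.d.\ discontinuity locations $X_j=\norm{\vec{\theta}_{-i}^{(j)}}_\infty$, differentiate the product CDF to show each $X_j$ has an $(n-1)\kappa$-bounded density, and apply the \emph{single-bucket} concentration bound (Lemma~\ref{lem:dispersion}, part 1) with the worse density constant. Both deliver $\bigl(O(1/(\kappa\sqrt{N})),\tilde O(n\sqrt{N})\bigr)$-dispersion, and both correctly handle the quantification over all $\theta_i$ via Lemma~\ref{lem:FP_discontinuities}. Your version is a bit tighter conceptually (you only count the $N$ actual discontinuities, not $N(n-1)$ potential ones), and in fact if you kept $w \asymp 1/((n-1)\kappa\sqrt{N})$ rather than re-scaling to match the stated parameters, your density calculation would shave the $n$ factor off of $k$ at the price of a finer grid; the paper's bucketing argument does not expose that tradeoff. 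The one thing worth saying explicitly, which you do, is that the factor of $n$ in your $k$ comes from the $(n-1)\kappa$ density of the order statistic, whereas in the paper it comes from the $P=n-1$ bucket count.
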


\begin{proof}[Proof sketch]
Based on Lemma~\ref{lem:FP_discontinuities}, in order to prove the theorem, it is enough to show that with probability $1 - \delta/n$, any interval of width $O\left(1/\left(\kappa \sqrt{N}\right)\right)$ contains at most $\tilde O\left(n \sqrt{N}\right)$ points from the set $\left\{\norm{\vec{\theta}_{-i}^{(j)}}_{\infty}\right\}_{j \in [N]}$. We do so by relying on the $\kappa$-bounded assumption, which allows us to analyze the concentration of the discontinuities within any interval.
\end{proof}

\begin{theorem}\label{thm:FP_disp2}
For all agents $i \in [n]$, reported types $\hat{\theta}_i \in [0,1]$, and type profiles $\vec{\theta}_{-i} \in [0,1]^{n-1}$, the function $u_{i, M}\left(\cdot, \hat{\theta}_i, \vec{\theta}_{-i}\right)$ is $1$-Lipschitz.
\end{theorem}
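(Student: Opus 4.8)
The statement to prove is that for a first-price single-item auction, fixing the reported type $\hat{\theta}_i$ and the other agents' types $\vec{\theta}_{-i}$, the function $u_{i,M}(\cdot, \hat{\theta}_i, \vec{\theta}_{-i})$ — which maps agent $i$'s true type $\theta_i$ to her utility — is $1$-Lipschitz.

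This should be essentially immediate from the functional form. Recall $u_{i,M}(\vec{\theta}, \hat{\vec{\theta}}) = \mathbf{1}_{\{\hat{\theta}_i > \|\hat{\vec{\theta}}_{-i}\|_\infty\}}(\theta_i - \hat{\theta}_i)$, and in the simplified notation $u_{i,M}(\theta_i, \hat{\theta}_i, \vec{\theta}_{-i})$ the other agents report truthfully, so $\hat{\vec{\theta}}_{-i} = \vec{\theta}_{-i}$. Thus $u_{i,M}(\theta_i, \hat{\theta}_i, \vec{\theta}_{-i}) = \mathbf{1}_{\{\hat{\theta}_i > \|\vec{\theta}_{-i}\|_\infty\}}(\theta_i - \hat{\theta}_i)$. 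Here is the key observation: once $\hat{\theta}_i$ and $\vec{\theta}_{-i}$ are fixed, the indicator $\mathbf{1}_{\{\hat{\theta}_i > \|\vec{\theta}_{-i}\|_\infty\}}$ is a constant (either $0$ or $1$) — it does not depend on $\theta_i$ at all, since the mechanism never observes the true type. So as a function of $\theta_i$, $u_{i,M}(\cdot, \hat{\theta}_i, \vec{\theta}_{-i})$ is either the constant zero function (if agent $i$ does not win with bid $\hat{\theta}_i$) or the affine function $\theta_i \mapsto \theta_i - \hat{\theta}_i$ (if she does win).

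The plan is therefore: first invoke the definition of $u_{i,M}$ and the simplified notation to write down $u_{i,M}(\theta_i, \hat{\theta}_i, \vec{\theta}_{-i}) = \mathbf{1}_{\{\hat{\theta}_i > \|\vec{\theta}_{-i}\|_\infty\}}(\theta_i - \hat{\theta}_i)$. Second, observe that the coefficient $c := \mathbf{1}_{\{\hat{\theta}_i > \|\vec{\theta}_{-i}\|_\infty\}} \in \{0,1\}$ is independent of $\theta_i$. Third, conclude that for any $\theta_i, \theta_i' \in [0,1]$, $|u_{i,M}(\theta_i, \hat{\theta}_i, \vec{\theta}_{-i}) - u_{i,M}(\theta_i', \hat{\theta}_i, \vec{\theta}_{-i})| = c\,|\theta_i - \theta_i'| \leq |\theta_i - \theta_i'|$, which is exactly the $1$-Lipschitz condition. (In particular the function has no discontinuities, as the surrounding text notes — this is why the ``second type'' of sequence is trivially dispersed.)

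There is no real obstacle here; the only thing to be careful about is making explicit that the allocation indicator depends only on the \emph{reported} bids and hence is frozen when $\hat{\theta}_i$ and $\vec{\theta}_{-i}$ are held fixed, so that varying $\theta_i$ only moves the value term $\theta_i - \hat{\theta}_i$ linearly with slope $1$. This is the conceptual point the paper is highlighting for the general pattern (the ``second family of sequences'' in Section~\ref{sec:applications}), so the proof is essentially one line once that observation is stated.
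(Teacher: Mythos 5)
Your proof is correct and matches the paper's own argument: both observe that the allocation indicator depends only on the reported types $(\hat{\theta}_i, \vec{\theta}_{-i})$, which are fixed, so $u_{i,M}(\cdot,\hat{\theta}_i,\vec{\theta}_{-i})$ is either constant (if the agent loses) or linear with slope $1$ (if she wins), hence $1$-Lipschitz. You simply spell out the $|u_{i,M}(\theta_i,\cdot)-u_{i,M}(\theta_i',\cdot)|\le|\theta_i-\theta_i'|$ inequality a bit more explicitly than the paper does.
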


\begin{proof}
Since the reported types $\left(\hat{\theta}_i, \vec{\theta}_{-i}\right)$ are fixed, the allocation is fixed. Thus, $u_{i, M}\left(\cdot, \hat{\theta}_i, \vec{\theta}_{-i}\right)$ is either a constant function if $\hat{\theta}_i \leq \norm{\vec{\theta}_{-i}}_{\infty}$ or a linear function if $\hat{\theta}_i > \norm{\vec{\theta}_{-i}}_{\infty}$.
\end{proof}

Next, we prove the following pseudo-dimension bound. The full proof is in Appendix~\ref{app:applications}.

\begin{restatable}{theorem}{FPpdim}\label{thm:FP_pdim}
For any agent $i \in [n]$, the pseudo-dimension of the class $\cF_{i, M}$ is 2.
\end{restatable}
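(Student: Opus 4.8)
The plan is to bound the pseudo-dimension of $\cF_{i,M} = \left\{u_{i,M,\theta_i,\hat\theta_i} : \theta_i, \hat\theta_i \in [0,1]\right\}$, where $u_{i,M,\theta_i,\hat\theta_i}(\vec\theta_{-i}) = \mathbf{1}_{\{\hat\theta_i > \norm{\vec\theta_{-i}}_\infty\}}(\theta_i - \hat\theta_i)$, by directly analyzing the structure of these functions. The key observation is that each function in the class is extremely simple: as a function of $\vec\theta_{-i}$, it is the constant $\theta_i - \hat\theta_i$ on the region $\{\norm{\vec\theta_{-i}}_\infty < \hat\theta_i\}$ and the constant $0$ elsewhere. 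So every function takes at most two values, and those values are determined by the parameters.

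First I would establish the upper bound $\pdim(\cF_{i,M}) \le 2$ by a shattering argument. Suppose for contradiction that some set $\left\{\vec\theta_{-i}^{(1)}, \vec\theta_{-i}^{(2)}, \vec\theta_{-i}^{(3)}\right\}$ of size $3$ is shattered, witnessed by targets $z^{(1)}, z^{(2)}, z^{(3)}$. Order the three sample points by the value of $\norm{\vec\theta_{-i}^{(j)}}_\infty$; without loss of generality say $r_1 \le r_2 \le r_3$ where $r_j = \norm{\vec\theta_{-i}^{(j)}}_\infty$. For a fixed $(\theta_i,\hat\theta_i)$, the function evaluates to $\theta_i - \hat\theta_i$ on exactly those samples with $r_j < \hat\theta_i$, which is always a \emph{prefix} of this ordering, and to $0$ on the remaining suffix. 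So the behavior of the class, restricted to these three points, is captured by: choosing a threshold (which prefix gets the nonzero value) and a common value $c = \theta_i - \hat\theta_i \in [-1,1]$ for that prefix. I would argue this cannot realize all $2^3 = 8$ dichotomies relative to any fixed target vector — in particular, consider the two subsets $T = \{1, 3\}$ and $T' = \{2\}$: realizing $T$ requires the prefix $\{1,2,3\}$ (to include sample $3$) with $c \le z^{(1)}$, $c \le z^{(3)}$, but then sample $2$ is in the prefix with value $c$, and we'd need $c > z^{(2)}$ to exclude it; realizing $T'$ alone requires sample $2$ in the prefix with $c \le z^{(2)}$ but sample $1$ in the prefix (since the prefix containing $2$ also contains $1$) with $c > z^{(1)}$. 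Combining these constraints across the needed dichotomies yields a contradiction. I'd need to check a couple of cases depending on whether ties occur among the $r_j$, but the core point is that the ``prefix + single real value'' structure is a one-parameter-threshold plus one-parameter-level family, whose pseudo-dimension is at most $2$.

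Then I would show the matching lower bound $\pdim(\cF_{i,M}) \ge 2$ by exhibiting a shattered set of size $2$. Take two sample profiles $\vec\theta_{-i}^{(1)}, \vec\theta_{-i}^{(2)}$ with $\norm{\vec\theta_{-i}^{(1)}}_\infty = \frac14$ and $\norm{\vec\theta_{-i}^{(2)}}_\infty = \frac34$, and pick targets, say $z^{(1)} = z^{(2)} = -\frac14$ (exact values to be tuned). For the empty set, use a very negative value $c$ (e.g. $\hat\theta_i$ close to $1$, $\theta_i$ close to $0$) so both functions are $0 > z$... wait, one must be careful with sign; I'd instead arrange targets and parameters so that: choosing $\hat\theta_i < \frac14$ makes both functions $0$; choosing $\frac14 < \hat\theta_i < \frac34$ with $\theta_i - \hat\theta_i$ above/below the first target appropriately picks out $\{1\}$; and choosing $\hat\theta_i$ just above $\frac14$ but with an extreme $\theta_i$ handles the remaining cases. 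Concretely I would verify all four subsets $\emptyset, \{1\}, \{2\}, \{1,2\}$ are realizable by an explicit choice of $(\theta_i, \hat\theta_i)$ for each. The main obstacle is the bookkeeping in the upper-bound argument: carefully enumerating the constraints imposed by each required dichotomy and deriving the contradiction, while handling the degenerate cases where two or three of the $\norm{\vec\theta_{-i}^{(j)}}_\infty$ values coincide (in which case the relevant samples are indistinguishable by any function in the class, so fewer than $8$ dichotomies are even possible). Everything else is a matter of writing down the two-value structure of the utility functions and a short explicit construction.
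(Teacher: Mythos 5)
Your high-level plan matches the paper's: bound the pseudo-dimension by contradiction, order the three samples by $\norm{\vec\theta_{-i}^{(j)}}_\infty$, exploit the ``prefix plus a single level $c=\theta_i-\hat\theta_i$'' structure of the utility functions restricted to the sample, and derive incompatible constraints on the witnesses; then exhibit an explicit shattered pair for the lower bound. The prefix observation is a clean way to phrase what the paper does implicitly.

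However, there is a genuine gap in the upper-bound step. You repeatedly assert that ``including sample $j$ in $T$'' (i.e., requiring the function value at $\vec\theta_{-i}^{(j)}$ to be $\le z^{(j)}$) forces sample $j$ to lie in the prefix that receives the nonzero level $c$. This is false in general: if $z^{(j)} \ge 0$, the hypothesis with an empty prefix already satisfies $0 \le z^{(j)}$, and no constraint on the prefix is imposed. Concretely, if $z^{(1)}\ge0$, $z^{(2)}<0$, $z^{(3)}\ge0$, then your dichotomy $T=\{1,3\}$ is realized by the empty prefix and gives you nothing, so your pair $(T,T')=(\{1,3\},\{2\})$ does not yield a contradiction. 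The case split you need is not on ties among the $\norm{\vec\theta_{-i}^{(j)}}_\infty$, as you suggest, but on the sign of the target for the sample with the largest norm (the paper splits on $z^{(3)}>0$ versus $z^{(3)}\le 0$), and the two dichotomies must be chosen accordingly so that in each case the relevant constraint at sample $3$ forces a nonzero value and hence the full prefix. When $z^{(3)}>0$ the paper uses the two dichotomies that put sample $3$ outside $T$ (so its value must exceed $z^{(3)}>0$, forcing the full prefix), and when $z^{(3)}\le 0$ it uses the two that put sample $3$ inside $T$. With that correction your argument goes through, and your lower-bound sketch is in the same spirit as the paper's explicit table; but as written the upper bound is incomplete.
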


\begin{proof}[Proof sketch]
First, we prove $\pdim\left(\cF_{i,M}\right) \leq 2$. For a contradiction, suppose there exists a set $\sample_{-i} = \left\{\vec{\theta}_{-i}^{\left(1\right)}, \vec{\theta}_{-i}^{\left(2\right)}, \vec{\theta}_{-i}^{\left(3\right)}\right\}$ that is shattered by $\cF_{i, M}$. Without loss of generality, assume $\norm{\vec{\theta}_{-i}^{\left(1\right)}}_{\infty} < \norm{\vec{\theta}_{-i}^{\left(2\right)}}_{\infty} < \norm{\vec{\theta}_{-i}^{\left(3\right)}}_{\infty}$. Since $\sample_{-i}$ is shatterable, there exist $z^{\left(1\right)}, z^{\left(2\right)}, z^{\left(3\right)} \in \R$ witnessing the shattering.

We split the proof into two cases: $z^{\left(3\right)} > 0$ or $z^{\left(3\right)} \leq 0$. In this sketch, we analyze the former case.
Since $\sample_{-i}$ is shatterable, there is a value $\theta_i \in [0,1]$ and bid $\hat{\theta}_i \in [0,1]$ such that $u_{i, M}\left(\theta_i, \hat{\theta}_i, \vec{\theta}_{-i}^{\left(1\right)}\right) \geq z^{\left(1\right)}$, $
u_{i, M}\left(\theta_i, \hat{\theta}_i, \vec{\theta}_{-i}^{\left(2\right)}\right) < z^{\left(2\right)}$, and $u_{i, M}\left(\theta_i, \hat{\theta}_i, \vec{\theta}_{-i}^{(3)}\right) \geq z^{\left(3\right)}$. Since $u_{i, M}\left(\theta_i, \hat{\theta}_i, \vec{\theta}_{-i}^{(3)}\right) \geq z^{\left(3\right)} > 0$ and $u_{i, M}\left(\theta_i, \hat{\theta}_i, \vec{\theta}_{-i}^{(3)}\right) = 0$ if $\norm{\vec{\theta}_{-i}^{(3)}}_{\infty} \geq \hat{\theta}_i$, it must be that $\hat{\theta}_i > \norm{\vec{\theta}_{-i}^{\left(3\right)}}_{\infty} > \norm{\vec{\theta}_{-i}^{\left(2\right)}}_{\infty} > \norm{\vec{\theta}_{-i}^{\left(1\right)}}_{\infty}$. Therefore, $\theta_i - \hat{\theta}_i \geq z^{\left(1\right)}$ and $\theta_i-\hat{\theta}_i < z^{\left(2\right)}$, so $z^{\left(1\right)} < z^{\left(2\right)}$.
Similarly, there must also be $\theta_i' \in [0,1]$ and $\hat{\theta}_i' \in [0,1]$ such that $u_{i, M}\left(\theta_i', \hat{\theta}_i', \vec{\theta}_{-i}^{(1)}\right) < z^{\left(1\right)}$, $u_{i, M}\left(\theta_i', \hat{\theta}_i', \vec{\theta}_{-i}^{(2)}\right) \geq z^{\left(2\right)}$, and $u_{i, M}\left(\theta_i', \hat{\theta}_i', \vec{\theta}_{-i}^{(3)}\right) \geq z^{\left(3\right)}$. By a similar argument, this means $z^{\left(1\right)} > z^{\left(2\right)}$, which is a contradiction.
We prove that in the case where $z^{(3)} \leq 0$, we also arrive at a contradiction. Therefore, $\pdim\left(\cF_{i, M}\right) \leq 2.$ Furthermore, we exhibit a set of size 2 that is shattered by $\cF_{i, M}$, which means that $\pdim\left(\cF_{i, M}\right) \geq 2.$ Therefore, the theorem statement holds.
\end{proof}

\subsubsection{First-price combinatorial auction}\label{sec:FP_combinatorial}

Under this auction, there are $\ell$ items for sale and each agent's type $\vec{\theta}_i \in [0,1]^{2^{\ell}}$ indicates his value for each bundle $b \subseteq [\ell]$. We denote his value and bid for bundle $b$ as $\theta_i(b)$ and $\hat{\theta}_i(b)$, respectively. The allocation $\left(b^*_1, \dots, b^*_n\right)$ is the solution to the \emph{winner determination problem}:
\[\begin{array}{lll}
\text{maximize} &\sum_{i = 1}^n \hat{\theta}_i\left(b_i\right)&\\
\text{subject to} &b_i \cap b_{i'} = \emptyset &\forall i,i' \in [n], i \not= i'.
\end{array}\]
Each agent $i \in [n]$ pays $\hat{\theta}_i\left(b_i^*\right)$.

We begin with dispersion guarantees. The full proof of the following theorem is in Appendix~\ref{app:applications}.

\begin{restatable}{theorem}{FPcomb}\label{thm:FP_comb_disp}
Suppose that for each pair of agents $i,i' \in [n]$ and each pair of bundles $b, b' \subseteq [\ell]$, the values $\theta_i(b)$ and $\theta_{i'}(b')$ have a $\kappa$-bounded joint density function.
With probability $1-\delta$ over the draw of the $n$ sets $\sample_{-i} = \left\{\vec{\theta}^{\left(1\right)}_{-i}, \dots, \vec{\theta}^{\left(N\right)}_{-i}\right\} \sim \dist_{-i}^N$, we have that for all agents $i \in [n]$ and types $\vec{\theta}_i \in [0,1]^{2^{\ell}}$, the functions $u_{i, M}\left(\vec{\theta}_i, \cdot, \vec{\theta}_{-i}^{(1)}\right), \dots, u_{i, M}\left(\vec{\theta}_i, \cdot, \vec{\theta}_{-i}^{(N)}\right)$ are piecewise $1$-Lipschitz and $\left(O\left(1/\left(\kappa \sqrt{N}\right)\right), \tilde O\left((n+1)^{2\ell} \sqrt{N\ell }\right)\right)$-dispersed.
\end{restatable}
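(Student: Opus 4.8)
The plan is to mimic the single-item argument (Lemma~\ref{lem:FP_discontinuities} together with Theorem~\ref{thm:FP_disp}), with the scalar thresholds $\norm{\vec{\theta}_{-i}^{(j)}}_{\infty}$ replaced by the hyperplane arrangement cut out by the winner determination problem (WDP). First I would fix an agent $i$, a true type $\vec{\theta}_i$, and a sample $\vec{\theta}_{-i}^{(j)}$, and observe that agent $i$'s utility as a function of her reported bid $\hat{\vec{\theta}}_i \in [0,1]^{2^{\ell}}$ equals $\theta_i(b_i^*) - \hat{\theta}_i(b_i^*)$, where $b_i^* = b_i^*(\hat{\vec{\theta}}_i)$ is the bundle she wins in the WDP; on any region where $b_i^*$ is constant this depends on a single coordinate of $\hat{\vec{\theta}}_i$ with slope $-1$, hence is $1$-Lipschitz in $\ell_1$ (and constant if $b_i^* = \emptyset$). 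So $u_{i, M}(\vec{\theta}_i, \cdot, \vec{\theta}_{-i}^{(j)})$ is piecewise $1$-Lipschitz, with the pieces cut out by the cell boundaries of the WDP. The key structural point is that these boundaries are contained in the union, over ordered pairs of full allocations $(A, A')$ of the $\ell$ items to the $n$ agents or to ``unallocated'' with $b_i^A \neq b_i^{A'}$, of the welfare-tie hyperplane $\{\hat{\vec{\theta}}_i : \hat{\theta}_i(b_i^A) - \hat{\theta}_i(b_i^{A'}) = s_{A, A'}^{(j)}\}$, where $s_{A, A'}^{(j)} = \sum_{i' \neq i}\left(\theta_{i'}^{(j)}(b_{i'}^{A'}) - \theta_{i'}^{(j)}(b_{i'}^A)\right)$; these offsets depend only on the sample, not on $\vec{\theta}_i$, so (as in the discussion opening this section) it suffices to prove dispersion for the single sequence indexed by an arbitrary $\vec{\theta}_i$. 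Since each item goes to one of $n$ agents or is unallocated, there are at most $(n+1)^{\ell}$ full allocations, hence at most $(n+1)^{2\ell}$ hyperplane families.

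Next I would translate ``splitting'' into a one-dimensional counting statement. A partition $\cP_j$ is split by an $\ell_1$-ball $B(\vec{p}, w)$ only if the ball meets one of these tie hyperplanes; the $\ell_1$-distance from $\vec{p}$ to $\{x : x_b - x_{b'} = s\}$ equals $|s - (p_b - p_{b'})|$, so this happens iff $s_{A, A'}^{(j)}$ falls into a length-$2w$ interval $I_{A, A'}(\vec{p})$ determined by $\vec{p}$ and $(A, A')$. Hence the number of partitions split by $B(\vec{p}, w)$ is at most $\sum_{(A, A')} \#\{j : s_{A, A'}^{(j)} \in I_{A, A'}(\vec{p})\}$. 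To bound each term I would decompose $s_{A, A'}^{(j)}$ agent by agent: for each $i' \neq i$ with $b_{i'}^A \neq b_{i'}^{A'}$, the contribution $\theta_{i'}^{(j)}(b_{i'}^{A'}) - \theta_{i'}^{(j)}(b_{i'}^A)$ is a difference of two distinct coordinates of agent $i'$'s type, which has a $\kappa$-bounded density because their joint density is $\kappa$-bounded; the contributions are independent across $i'$ since the agents' types are independent, so (whenever it is not identically zero) $s_{A, A'}^{(j)}$ has a $\kappa$-bounded density by convolution, giving $\Pr[s_{A, A'}^{(j)} \in I] = O(\kappa w)$ for every length-$2w$ interval $I$. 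This is exactly the role played by $\kappa$-boundedness in Theorem~\ref{thm:FP_disp}.

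Then I would upgrade this to hold uniformly over all balls and union over the hyperplane families. For a fixed $(A, A')$, the events ``$s_{A, A'}^{(j)} \in I_{A, A'}(\vec{p})$'' are indexed, as $\vec{p}$ varies, by a VC class of bounded dimension (intervals of the real line precomposed with a fixed linear functional of $\vec{p}$), so the same relative Chernoff / uniform convergence bound used for Theorem~\ref{thm:FP_disp} (e.g.\ from \citet{Balcan18:Dispersion}) shows that with probability $1 - \delta'$, every length-$2w$ interval captures at most $O(\kappa w N) + \tilde{O}\left(\sqrt{N \log(1/\delta')}\right)$ of the $s_{A, A'}^{(j)}$. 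Choosing $w = \Theta\left(1 / (\kappa \sqrt{N})\right)$ makes the first term $O(\sqrt{N})$; taking $\delta' = \delta / \left(n (n+1)^{2\ell}\right)$, so that $\log(1/\delta') = \tilde{O}(\ell)$, and summing over the at most $(n+1)^{2\ell}$ families yields $k = \tilde{O}\left((n+1)^{2\ell}\sqrt{N\ell}\right)$; a final union bound over the $n$ agents gives the claimed probability $1 - \delta$.

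The step I expect to be the main obstacle is the distributional control in the second and third paragraphs, which has no analogue in the single-item case: there, each sample contributes one scalar order-statistic discontinuity whose density is immediately bounded, whereas here the discontinuity locations are ties in the combinatorial WDP. The real content is (i) verifying that, after passing to the allocation-pair formulation, each offset $s_{A, A'}^{(j)}$ is a genuine $\{-1, 0, 1\}$-combination of other agents' bundle values with at most one term per agent whose allocation changed, so that pairwise $\kappa$-bounded densities plus cross-agent independence suffice; and (ii) ruling out a contribution from the degenerate pairs $(A, A')$ that differ only in agent $i$'s bundle, for which $s_{A, A'}^{(j)}$ is identically zero and the tie hyperplane is fixed --- this requires a genericity argument showing that, for almost every sample, the corresponding WDP cell boundary is displaced off that fixed hyperplane, so such a pair does not actually produce a discontinuity of $\cP_j$. (This is clean in particular when valuations satisfy free disposal, since then optimal allocations are complete and no two complete allocations can differ in agent $i$'s bundle alone.) Everything else --- the piecewise-$1$-Lipschitz property, the count of hyperplane families, and the uniform-concentration step --- is a routine adaptation of the single-item proof.
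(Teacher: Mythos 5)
Your approach matches the paper's almost step for step: fix a true type, identify agent $i$'s utility as $\theta_i(b_i^*) - \hat\theta_i(b_i^*)$ so that it is $1$-Lipschitz on each cell of the WDP arrangement; observe that the cell boundaries of $\cP_j$ lie on the welfare-tie hyperplanes $\hat\theta_i(b_i) - \hat\theta_i(b_i') = \sum_{i'\neq i}\bigl(\theta_{i'}^{(j)}(b_{i'}') - \theta_{i'}^{(j)}(b_{i'})\bigr)$ indexed by allocation pairs (and that these do not depend on the true type $\vec\theta_i$, so a single generic sequence suffices); bucket the hyperplanes into at most $(n+1)^{2\ell}$ families of parallel hyperplanes; invoke Lemmas~\ref{lem:sum_bounded} and \ref{lem:difference_bounded} plus cross-agent independence to conclude that the offsets in each family are i.i.d.\ $\kappa$-bounded; and then apply the $\kappa$-bounded dispersion machinery (the paper uses Lemma~\ref{lem:hyperplanes} directly; your reduction to a one-dimensional counting statement plus a uniform convergence bound over intervals is the same argument unpacked) together with union bounds over the families and over the $n$ agents. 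Your choice $w = \Theta(1/(\kappa\sqrt{N}))$ and the resulting $k = \tilde O\bigl((n+1)^{2\ell}\sqrt{N\ell}\bigr)$ match the paper's constants.

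Where you differ is in your explicit caveat (ii), and you are right to flag it: the paper's proof asserts, without qualification, that for \emph{every} unordered pair of allocations $\vec{b},\vec{b}'$ the offsets $\sum_{i'\neq i}\bigl(\theta_{i'}^{(j)}(b_{i'}') - \theta_{i'}^{(j)}(b_{i'})\bigr)$ are drawn from $\kappa$-bounded distributions. That claim fails outright for what you call degenerate pairs --- pairs with $b_{i'} = b_{i'}'$ for all $i'\neq i$ --- since the offset is then identically $0$, a point mass, and the corresponding tie hyperplane $\hat\theta_i(b_i) = \hat\theta_i(b_i')$ is a \emph{fixed} hyperplane carried identically by all $N$ partitions. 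The paper's invocation of Lemma~\ref{lem:hyperplanes} does not cover these families, and it gives no argument that these fixed hyperplanes fail to appear on the boundary of $\cP_j$; without free disposal on the type space $[0,1]^{2^\ell}$, a degenerate pair can indeed realize a genuine WDP-cell boundary for a constant fraction of samples (for instance, it suffices that the other agents' draws make giving the contested items to anyone but agent $i$ decrease welfare), in which case a ball of radius $w$ centered on that hyperplane would split $\Omega(N)$ partitions and the claimed $k = \tilde O(\sqrt{N})$ per family would fail. So this is not a cosmetic omission: either the type space needs a restriction such as free disposal (forcing optimal allocations to be complete, as you observe, so no two optima differ only in agent $i$'s bundle), or a genuine genericity argument is needed to show these hyperplanes never carry discontinuities, and neither appears in the paper. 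Your proposal, apart from leaving that one step open, is sound and is the right route; you have in fact identified the weak point of the published argument rather than introduced a new one.
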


\begin{proof}[Proof sketch]
Consider agent 1. Fix an arbitrary sample $j \in [N]$ and pair of allocations $\left(b_1, \dots, b_n\right)$ and $\left(b_1', \dots, b_n'\right)$. We know that $\left(b_1', \dots, b_n'\right)$ will \emph{not} be the allocation so long as $\hat{\vec{\theta}}_1 \in [0,1]^{2^{\ell}}$ is chosen such that $\hat{\theta}_1(b_1) + \sum_{i = 2}^n \theta^{(j)}_i(b_i) > \hat{\theta}_1(b_1') + \sum_{i = 2}^n \theta^{(j)}_i(b_i').$ This means that across all $\vec{\theta}_1 \in [0,1]^{2^{\ell}}$, there is a fixed set $\cH_j$ of ${(n+1)^\ell \choose 2}$ hyperplanes (one per pair of allocations) such that for any connected component $C$ of $[0,1]^{2^{\ell}} \setminus \cH_j$, the allocation given bids $\left(\hat{\vec{\theta}}_1, \vec{\theta}_{-1}^{(j)}\right)$ is invariant across all $\hat{\vec{\theta}}_1 \in C$. When the allocation is fixed, agent 1's utility is 1-Lipschitz as a function of $\hat{\vec{\theta}}_1$.

  Next, consider the set $\sample_{-1} = \left\{\vec{\theta}^{\left(1\right)}_{-1}, \dots, \vec{\theta}^{\left(N\right)}_{-1}\right\}$ of type profiles and the corresponding sets $\cH_1, \dots, \cH_N$ of hyperplanes. These hyperplanes can be partitioned into ${(n+1)^{\ell} \choose 2}$
  buckets consisting of parallel hyperplanes with offsets independently drawn
  from $\kappa$-bounded distributions. Within each bucket, the
  offsets are $(w',k')$-dispersed with high probability with $w' = O\left(1/\left(\kappa \sqrt{N}\right)\right)$ and $k' = \tilde O\left(\sqrt{N\ell}\right).$ Since the hyperplanes within
  each bucket are parallel and since their offsets are dispersed, for any ball
  $\cB$ of radius $w'$ in $[0,1]^{2^{\ell}}$, at most $k'$ hyperplanes from each
  set intersect $\cB$. The theorem statement holds by a union bound over the $n$ agents and ${(n+1)^{\ell} \choose 2}$
  buckets. \end{proof}

\begin{theorem}
For all agents $i \in [n]$, reported types $\hat{\vec{\theta}}_i \in [0,1]^{2^\ell}$, and type profiles $\vec{\theta}_{-i} \in [0,1]^{(n-1)2^\ell}$, the function $u_{i, M}\left(\cdot, \hat{\vec{\theta}}_i, \vec{\theta}_{-i}\right)$ is $1$-Lipschitz.
\end{theorem}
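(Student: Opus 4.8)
The plan is to mirror the argument used for the single-item case in Theorem~\ref{thm:FP_disp2}: once the reported bid profile is fixed, the allocation is fixed, and agent $i$'s utility becomes an affine function of a single coordinate of her true type.

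First, I would observe that in the function $u_{i, M}\left(\cdot, \hat{\vec{\theta}}_i, \vec{\theta}_{-i}\right)$ all reported bids are held fixed---agent $i$ reports $\hat{\vec{\theta}}_i$ and, by our simplified notation, the other agents report $\vec{\theta}_{-i}$---so the winner determination problem $\max \sum_{j=1}^n \hat{\theta}_j(b_j)$ subject to the disjointness constraints is a fixed instance. Hence its optimal solution $\left(b_1^*, \dots, b_n^*\right)$ (under whatever fixed tie-breaking rule the mechanism uses) does not depend on agent $i$'s true type $\vec{\theta}_i$, and neither does agent $i$'s payment $\hat{\theta}_i\left(b_i^*\right)$.

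Next, I would write agent $i$'s utility explicitly as $u_{i, M}\left(\vec{\theta}_i, \hat{\vec{\theta}}_i, \vec{\theta}_{-i}\right) = \theta_i\left(b_i^*\right) - \hat{\theta}_i\left(b_i^*\right)$. Since $\hat{\theta}_i\left(b_i^*\right)$ is constant with respect to $\vec{\theta}_i$, this is an affine function of $\vec{\theta}_i \in [0,1]^{2^\ell}$ that depends on $\vec{\theta}_i$ only through the single coordinate $\theta_i\left(b_i^*\right)$, with coefficient $1$. Therefore, for any two true types $\vec{\theta}_i, \vec{\theta}_i' \in [0,1]^{2^\ell}$, $\left|u_{i, M}\left(\vec{\theta}_i, \hat{\vec{\theta}}_i, \vec{\theta}_{-i}\right) - u_{i, M}\left(\vec{\theta}_i', \hat{\vec{\theta}}_i, \vec{\theta}_{-i}\right)\right| = \left|\theta_i\left(b_i^*\right) - \theta_i'\left(b_i^*\right)\right| \leq \norm{\vec{\theta}_i - \vec{\theta}_i'}_1$, which is exactly $1$-Lipschitzness with respect to the $\ell_1$-norm.

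I do not anticipate any real obstacle: the statement is essentially a restatement of the fact that a combinatorial auction's allocation and payments are functions of the reported bids alone, so the conclusion follows immediately once that observation is in place. The only minor point worth stating carefully is that the utility is constant in all but one coordinate of $\vec{\theta}_i$, so the $\ell_1$-Lipschitz constant is $1$ rather than a quantity that might a priori scale with $2^\ell$.
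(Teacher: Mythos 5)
Your proof is correct and takes essentially the same approach as the paper: once the reported bids are fixed, the allocation $\left(b_1^*, \dots, b_n^*\right)$ is fixed, so $u_{i, M}\left(\cdot, \hat{\vec{\theta}}_i, \vec{\theta}_{-i}\right) = \theta_i\left(b_i^*\right) - \hat{\theta}_i\left(b_i^*\right)$ is affine in a single coordinate of $\vec{\theta}_i$ and hence $1$-Lipschitz with respect to the $\ell_1$-norm. You simply spell out the one-line observation the paper makes in slightly more detail.
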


\begin{proof}
So long as all bids are fixed, the allocation is fixed, so $u_{i, M}\left(\cdot, \hat{\vec{\theta}}_i, \vec{\theta}_{-i}\right)$ is 1-Lipschitz.
\end{proof}

Next, we prove the following pseudo-dimension bound.

\begin{theorem}
For any agent $i \in [n]$, the pseudo-dimension of the class $\cF_{i, M}$ is $O\left(\ell2^{\ell}\log n\right)$.
\end{theorem}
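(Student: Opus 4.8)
The plan is a standard sign-pattern/region-counting argument for pseudo-dimension. Fix a candidate shattered set $\sample_{-i} = \{\vec\theta_{-i}^{(1)},\dots,\vec\theta_{-i}^{(N)}\}$ together with witnesses $z^{(1)},\dots,z^{(N)}\in\R$, and bound the number of distinct binary vectors $\big(\mathbf 1\big\{u_{i,M}\big(\vec\theta_i,\hat{\vec\theta}_i,\vec\theta_{-i}^{(j)}\big)\le z^{(j)}\big\}\big)_{j\in[N]}$ that can be realized as the pair $(\vec\theta_i,\hat{\vec\theta}_i)$ ranges over the parameter space $[0,1]^{2^\ell}\times[0,1]^{2^\ell}\subseteq\R^{2^{\ell+1}}$. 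If this count is strictly below $2^N$, then $\sample_{-i}$ cannot be shattered; the largest $N$ for which $2^N$ realizations remain possible upper bounds $\pdim(\cF_{i,M})$.

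The structural input is the fact already used to prove Theorem~\ref{thm:FP_comb_disp}: for a fixed profile $\vec\theta_{-i}^{(j)}$, the reported-type space is partitioned by a fixed arrangement $\cH_j$ of at most $\binom{(n+1)^\ell}{2}$ hyperplanes (one per unordered pair of the $(n+1)^\ell$ feasible allocations) so that the winner-determination outcome, and in particular the bundle $b^\star$ awarded to agent $i$, is constant on each cell; these hyperplanes do not involve $\vec\theta_i$, since the mechanism never observes it. On a cell where agent $i$ wins bundle $b$, her utility at sample $j$ equals the affine function $\vec\theta_i(b)-\hat{\vec\theta}_i(b)$ of the parameters, so there the event $u_{i,M}(\cdot,\cdot,\vec\theta_{-i}^{(j)})\le z^{(j)}$ is decided by the single hyperplane $\vec\theta_i(b)-\hat{\vec\theta}_i(b)=z^{(j)}$. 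Adjoining to $\cH_j$ (lifted to $\R^{2^{\ell+1}}$) one such threshold hyperplane for each of the at most $(n+1)^\ell$ bundles $b$ yields a set $\cH'_j$ of at most $(n+1)^{2\ell}$ hyperplanes such that the $j$-th indicator is constant on every connected component of $\R^{2^{\ell+1}}\setminus\bigcup_{k\in[N]}\cH'_k$; ties in the winner-determination problem are broken by a fixed deterministic rule, which leaves this piecewise-affine structure intact.

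I would then conclude by counting. The arrangement $\bigcup_{j=1}^N\cH'_j$ has at most $N(n+1)^{2\ell}$ hyperplanes in $\R^{D}$ with $D=2^{\ell+1}$, hence at most $\big(eN(n+1)^{2\ell}/D\big)^{D}$ connected components by the classical bound $\sum_{k=0}^{D}\binom{H}{k}\le(eH/D)^D$, and the number of realizable binary vectors is at most this. For shattering we need $2^N\le\big(eN(n+1)^{2\ell}/2^{\ell+1}\big)^{2^{\ell+1}}$; taking logarithms gives $N=O\big(2^{\ell}\log N+\ell\,2^{\ell}\log n\big)$, which by the usual implication ``$N\le a\ln N+b \Rightarrow N=O(a\ln a+b)$'' solves to $N=O(\ell\,2^{\ell}\log n)$. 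Thus $\pdim(\cF_{i,M})=O(\ell\,2^{\ell}\log n)$.

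The main obstacle is the middle step: making the reduction from ``utility below threshold'' to ``sign of a fixed affine function'' fully rigorous, i.e. verifying that on each cell of the refined arrangement the winning bundle and the relevant threshold comparison are simultaneously constant, and confirming that degenerate or tied configurations (a measure-zero set, handled by the fixed tie-breaking rule) do not enlarge the region count. Everything else — the arrangement bound and solving the resulting transcendental inequality for $N$ — is routine; alternatively one could package the argument via a general dual-class pseudo-dimension lemma in the spirit of \citet{Balcan18:Dispersion}, but the direct count above is self-contained.
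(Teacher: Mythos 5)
Your proof is correct and mathematically equivalent to the paper's: both establish that the allocation is piecewise constant over a fixed arrangement of at most $(n+1)^{2\ell}$ hyperplanes in the reported-type variables (so the utility is piecewise linear in $(\vec{\theta}_i, \hat{\vec{\theta}}_i)$), and then bound the pseudo-dimension by counting sign patterns over the arrangement of allocation-defining hyperplanes together with threshold hyperplanes. The only difference is packaging: the paper routes through a reusable delineability lemma (Theorem~\ref{thm:delineable}) that does a two-stage count (overlay of allocation arrangements, then threshold hyperplanes within each cell), whereas you merge both families into a single arrangement and count regions once, which gives the same asymptotic bound $O(\ell 2^\ell \log n)$.
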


\begin{proof}
As we saw in the proof of Theorem~\ref{thm:FP_comb_disp}, for any type profile $\vec{\theta}_{-i} \in [0,1]^{(n-1)2^\ell}$, there is a set $\cH$ of $(n+1)^{2\ell}$ hyperplanes such that for any connected component $C$ of $[0,1]^{2^{\ell}} \setminus \cH$, the auction's allocation given bids $\left(\hat{\vec{\theta}}_i, \vec{\theta}_{-i}\right)$ is invariant across all $\hat{\vec{\theta}}_i \in C$. So long as the allocation is fixed, $u_{i, M}\left(\cdot, \cdot, \vec{\theta}_{-i}\right)$ is a linear function of $\left(\vec{\theta}_i, \hat{\vec{\theta}}_i\right)$. Therefore, the pseudo-dimension bound follows from Theorem~\ref{thm:delineable} in Appendix~\ref{app:delineable}, which relates the class's pseudo-dimension to the number of hyperplanes splitting the type space into regions where the utility function is linear.
\end{proof}

\subsubsection{Generalized second-price auction}

A \emph{generalized second-price auction} allocates $m$ advertising slots to a set of $n > m$ agents. Each slot $s$ has a probability $\alpha_{s,i}$ of being clicked if agent $i$'s advertisement is in that slot. We assume $\alpha_{s,i}$ is fixed and known by the mechanism designer. The mechanism designer assigns a weight $\omega_i \in (0, 1]$ per agent $i$. Each agent has a value $\theta_i \in [0,1]$ for a click and submits a bid $\hat{\theta}_i \in [0,1]$. The mechanism allocates the first slot to the agent with the highest weighted bid $\omega_i\hat{\theta}_i$, the second slot to the agent with the second highest weighted bid, and so on. Let $\pi(s)$ be the agent allocated slot $s$. If slot $s$ is clicked on, agent $\pi(s)$ pays the lowest amount that would have given him slot $s$, which is $\omega_{\pi(s+1)}\hat{\theta}_{\pi(s+1)}/\omega_{\pi(s)}$. Agent $\pi(s)$'s expected utility is thus $u_{\pi(s), M}\left(\vec{\theta}, \hat{\vec{\theta}}\right) = \alpha_{s,\pi(s)}\left(\theta_{\pi(s)} - \omega_{\pi(s+1)}\hat{\theta}_{\pi(s+1)}/\omega_{\pi(s)}\right).$

For $r \in \Z_{\geq 1}$, let $\mclass_r$ be the set of auctions defined by agent weights from the set $\left\{1/r, 2/r, \dots, 1\right\}$. We begin by proving dispersion guarantees. The full proof is in Appendix~\ref{app:applications}.

\begin{restatable}{theorem}{GSP}
Suppose each agent's type has a $\kappa$-bounded density function.
With probability $1-\delta$ over the draw of the $n$ sets $\sample_{-i} = \left\{\vec{\theta}^{(1)}_{-i}, \dots, \vec{\theta}^{(N)}_{-i}\right\} \sim \dist_{-i}^N$, we have that for all agents $i \in [n]$, types $\theta_i \in [0,1]$, and mechanisms $M \in \cM_r$, the functions $u_{i, M}\left(\theta_i, \cdot, \vec{\theta}_{-i}^{(1)}\right), \dots, u_{i, M}\left(\theta_i, \cdot, \vec{\theta}_{-i}^{(N)}\right)$ are piecewise $0$-Lipschitz and $\left(O\left(1/\left(r \kappa \sqrt{N}\right)\right), \tilde O\left(\sqrt{n^3N}\right)\right)$-dispersed.
\end{restatable}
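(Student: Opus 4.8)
The plan is to follow the template already used for the first-price single-item auction (Lemma~\ref{lem:FP_discontinuities} and Theorem~\ref{thm:FP_disp}). First I would prove the analogue of Lemma~\ref{lem:FP_discontinuities}: fix a sample $\vec{\theta}_{-i}^{(j)}$, a mechanism $M \in \cM_r$ with weights $\omega_1, \dots, \omega_n$, and a true type $\theta_i$, and view $u_{i,M}(\theta_i, \cdot, \vec{\theta}_{-i}^{(j)})$ as a function of the reported bid $\hat{\theta}_i$. As $\hat{\theta}_i$ increases, agent $i$'s weighted bid $\omega_i \hat{\theta}_i$ sweeps upward through the fixed weighted bids $\{\omega_{i'}\theta_{i'}^{(j)} : i' \neq i\}$ of the other agents, and agent $i$'s slot (and hence the allocation) changes only when $\omega_i\hat{\theta}_i$ crosses one of these values. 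On any maximal interval of $\hat{\theta}_i$ on which agent $i$ holds a fixed slot $s$, her ``next price'' $\omega_{\pi(s+1)}\hat{\theta}_{\pi(s+1)}/\omega_i$ is an order statistic of the other agents' (fixed) weighted bids and is therefore constant; so $u_{i,M}(\theta_i, \cdot, \vec{\theta}_{-i}^{(j)})$ is piecewise constant ($0$-Lipschitz), with all discontinuities contained in $\{\omega_{i'}\theta_{i'}^{(j)}/\omega_i : i' \neq i\}$, a set of size at most $n-1$. As in the first-price case, this discontinuity set does not depend on $\theta_i$ --- the mechanism never sees the true type --- so it suffices to prove dispersion for a single generic value of $\theta_i$, and the ``for all $\theta_i$'' quantifier comes for free.

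Next I would establish the concentration statement driving dispersion. Each candidate discontinuity location $\omega_{i'}\theta_{i'}^{(j)}/\omega_i$ is an i.i.d.\ draw $\theta_{i'}^{(j)}$ rescaled by the factor $\omega_{i'}/\omega_i \in [1/r, r]$, and rescaling a $\kappa$-bounded density by a factor at least $1/r$ yields a density bounded by $r\kappa$. Fixing $i$, the threshold agent $i'$, and the weight profile, the $N$ points $\{\omega_{i'}\theta_{i'}^{(j)}/\omega_i\}_{j \in [N]}$ are thus $N$ i.i.d.\ draws from an $(r\kappa)$-bounded density, so by the standard one-dimensional dispersion bound of \citet{Balcan18:Dispersion} (intervals on the line have VC dimension $2$), with failure probability $\zeta$ every interval of width $w = O(1/(r\kappa\sqrt{N}))$ contains at most $\tilde O(\sqrt{N\log(1/\zeta)})$ of those $N$ points. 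Since a ball of radius $w$ splits the partition of the $j$-th function only if it contains one of that function's at most $n-1$ discontinuities, the number of functions split is at most the total number of discontinuities in the ball, i.e.\ at most $\sum_{i' \neq i}$(number of group-$i'$ points in the ball).

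Taking $\zeta = \delta/(n(n-1)r^n)$ so that the interval bound holds simultaneously for every agent $i$, every threshold agent $i'$, and every one of the $r^n$ weight profiles in $\cM_r$ contributes a $\sqrt{n\log r}$ term inside the square root, giving $\tilde O(\sqrt{nN})$ points per group; summing over the $n-1$ threshold agents yields $k = (n-1)\cdot\tilde O(\sqrt{nN}) = \tilde O(\sqrt{n^3 N})$, while $w = O(1/(r\kappa\sqrt{N}))$ is inherited from the worst-case density bound $r\kappa$. A union bound over the resulting failure events closes out the ``with probability $1-\delta$'' guarantee. (A sharper accounting that union-bounds only over the $O(nr^2)$ distinct scaled-sample groups $\{(\omega_{i'}/\omega_i)\theta_{i'}^{(j)}\}_j$, rather than over all $r^n$ mechanisms, would improve $k$ to $\tilde O(n\sqrt{N})$, but the stated bound suffices.)

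The main obstacle is exactly the point flagged in Section~\ref{sec:applications}: dispersion must hold \emph{uniformly} over an infinite family of function sequences, one per pair $(\theta_i, M)$. The $\theta_i$ direction is disposed of by the structural observation above, but the $M$ direction is genuine work --- we must simultaneously control all weight profiles, and the rescaling by $\omega_{i'}/\omega_i$ is what forces both the $r\kappa$ density bound (hence the $1/(r\kappa\sqrt{N})$ width) and the extra $\sqrt{n}$ in $k$. A secondary, easy-to-overlook point is verifying that the ``next price'' term really is invariant within a slot --- it is an order statistic of the other agents' fixed weighted bids --- so that the pieces of $u_{i,M}(\theta_i, \cdot, \vec{\theta}_{-i}^{(j)})$ are constant rather than merely Lipschitz, which is what lets us take $L_i = 0$.
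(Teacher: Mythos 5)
Your proof is correct and follows essentially the same route as the paper's: identify the discontinuity set of $u_{i,M}(\theta_i,\cdot,\vec{\theta}_{-i}^{(j)})$ as $\{\omega_{i'}\theta_{i'}^{(j)}/\omega_i : i'\neq i\}$ (independent of $\theta_i$, so ``for all $\theta_i$'' is free), use Lemma~\ref{lem:constant_mult} with $\omega_{i'}/\omega_i \geq 1/r$ to get an $r\kappa$-bounded rescaled density, bucket by threshold agent $i'$, and union-bound over the $n$ agents and the $r^n$ weight profiles to get $w = O(1/(r\kappa\sqrt N))$ and $k = \tilde O(\sqrt{n^3 N})$. The paper packages the per-bucket concentration and the bucket union bound via Lemma~\ref{lem:dispersion}(2), whereas you do the same accounting by hand; your parenthetical about union-bounding only over distinct scaled-sample groups to shave the extra $\sqrt n$ is a genuine (if minor) sharpening, though the count of such groups is closer to $O(n^2 r^2)$ than $O(nr^2)$ since each agent $i$ has its own sample set $\sample_{-i}$.
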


\begin{proof}[Proof sketch]
Consider agent 1. Choose an arbitrary sample $j \in [N]$ and mechanism $M \in \cM_r$ with weights $\vec{\omega} = \left(\omega_1, \dots, \omega_n\right)$. Let $\omega_{i_1}\theta_{i_1}^{(j)} \leq \cdots \leq \omega_{i_{n-1}}\theta_{i_{n-1}}^{(j)}$ be the weighted types of all agents except agent $1$. Consider the $n$ intervals delineated by these $n-1$ weighted values and suppose we vary $\hat{\theta}_1$ in such a way that $\omega_1\hat{\theta}_1$ remains within one interval. The allocation will be invariant, so agent 1's utility will be a constant function of $\hat{\theta}_1$. Therefore, no matter the value of $\theta_1 \in [0,1]$, the functions $u_{1, M}\left(\theta_1, \cdot, \vec{\theta}_{-1}^{(1)}\right), \dots, u_{1, M}\left(\theta_1, \cdot, \vec{\theta}_{-1}^{(N)}\right)$ are piecewise constant with discontinuities in the set $\cB_{1, \vec{\omega}} = \left\{\omega_{i}\theta_{i}^{(j)}/\omega_1 : i \in \{2, \dots, n\}, j \in [N]\right\}.$
We use the $\kappa$-bounded assumption to show that with probability $1-\delta$, for all $\vec{\omega}$ and agents $i \in [n]$, at most $\tilde{O}\left(n^{3/2} \sqrt{N}\right)$ of the values in each set $\cB_{i,\vec{\omega}}$ fall within any interval of length $O\left(1/\left(r \kappa \sqrt{N}\right)\right)$.
\end{proof}

\begin{theorem}
For all agents $i \in [n]$, all reported types $\hat{\theta}_i \in [0,1]$, all type profiles $\vec{\theta}_{-i} \in [0,1]^{n-1}$, and all generalized second-price auctions $M$, the function $u_{i, M}\left(\cdot, \hat{\theta}_i, \vec{\theta}_{-i}\right)$ is $1$-Lipschitz.
\end{theorem}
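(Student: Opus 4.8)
The plan is to mirror the argument used for Theorem~\ref{thm:FP_disp2} and its combinatorial analog: once all reported types are held fixed, the GSP allocation and every payment are pinned down, so agent $i$'s utility becomes a simple affine (or constant) function of his true value $\theta_i$, and the claim then reduces to bounding a single slope.

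Concretely, I would fix an agent $i$, a reported type $\hat{\theta}_i \in [0,1]$, and a profile $\vec{\theta}_{-i} \in [0,1]^{n-1}$ of the other agents' types, which also serve as their reported types since they report truthfully. The allocation depends only on the weighted bids $\omega_j \hat{\theta}_j$, none of which involves $\theta_i$; hence the slot permutation $\pi$ is completely determined by $(\hat{\theta}_i, \vec{\theta}_{-i})$ and the weights. Likewise, each per-click price $\omega_{\pi(s+1)}\hat{\theta}_{\pi(s+1)}/\omega_{\pi(s)}$ depends only on reported bids and weights, so it too is fixed.

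From here there are two cases. If agent $i$ receives no slot (which can happen since $n > m$), then $u_{i, M}\left(\cdot, \hat{\theta}_i, \vec{\theta}_{-i}\right) \equiv 0$, which is trivially $1$-Lipschitz. If agent $i$ receives slot $s$, then by the utility formula $u_{i, M}\left(\theta_i, \hat{\theta}_i, \vec{\theta}_{-i}\right) = \alpha_{s,i}\left(\theta_i - \omega_{\pi(s+1)}\hat{\theta}_{\pi(s+1)}/\omega_i\right)$, which is affine in $\theta_i$ with slope $\alpha_{s,i}$. The only point that needs checking is that this slope is at most $1$ in absolute value, and this holds because $\alpha_{s,i}$ is a click probability and therefore lies in $[0,1]$. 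Hence $u_{i, M}\left(\cdot, \hat{\theta}_i, \vec{\theta}_{-i}\right)$ is $1$-Lipschitz in either case.

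There is no real obstacle here: the statement is essentially a one-line consequence of the fact that the mechanism never observes $\theta_i$, so freezing the reports freezes the allocation and all payments, leaving a constant or affine dependence on $\theta_i$. This is exactly the ``second type'' of function sequence described in Section~\ref{sec:applications}, which is why such sequences are automatically dispersed and require no probabilistic argument.
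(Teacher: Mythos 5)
Your proof is correct and takes essentially the same approach as the paper: fix the reported types, observe the allocation and per-click prices are then determined independently of $\theta_i$, and conclude that the utility is either constant or affine in $\theta_i$ with slope $\alpha_{s,i} \in [0,1]$. You make the slope bound a bit more explicit than the paper does, but the argument is identical in substance.
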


\begin{proof}
Since the reported types $\left(\hat{\theta}_i, \vec{\theta}_{-i}\right)$ are fixed, the allocation is fixed. Let $\pi(s)$ be the agent who is allocated slot $s$. The function $u_{i, M}\left(\theta_i, \hat{\theta}_i, \vec{\theta}_{-i}\right)$ is either a constant function of $\theta_i$ if agent $i$ is not allocated a slot or a linear function of $\theta_i$ with a slope of $\alpha_{\pi^{-1}(i), i}$ otherwise.
\end{proof}

We now provide the following pseudo-dimension guarantee. The full proof is in Appendix~\ref{app:applications}.

\begin{restatable}{theorem}{GSPpdim}
For any agent $i \in [n]$ and $r \in \Z_{\geq 1}$, $\pdim\left(\cF_{i, \mclass_r}\right) = O\left(n\log n\right).$
\end{restatable}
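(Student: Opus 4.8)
The plan is to apply the delineability machinery (Theorem~\ref{thm:delineable}) exactly as in the first-price combinatorial case, after choosing a parametrization of $\cF_{i,\mclass_r}$ that makes agent $i$'s utility piecewise linear in the parameters. The natural parameters are the two types $\theta_i,\hat{\theta}_i$ together with the weight vector $\vec{\omega}$; but folding in $\vec{\omega}$ directly leaves a ratio $\omega_{\pi(s+1)}/\omega_{\pi(s)}$ in the pricing term. The key move is to reparametrize each mechanism by agent $i$'s \emph{relative} weights $\rho_j:=\omega_j/\omega_i$ for $j\neq i$; then $\omega_i$ cancels out of every allocation comparison and every price that is relevant to agent $i$. (Passing from the discrete weight grid to all positive $\rho_j$ only enlarges the class, so the resulting bound is automatically independent of $r$.) Thus I would take the parameter space to be $[0,1]^2\times\R_{>0}^{\,n-1}$, of dimension $n+1$.

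First I would fix an arbitrary profile $\vec{\theta}_{-i}$ of the other agents' types (a point of the ground set). With the others reporting truthfully, every quantity that determines agent $i$'s outcome is expressible in the relative coordinates: the order of the other agents' weighted bids is governed by the $O(n^2)$ equations $\rho_j\theta_j=\rho_{j'}\theta_{j'}$, and the slot agent $i$ obtains is governed by the $n-1$ equations $\hat{\theta}_i=\rho_j\theta_j$ --- all linear in $(\hat{\theta}_i,\vec{\rho})$, since the $\theta_j$ are data. Hence $O(n^2)$ hyperplanes partition the parameter space into cells on which agent $i$'s slot $s$ and the identity $j^{*}=\pi(s+1)$ of the agent directly below her are constant, and on each such cell $u_{i,M,\theta_i,\hat{\theta}_i}\bigl(\vec{\theta}_{-i}\bigr)$ is either $0$ (when $s>m$) or $\alpha_{s,i}\bigl(\theta_i-\rho_{j^{*}}\theta_{j^{*}}\bigr)$, an affine function of $(\theta_i,\hat{\theta}_i,\vec{\rho})$.

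With this structure in hand the theorem follows from Theorem~\ref{thm:delineable}: $\cF_{i,\mclass_r}$ lives in an $(n+1)$-dimensional parameter space and, for every point of the ground set, is delineated by $O(n^2)$ hyperplanes into regions on which the utility is affine, so its pseudo-dimension is $O\bigl((n+1)\log(\mathrm{poly}(n))\bigr)=O(n\log n)$. (Equivalently, over any $N$ sample points one gets $O(Nn^2)$ hyperplanes in $\R^{n+1}$, and a sign-pattern count shows that $2^{N}$ distinct behaviors are impossible once $N$ exceeds $\Omega(n\log n)$.)

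The step I expect to be the crux is verifying that the reparametrization $\vec{\omega}\mapsto\vec{\rho}$ genuinely removes all nonlinearity --- i.e.\ that $\omega_i$ truly cancels from both the allocation comparisons and the pricing formula, and that nothing in agent $i$'s utility depends on the weights except through the ratios $\rho_j$. Once that is checked, the remainder is a routine instantiation of the delineability lemma, just as for the first-price combinatorial auction. A secondary check is that the continuous relaxation (letting each $\rho_j$ range over $\R_{>0}$) genuinely contains the discrete class $\cF_{i,\mclass_r}$, which is why the bound comes out free of $r$.
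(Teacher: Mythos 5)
Your proposal is correct, and it takes a genuinely different (and arguably cleaner) route from the paper's. The paper works directly in the $(n+2)$-dimensional parameter space $(\theta_1,\hat{\theta}_1,\omega_1,\dots,\omega_n)\in\R^{n+2}$, observes that the boundaries where the ranking of weighted bids changes are \emph{degree-2} polynomial surfaces (e.g.\ $\omega_1\hat{\theta}_1-\omega_{i'}\theta_{i'}^{(j)}=0$), and invokes a sign-pattern bound for polynomials of bounded degree (Lemma~\ref{lem:poly}) to count cells; within each cell the utility is linear in the nonlinear feature map $(\theta_1,\omega_2/\omega_1,\dots,\omega_n/\omega_1)$. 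Your reparametrization $\vec{\omega}\mapsto\vec{\rho}$ with $\rho_j=\omega_j/\omega_i$ does the change of coordinates up front: weighted-bid comparisons become the \emph{affine} constraints $\rho_j\theta_j=\rho_{j'}\theta_{j'}$ and $\hat{\theta}_i=\rho_j\theta_j$, the utility $\alpha_{s,i}(\theta_i-\rho_{j^*}\theta_{j^*})$ becomes affine in the $(n+1)$-dimensional parameter $(\theta_i,\hat{\theta}_i,\vec{\rho})$, and the whole thing reduces to the pure hyperplane-arrangement count. Both yield $O(n\log n)$; your version trades one ambient dimension and the degree-2 arrangement for a linear one, which is tidier. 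Two minor things worth stating explicitly if you wrote this up: (i) Theorem~\ref{thm:delineable} as literally stated is about the parameter space $[0,1]^{2m}$, so you would either prove a one-line generalization to an arbitrary parameter set (which is what your final ``sign-pattern count'' parenthetical amounts to, and is also what the paper does in spirit), or note that its proof goes through verbatim with $[0,1]^2\times\R_{>0}^{n-1}$ in place of $[0,1]^{2m}$; and (ii) the cancellation of $\omega_i$ that you flag as the crux does indeed hold for both the allocation (rankings are scale-invariant) and the price (the $1/\omega_{\pi(s)}$ factor divides out the $\omega_{\pi(s+1)}$ numerator to a $\rho$-ratio) because the other agents report truthfully, so it is fine to record it as routine once checked.
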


\begin{proof}[Proof sketch]
Suppose $i = 1$.
Fix $\vec{\theta}_{-1} = \left(\theta_2, \dots, \theta_n\right) \in [0,1]^{n-1}$.
We denote agent 1's utility when the agents' true types are $\left(\theta_1, \vec{\theta}_{-1}\right)$ and reported types are $\left(\hat{\theta}_1, \vec{\theta}_{-1}\right)$ as the function $u_{\vec{\theta}_{-1}}\left(\theta_1, \hat{\theta}_1, \omega_1, \dots, \omega_n\right)$, which maps $\R^{n+2}$ to $[-1,1]$.
We show that we can split $\R^{n+2}$ into regions where the allocation is fixed as we vary $\left(\theta_1, \hat{\theta}_1, \omega_1, \dots, \omega_n\right)$ over any one region. When the allocation is fixed, $u_{\vec{\theta}_{-1}}$ is a linear function of $\left(\theta_1, \omega_2/\omega_1, \dots, \omega_n/\omega_1\right)$.
We prove that this partition of $\R^{n+2}$ is delineated by a small number of polynomials of degree at most 2 over these $n+2$ variables. This fact allows us to bound the number of regions making up the partition. Since the utility function is simple within each region, we use our bound on the number of regions to prove the theorem.
\end{proof}

\subsubsection{Discriminatory auction}\label{sec:discriminatory}

Under the discriminatory auction, there are $m$ identical units of a single item for sale. For each agent $i \in [n]$, his type $\vec{\theta}_i \in [0,1]^m$ indicates how much he is willing to pay for each additional unit. Thus, $\theta_i[1]$ is the amount he is willing to pay for one unit, $\theta_i[1] + \theta_i[2]$ is the amount
he is willing to pay for two units, and so on. We assume that $\theta_i[1] \geq \theta_i[2] \geq \cdots \geq \theta_i[m]$.
The auctioneer collects $nm$ bids
$\hat{\theta}_i[\mu]$ for $i \in [n]$ and $\mu \in [m]$. If exactly $m_i$ of agent $i$'s bids
are among the $m$ highest of all $nm$ bids,
then agent $i$ is awarded $m_i$ units and pays $\sum_{\mu = 1}^{m_i} \hat{\theta}_i[\mu]$.

We begin with dispersion guarantees. The full proof of the following theorem is in Appendix~\ref{app:applications}.

\begin{restatable}{theorem}{discrim}\label{thm:discriminatory_vary_theta_hat}
Suppose that each agent's value for each marginal unit has a $\kappa$-bounded density function.
With probability $1-\delta$ over the draw of the $n$ sets $\sample_{-i} = \left\{\vec{\theta}^{\left(1\right)}_{-i}, \dots, \vec{\theta}^{\left(N\right)}_{-i}\right\} \sim \dist_{-i}^N$, for all agents $i \in [n]$ and types $\vec{\theta}_i \in [0,1]^m$, the functions $u_{i, M}\left(\vec{\theta}_i, \cdot, \vec{\theta}_{-i}^{(1)}\right), \dots, u_{i, M}\left(\vec{\theta}_i, \cdot, \vec{\theta}_{-i}^{(N)}\right)$ are piecewise $1$-Lipschitz and $\left(O\left(1/\left(\kappa \sqrt{N}\right)\right),\tilde O\left(nm^2 \sqrt{N}\right)\right)$-dispersed.
\end{restatable}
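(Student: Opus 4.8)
The plan is to follow the template established for the first-price single-item auction (Lemma~\ref{lem:FP_discontinuities} and Theorem~\ref{thm:FP_disp}); the only new wrinkle is that agent $i$'s reported type is now $m$-dimensional, so the locus of discontinuities of $u_{i,M}\left(\vec{\theta}_i, \cdot, \vec{\theta}_{-i}^{(j)}\right)$ is an arrangement of axis-aligned hyperplanes in $[0,1]^m$ rather than a finite set of points on $[0,1]$. Concretely I would (1) pin down this hyperplane arrangement and show it does not depend on the true type $\vec{\theta}_i$; (2) verify that on each cell of the arrangement the utility is $1$-Lipschitz with respect to the $\ell_1$-norm; (3) group the arrangements coming from the $N$ samples into families of parallel hyperplanes whose offsets are i.i.d.\ draws from a $\kappa$-bounded density; and (4) apply the standard concentration bound for $\kappa$-bounded samples within each family and union-bound over the families and agents. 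The main obstacle is step~(1).

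For step~(1), fix agent $i$ and a sample $j$, so the $(n-1)m$ rival bids $\theta_{i'}^{(j)}[\mu']$ ($i' \neq i$, $\mu' \in [m]$) are held fixed while agent $i$'s reported bid vector $\hat{\vec{\theta}}_i \in [0,1]^m$ varies. The number $m_i$ of agent $i$'s bids among the top $m$ of all $nm$ bids depends on $\hat{\vec{\theta}}_i$ only through the positions of its coordinates relative to the rival bids (which are pairwise distinct almost surely under the density assumption); in particular $m_i$ is unchanged when two of agent $i$'s own coordinates cross, and consequently so is the payment $\sum_{\mu=1}^{m_i} \hat{\theta}_i[\mu]$, which is continuous across such a crossing. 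Hence the only discontinuities of $u_{i,M}\left(\vec{\theta}_i, \cdot, \vec{\theta}_{-i}^{(j)}\right)$ lie on the arrangement $\cH_j$ of at most $(n-1)m^2$ axis-aligned hyperplanes $\left\{\hat{\theta}_i[\mu] = \theta_{i'}^{(j)}[\mu']\right\}$, one per triple $(\mu, i', \mu')$. Because the mechanism never sees $\vec{\theta}_i$, this arrangement is independent of $\vec{\theta}_i$, so---as in the discussion at the start of Section~\ref{sec:applications}---it suffices to establish dispersion for one generic sequence. Step~(2) is then immediate: on any cell of the complement of $\cH_j$ the value $m_i$ is constant, so $u_{i,M}\left(\vec{\theta}_i, \hat{\vec{\theta}}_i, \vec{\theta}_{-i}^{(j)}\right) = \sum_{\mu=1}^{m_i}\theta_i[\mu] - \sum_{\mu=1}^{m_i}\hat{\theta}_i[\mu]$ has gradient in $\{0,-1\}^m$ as a function of $\hat{\vec{\theta}}_i$, hence is $1$-Lipschitz in the $\ell_1$-norm.

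For steps~(3)--(4), partition $\bigcup_{j \in [N]} \cH_j$ into $(n-1)m^2$ buckets indexed by $(\mu, i', \mu')$: the bucket for $(\mu, i', \mu')$ consists of the $N$ hyperplanes perpendicular to coordinate $\mu$ with offsets $\theta_{i'}^{(1)}[\mu'], \dots, \theta_{i'}^{(N)}[\mu']$, which are i.i.d.\ draws from a $\kappa$-bounded density. By the concentration bound for $\kappa$-bounded samples used in the proof of Theorem~\ref{thm:FP_disp} (cf.\ \citet{Balcan18:Dispersion}), with probability $1 - \delta/\left(n(n-1)m^2\right)$ every interval of width $w = O\left(1/\left(\kappa\sqrt{N}\right)\right)$ contains at most $\tilde O\left(\sqrt{N}\right)$ of these offsets. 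An $\ell_1$-ball of radius $w$ in $[0,1]^m$ projects onto an interval of width $2w$ in each coordinate, so at most $\tilde O\left(\sqrt{N}\right)$ hyperplanes from each bucket meet the ball; summing over the $(n-1)m^2$ buckets, at most $\tilde O\left(nm^2\sqrt{N}\right)$ hyperplanes meet any such ball, and therefore at most that many of the partitions $\cP_1, \dots, \cP_N$ are split by it (Definition~\ref{def:dispersion}). A union bound over the $n$ agents and the $(n-1)m^2$ buckets per agent gives overall failure probability $\delta$, yielding the claimed $\left(O\left(1/\left(\kappa\sqrt{N}\right)\right), \tilde O\left(nm^2\sqrt{N}\right)\right)$-dispersion. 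The delicate point throughout is the combinatorial description in step~(1)---establishing that agent $i$'s allocation, and hence her utility, is insensitive to self-crossings of her own bids and that its jump set is exactly the $\vec{\theta}_i$-independent hyperplane arrangement above; once that is in hand, steps~(2)--(4) are a routine adaptation of the single-item argument.
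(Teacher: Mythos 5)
Your proposal is correct and follows essentially the same route as the paper's proof: you identify the identical $(n-1)m^2$-hyperplane arrangement $\left\{\hat{\theta}_i[\mu] = \theta_{i'}^{(j)}[\mu']\right\}$ as the jump locus, note it is independent of $\vec{\theta}_i$, verify $1$-Lipschitzness on cells, bucket by $(\mu, i', \mu')$ into families of parallel hyperplanes with i.i.d.\ $\kappa$-bounded offsets, and apply the concentration-plus-union-bound argument (which the paper packages into Lemma~\ref{lem:hyperplanes}). Your extra care about self-crossings of agent $i$'s own coordinates is a correct detail that the paper glosses over (the paper tacitly uses the monotonic ordering assumption on bids), but it does not change the substance of the argument.
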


\begin{proof}[Proof sketch]
Consider agent 1 and suppose we sort $\left\{\theta_i^{(j)}[\mu] : i \in \{2, \dots, n\}, j \in [N], \mu \in [m]\right\}$. So long as agent 1's bid falls between these sorted bids, the allocations will be fixed across all $j \in [N]$, and thus each $u_{1, M}\left(\vec{\theta}_1, \cdot, \vec{\theta}_{-1}^{(j)}\right)$ is $1$-Lipschitz.
Across all $\vec{\theta}_1 \in [0,1]^m$, the partition $\cP_j$ splitting $u_{1, M}\left(\vec{\theta}_1, \cdot, \vec{\theta}_{-1}^{(j)}\right)$ into Lipschitz portions is  delineated by the set of hyperplanes \[\left\{\theta_i^{(j)}[\mu] - \hat{\theta}_1[\mu'] = 0 : i \in \{2, \dots, n\}, \mu, \mu' \in [m]\right\}.\] 
We partition these hyperplanes into $m^2(n-1)$
  buckets consisting of parallel hyperplanes with offsets independently drawn
  from $\kappa$-bounded distributions. The remainder of the proof is similar to that of Theorem~\ref{thm:FP_comb_disp}.
\end{proof}

\begin{theorem}
For all agents $i \in [n]$, reported types $\hat{\vec{\theta}}_i \in [0,1]^m$, and type profiles $\vec{\theta}_{-i} \in [0,1]^{(n-1)m}$, the function $u_{i, M}\left(\cdot, \hat{\vec{\theta}}_i, \vec{\theta}_{-i}\right)$ is 1-Lipschitz.
\end{theorem}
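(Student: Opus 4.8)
The plan is to exploit the fact that, once all reported bids are fixed, the discriminatory auction's allocation and payment to agent $i$ are fixed, so that $u_{i,M}\left(\cdot, \hat{\vec{\theta}}_i, \vec{\theta}_{-i}\right)$ is an affine function of the true type $\vec{\theta}_i$ with coefficients in $\{0,1\}$, and hence $1$-Lipschitz with respect to the $\ell_1$-norm.

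First I would observe that the only quantities the mechanism inspects are the $nm$ bids $\hat{\theta}_i[\mu]$ and $\theta_{i'}[\mu]$ for $i' \neq i$, all of which are held fixed in the function $u_{i,M}\left(\cdot, \hat{\vec{\theta}}_i, \vec{\theta}_{-i}\right)$. Hence the set of bids among the top $m$, and in particular the number $m_i$ of agent $i$'s bids that land in the top $m$ (under whatever fixed tie-breaking rule the mechanism uses), is a constant independent of $\vec{\theta}_i$. By the definition of the discriminatory auction given above, agent $i$ then receives exactly $m_i$ units and pays the fixed amount $\sum_{\mu = 1}^{m_i} \hat{\theta}_i[\mu]$. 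Since agent $i$'s value for $m_i$ units is $\sum_{\mu = 1}^{m_i} \theta_i[\mu]$, this yields $u_{i,M}\left(\vec{\theta}_i, \hat{\vec{\theta}}_i, \vec{\theta}_{-i}\right) = \sum_{\mu = 1}^{m_i} \theta_i[\mu] - \sum_{\mu = 1}^{m_i} \hat{\theta}_i[\mu]$. This is the same structural observation used in the proof sketch of Theorem~\ref{thm:discriminatory_vary_theta_hat}, only with the roles of the true and reported types swapped.

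Next I would conclude Lipschitzness directly: the right-hand side is affine in $\vec{\theta}_i$ with the coefficient on $\theta_i[\mu]$ equal to $1$ for $\mu \leq m_i$ and $0$ otherwise, so for any two types $\vec{\theta}_i, \vec{\theta}_i'$ in the (order-restricted, though this is immaterial) domain, $\left|u_{i,M}\left(\vec{\theta}_i, \hat{\vec{\theta}}_i, \vec{\theta}_{-i}\right) - u_{i,M}\left(\vec{\theta}_i', \hat{\vec{\theta}}_i, \vec{\theta}_{-i}\right)\right| = \left|\sum_{\mu = 1}^{m_i}\left(\theta_i[\mu] - \theta_i'[\mu]\right)\right| \leq \norm{\vec{\theta}_i - \vec{\theta}_i'}_1$, which is exactly $1$-Lipschitzness. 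There is no real obstacle; the only point requiring a word of care is the claim that $m_i$ is well-defined and constant as $\vec{\theta}_i$ varies, which holds because the mechanism never sees $\vec{\theta}_i$---the same reasoning behind the analogous one-line claims for the first-price single-item auction (Theorem~\ref{thm:FP_disp2}) and the first-price combinatorial auction.
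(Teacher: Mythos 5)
Your proof is correct and follows the same reasoning as the paper's one-line argument: since the mechanism only sees the fixed bids, the allocation (and hence $m_i$) is fixed, so the utility is affine in $\vec{\theta}_i$ with coefficients in $\{0,1\}$ and therefore $1$-Lipschitz in the $\ell_1$-norm. You simply spell out explicitly the affine form and the final inequality that the paper leaves implicit.
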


\begin{proof}
So long as all bids are fixed, the allocation is fixed, so $u_{i, M}\left(\cdot, \hat{\vec{\theta}}_i, \vec{\theta}_{-i}\right)$ is 1-Lipschitz.
\end{proof}

Next, we prove the following pseudo-dimension bound.

\begin{theorem}\label{thm:discriminatory_pdim}
For any agent $i \in [n]$, the pseudo-dimension of the class $\cF_{i, M}$ is $O\left(m\log (nm)\right)$.
\end{theorem}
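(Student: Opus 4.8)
The plan is to reduce the claim to an application of the delineability framework (Theorem~\ref{thm:delineable} in Appendix~\ref{app:delineable}), exactly as in the pseudo-dimension bounds already established for the first-price combinatorial auction and the generalized second-price auction. Fix an agent $i \in [n]$; without loss of generality take $i = 1$. Every function in $\cF_{1, M}$ is indexed by a pair $\left(\vec\theta_1, \hat{\vec\theta}_1\right) \in [0,1]^m \times [0,1]^m$ and maps a profile $\vec\theta_{-1} \in [0,1]^{(n-1)m}$ of the other agents' types to $u_{1, M}\left(\vec\theta_1, \hat{\vec\theta}_1, \vec\theta_{-1}\right)$. Since the parameter space is $2m$-dimensional, it suffices by Theorem~\ref{thm:delineable} to exhibit, for every fixed $\vec\theta_{-1}$, a set $\cH$ of $O\left(nm^2\right)$ hyperplanes in $\R^{2m}$ such that on each connected component of $[0,1]^{2m} \setminus \cH$ the map $\left(\vec\theta_1, \hat{\vec\theta}_1\right) \mapsto u_{1, M}\left(\vec\theta_1, \hat{\vec\theta}_1, \vec\theta_{-1}\right)$ is linear.

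I would take $\cH = \left\{\hat\theta_1[\mu'] = \theta_{i'}[\mu] : i' \in \{2, \dots, n\}, \ \mu, \mu' \in [m]\right\}$, the same arrangement of $(n-1)m^2$ hyperplanes used in the proof of Theorem~\ref{thm:discriminatory_vary_theta_hat}. On any connected component $C$ of $[0,1]^{2m} \setminus \cH$, for each of agent $1$'s bid components $\hat\theta_1[\mu']$ the number of other agents' bid components that exceed it is constant; consequently the number $m_1 \in \{0, 1, \dots, m\}$ of agent $1$'s bids lying among the top $m$ of all $nm$ bids is constant on $C$. Agent $1$ is then awarded $m_1$ units and pays $\sum_{\mu = 1}^{m_1} \hat\theta_1[\mu]$, so her utility on $C$ equals $\sum_{\mu = 1}^{m_1}\left(\theta_1[\mu] - \hat\theta_1[\mu]\right)$, which, $m_1$ being a fixed integer, is a linear function of $\left(\vec\theta_1, \hat{\vec\theta}_1\right)$. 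Hence $\cH$ has the required property with $|\cH| = O\left(nm^2\right)$. Applying Theorem~\ref{thm:delineable} with parameter dimension $d = 2m$ and $O\left(nm^2\right)$ delineating hyperplanes then gives $\pdim\left(\cF_{1, M}\right) = O\left(2m \log\left(2m \cdot nm^2\right)\right) = O\left(m \log(nm)\right)$, using $\log\left(nm^3\right) = O\left(\log(nm)\right)$; the bound for a general agent follows by symmetry.

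I expect the only step requiring care to be the verification that $u_{1, M}$ is genuinely \emph{linear} on each cell of the arrangement, rather than merely having a constant allocation size: one must observe that $m_1$, although defined through a top-$m$ selection over the union of all $nm$ bids, is pinned down on a cell simply by how many of the fixed values $\theta_{i'}[\mu]$ beat each $\hat\theta_1[\mu']$, and that once $m_1$ is fixed both the value term and the payment term are affine in the coordinates of $\vec\theta_1$ and $\hat{\vec\theta}_1$ (note the payment and value sum over the \emph{first} $m_1$ coordinates, whose indices are constant). One must also carry the full $2m$-dimensional parametrization, since the true type enters $u_{1, M}$ through the value term even though only the reported type affects the allocation. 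Everything else is a routine instantiation of Theorem~\ref{thm:delineable}, mirroring the corresponding arguments for the first-price combinatorial and generalized second-price auctions.
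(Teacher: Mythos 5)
Your proof is correct and takes essentially the same route as the paper's: both invoke the delineability framework (Theorem~\ref{thm:delineable}) with the $O(nm^2)$ hyperplanes $\left\{\hat\theta_1[\mu'] = \theta_{i'}[\mu]\right\}$ already identified in the proof of Theorem~\ref{thm:discriminatory_vary_theta_hat}, observing that on each cell the allocation size $m_1$ is fixed and hence $u_{i,M}$ is linear in $\left(\vec\theta_i, \hat{\vec\theta}_i\right)$. Your extra care in verifying that $m_1$ is pinned down by the counts $c_{\mu'}$ fills in a detail the paper leaves implicit, but it is the same argument.
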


\begin{proof}
As we saw in the proof of Theorem~\ref{thm:discriminatory_vary_theta_hat}, for any $\vec{\theta}_{-i} \in [0,1]^{(n-1)m}$, there is a set $\cH$ of $O\left(m^2n\right)$ hyperplanes such that for any connected component $C$ of $[0,1]^{m} \setminus \cH$, the allocation is fixed across all $\hat{\vec{\theta}}_i \in C$. So long as the allocation is fixed, $u_{i, M}\left(\cdot, \cdot, \vec{\theta}_{-i}\right)$ is a linear function of $\left(\vec{\theta}_i, \hat{\vec{\theta}}_i\right)$. Therefore, the pseudo-dimension bound follows directly from Theorem~\ref{thm:delineable} in Appendix~\ref{app:delineable}.
\end{proof}

\subsubsection{Uniform-price auction}

Under this auction, the allocation rule is the same as in the discriminatory auction (Section~\ref{sec:discriminatory}).
However, all $m$ units are sold at a ``market-clearing'' price,
meaning the total amount demanded equals the total amount supplied. See the formal definition in Appendix~\ref{app:uniform}. We obtain the same bounds as we do in Section~\ref{sec:discriminatory}, so we state the theorems in Appendix~\ref{app:uniform}.

\subsubsection{Second-price auction with spiteful agents}
We conclude by studying \emph{spiteful agents}~\citep{Brandt07:Spiteful, Morgan03:Spite, Sharma10:Asymmetric, Tang12:Optimal}, where each bidder's utility not only increases when his surplus increases, but also decreases when the other bidders' surpluses increase. Formally, given a \emph{spite parameter} $\alpha_i \in [0,1]$, agent $i$'s utility under the second-price auction $M$ is $u_{i,M}\left(\vec{\theta}, \hat{\vec{\theta}}\right) = \alpha_i\textbf{1}_{\left\{\hat{\theta}_i > \norm{\hat{\vec{\theta}}_{-i}}_{\infty}\right\}}\left(\theta_i - \norm{\hat{\vec{\theta}}_{-i}}_{\infty}\right) - \left(1 - \alpha_i\right) \sum_{i' \not= i} \textbf{1}_{\left\{\hat{\theta}_{i'} > \norm{\hat{\vec{\theta}}_{-i'}}_{\infty}\right\}}\left(\theta_{i'} - \norm{\hat{\vec{\theta}}_{-i'}}_{\infty}\right).$ The closer $\alpha_i$ is to zero, the more spiteful bidder $i$ is.

We begin with the following lemma, which follows from the form of $u_{1, M}\left(\theta_1, \cdot, \vec{\theta}_{-1}^{(j)}\right)$ (see Figure~\ref{fig:spiteful}). The proof is in Appendix~\ref{app:applications}.
\begin{figure}
    \centering
    \includegraphics{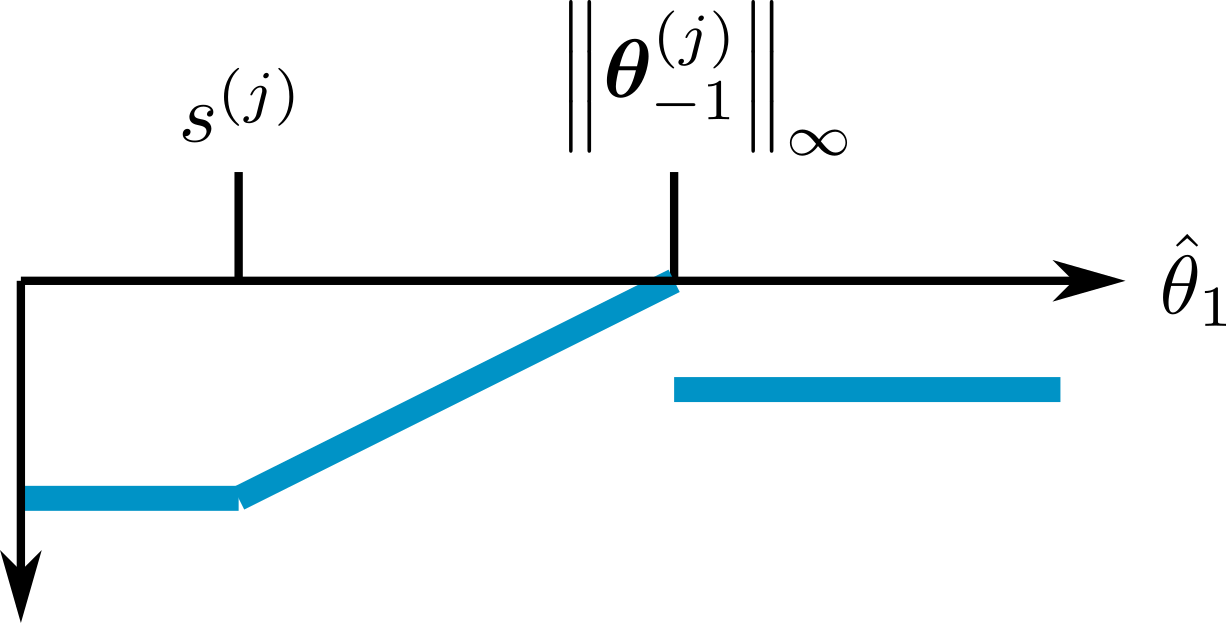}
    \caption{Graph of $u_{1, M}\left(\theta_1, \cdot, \vec{\theta}_{-1}^{(j)}\right)$ for a spiteful agent under a second-price auction with $\alpha_1 = \theta_1 = \frac{1}{2}$, $\norm{\vec{\theta}_{-1}}_{\infty} = \frac{3}{4}$, and $s^{(j)} = \frac{1}{4}$.}\label{fig:spiteful}
\end{figure}

\begin{restatable}{lemma}{SPdisc}\label{lem:SP_discontinuities}
For any agent $i \in [n]$, any $\sample_{-i} = \left\{\vec{\theta}^{\left(1\right)}_{-i}, \dots, \vec{\theta}^{\left(N\right)}_{-i}\right\}$, and any type $\theta_i \in [0,1]$, the functions $u_{i, M}\left(\theta_i, \cdot, \vec{\theta}_{-i}^{(1)}\right), \dots, u_{i, M}\left(\theta_i, \cdot, \vec{\theta}_{-i}^{(N)}\right)$ are piecewise 1-Lipschitz and their discontinuities fall in the set $\left\{\norm{\vec{\theta}_{-i}^{(j)}}_{\infty}\right\}_{j \in [N]}$.
\end{restatable}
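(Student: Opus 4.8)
The plan is to fix an arbitrary agent $i$—say $i=1$—and an arbitrary sample $j \in [N]$, and to examine the single-variable function $\hat{\theta}_1 \mapsto u_{1,M}(\theta_1, \hat{\theta}_1, \vec{\theta}_{-1}^{(j)})$ obtained by holding the true type $\theta_1$ and the other agents' (true = reported) types fixed. Writing $s^{(j)} = \norm{\vec{\theta}_{-1}^{(j)}}_\infty$ for the largest competing bid, the utility formula specializes to
\begin{equation*}
u_{1,M}\!\left(\theta_1, \hat{\theta}_1, \vec{\theta}_{-1}^{(j)}\right) = \alpha_1 \textbf{1}_{\left\{\hat{\theta}_1 > s^{(j)}\right\}}\left(\theta_1 - s^{(j)}\right) - (1-\alpha_1)\sum_{i' \not= 1}\textbf{1}_{\left\{\theta_{i'}^{(j)} > \norm{(\hat{\theta}_1, \vec{\theta}_{-1}^{(j)})_{-i'}}_\infty\right\}}\left(\theta_{i'}^{(j)} - \norm{(\hat{\theta}_1, \vec{\theta}_{-1}^{(j)})_{-i'}}_\infty\right).
\end{equation*}
The first term is piecewise constant in $\hat{\theta}_1$ with a single jump at $\hat{\theta}_1 = s^{(j)}$. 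For the spite term, I would observe that each summand for $i' \not= 1$ depends on $\hat{\theta}_1$ only through $\norm{(\hat{\theta}_1, \vec{\theta}_{-1}^{(j)})_{-i'}}_\infty = \max\{\hat{\theta}_1, \max_{i'' \not\in\{1,i'\}}\theta_{i''}^{(j)}\}$, which as a function of $\hat{\theta}_1$ is constant up to some threshold and then equals $\hat{\theta}_1$ (hence $1$-Lipschitz) beyond it; moreover the indicator $\textbf{1}_{\{\theta_{i'}^{(j)} > \max\{\hat{\theta}_1, \cdots\}\}}$ can only switch from $1$ to $0$ as $\hat{\theta}_1$ crosses $\theta_{i'}^{(j)}$, and when it does so the bracketed quantity $\theta_{i'}^{(j)} - \max\{\hat{\theta}_1,\cdots\}$ is already $0$ there, so the summand is continuous. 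Thus each spite summand is continuous and piecewise $1$-Lipschitz, with no discontinuities at all.

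Putting the pieces together: $u_{1,M}(\theta_1, \cdot, \vec{\theta}_{-1}^{(j)})$ is a sum of a function that is piecewise constant with its only discontinuity at $s^{(j)} = \norm{\vec{\theta}_{-1}^{(j)}}_\infty$ and finitely many continuous piecewise $1$-Lipschitz functions, hence is itself piecewise $1$-Lipschitz with its unique discontinuity at $\norm{\vec{\theta}_{-1}^{(j)}}_\infty$ (one should also check that the slopes add up to at most $1$ in absolute value on each piece—since only one spite summand can be in its linearly-decreasing $-\hat{\theta}_1$ regime for a given $\hat{\theta}_1$, namely the one whose excluded index attains the running max, the total slope is bounded by $1$). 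Ranging over $j \in [N]$ shows that all $N$ functions in the sequence have their discontinuities contained in $\{\norm{\vec{\theta}_{-1}^{(j)}}_\infty\}_{j\in[N]}$, which is exactly the claim; the argument for general $i$ is identical by relabeling. Figure~\ref{fig:spiteful} illustrates the shape of one such function and serves as the visual template for this case analysis.

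The main obstacle is the bookkeeping around the spite summands: one must be careful that the notation $\norm{(\hat{\theta}_1, \vec{\theta}_{-1}^{(j)})_{-i'}}_\infty$ correctly captures the maximum over all reported bids except $i'$'s, including agent $1$'s bid $\hat{\theta}_1$, and that the potential kink where this max switches between $\hat{\theta}_1$ and a competing constant does not introduce a new \emph{discontinuity} (it only introduces a change in Lipschitz slope, which is permitted). Verifying the continuity of each summand at the point where its own indicator flips is the one step that genuinely requires the observation that the bracketed surplus vanishes at that crossing. Everything else is a routine piecewise-linear analysis.
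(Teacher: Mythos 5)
Your proof is correct and follows essentially the same route as the paper's: both reduce to the three-piece description of $u_{1,M}\left(\theta_1,\cdot,\vec{\theta}_{-1}^{(j)}\right)$ — constant below the second-highest competitor bid, affine with slope $\alpha_1 - 1$ between the second-highest and the highest, and constant above — and conclude that the unique jump occurs at $\norm{\vec{\theta}_{-1}^{(j)}}_\infty$ with slopes bounded by $|\alpha_1 - 1| \leq 1$. The paper gets there by writing the explicit case formula (Equation~\eqref{eq:spite_utility}) directly from the observation that only the top competitor $j^*$ can win, whereas you recover the same structure by decomposing the utility summand-by-summand and showing each spite term is continuous with at most one in its affine regime; the content is the same.
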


Lemma~\ref{lem:SP_discontinuities} implies the following dispersion guarantee.

\begin{restatable}{theorem}{SP}
Suppose each agent's type has a $\kappa$-bounded density function.
With probability $1-\delta$ over the draw of the $n$ sets $\sample_{-i} = \left\{\vec{\theta}^{(1)}_{-i}, \dots, \vec{\theta}^{(N)}_{-i}\right\} \sim \dist_{-i}^N$, we have that for all agents $i \in [n]$ and types $\theta_i \in [0,1]$, the functions $u_{i, M}\left(\theta_i, \cdot, \vec{\theta}_{-i}^{(1)}\right), \dots, u_{i, M}\left(\theta_i, \cdot, \vec{\theta}_{-i}^{(N)}\right)$ are piecewise $1$-Lipschitz and $\left(O\left(1/\left(\kappa \sqrt{N}\right)\right), \tilde O\left(n \sqrt{N}\right)\right)$-dispersed.
\end{restatable}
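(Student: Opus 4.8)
The plan is to follow the proof of Theorem~\ref{thm:FP_disp} almost verbatim, since Lemma~\ref{lem:SP_discontinuities} shows the relevant discontinuity structure is the same as in the single-item first-price case. By that lemma, for any agent $i$ and any type $\theta_i \in [0,1]$, each function $u_{i,M}\left(\theta_i, \cdot, \vec{\theta}_{-i}^{(j)}\right)$ is piecewise $1$-Lipschitz and its only jump lies at $\norm{\vec{\theta}_{-i}^{(j)}}_{\infty}$. Crucially this location is independent of $\theta_i$, so the two-piece partition of $[0,1]$ at the point $\norm{\vec{\theta}_{-i}^{(j)}}_{\infty}$ simultaneously witnesses piecewise $1$-Lipschitzness of $u_{i,M}\left(\theta_i, \cdot, \vec{\theta}_{-i}^{(j)}\right)$ for \emph{every} $\theta_i$. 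Hence it suffices to show that, with probability at least $1-\delta/n$, every interval of width $w = O\left(1/\left(\kappa\sqrt{N}\right)\right)$ meets at most $\tilde{O}\left(n\sqrt{N}\right)$ of the points $\norm{\vec{\theta}_{-i}^{(1)}}_{\infty}, \dots, \norm{\vec{\theta}_{-i}^{(N)}}_{\infty}$, and then take a union bound over the $n$ agents.

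For the concentration estimate I would argue exactly as in Theorem~\ref{thm:FP_disp}: each $\norm{\vec{\theta}_{-i}^{(j)}}_{\infty} = \max_{i' \neq i}\theta_{i'}^{(j)}$ is a maximum of $n-1$ independent draws with $\kappa$-bounded densities, so its density is bounded by $(n-1)\kappa$ and the chance it lands in any fixed width-$w$ interval is at most $(n-1)\kappa w$. Because the $N$ profiles are i.i.d., the same VC-type uniform convergence bound over the class of intervals (VC dimension $2$) used in the first-price proof gives that, with probability $1-\delta/n$, \emph{every} width-$w$ interval contains at most $O\left(N(n-1)\kappa w + \sqrt{N\log(n/\delta)}\right)$ of these points; taking $w$ of order $1/\left(\kappa\sqrt{N}\right)$ makes this $\tilde{O}\left(n\sqrt{N}\right)$, which is precisely the claimed dispersion (and, as expected, matches Theorem~\ref{thm:FP_disp}, since the discontinuity sets coincide).

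The one step that requires genuine verification rather than copying is the structural claim behind Lemma~\ref{lem:SP_discontinuities}: the spite penalty $\left(1-\alpha_i\right)\sum_{i' \neq i}\textbf{1}_{\left\{\hat{\theta}_{i'} > \norm{\hat{\vec{\theta}}_{-i'}}_{\infty}\right\}}\left(\theta_{i'} - \norm{\hat{\vec{\theta}}_{-i'}}_{\infty}\right)$ depends on $\hat{\theta}_i$ through $\norm{\hat{\vec{\theta}}_{-i'}}_{\infty}$ for every competitor $i'$, so one must check it contributes only a continuous, piecewise-linear (kinked at the second-highest competing bid) term and not a jump. Concretely: while $\hat{\theta}_i < \norm{\vec{\theta}_{-i}^{(j)}}_{\infty}$ the winning competitor's identity is fixed and only its second-price payment moves — continuously and with slope $0$ or $1-\alpha_i$ — and by $\hat{\theta}_i = \norm{\vec{\theta}_{-i}^{(j)}}_{\infty}$ that competitor's surplus has shrunk to zero, so the sole discontinuity is the jump to $\alpha_i\left(\theta_i - \norm{\vec{\theta}_{-i}^{(j)}}_{\infty}\right)$ when agent $i$ takes over the item (Figure~\ref{fig:spiteful}). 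Once this is established, every Lipschitz piece has slope at most $1$ and the probabilistic content is entirely inherited from Theorem~\ref{thm:FP_disp}.
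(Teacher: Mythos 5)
Your proposal is correct, and the high-level structure is exactly the paper's: reduce to Lemma~\ref{lem:SP_discontinuities} (which you verify directly from the form of the spiteful utility, correctly recovering the three-piece structure of Equation~\eqref{eq:spite_utility} with kink at the second-highest competing bid and jump at the highest), and then inherit the dispersion argument of Theorem~\ref{thm:FP_disp} because the discontinuity set $\left\{\norm{\vec{\theta}_{-i}^{(j)}}_{\infty}\right\}_{j\in[N]}$ is identical.

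The one place you deviate — harmlessly — is inside the concentration step you attribute to Theorem~\ref{thm:FP_disp}. The paper's Claim~\ref{claim:fp_dispersed} does not bound the density of the maximum; it instead counts how many points of the larger set $\cB = \bigcup_j\left\{\theta_2^{(j)},\dots,\theta_n^{(j)}\right\}$ land in a window, bucketing $\cB$ into the $n-1$ per-agent subsets $\cB_{i'} = \left\{\theta_{i'}^{(j)}\right\}_{j\in[N]}$ and invoking Lemma~\ref{lem:dispersion} part 2, which yields $w = O\left(1/\left(\kappa\sqrt{N}\right)\right)$ and $k = O\left((n-1)\sqrt{N\log(n/\delta)}\right)$. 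You instead work directly with the $N$ actual discontinuities (the maxima), bound the density of a maximum of $n-1$ independent $\kappa$-bounded variables by $(n-1)\kappa$, and invoke the single-bucket bound (Lemma~\ref{lem:dispersion} part 1 with $r = N$ and density parameter $(n-1)\kappa$). With $w = O\left(1/\left(\kappa\sqrt{N}\right)\right)$ this gives $k = O\left(N(n-1)\kappa w + \sqrt{N\log(n/\delta)}\right) = \tilde O\left(n\sqrt{N}\right)$, matching the paper. Both routes are fine; the paper's has the advantage of not depending on the max's density being bounded (useful in settings where discontinuities are not order statistics of an independent family), while yours is a little more direct here since the discontinuity set has only $N$ elements rather than $N(n-1)$. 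The discrepancy is cosmetic for this theorem.
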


\begin{proof}
This follows from Lemma~\ref{lem:SP_discontinuities} exactly as Theorem~\ref{thm:FP_disp} follows from Lemma~\ref{lem:FP_discontinuities}.
\end{proof}

\begin{theorem}
For all agents $i \in [n]$, reported types $\hat{\theta}_i \in [0,1]$, and type profiles $\vec{\theta}_{-i} \in [0,1]^{n-1}$, the function $u_{i, M}\left(\cdot, \hat{\theta}_i, \vec{\theta}_{-i}\right)$ is $1$-Lipschitz.
\end{theorem}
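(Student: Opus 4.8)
The plan is to exploit the fact that fixing $\hat{\theta}_i$ and $\vec{\theta}_{-i}$ pins down every reported bid in the auction. In the simplified notation, $u_{i,M}\left(\cdot, \hat{\theta}_i, \vec{\theta}_{-i}\right)$ corresponds to reported profile $\hat{\vec{\theta}} = \left(\hat{\theta}_i, \vec{\theta}_{-i}\right)$, which does not involve agent $i$'s \emph{true} type $\theta_i$ at all. Consequently, the allocation, every winning indicator $\textbf{1}_{\left\{\hat{\theta}_{i'} > \norm{\hat{\vec{\theta}}_{-i'}}_{\infty}\right\}}$, and every threshold price $\norm{\hat{\vec{\theta}}_{-i'}}_{\infty}$ are constants as functions of $\theta_i$. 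So the first step is just to record this observation.

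Next I would inspect the two pieces of the spiteful utility separately. In the ``own surplus'' term $\alpha_i \textbf{1}_{\left\{\hat{\theta}_i > \norm{\vec{\theta}_{-i}}_{\infty}\right\}}\left(\theta_i - \norm{\vec{\theta}_{-i}}_{\infty}\right)$, the indicator and the subtracted price are fixed, so this term is either identically zero (if agent $i$ does not win) or an affine function of $\theta_i$ with slope $\alpha_i \in [0,1]$. In the ``spite'' term $-\left(1 - \alpha_i\right)\sum_{i' \neq i}\textbf{1}_{\left\{\hat{\theta}_{i'} > \norm{\hat{\vec{\theta}}_{-i'}}_{\infty}\right\}}\left(\theta_{i'} - \norm{\hat{\vec{\theta}}_{-i'}}_{\infty}\right)$, each true type $\theta_{i'}$ with $i' \neq i$ is a coordinate of the fixed vector $\vec{\theta}_{-i}$, so the entire sum is a constant independent of $\theta_i$.

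Combining the two pieces, $u_{i,M}\left(\cdot, \hat{\theta}_i, \vec{\theta}_{-i}\right)$ is an affine function of $\theta_i$ whose slope is either $0$ or $\alpha_i$; since $\alpha_i \leq 1$, it is $1$-Lipschitz. I do not anticipate any real obstacle here — the statement is essentially a structural observation. The only point that needs care is the bookkeeping in the simplified notation: one must remember that when agent $i$'s true type varies, the other agents' true types (which here coincide with their reported types $\vec{\theta}_{-i}$) remain fixed, so the spite term contributes nothing to the Lipschitz constant even though it depends on the other agents' true surpluses.
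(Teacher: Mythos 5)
Your proof is correct and follows essentially the same route as the paper's: since the reported profile $\left(\hat{\theta}_i, \vec{\theta}_{-i}\right)$ is fixed, the allocation and all threshold prices are fixed, so $u_{i,M}\left(\cdot,\hat{\theta}_i,\vec{\theta}_{-i}\right)$ is affine in $\theta_i$ with slope either $0$ or $\alpha_i \le 1$. Your explicit separate handling of the spite term (which the paper's terse proof absorbs silently by just writing out the two cases) is a welcome extra step of care, but it is the same argument.
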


\begin{proof}
Since the reported types $\left(\hat{\theta}_i, \vec{\theta}_{-i}\right)$ are fixed, the allocation is fixed. If $\hat{\theta}_i > \norm{\vec{\theta}_{-i}}_{\infty}$, then $u_{i, M}\left(\theta_i, \hat{\theta}_i, \vec{\theta}_{-i}\right)= \alpha_i\left(\theta_i - \norm{\vec{\theta}_{-i}}_{\infty}\right)$, which is an $\alpha_i$-Lipschitz function of $\theta_i$. Otherwise, it is a constant function of $\theta_i$.
\end{proof}

We conclude with the following pseudo-dimension bound.

\begin{theorem}
For any agent $i \in [n]$, the pseudo-dimension of the class $\cF_{i, M}$ is $O(1)$.
\end{theorem}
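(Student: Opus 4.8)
The plan is to reduce this to essentially the same analysis that gives Theorem~\ref{thm:FP_pdim}, by observing that when the competing agents report truthfully, agent $i$'s utility under a second-price auction with spite parameter $\alpha_i$ depends on the competitors' type profile $\vec{\theta}_{-i}$ only through its two largest order statistics. Fix such a mechanism $M$ and write $M_1 = \norm{\vec{\theta}_{-i}}_{\infty}$ for the largest competing value and $M_2$ for the second largest (take $M_2 = 0$ if $n = 2$). Since at most one agent wins and the competitors bid truthfully, I would verify by a short case analysis that
\[
u_{i,M}\left(\theta_i, \hat{\theta}_i, \vec{\theta}_{-i}\right) = \begin{cases}\alpha_i\left(\theta_i - M_1\right) & \text{if } \hat{\theta}_i > M_1,\\ -\left(1 - \alpha_i\right)\left(M_1 - \hat{\theta}_i\right) & \text{if } M_2 \leq \hat{\theta}_i \leq M_1,\\ -\left(1 - \alpha_i\right)\left(M_1 - M_2\right) & \text{if } \hat{\theta}_i < M_2;\end{cases}
\]
the first line uses that when agent $i$ wins, every competitor's winning indicator is $0$, and the other two use that when agent $i$ loses, the only nonzero term in the spite sum is the surplus of the winner (the agent bidding $M_1$), whose payment is the second-highest of all $n$ bids, i.e.\ $\max\{\hat{\theta}_i, M_2\}$. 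Ties have probability zero and are resolved by the mechanism's fixed rule, which does not affect the pseudo-dimension. Lemma~\ref{lem:SP_discontinuities} already isolates the dependence on $M_1$; the one new ingredient here is the dependence on $M_2$.

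This closed form shows that for every fixed $\vec{\theta}_{-i}$, the map $\left(\theta_i, \hat{\theta}_i\right) \mapsto u_{i,M}\left(\theta_i, \hat{\theta}_i, \vec{\theta}_{-i}\right)$ on $\R^2$ is affine on each of the at most three regions cut out by the two lines $\hat{\theta}_i = M_1$ and $\hat{\theta}_i = M_2$. Equivalently, letting $\psi\left(\vec{\theta}_{-i}\right) = \left(M_1, M_2\right)$, every function in $\cF_{i,M}$ is the fixed two-dimensional feature map $\psi$ composed with a function $\R^2 \to [-1,1]$ that is piecewise linear with at most three pieces, so $\pdim\left(\cF_{i,M}\right)$ is at most the pseudo-dimension of this low-complexity family.

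I would then conclude exactly as in the other pseudo-dimension proofs of this section: the hypotheses of Theorem~\ref{thm:delineable} (Appendix~\ref{app:delineable}) hold with parameter dimension $2$, a constant number of degree-one boundary polynomials, and a within-region dependence on $\left(\theta_i, \hat{\theta}_i\right)$ that is linear, yielding $\pdim\left(\cF_{i,M}\right) = O(1)$. If a self-contained argument is preferred: given $N$ sample profiles $\vec{\theta}_{-i}^{(1)}, \dots, \vec{\theta}_{-i}^{(N)}$ and any targets $z^{(1)}, \dots, z^{(N)} \in \R$, the region of $\left(\theta_i, \hat{\theta}_i\right) \in \R^2$ on which $u_{i,M}(\theta_i, \hat{\theta}_i, \vec{\theta}_{-i}^{(j)}) \leq z^{(j)}$ is bounded by $O(1)$ lines for each $j$, so the resulting arrangement of $O(N)$ lines partitions $\R^2$ into $O(N^2)$ cells, on each of which the vector of signs of $u_{i,M}(\theta_i, \hat{\theta}_i, \vec{\theta}_{-i}^{(j)}) - z^{(j)}$ over $j \in [N]$ is constant; shattering $N$ points would then require $2^N \leq O(N^2)$, forcing $N = O(1)$.

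The only delicate step is the first one --- deriving the closed form above, in particular checking that when agent $i$ loses the winner's payment is $\max\{\hat{\theta}_i, M_2\}$ and that no lower order statistics or agent identities enter agent $i$'s utility. Once that reduction is in hand, the pseudo-dimension bound is routine and parallels Theorem~\ref{thm:FP_pdim}, the only difference being the single extra real-valued feature $M_2$ contributed by the spite term.
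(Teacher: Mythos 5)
Your proposal is correct and follows essentially the same route as the paper: both isolate the two hyperplanes $\hat{\theta}_i = \norm{\vec{\theta}_{-i}}_{\infty}$ and $\hat{\theta}_i = s$ (your $M_1$ and $M_2$), observe that $u_{i,M}$ is linear in $\left(\theta_i, \hat{\theta}_i\right)$ on each of the resulting regions, and conclude via Theorem~\ref{thm:delineable}. The closed form you derive is exactly Equation (6) from the paper's proof of Lemma~\ref{lem:SP_discontinuities}, and your self-contained line-arrangement argument is just Theorem~\ref{thm:delineable} unwound for $m=1$, $t=2$.
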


\begin{proof}
For any vector $\vec{\theta}_{-i} \in [0,1]^{(n-1)}$, let $s$ be the second-highest component. Letting $\cH = \{\hat{\theta}_i = \norm{\vec{\theta}_{-i}}_{\infty}, \hat{\theta}_i = s\}$ we know that for any interval $I$ of $[0,1] \setminus \cH$, the allocation is fixed across all $\hat{\theta}_i \in C$. So long as the allocation is fixed, $u_{i, M}\left(\cdot, \cdot, \vec{\theta}_{-i}\right)$ is a linear function of $\left(\theta_i, \hat{\theta}_i\right)$. Therefore, the pseudo-dimension bound follows directly from Theorem~\ref{thm:delineable} in Appendix~\ref{app:delineable}.
\end{proof}

\section{Conclusion}
In this paper, we provided techniques for estimating how far a mechanism is from \emph{ex-interim} incentive compatible. We introduced an empirical variant of approximate incentive compatibility which intuitively measures the maximum utility an agent can gain by misreporting his type, on average over the samples. We bounded the difference between our empirical incentive compatibility estimate and the true incentive compatibility approximation factor. To do so, we relied on a subtle mixture of tools from learning theory, including dispersion and pseudo-dimension. We thus derived strong
guarantees for many important manipulable mechanisms, including the first-price auction, generalized second-price auction, discriminatory auction, uniform price auction, and second-price auction under spiteful agents.

As in prior research on auction design via machine learning, we assumed that we could receive independent samples from the type distribution. One direction for future work would be to relax this assumption in the spirit of incentive-aware learning~\citep{Epasto18:Incentive}. Another would be to relax the assumption that each agent's type is independent from the others' in the \emph{ex-interim} setting.

\subsection*{Acknowledgments} We thank Zhe Feng, Noah Golowich, Harikrishna Narasimhan, and David Parkes for a valuable discussion on approximate incentive compatibility in the context of deep learning, a motivating factor in our study of this general setting.

Moreover, we thank Fabian Pieroth for pointing out an inaccuracy in the appendix of an earlier arXiv version of this paper, as well as a missing constant in Theorem~\ref{thm:pdim}.

This material is based on work supported by the National Science Foundation under grants IIS-1618714, IIS-1718457, IIS-1617590, IIS-1901403, CCF-1733556, CCF-1422910, CCF-1535967, the ARO under award W911NF-17-1-0082, an Amazon Research Award, a Microsoft Research Faculty Fellowship, and a Google Research Award.

\bibliography{../../dairefs}

\begin{thebibliography}{57}
\providecommand{\natexlab}[1]{#1}
\providecommand{\url}[1]{\texttt{#1}}
\expandafter\ifx\csname urlstyle\endcsname\relax
  \providecommand{\doi}[1]{doi: #1}\else
  \providecommand{\doi}{doi: \begingroup \urlstyle{rm}\Url}\fi

\bibitem[Acquisti and Varian(2005)]{Acquisti:Varian:05}
Alessandro Acquisti and Hal~R. Varian.
\newblock Conditioning prices on purchase history.
\newblock \emph{Marketing Science}, 24\penalty0 (3):\penalty0 1--15, 2005.

\bibitem[Amin et~al.(2013)Amin, Rostamizadeh, and Syed]{Amin13:Learning}
Kareem Amin, Afshin Rostamizadeh, and Umar Syed.
\newblock Learning prices for repeated auctions with strategic buyers.
\newblock In \emph{Proceedings of the Annual Conference on Neural Information
  Processing Systems (NIPS)}, 2013.

\bibitem[Anthony and Bartlett(2009)]{Anthony09:Neural}
Martin Anthony and Peter~L Bartlett.
\newblock \emph{Neural network learning: Theoretical foundations}.
\newblock Cambridge University Press, 2009.

\bibitem[Archer et~al.(2004)Archer, Papadimitriou, Talwar, and
  Tardos]{Archer04:Approximate}
Aaron Archer, Christos Papadimitriou, Kunal Talwar, and {\'E}va Tardos.
\newblock An approximate truthful mechanism for combinatorial auctions with
  single parameter agents.
\newblock \emph{Internet Mathematics}, 1\penalty0 (2):\penalty0 129--150, 2004.

\bibitem[Azevedo and Budish(2018)]{Azevedo18:Strategy}
Eduardo~M Azevedo and Eric Budish.
\newblock Strategy-proofness in the large.
\newblock \emph{The Review of Economic Studies}, 86\penalty0 (1):\penalty0
  81--116, 2018.

\bibitem[Babaioff et~al.(2017)Babaioff, Gonczarowski, and
  Nisan]{Babaioff17:Menu}
Moshe Babaioff, Yannai~A Gonczarowski, and Noam Nisan.
\newblock The menu-size complexity of revenue approximation.
\newblock In \emph{Proceedings of the Annual Symposium on Theory of Computing
  (STOC)}, 2017.

\bibitem[Balcan et~al.(2008)Balcan, Blum, Hartline, and
  Mansour]{Balcan08:Reducing}
Maria-Florina Balcan, Avrim Blum, Jason Hartline, and Yishay Mansour.
\newblock Reducing mechanism design to algorithm design via machine learning.
\newblock \emph{Journal of Computer and System Sciences}, 74:\penalty0 78--89,
  December 2008.

\bibitem[Balcan et~al.(2018{\natexlab{a}})Balcan, Dick, and
  Vitercik]{Balcan18:Dispersion}
Maria-Florina Balcan, Travis Dick, and Ellen Vitercik.
\newblock Dispersion for data-driven algorithm design, online learning, and
  private optimization.
\newblock In \emph{Proceedings of the Annual Symposium on Foundations of
  Computer Science (FOCS)}, 2018{\natexlab{a}}.

\bibitem[Balcan et~al.(2018{\natexlab{b}})Balcan, Sandholm, and
  Vitercik]{Balcan18:General}
Maria-Florina Balcan, Tuomas Sandholm, and Ellen Vitercik.
\newblock A general theory of sample complexity for multi-item profit
  maximization.
\newblock In \emph{Proceedings of the ACM Conference on Economics and
  Computation (EC)}, 2018{\natexlab{b}}.

\bibitem[Brandt et~al.(2007)Brandt, Sandholm, and Shoham]{Brandt07:Spiteful}
Felix Brandt, Tuomas Sandholm, and Yoav Shoham.
\newblock Spiteful bidding in sealed-bid auctions.
\newblock In \emph{Proceedings of the Twentieth International Joint Conference
  on Artificial Intelligence (IJCAI)}, Hyderabad, India, 2007.
\newblock Early version in GTDT-05.

\bibitem[Buck(1943)]{Buck43:Partition}
R.~C. Buck.
\newblock Partition of space.
\newblock \emph{Amer. Math. Monthly}, 50:\penalty0 541--544, 1943.
\newblock ISSN 0002-9890.

\bibitem[Cai and Daskalakis(2017)]{Cai17:Learning}
Yang Cai and Constantinos Daskalakis.
\newblock Learning multi-item auctions with (or without) samples.
\newblock In \emph{Proceedings of the Annual Symposium on Foundations of
  Computer Science (FOCS)}, 2017.

\bibitem[Cai et~al.(2016)Cai, Devanur, and Weinberg]{Cai16:Duality}
Yang Cai, Nikhil~R. Devanur, and S.~Matthew Weinberg.
\newblock A duality based unified approach to {B}ayesian mechanism design.
\newblock In \emph{Proceedings of the Annual Symposium on Theory of Computing
  (STOC)}, 2016.

\bibitem[Chawla et~al.(2014)Chawla, Hartline, and
  Nekipelov]{Chawla14:Mechanism}
Shuchi Chawla, Jason Hartline, and Denis Nekipelov.
\newblock Mechanism design for data science.
\newblock In \emph{Proceedings of the ACM Conference on Economics and
  Computation (EC)}, 2014.

\bibitem[Chawla et~al.(2016)Chawla, Hartline, and Nekipelov]{Chawla16:A}
Shuchi Chawla, Jason Hartline, and Denis Nekipelov.
\newblock {A/B} testing of auctions.
\newblock In \emph{Proceedings of the ACM Conference on Economics and
  Computation (EC)}, 2016.

\bibitem[Chawla et~al.(2017)Chawla, Hartline, and
  Nekipelov]{Chawla17:Mechanism}
Shuchi Chawla, Jason~D. Hartline, and Denis Nekipelov.
\newblock Mechanism redesign.
\newblock \emph{arXiv preprint arXiv:1708.04699}, 2017.

\bibitem[Cole and Roughgarden(2014)]{Cole14:Sample}
Richard Cole and Tim Roughgarden.
\newblock The sample complexity of revenue maximization.
\newblock In \emph{Proceedings of the Annual Symposium on Theory of Computing
  (STOC)}, 2014.

\bibitem[Conen and Sandholm(2001)]{Conen01:Preference}
Wolfram Conen and Tuomas Sandholm.
\newblock Preference elicitation in combinatorial auctions: Extended abstract.
\newblock In \emph{Proceedings of the ACM Conference on Electronic Commerce
  (ACM-EC)}, pages 256--259, Tampa, FL, 2001.
\newblock More detailed description of algorithmic aspects in IJCAI-01 Workshop
  on Economic Agents, Models, and Mechanisms, pp. 71--80.

\bibitem[Conitzer and Sandholm(2002)]{Conitzer02:Mechanism}
Vincent Conitzer and Tuomas Sandholm.
\newblock Complexity of mechanism design.
\newblock In \emph{Proceedings of the 18th Annual Conference on Uncertainty in
  Artificial Intelligence (UAI)}, pages 103--110, Edmonton, Canada, 2002.

\bibitem[Conitzer and Sandholm(2007)]{Conitzer07:Incremental}
Vincent Conitzer and Tuomas Sandholm.
\newblock Incremental mechanism design.
\newblock In \emph{Proceedings of the Twentieth International Joint Conference
  on Artificial Intelligence (IJCAI)}, Hyderabad, India, 2007.

\bibitem[Dekel et~al.(2010)Dekel, Fischer, and Procaccia]{Dekel10:Incentive}
Ofer Dekel, Felix Fischer, and Ariel~D Procaccia.
\newblock Incentive compatible regression learning.
\newblock \emph{Journal of Computer and System Sciences}, 76\penalty0
  (8):\penalty0 759--777, 2010.

\bibitem[D{\"u}tting et~al.(2015)D{\"u}tting, Fischer, Jirapinyo, Lai, Lubin,
  and Parkes]{Dutting2015:Payment}
Paul D{\"u}tting, Felix Fischer, Pichayut Jirapinyo, John~K Lai, Benjamin
  Lubin, and David~C Parkes.
\newblock Payment rules through discriminant-based classifiers.
\newblock \emph{ACM Transactions on Economics and Computation (TEAC)},
  3\penalty0 (1):\penalty0 5, 2015.

\bibitem[D{\"u}tting et~al.(2019)D{\"u}tting, Feng, Narasimhan, Parkes, and
  Ravindranath]{Duetting19:Optimal}
Paul D{\"u}tting, Zhe Feng, Harikrishna Narasimhan, David~C Parkes, and
  Sai~Srivatsa Ravindranath.
\newblock Optimal auctions through deep learning.
\newblock \emph{International Conference on Machine Learning (ICML)}, 2019.

\bibitem[Edelman et~al.(2007)Edelman, Ostrovsky, and
  Schwarz]{Edelman07:Internet}
Benjamin Edelman, Michael Ostrovsky, and Michael Schwarz.
\newblock Internet advertising and the generalized second-price auction:
  Selling billions of dollars worth of keywords.
\newblock \emph{The American Economic Review}, 97\penalty0 (1):\penalty0
  242--259, March 2007.
\newblock ISSN 0002-8282.

\bibitem[Elkind(2007)]{Elkind07:Designing}
Edith Elkind.
\newblock Designing and learning optimal finite support auctions.
\newblock In \emph{Annual ACM-SIAM Symposium on Discrete Algorithms (SODA)},
  2007.

\bibitem[Epasto et~al.(2018)Epasto, Mahdian, Mirrokni, and
  Zuo]{Epasto18:Incentive}
Alessandro Epasto, Mohammad Mahdian, Vahab Mirrokni, and Song Zuo.
\newblock Incentive-aware learning for large markets.
\newblock In \emph{Proceedings of the International World Wide Web Conference
  (WWW)}, 2018.

\bibitem[Feng et~al.(2018)Feng, Narasimhan, and Parkes]{Feng18:Deep}
Zhe Feng, Harikrishna Narasimhan, and David~C Parkes.
\newblock Deep learning for revenue-optimal auctions with budgets.
\newblock In \emph{Autonomous Agents and Multi-Agent Systems}, 2018.

\bibitem[Goldner and Karlin(2016)]{Goldner16:Prior}
Kira Goldner and Anna~R Karlin.
\newblock A prior-independent revenue-maximizing auction for multiple additive
  bidders.
\newblock In \emph{International Workshop On Internet And Network Economics
  (WINE)}, 2016.

\bibitem[Golowich et~al.(2018)Golowich, Narasimhan, and
  Parkes]{Golowich18:Deep}
Noah Golowich, Harikrishna Narasimhan, and David~C Parkes.
\newblock Deep learning for multi-facility location mechanism design.
\newblock In \emph{Proceedings of the International Joint Conference on
  Artificial Intelligence (IJCAI)}, 2018.

\bibitem[Harada(2018)]{Harada18:Ad}
Matt Harada.
\newblock The ad exchange’s place in a first-price world.
\newblock \emph{Marketing Technology Insights}, 2018.
\newblock URL
  \url{https://martechseries.com/mts-insights/guest-authors/ad-exchanges-place-first-price-world/}.

\bibitem[Hart and Nisan(2012)]{Hart12:Approximate}
Sergiu Hart and Noam Nisan.
\newblock Approximate revenue maximization with multiple items.
\newblock In \emph{Proceedings of the ACM Conference on Economics and
  Computation (EC)}, 2012.

\bibitem[Hurwicz(1972)]{Hurwicz72:Informationally}
L.~Hurwicz.
\newblock On informationally decentralized systems.
\newblock In C.B McGuire and R.~Radner, editors, \emph{Decision and
  Organization}. Amsterdam: North Holland, 1972.

\bibitem[Kothari et~al.(2003)Kothari, Parkes, and
  Suri]{Kothari03:Approximately}
Anshul Kothari, David Parkes, and Subhash Suri.
\newblock Approximately-strategyproof and tractable multi-unit auctions.
\newblock In \emph{Proceedings of the ACM Conference on Electronic Commerce
  (ACM-EC)}, pages 166--175, San Diego, CA, 2003.

\bibitem[Krishna(2002)]{Krishna02:Auction}
Vijay Krishna.
\newblock \emph{Auction Theory}.
\newblock Academic Press, 2002.

\bibitem[Lahaie et~al.(2018)Lahaie, Munoz~Medina, Sivan, and
  Vassilvitskii]{Lahaie18:Testing}
S{\'e}bastien Lahaie, Andr{\'e}s Munoz~Medina, Balasubramanian Sivan, and
  Sergei Vassilvitskii.
\newblock Testing incentive compatibility in display ad auctions.
\newblock In \emph{Proceedings of the International World Wide Web Conference
  (WWW)}, 2018.

\bibitem[Likhodedov and Sandholm(2004)]{Likhodedov04:Boosting}
Anton Likhodedov and Tuomas Sandholm.
\newblock Methods for boosting revenue in combinatorial auctions.
\newblock In \emph{Proceedings of the National Conference on Artificial
  Intelligence (AAAI)}, pages 232--237, San Jose, CA, 2004.

\bibitem[Likhodedov and Sandholm(2005)]{Likhodedov05:Approximating}
Anton Likhodedov and Tuomas Sandholm.
\newblock Approximating revenue-maximizing combinatorial auctions.
\newblock In \emph{Proceedings of the National Conference on Artificial
  Intelligence (AAAI)}, Pittsburgh, PA, 2005.

\bibitem[Lubin and Parkes(2012)]{Lubin12:Approximate}
Benjamin Lubin and David~C Parkes.
\newblock Approximate strategyproofness.
\newblock \emph{Current Science}, pages 1021--1032, 2012.

\bibitem[Mennle and Seuken(2014)]{Mennle14:Axiomatic}
Timo Mennle and Sven Seuken.
\newblock An axiomatic approach to characterizing and relaxing
  strategyproofness of one-sided matching mechanisms.
\newblock In \emph{Proceedings of the ACM Conference on Economics and
  Computation (EC)}, 2014.

\bibitem[Morgan et~al.(2003)Morgan, Steiglitz, and Reis]{Morgan03:Spite}
John Morgan, Kenneth Steiglitz, and George Reis.
\newblock The spite motive and equilibrium behavior in auctions.
\newblock \emph{Contributions to Economic Analysis \& Policy}, 2\penalty0 (1),
  2003.

\bibitem[Nguyen and Sandholm(2015)]{Nguyen15:Multi-option}
Tri-Dung Nguyen and Tuomas Sandholm.
\newblock Multi-option descending clock auction.
\newblock Technical report, 2015.
\newblock Mimeo.

\bibitem[Parkes(1999)]{Parkes99:Optimal}
David Parkes.
\newblock Optimal auction design for agents with hard valuation problems.
\newblock In \emph{Agent-Mediated Electronic Commerce Workshop at the
  International Joint Conference on Artificial Intelligence}, Stockholm,
  Sweden, 1999.

\bibitem[Parkin(2018)]{Parkin18:Year}
Rachel Parkin.
\newblock A year in first-price.
\newblock \emph{Ad Exchanger}, 2018.
\newblock URL
  \url{https://adexchanger.com/the-sell-sider/a-year-in-first-price/}.

\bibitem[Pathak and S{\"o}nmez(2013)]{Pathak13:School}
Parag~A Pathak and Tayfun S{\"o}nmez.
\newblock School admissions reform in {C}hicago and {E}ngland: Comparing
  mechanisms by their vulnerability to manipulation.
\newblock \emph{American Economic Review}, 103\penalty0 (1):\penalty0 80--106,
  2013.

\bibitem[Pollard(1984)]{Pollard84:Convergence}
David Pollard.
\newblock \emph{Convergence of Stochastic Processes}.
\newblock Springer, 1984.

\bibitem[Rothkopf et~al.(1990)Rothkopf, Teisberg, and Kahn]{Rothkopf90:Why}
Michael Rothkopf, Thomas Teisberg, and Edward Kahn.
\newblock Why are {V}ickrey auctions rare?
\newblock \emph{Journal of Political Economy}, 98\penalty0 (1):\penalty0
  94--109, 1990.

\bibitem[Sandholm(1993)]{Sand93}
Tuomas Sandholm.
\newblock An implementation of the contract net protocol based on marginal cost
  calculations.
\newblock In \emph{Proceedings of the National Conference on Artificial
  Intelligence (AAAI)}, pages 256--262, Washington, D.C., July 1993.

\bibitem[Sandholm(2000)]{Sandholm00:Issues}
Tuomas Sandholm.
\newblock Issues in computational {V}ickrey auctions.
\newblock \emph{International Journal of Electronic Commerce}, 4\penalty0
  (3):\penalty0 107--129, 2000.
\newblock Early version in ICMAS-96.

\bibitem[Sandholm(2003)]{Sandholm03:Automated}
Tuomas Sandholm.
\newblock Automated mechanism design: A new application area for search
  algorithms.
\newblock In \emph{International Conference on Principles and Practice of
  Constraint Programming}, pages 19--36, Cork, Ireland, 2003.

\bibitem[Sandholm(2013)]{Sandholm13:Very-Large-Scale}
Tuomas Sandholm.
\newblock Very-large-scale generalized combinatorial multi-attribute auctions:
  Lessons from conducting \$60 billion of sourcing.
\newblock In Zvika Neeman, Alvin Roth, and Nir Vulkan, editors, \emph{Handbook
  of Market Design}. Oxford University Press, 2013.

\bibitem[Sandholm and Boutilier(2006)]{Sandholm06:Preference}
Tuomas Sandholm and Craig Boutilier.
\newblock Preference elicitation in combinatorial auctions.
\newblock In Peter Cramton, Yoav Shoham, and Richard Steinberg, editors,
  \emph{Combinatorial Auctions}, pages 233--263. MIT Press, 2006.
\newblock Chapter 10.

\bibitem[Sandholm and Likhodedov(2015)]{Sandholm15:Automated}
Tuomas Sandholm and Anton Likhodedov.
\newblock Automated design of revenue-maximizing combinatorial auctions.
\newblock \emph{Operations Research}, 63\penalty0 (5):\penalty0 1000--1025,
  2015.
\newblock Special issue on Computational Economics. Subsumes and extends over a
  AAAI-05 paper and a AAAI-04 paper.

\bibitem[Shalev-Shwartz and Ben-David(2014)]{Shalev14:Understanding}
Shai Shalev-Shwartz and Shai Ben-David.
\newblock \emph{Understanding machine learning: From theory to algorithms}.
\newblock Cambridge University Press, 2014.

\bibitem[Sharma and Sandholm(2010)]{Sharma10:Asymmetric}
Ankit Sharma and Tuomas Sandholm.
\newblock Asymmetric spite in auctions.
\newblock In \emph{AAAI Conference on Artificial Intelligence (AAAI)}, 2010.

\bibitem[Tang and Sandholm(2012)]{Tang12:Optimal}
Pingzhong Tang and Tuomas Sandholm.
\newblock Optimal auctions for spiteful bidders.
\newblock In \emph{AAAI Conference on Artificial Intelligence (AAAI)}, 2012.

\bibitem[Varian(2007)]{Varian07:Position}
Hal~R. Varian.
\newblock Position auctions.
\newblock \emph{International Journal of Industrial Organization}, pages
  1163--1178, 2007.

\bibitem[Yao(2014)]{Yao14:n}
Andrew Chi-Chih Yao.
\newblock An n-to-1 bidder reduction for multi-item auctions and its
  applications.
\newblock In \emph{Annual ACM-SIAM Symposium on Discrete Algorithms (SODA)},
  2014.

\end{thebibliography}
\bibliographystyle{plainnat}

\appendix
\section{Helpful lemmas}\label{app:helpful}
\begin{lemma}[\citet{Anthony09:Neural}]\label{lem:poly}
Suppose $f_1,\dots, f_p$ are polynomials of degree at most $d$ in $v \leq p$ variables. Then \[\left|\left\{\begin{pmatrix}
\sign\left(f_1(\vec{x})\right)\\
\vdots\\
\sign\left(f_p(\vec{x})\right)
\end{pmatrix} : \vec{x} \in \R^v\right\}\right| \leq 2\left(\frac{2epd}{v}\right)^v.\]
\end{lemma}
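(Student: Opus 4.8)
The plan is to derive this bound from classical results on the number of connected components of real semialgebraic sets, in the style of Warren's theorem (which is exactly what the citation to \citet{Anthony09:Neural} packages up). First I would reduce the counting of sign vectors to the counting of cells of an arrangement. Observe that if $C \subseteq \R^v$ is connected and no $f_i$ vanishes anywhere on $C$, then each map $\vec{x} \mapsto \sign(f_i(\vec{x}))$ is continuous and $\{-1,1\}$-valued on $C$, hence constant; so the number of realized sign vectors in $\{-1,1\}^p$ is at most the number of connected components of $\R^v \setminus \bigcup_{i=1}^p \{f_i = 0\}$. Sign vectors with at least one zero coordinate live on the union of the varieties $\{f_i = 0\}$; rather than recurse on these lower-dimensional strata, I would use the standard perturbation argument — a sign vector realized with some coordinates equal to $0$ is a limit of relaxed systems in which those equalities are replaced by strict inequalities $|f_i| < \eta$, so up to a constant factor it is still ``witnessed'' by a full-dimensional cell of a slightly enlarged arrangement. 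The net effect is that the total count is at most a constant times the number of connected components of $\R^v$ minus the zero set of the single polynomial $F = \prod_{i=1}^p f_i$, whose degree is at most $pd$; the leading factor $2$ in the statement absorbs this bookkeeping.

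Second, I would invoke the Milnor--Thom / Petrovsky--Oleinik estimate (as used by Warren and by Anthony and Bartlett): for a degree-$D$ polynomial in $v$ variables with $D \ge v$, the number of connected components of the complement of its zero set is at most $\bigl(\tfrac{2eD}{v}\bigr)^v$ up to constants. Substituting $D = pd$ and using $p \ge v$ to simplify yields the claimed closed form $2\bigl(\tfrac{2epd}{v}\bigr)^v$. The proof of that ingredient is itself Morse-theoretic — one perturbs $F$ to a function in general position, bounds the number of its critical points by applying B\'ezout's theorem to the gradient system (a system of $v$ polynomials of degree at most $D-1$, hence at most $(D-1)^v$ common zeros), and observes that each bounded component contains a critical point — but I would cite it rather than reproduce it.

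The main obstacle is precisely this last ingredient: the bound on the number of connected components of a real algebraic set is a genuine theorem of real algebraic geometry, not a routine computation, and it is what makes the citation necessary. Everything else — the passage from sign vectors to cells, the treatment of zero coordinates by perturbation, and the arithmetic that turns the degree bound $pd$ and the Milnor--Thom estimate into the stated expression — is elementary; in a full write-up I would include only enough of it to make the two constants ($2$ and $2e$) transparent.
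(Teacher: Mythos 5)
The paper does not prove this lemma; it is imported verbatim from \citet{Anthony09:Neural} (their Theorem~8.3, which in turn rests on a theorem of Warren), so there is no in-paper argument to compare against and I can only assess your sketch on its own terms.

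The first half of your sketch is fine: on any connected component of $\R^v \setminus \bigcup_{i}\{f_i = 0\}$ the sign vector is constant, so the nonzero sign vectors are counted by the cells of the arrangement, and the sign vectors with a zero coordinate can be charged to adjacent full-dimensional cells by a standard perturbation. The gap is in the second half. You propose to collapse the arrangement into a single polynomial $F = \prod_{i=1}^p f_i$ of degree at most $pd$ and then invoke a ``single-hypersurface'' bound of the form $\left(\frac{2eD}{v}\right)^v$ for the components of the complement of $\{F=0\}$. But the Milnor--Thom / Oleinik--Petrovsky estimate --- and in particular the Morse-plus-B\'ezout argument you sketch for it --- does \emph{not} produce that bound. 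Counting critical points of a generic perturbation of $F$ by applying B\'ezout to the gradient system (which consists of $v$ polynomials of degree at most $D-1$) yields at most $(D-1)^v$ critical points and hence a component count on the order of $D^v = (pd)^v$, with no $v^{-v}$ factor anywhere. That factor, a saving of roughly $(v/e)^v$, is precisely Warren's contribution, and Warren obtains it by \emph{not} collapsing the arrangement: his proof stratifies $\R^v$ by the intersections of each size-$k$ subfamily of $\{f_1=0\},\dots,\{f_p=0\}$ for $k \le v$, bounds the number of cells meeting each such codimension-$k$ stratum, and sums a series that is essentially $\sum_{k=0}^{v}{p \choose k}\,O(d)^v$, from which the Sauer-type estimate $\sum_{k\le v}{p\choose k} \le (ep/v)^v$ (valid for $p \ge v$) produces the $(epd/v)^v$ form. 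Once the $p$ low-degree surfaces are merged into one degree-$pd$ surface, the combinatorics that drives this saving is erased, and no sharpening of the single-polynomial Morse/B\'ezout count recovers it. So the route you outline proves $O\bigl((pd)^v\bigr)$, which is weaker than the claimed $2\left(\frac{2epd}{v}\right)^v$ by roughly a factor of $(v/e)^v$; a correct proof has to run Warren's argument (or something equivalent) at the level of the full arrangement rather than via a product polynomial.
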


\begin{lemma}[\citet{Shalev14:Understanding}]\label{lem:log_ineq}
Let $\alpha \geq 1$ and $\beta > 0$. If $x < \alpha \log x + \beta$, then $x < 4\alpha \log (2 \alpha) + 2 \beta$.
\end{lemma}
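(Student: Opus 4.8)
The plan is to prove the slightly stronger bound $x < 2\alpha\log(2\alpha) + 2\beta$ directly; the claimed inequality then follows immediately since $\log(2\alpha) \geq 0$ (as $\alpha \geq 1$) forces $2\alpha\log(2\alpha) \leq 4\alpha\log(2\alpha)$. The only tool needed is the elementary inequality $\log t \leq t-1$, valid for all $t > 0$ (with $\log$ the natural logarithm; for another base one rescales by $1/\ln b$ and nothing changes). I will use it in the rescaled form obtained by writing $\log x = \log(x/a) + \log a \leq (x/a - 1) + \log a \leq x/a + \log a$, where $a > 0$ is a free parameter to be chosen. Note the hypothesis $x < \alpha \log x + \beta$ already forces $x > 0$, so $\log x$ is well defined and no case split on the sign of $x$ is needed.

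First I would substitute this rescaled bound into the hypothesis: $x < \alpha \log x + \beta \leq \alpha\left(x/a + \log a\right) + \beta = (\alpha/a)\,x + \alpha\log a + \beta$. The purpose of the parameter $a$ is that taking it large enough makes the coefficient $\alpha/a$ of $x$ on the right strictly below $1$, so the linear term can be absorbed into the left-hand side. Concretely I would take $a = 2\alpha$ (legitimate since $\alpha \geq 1 > 0$), which gives $\alpha/a = \frac{1}{2}$ and hence $x < \frac{1}{2}x + \alpha\log(2\alpha) + \beta$.

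Finally I would rearrange: subtracting $\frac{1}{2}x$ from both sides yields $\frac{1}{2}x < \alpha\log(2\alpha) + \beta$, i.e. $x < 2\alpha\log(2\alpha) + 2\beta \leq 4\alpha\log(2\alpha) + 2\beta$, which is the statement. There is essentially no obstacle here: the single idea is the balancing choice $a = 2\alpha$, and the gap between the factor $2$ that the argument actually delivers and the factor $4$ in the lemma means no tightness bookkeeping is required. If a contrapositive presentation were preferred, one could instead assume $x \geq 4\alpha\log(2\alpha) + 2\beta$ and feed it into $\log x \leq x/(2\alpha) + \log(2\alpha)$ to conclude $\alpha\log x + \beta \leq x$, contradicting the hypothesis; but the direct version above is cleaner and is what I would write.
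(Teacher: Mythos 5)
Your argument is correct: the paper itself gives no proof of this lemma (it is cited directly from \citet{Shalev14:Understanding}), and your derivation --- bounding $\log x \leq x/(2\alpha) + \log(2\alpha)$ and absorbing the resulting $x/2$ into the left-hand side --- is exactly the standard absorption trick used in that reference, and it even yields the sharper constant $2\alpha\log(2\alpha) + 2\beta$. One small caveat: your parenthetical that changing the base of the logarithm ``changes nothing'' is not literally true --- for base $2$ the optimal rescaling is $a = 2\alpha/\ln 2$, which spoils the intermediate factor-$2$ bound by an additive $2\alpha\log_2(1/\ln 2)$ term --- but since $\alpha \geq 1$ this surplus is still dominated by the extra $2\alpha\log(2\alpha)$ of slack in the stated factor-$4$ bound, so the lemma as written survives.
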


\subsection{Dispersion lemmas}\label{app:dispersion}
We now include a few lemmas about dispersion from work by \citet{Balcan18:Dispersion} as well as a few variations we prove ourselves for our specific applications.

\begin{lemma}[\citet{Balcan18:Dispersion}]\label{lem:dispersion}
  Let $\cB = \{\beta_1, \dots, \beta_r\} \subset \R$ be a collection of
  samples where each $\beta_i$ is drawn from a distribution with a $\kappa$-bounded density function. For any $\delta \geq 0$, the following statements hold
  with probability at least $1-\delta$:
  \begin{enumerate}
    \item If the $\beta_i$ are independent, then every interval of width $w$
    contains at most $k = O(r w \kappa + \sqrt{r \log(1/\delta)})$ samples. In
    particular, for any $\alpha \geq 1/2$ we can take $w = 1/(\kappa
    r^{1-\alpha})$ and $k = O(r^\alpha \sqrt{\log(1/\delta)})$.
    \item If the samples can be partitioned into $P$ buckets $\cB_1, \dots,
    \cB_P$ such that each $\cB_i$ contains independent samples and $|\cB_i| \leq
    M$, then every interval of width $w$ contains at most $k = O(P M w \kappa +
    \sqrt{M \log(P/\delta)}$. In particular, for any $\alpha \geq 1/2$ we can
    take $w = 1/(\kappa M^{1-\alpha})$ and $k = O(P M^\alpha \sqrt{
    \log(P/\delta)})$.
  \end{enumerate}
\end{lemma}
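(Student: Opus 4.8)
The plan is to prove this as a uniform concentration statement over all width-$w$ intervals: first control a single fixed interval with a Chernoff/Bernstein tail bound, then upgrade to a statement over all intervals using the fact that only polynomially many empirical masses are realizable, and finally handle the bucketed version by a union bound over buckets. \textbf{Step 1 (one fixed interval).} Fix an interval $I \subset \R$ of width $w$. Because each $\beta_i$ has density bounded by $\kappa$, I would first note $\Pr[\beta_i \in I] \le \kappa w =: p$, so the count $X_I := \sum_{i=1}^r \mathbf{1}[\beta_i \in I]$ is stochastically dominated by a $\mathrm{Binomial}(r,p)$ variable with mean at most $rp = r\kappa w$. A multiplicative Chernoff bound (or Bernstein's inequality) then yields $X_I \le r\kappa w + O\bigl(\sqrt{r\kappa w \ln(1/\delta')} + \ln(1/\delta')\bigr)$ with probability at least $1-\delta'$, for any target failure probability $\delta'$.

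\textbf{Step 2 (all width-$w$ intervals at once).} The key observation is that as the left endpoint $a$ of $[a, a+w)$ sweeps $\R$, the subset $\{i : \beta_i \in [a,a+w)\}$ changes at most $2r$ times (each $\beta_i$ enters once and leaves once), so at most $O(r)$ distinct subsets arise; equivalently, the class of width-$w$ intervals has VC dimension $2$, so by Sauer--Shelah only $O(r^2)$ subsets are realizable. Moreover, any interval attaining the maximal count may be slid left until a sample sits at its left endpoint, so it suffices to union bound over these $O(r)$ candidate intervals with $\delta' = \delta / O(r)$. Absorbing the resulting $\log r$ factor into $\tilde O$ and using $\kappa w \le 1$ so that $\sqrt{r\kappa w \ln(\cdot)} \le \sqrt{r \ln(\cdot)}$, I would conclude that with probability $\ge 1-\delta$ every width-$w$ interval contains at most $k = O(rw\kappa + \sqrt{r\log(1/\delta)})$ samples. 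The ``in particular'' clause follows by plugging in $w = 1/(\kappa r^{1-\alpha})$, which makes $rw\kappa = r^{\alpha}$, and noting $\sqrt{r} \le r^{\alpha}$ for $\alpha \ge 1/2$, so both terms are $O(r^{\alpha}\sqrt{\log(1/\delta)})$.

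\textbf{Step 3 (buckets).} For the partitioned case I would apply Steps 1--2 inside each bucket $\cB_t$ separately: its samples are independent and number at most $M$, so with probability $\ge 1 - \delta/P$ every width-$w$ interval contains at most $O(Mw\kappa + \sqrt{M\log(P/\delta)})$ samples of $\cB_t$. A union bound over the $P$ buckets makes all these events hold simultaneously with probability $\ge 1-\delta$, and summing the per-bucket counts bounds the total number of samples of $\cB$ in any width-$w$ interval; the ``in particular'' clause follows exactly as before with $r$ replaced by $M$.

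\textbf{Main obstacle.} The only subtle point is Step 2 --- passing from a fixed interval to all intervals without an infinite union bound. Everything hinges on the combinatorial claim that only $O(r)$ empirical masses are realizable by width-$w$ intervals (equivalently, a VC-dimension-$2$ argument); once that is in hand, the rest is routine tail bounds and bookkeeping, with the extraneous $\log r$ terms swallowed by the $O/\tilde O$ conventions used throughout the paper.
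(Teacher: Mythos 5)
This lemma is imported from \citet{Balcan18:Dispersion} and the present paper gives no proof of it, so there is nothing internal to compare against; judged on its own, your argument is essentially the standard proof of the dispersion lemma (fixed-interval tail bound, uniform convergence over the VC-dimension-$2$ class of intervals, union bound over buckets) and it works. Two points are worth tightening. First, in Step 2 the $O(r)$ candidate intervals $[\beta_i,\beta_i+w)$ are data-dependent, so you cannot literally invoke the fixed-interval Chernoff bound on them; the clean fix is to condition on $\beta_i$, after which the interval is deterministic and the remaining $r-1$ samples are still independent with $\Pr[\beta_j\in[\beta_i,\beta_i+w)\mid\beta_i]\le\kappa w$, and then union bound over $i$ --- or simply run the symmetrization/Sauer--Shelah route you already name, which is what the cited source does. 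Either way the extra $\log r$ is absorbed as you say, since $\log r\le\sqrt{r\log(1/\delta)}$ whenever the bound is non-vacuous. Second, your Step 3 correctly yields $O\bigl(PMw\kappa+P\sqrt{M\log(P/\delta)}\bigr)$, with a factor of $P$ on the deviation term that the displayed bound in the lemma statement omits; note that the lemma's own ``in particular'' clause, $k=O\bigl(PM^{\alpha}\sqrt{\log(P/\delta)}\bigr)$, is consistent with your bound and not with the displayed one, so the displayed general bound appears to contain a typo and your summation is the right derivation.
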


\begin{lemma}[\citet{Balcan18:Dispersion}]\label{lem:constant_mult}
Suppose $X$ is a random variable with $\kappa$-bounded density function and suppose $c \not= 0$ is a constant. Then $cX$ has a $\frac{\kappa}{|c|}$-bounded density function.
\end{lemma}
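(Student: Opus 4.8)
The plan is to use the standard change-of-variables formula for the density of a scaled random variable. Write $Y = cX$ and let $F_X$ and $F_Y$ denote the cumulative distribution functions of $X$ and $Y$, with $\phi_X$ the density of $X$ (which exists by hypothesis). First I would split into the cases $c > 0$ and $c < 0$. When $c > 0$, for every $y \in \R$ we have $F_Y(y) = \Pr[cX \leq y] = \Pr[X \leq y/c] = F_X(y/c)$, and differentiating in $y$ shows that $Y$ has density $\phi_Y(y) = \frac{1}{c}\phi_X(y/c)$. When $c < 0$, the inequality flips: $F_Y(y) = \Pr[X \geq y/c] = 1 - F_X(y/c)$ up to a measure-zero correction at $y/c$, and differentiating gives $\phi_Y(y) = -\frac{1}{c}\phi_X(y/c) = \frac{1}{|c|}\phi_X(y/c)$.

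In either case, $Y = cX$ therefore has density $\phi_Y(y) = \frac{1}{|c|}\phi_X(y/c)$ for all $y$. Finally I would invoke the assumption that $\phi_X(x) \leq \kappa$ for every $x$, which immediately yields $\phi_Y(y) = \frac{1}{|c|}\phi_X(y/c) \leq \frac{\kappa}{|c|}$ for every $y$; that is, $cX$ has a $\frac{\kappa}{|c|}$-bounded density, as claimed.

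There is no serious obstacle here — this lemma is elementary. The only point that requires a bit of care is the sign of $c$: for $c < 0$ the event $\{cX \leq y\}$ becomes $\{X \geq y/c\}$ and the chain rule produces a factor $1/c < 0$, so one must pass to the absolute value to obtain a legitimate nonnegative density. An alternative route that sidesteps the case split entirely is to observe that for any bounded measurable $g$, the substitution $y = cx$ gives $\E[g(cX)] = \int g(cx)\phi_X(x)\,dx = \int g(y)\,\frac{1}{|c|}\phi_X(y/c)\,dy$, where the $\frac{1}{|c|}$ comes from $|dx| = |dy|/|c|$; this identifies $\frac{1}{|c|}\phi_X(\cdot/c)$ as the density of $cX$ directly, and the desired bound follows exactly as above.
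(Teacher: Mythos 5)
Your proof is correct. Note, however, that the paper does not actually prove Lemma~\ref{lem:constant_mult}: it is stated as a cited result from \citet{Balcan18:Dispersion}, with no in-paper argument to compare against. Your change-of-variables derivation is the standard one, and it matches in spirit the Jacobian-based change-of-variables proofs the paper does give for the neighboring Lemmas~\ref{lem:sum_bounded} and~\ref{lem:difference_bounded}, so there is nothing to reconcile. The one place that would deserve a sentence in a fully rigorous write-up is the claim that differentiating $F_Y$ is legitimate; since $X$ is assumed to have a density, $F_X$ is absolutely continuous, hence so is $F_Y(y) = F_X(y/c)$ (or $1 - F_X(y/c)$), and the differentiation holds almost everywhere, which is all that is needed to identify a density. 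Your alternative route via $\E[g(cX)] = \int g(y)\,\frac{1}{|c|}\phi_X(y/c)\,dy$ sidesteps even that and is the cleaner phrasing.
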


\begin{lemma}\label{lem:sum_bounded}
Suppose $X$ and $Y$ are random variables taking values in $[0,1]$ and suppose that their joint distribution has a $\kappa$-bounded density function. Then the distribution of $X+Z$ has a $\kappa$-bounded density function.
\end{lemma}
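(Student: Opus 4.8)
The plan is to write the density of $X+Y$ explicitly as a marginal of the joint density and bound it directly (note the statement should read $X+Y$; the letter $Z$ is a typo). Let $f : [0,1]^2 \to \R$ denote the $\kappa$-bounded joint density of $(X,Y)$, and set $S = X+Y$, which takes values in $[0,2]$. The random vector $(X,Y)$ is jointly absolutely continuous by hypothesis, so $S$ has a density, and the standard convolution formula (obtained from the change of variables $(x,y) \mapsto (x,\, x+y)$) gives, for every $s \in [0,2]$,
\[
g(s) = \int_{\R} f(x,\, s-x)\, dx.
\]

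Next I would restrict the domain of integration using the fact that $X$ and $Y$ both lie in $[0,1]$: the integrand $f(x,\, s-x)$ vanishes unless $x \in [0,1]$ and $s - x \in [0,1]$, i.e.\ unless $x \in \bigl[\max\{0,\, s-1\},\ \min\{1,\, s\}\bigr]$. The length of this interval is $\min\{1,s\} - \max\{0,\, s-1\} = \min\{s,\, 2-s\}$, which is at most $1$ for every $s \in [0,2]$.

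Combining the two observations, $g(s) = \int_{\max\{0, s-1\}}^{\min\{1, s\}} f(x,\, s-x)\, dx \le \kappa \cdot \min\{s,\, 2-s\} \le \kappa$, which is exactly the claim. There is no real obstacle here: the only point requiring a word of justification is that $S$ genuinely has a density given by the displayed formula, which is immediate from joint absolute continuity, and the rest is a one-line estimate. (The same argument, with $\max\{0, s\}$ and $\min\{1, 1+s\}$ in place of the limits, would also handle differences $X-Y$ if needed, since the relevant interval again has length at most $1$.)
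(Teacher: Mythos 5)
Your proof is correct and takes essentially the same approach as the paper: both express the density of $X+Y$ via the change of variables $(x,y)\mapsto(x,x+y)$ (equivalently, the convolution formula), bound the integrand pointwise by $\kappa$, and note that the integration interval has length at most $1$. Your version is slightly more explicit about the support restriction, but there is no substantive difference.
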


\begin{proof}
Let $Z = X+Y$. We will perform change of variables using the function $g(x,y) = (x, x + y)$. Let $g^{-1}(x,z) = h(x,z) = (x, z - x).$ Then \[J_h(x,z) = \det \begin{pmatrix}
1 & 0\\
-1 & 1
\end{pmatrix} = 1.\] Therefore, $f_{X,Z}(x,z) = f_{X,Y}(x,z-x).$ This means that $f_Z(z) = \int_0^1 f_{X,Y}(x,z-x) \, dx \leq \kappa$, so the theorem statement holds.
\end{proof}

\begin{lemma}\label{lem:difference_bounded}
Suppose $X$ and $Y$ are random variables taking values in $[0,1]$ and suppose that their joint distribution has a $\kappa$-bounded density function. Then the distribution of $X-Z$ has a $\kappa$-bounded density function.
\end{lemma}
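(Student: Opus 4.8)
The plan is to mirror the change-of-variables argument from the proof of Lemma~\ref{lem:sum_bounded}, substituting a difference for the sum. Set $Z = X - Y$; since $X, Y \in [0,1]$, the variable $Z$ takes values in $[-1,1]$. First I would apply the change of variables $g(x,y) = (x, x-y)$, whose inverse is $h(x,z) = (x, x-z)$, and compute the Jacobian determinant $J_h(x,z) = \det \begin{pmatrix} 1 & 0 \\ 1 & -1 \end{pmatrix} = -1$, so $|J_h(x,z)| = 1$. The change-of-variables formula for densities then yields $f_{X,Z}(x,z) = f_{X,Y}(x, x-z)$.

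Next I would integrate out the first coordinate to obtain the marginal density of $Z$: $f_Z(z) = \int_0^1 f_{X,Z}(x,z)\, dx = \int_0^1 f_{X,Y}(x, x-z)\, dx$. Because the joint density $f_{X,Y}$ is $\kappa$-bounded pointwise, the integrand never exceeds $\kappa$, and integrating over the unit-length interval $x \in [0,1]$ gives $f_Z(z) \leq \kappa$. This is exactly the claimed bound (reading the $X-Z$ in the statement as $X-Y$).

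I do not anticipate a genuine obstacle; the argument is essentially identical to that of Lemma~\ref{lem:sum_bounded} up to the sign flip in the transformation. The one point worth a sentence of care is that for some $x \in [0,1]$ the second argument $x - z$ may fall outside $[0,1]$, where $f_{X,Y}$ vanishes; this only strengthens the bound, so integrating $x$ over all of $[0,1]$ is legitimate and the estimate $f_Z(z) \leq \kappa$ stands.
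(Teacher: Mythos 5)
Your proof is essentially identical to the paper's: the same change of variables $g(x,y)=(x,x-y)$ with inverse $h(x,z)=(x,x-z)$, the same Jacobian computation, and the same integration of the joint density over $x\in[0,1]$ to bound $f_Z(z)\leq\kappa$. Your added remark that $f_{X,Y}$ vanishes when $x-z\notin[0,1]$ is a correct and slightly more careful observation that the paper leaves implicit.
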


\begin{proof}
Let $Z = X-Y$. We will perform change of variables using the function $g(x,y) = (x, x - y)$. Let $g^{-1}(x,z) = h(x,z) = (x, x-z).$ Then \[J_h(x,z) = \det \begin{pmatrix}
1 & 0\\
1 & -1
\end{pmatrix} = -1.\] Therefore, $f_{X,Z}(x,z) = f_{X,Y}(x,x - z).$ This means that $f_Z(z) = \int_0^1 f_{X,Z}(x,z) \, dx = \int_0^1 f_{X,Y}(x,x -z) \, dx \leq \kappa$, so the theorem statement holds.
\end{proof}

\begin{definition}[Hyperplane delineation]
Let $\Psi$ be a set of hyperplanes in $\R^m$ and let $\cP$ be a partition of a set $\R^m$. Let $K_1, \dots, K_q$ be the connected components of $\R^m \setminus \Psi$. Suppose every set in $\cP$ is the union of some collection of sets $K_{i_1}, \dots, K_{i_j}$ together with their limit points. Then we say that the set $\Psi$ \emph{delineates} $\cP$.
\end{definition}

\begin{lemma}[\citet{Balcan18:Dispersion}]\label{lem:hyperplanes}
Let $u_1, \dots, u_N$ be a set of piecewise $L$-Lipschitz functions mapping $\R^m$ to $\R$, drawn i.i.d. from a distribution $\dist$. For each $j \in [N]$, let $\cP_j$ be
the partition of $[0,1]^m$ such that over any $R \in
\cP_j$, $u_j$ is $L$-Lipschitz.
Suppose the hyperplane sets $\Psi_1, \dots, \Psi_N$ delineate the partitions $\cP_1, \dots, \cP_N$. Moreover, suppose the multi-set union of $\Psi_1, \dots, \Psi_N$ can be partitioned into $P$ multi-sets $\cB_1, \dots, \cB_P$ such that for each multi-set $\cB_i$:
\begin{enumerate}
\item The hyperplanes in $\cB_i$ are parallel with probability 1 over the draw of $u_1, \dots, u_N$.
\item The offsets of the hyperplanes in $\cB_i$ are independently drawn from $\kappa$-bounded distributions.
\end{enumerate}
With probability at least $1-\delta$ over the draw of $u_1, \dots, u_N$, the functions are \[\left(\frac{1}{2\kappa\sqrt{\max|\cB_i|}}, O\left(P\sqrt{\max|\cB_i|\ln \frac{P}{\delta}}\right)\right)\text{-dispersed}.\]
\end{lemma}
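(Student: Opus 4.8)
The plan is to reduce the statement to a one-dimensional counting problem within each of the $P$ buckets of parallel hyperplanes and then invoke Lemma~\ref{lem:dispersion}. Write $M = \max_i |\cB_i|$ (the quantity denoted $\max|\cB_i|$ in the statement) and set $w = \frac{1}{2\kappa\sqrt{M}}$. The first step is to observe that if the partition $\cP_j$ splits the $\ell_1$-ball of radius $w$ around a point $\vec{p}$, then at least one hyperplane of $\Psi_j$ must intersect that ball: since $\Psi_j$ delineates $\cP_j$, a connected set disjoint from every hyperplane of $\Psi_j$ lies inside a single connected component of $\R^m \setminus \Psi_j$, and each such component is contained in a single cell of $\cP_j$. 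Hence the number of partitions among $\cP_1, \dots, \cP_N$ that split the ball is at most the total number of hyperplanes in the multiset $\bigcup_{j} \Psi_j$ that meet it.

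The second step is geometric. Fix a bucket $\cB_i$; by hypothesis its hyperplanes share a common normal $\vec{v}$, which we may take to be a unit vector, so a hyperplane $\left\{\vec{x} : \langle \vec{v}, \vec{x} \rangle = b \right\}$ of $\cB_i$ meets the ball $\left\{\vec{x} : \norm{\vec{x} - \vec{p}}_1 \leq w \right\}$ if and only if $\left|b - \langle \vec{v}, \vec{p} \rangle\right| \leq w \norm{\vec{v}}_\infty \leq w$, i.e.\ if and only if the offset $b$ lies in an interval of width $2w$ centered at $\langle \vec{v}, \vec{p}\rangle$. Since the offsets of the hyperplanes in $\cB_i$ are independent draws from $\kappa$-bounded densities and $|\cB_i| \leq M$, Lemma~\ref{lem:dispersion} (applied with failure probability $\delta/P$) gives that with probability $1-\delta/P$, every interval of width $2w$ contains at most $O\left(M w \kappa + \sqrt{M \ln(P/\delta)}\right)$ of these offsets; with our choice of $w$ the term $M w \kappa$ is $O(\sqrt{M})$, so this count is $O\left(\sqrt{M \ln(P/\delta)}\right)$.

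The final step is a union bound over the $P$ buckets: with probability $1-\delta$, simultaneously for every point $\vec{p}$, each bucket contributes at most $O\left(\sqrt{M \ln(P/\delta)}\right)$ hyperplanes meeting the ball of radius $w$ around $\vec{p}$, hence at most $O\left(P\sqrt{M \ln(P/\delta)}\right)$ hyperplanes meet it in total, and therefore at most that many of the partitions $\cP_1, \dots, \cP_N$ split it. This is exactly $(w,k)$-dispersion with $w = \frac{1}{2\kappa\sqrt{M}}$ and $k = O\left(P\sqrt{M \ln(P/\delta)}\right)$, as claimed.

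I expect the main obstacle to be Step~1 --- carefully leveraging the delineation hypothesis so that ``the ball meets no hyperplane of $\Psi_j$'' genuinely implies ``$\cP_j$ does not split the ball'' --- together with keeping the $\ell_1$-versus-$\ell_\infty$ duality straight in Step~2, since the dispersion definition is phrased in the $\ell_1$-norm whereas the condition for a hyperplane to intersect an $\ell_1$-ball is governed by the $\ell_\infty$-norm of its normal vector. The probabilistic core, including the uniformity over the continuum of balls indexed by $\vec{p}$, is entirely outsourced to Lemma~\ref{lem:dispersion}, whose guarantee holds simultaneously for all width-$2w$ intervals.
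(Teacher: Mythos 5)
This lemma is stated in the paper as a citation to \citet{Balcan18:Dispersion} and is not re-proved there, so there is no in-paper proof to compare your argument against. On its own terms your argument is essentially the standard one for this result and it is correct: the delineation hypothesis gives exactly the implication you establish in Step~1 (an $\ell_1$-ball disjoint from every hyperplane of $\Psi_j$ is a connected set lying in a single connected component of $\R^m \setminus \Psi_j$, hence inside a single cell of $\cP_j$), reducing the dispersion count to counting hyperplanes in the multi-set union that intersect the ball; within each bucket of parallel hyperplanes this becomes a one-dimensional interval-count over offsets, which Lemma~\ref{lem:dispersion} controls uniformly over all width-$2w$ intervals; and a union bound over the $P$ buckets yields the stated $(w,k)$ after substituting $w = 1/\left(2\kappa\sqrt{\max_i|\cB_i|}\right)$.

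The one place to tighten is the phrase ``we may take the normal to be a unit vector.'' Rescaling a hyperplane's normal rescales its offset by the same factor, and hence rescales the bound on the offset density, so the hypothesis that the offsets have $\kappa$-bounded densities is meaningful only relative to a fixed choice of normalization for the normals. The normalization your $\ell_1$/$\ell_\infty$ duality step actually needs is $\norm{\vec{v}}_\infty \leq 1$, which is the one the paper implicitly uses in its applications (the normals there are of the form $\vec{e}_a - \vec{e}_b$, with $\ell_\infty$-norm exactly $1$, and the offsets are shown to be $\kappa$-bounded in exactly that parametrization via Lemmas~\ref{lem:sum_bounded} and \ref{lem:difference_bounded}); but you should say explicitly that the normalization in which $\norm{\vec{v}}_\infty \leq 1$ is the same one in which the offsets are assumed $\kappa$-bounded, rather than invoking a generic unit normal. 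With a mismatched normalization the two assumptions would not line up, and the factor $\norm{\vec{v}}_\infty$ would not drop out of the interval width as you need it to.
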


\section{Delineable utility functions}\label{app:delineable}
To prove some of our pseudo-dimension guarantees, we use the notion of \emph{delineability} introduced by \citet{Balcan18:General}. They used it in the context of profit-maximization, so here we adapt it to our setting.

Given a mechanism $M$ and agent $i \in [n]$, let $\cF_{i, M} = \left\{\left. u_{i, M, \theta_i, \hat{\theta}_i} \right| \theta_i, \hat{\theta}_i \in \Theta_i\right\}$, where $u_{i, M, \theta_i, \hat{\theta}_i}$ is the function introduced in Section~\ref{sec:interim}. Moreover, for a fixed type profile $\vec{\theta}_{-i} \in \times_{j \not= i} \Theta_j$, let $u_{\vec{\theta}_{-i}} : \Theta_i^2 \to [-1,1]$ be a function that maps a pair of types $\theta_i, \hat{\theta}_i \in \Theta_i$ to $u_{i, M, \theta_i, \hat{\theta}_i}\left(\vec{\theta}_{-i}\right)$.

\begin{definition}[$(m,t)$-delineable] Given a mechanism $M$ and agent $i \in [n]$, we say the function class $\cF_{i,M}$ is \emph{$\left(m,t\right)$-delineable} if:
\begin{enumerate}
\item Each agent's type space is a subset of $[0,1]^m$; and
\item For any $\vec{\theta}_{-i} \in [0,1]^{m(n-1)}$, there is a set $\hyp$ of $t$ hyperplanes such that for any connected component $\cC$ of $[0,1]^{2m} \setminus \hyp$, the function $u_{\vec{\theta}_{-i}}\left(\vec{\theta}_i, \hat{\vec{\theta}}_i\right)$ is linear over $\cC.$
\end{enumerate}
\end{definition}

The following theorem is similar to Balcan et al.'s main theorem~\citep{Balcan18:General}, though we have adapted it to our setting. We include the proof for completeness.

\begin{theorem}\label{thm:delineable}
If $\cF_{i,M}$ is $\left(m, t\right)$-delineable, the pseudo dimension of $\cF_{i,M}$ is $O\left(m \log \left(mt\right)\right)$.
\end{theorem}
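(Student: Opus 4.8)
The plan is to reduce the pseudo-dimension bound to a counting argument over the sign patterns of a polynomial system, following the standard recipe for bounding pseudo-dimension via the structure of the dual class. First I would set up the shattering framework: suppose a set $\sample_{-i} = \{\vec{\theta}_{-i}^{(1)}, \dots, \vec{\theta}_{-i}^{(N)}\}$ of size $N$ is shattered by $\cF_{i,M}$, witnessed by targets $z^{(1)}, \dots, z^{(N)} \in \R$. For each sample $\vec{\theta}_{-i}^{(j)}$, the $(m,t)$-delineability property gives a set $\hyp_j$ of $t$ hyperplanes in $[0,1]^{2m}$ such that over each connected component of $[0,1]^{2m} \setminus \hyp_j$, the function $u_{\vec{\theta}_{-i}^{(j)}}(\vec{\theta}_i, \hat{\vec{\theta}}_i)$ is linear (hence affine) in the $2m$ variables $(\vec{\theta}_i, \hat{\vec{\theta}}_i)$.

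Next I would count the number of distinct ``behaviors'' a single choice of $(\vec{\theta}_i, \hat{\vec{\theta}}_i) \in [0,1]^{2m}$ can exhibit on the whole sample. Pooling the hyperplane sets $\hyp_1, \dots, \hyp_N$ gives at most $tN$ hyperplanes in $\R^{2m}$; these hyperplanes partition $\R^{2m}$ into at most $O((tN)^{2m})$ regions (by the classical bound on the number of regions induced by an arrangement of hyperplanes, or equivalently via Lemma~\ref{lem:poly} applied to the $tN$ degree-1 polynomials defining the hyperplanes). Within any one such region, each $u_{\vec{\theta}_{-i}^{(j)}}$ is a fixed affine function of $(\vec{\theta}_i, \hat{\vec{\theta}}_i)$. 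Now for the shattering to succeed, we need: as $(\vec{\theta}_i, \hat{\vec{\theta}}_i)$ ranges over $[0,1]^{2m}$, the sign vector $\big(\sign(u_{\vec{\theta}_{-i}^{(1)}}(\vec{\theta}_i, \hat{\vec{\theta}}_i) - z^{(1)}), \dots, \sign(u_{\vec{\theta}_{-i}^{(N)}}(\vec{\theta}_i, \hat{\vec{\theta}}_i) - z^{(N)})\big)$ must realize all $2^N$ possible patterns. Inside a single region of the arrangement, each coordinate of this sign vector is the sign of an affine function, so by Lemma~\ref{lem:poly} (with $p = N$ affine functions in $2m$ variables) at most $2\left(\frac{2eN}{2m}\right)^{2m} = 2(eN/m)^{2m}$ distinct sign patterns arise within that region. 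Multiplying by the number of regions, the total number of achievable sign patterns over all of $[0,1]^{2m}$ is at most $O((tN)^{2m}) \cdot 2(eN/m)^{2m} = O\big((t N^2 / m)^{2m}\big)$, roughly $(tN)^{O(m)}$.

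For shattering we require $2^N \le (tN)^{O(m)}$, i.e. $N \le O(m)\log(tN) = O(m\log t) + O(m \log N)$. Applying Lemma~\ref{lem:log_ineq} (with $\alpha = O(m)$, $\beta = O(m\log t)$) to solve this transcendental inequality yields $N = O(m\log(mt))$, which is the claimed pseudo-dimension bound. The main obstacle — and the step that requires the most care — is being precise about the double use of the arrangement structure: the ``outer'' partition into regions is governed by the delineating hyperplanes (degree-1 in $(\vec{\theta}_i, \hat{\vec{\theta}}_i)$), while the ``inner'' sign count is governed by the affine-minus-threshold functions, and one must ensure the affine forms $u_{\vec{\theta}_{-i}^{(j)}}(\cdot, \cdot)$ are genuinely constant (not merely piecewise) on each outer region before invoking the inner polynomial count — this is exactly what delineability guarantees. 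A secondary point to handle carefully is bookkeeping the constants so that the final application of Lemma~\ref{lem:log_ineq} goes through cleanly; since we only want the $O(\cdot)$ bound, the loose counting above suffices.
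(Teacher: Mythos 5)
Your proposal is correct and follows essentially the same two-stage counting argument as the paper: first bound the number of regions in the arrangement formed by pooling all $tN$ delineating hyperplanes (the paper cites Buck's theorem, you invoke Lemma~\ref{lem:poly} on the defining linear forms — equivalent tools), then within each such region apply a polynomial sign-pattern count to the $N$ affine functions $u_{\vec{\theta}_{-i}^{(j)}} - z^{(j)}$, multiply the two bounds, compare to $2^N$, and solve the resulting transcendental inequality via Lemma~\ref{lem:log_ineq}. The only cosmetic difference is that the paper organizes the inner count as a hyperplane-arrangement region count (yielding the factor $2mN^{2m}$) whereas you invoke Lemma~\ref{lem:poly} directly (yielding $2(eN/m)^{2m}$); both give the same $O(m\log(mt))$ conclusion.
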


\begin{proof}
Suppose $\pdim\left(\cF_{i,M}\right) = N$. By definition, there exists a set $\sample = \left\{\vec{\theta}_{-i}^{(1)}, \dots, \vec{\theta}_{-i}^{(N)}\right\}$ that is shattered by $\cF_{i,M}$. Let $z^{(1)}, \dots, z^{(N)} \in \R$ be the points that witness this shattering. Again, by definition, we know that for any $T \subseteq [N]$, there exists a pair of types $\vec{\theta}_i^{(T)}, \hat{\vec{\theta}}_i^{(T)} \in [0,1]^m$ such that $u_{i, M, \vec{\theta}_i^{(T)}, \hat{\vec{\theta}}_i^{(T)}}\left(\vec{\theta}_{-i}^{(j)}\right) \geq z^{(j)}$ if and only if $j \in T$. Let $\Theta^* = \left\{\left(\vec{\theta}_i^{(T)}, \hat{\vec{\theta}}_i^{(T)}\right) : T \subseteq [N]\right\}$. To show that the pseudo-dimension $N$ of $\cF_{i,M}$ is $O(m\log (mt))$, we will show that $|\Theta^*| = 2^N < 4m^2t^{2m}N^{4m}$, which means that $N = O(m \log (mt))$.

To this end, for $\vec{\theta}_{-i}^{(j)} \in \sample$, let $\hyp^{(j)}$ be the set of $t$ hyperplanes such 
that for any connected component $\cC$ of $[0,1]^{2m} \setminus \hyp^{(j)}$, $u_{\vec{\theta}_{-i}^{(j)}}\left(\vec{\theta}_i, \hat{\vec{\theta}}_i\right)$ is linear over $\cC.$ We now consider the \emph{overlay} of all $N$ partitions $[0,1]^{2m} \setminus \hyp^{(1)}, \dots, [0,1]^{2m} \setminus \hyp^{(N)}$. Formally, this overlay is made up of the sets $\cC_1, \dots, \cC_{\tau}$ which are the connected components 
of $[0,1]^{2m} \setminus \left(\bigcup_{j = 1}^N \hyp^{(j)}\right)$. For each set $\cC_{\ell}$ and each $j \in [N]$, $\cC_\ell$ is completely contained in a single connected component of $[0,1]^{2m} \setminus \hyp^{(j)}$, which means that $u_{\vec{\theta}_{-i}^{(j)}}\left(\vec{\theta}_i, \hat{\vec{\theta}}_i\right)$ is linear over $\cC_{\ell}$.
As we know from work by \citet{Buck43:Partition}, since $\left|\hyp^{(j)}\right|\leq t$ for all $j \in [N]$, $\tau < 2m(Nt)^{2m}$.

Now, consider a single connected component $\cC_\ell$ of $[0,1]^{2m} \setminus \left(\bigcup_{j = 1}^N \hyp^{(j)}\right)$. For any sample $\vec{\theta}_{-i}^{(j)} \in \sample$, 
we know that $u_{\vec{\theta}_{-i}^{(j)}}\left(\vec{\theta}_i, \hat{\vec{\theta}}_i\right)$ is linear over $\cC_\ell$. 
Let $\vec{a}_\ell^{(j)} \in \R^{2m}$ and $b_\ell^{(j)} \in \R$ be the weight vector and offset 
such that $u_{\vec{\theta}_{-i}^{(j)}}\left(\vec{\theta}_i, \hat{\vec{\theta}}_i\right) = \vec{a}_\ell^{(j)} \cdot \left(\vec{\theta}_i, \hat{\vec{\theta}}_i\right) + b_\ell^{(j)}$ for all $\left(\vec{\theta}_i, \hat{\vec{\theta}}_i\right) \in \cC_\ell$. We know that there is a hyperplane $\vec{a}_\ell^{(j)} \cdot \left(\vec{\theta}_i, \hat{\vec{\theta}}_i\right) + b_\ell^{(j)} = z^{(j)}$ where on one side of the 
hyperplane, $u_{\vec{\theta}_{-i}^{(j)}}\left(\vec{\theta}_i, \hat{\vec{\theta}}_i\right) \leq z^{(j)}$ and on the other 
side, $u_{\vec{\theta}_{-i}^{(j)}}\left(\vec{\theta}_i, \hat{\vec{\theta}}_i\right) > z^{(j)}$.
Let $\hyp_{\cC_\ell}$ be all $N$ hyperplanes for all $N$ samples $\left(\hyp_{\cC_\ell} = \left\{\vec{a}_\ell^{(j)} \cdot \left(\vec{\theta}_i, \hat{\vec{\theta}}_i\right) + b_\ell^{(j)} = z^{(j)} : j \in [N]\right\}\right).$ Notice that in any 
connected component $\cC$ of $\cC_\ell \setminus \hyp_{\cC_\ell}$, for all $j \in [N]$, $u_{\vec{\theta}_{-i}^{(j)}}\left(\vec{\theta}_i, \hat{\vec{\theta}}_i\right)$ is either greater than $z^{(j)}$ or 
less than $z^{(j)}$ (but not both) for all $\left(\vec{\theta}_i, \hat{\vec{\theta}}_i\right) \in \cC$.
Thus, at most one 
vector $\left(\vec{\theta}_i, \hat{\vec{\theta}}_i\right) \in \Theta^*$ can come from $\cC$. In total, the number of 
connected components of $\cC_\ell \setminus \hyp_{\cC_\ell}$ is smaller than $2mN^{2m}$. 
The same holds for every partition $\cC_\ell$. Thus, the total number of regions where 
for all $j \in [N]$, $u_{\vec{\theta}_{-i}^{(j)}}\left(\vec{\theta}_i, \hat{\vec{\theta}}_i\right)$ is either greater 
than $z^{(j)}$ or less than $z^{(j)}$ (but not both) is smaller than $2mN^{2m}\cdot 2m(Nt)^{2m}$. Therefore, we may bound $|\Theta^*| = 2^N < 2mN^{2m}\cdot 2m(Nt)^{2m}$. By Lemma~\ref{lem:log_ineq}, we have that $N = O(m \log (mt))$.
\end{proof}

\section{Proofs about approximate ex-interim incentive compatibility (Section~\ref{sec:interim})}\label{app:interim}
\AIC*

\begin{proof}
Fix an arbitrary agent $i \in [n]$. By Theorem~\ref{thm:pdim}, we know that with probability at least $1-\delta/n$ over the draw of $\sample_{-i}$, for all mechanisms $M \in \cM$ and all $\theta_i, \hat{\theta}_i \in \Theta_i$, \begin{align*}\E_{\vec{\theta}_{-i} \sim \dist_{-i}}\left[u_{i,M}\left(\theta_i, \hat{\theta}_i, \vec{\theta}_{-i}\right)\right] - \frac{1}{N}\sum_{j = 1}^N u_{i,M}\left(\theta_i, \hat{\theta}_i, \vec{\theta}_{-i}^{(j)}\right)\leq \sqrt{\frac{2 d_i}{N} \ln \frac{eN}{d_i}} + \sqrt{\frac{1}{2N}\ln\frac{2n}{\delta}}\end{align*} and 
\begin{align*}\frac{1}{N}\sum_{j = 1}^N u_{i,M}\left(\theta_i, \theta_i, \vec{\theta}_{-i}^{(j)}\right) - \E_{\vec{\theta}_{-i} \sim \dist_{-i}}\left[u_{i,M}\left(\theta_i, \theta_i, \vec{\theta}_{-i}\right)\right]
\leq \sqrt{\frac{2 d_i}{N} \ln \frac{eN}{d_i}} + \sqrt{\frac{1}{2N}\ln\frac{2n}{\delta}}.\end{align*}

Therefore, \begin{align*}
&\E_{\vec{\theta}_{-i} \sim \dist_{-i}}\left[u_{i,M}\left(\theta_i, \hat{\theta}_i, \vec{\theta}_{-i}\right)\right] - \E_{\vec{\theta}_{-i} \sim \dist_{-i}}\left[u_{i,M}\left(\theta_i, \theta_i, \vec{\theta}_{-i}\right)\right]\\
= \text{ } &\E_{\vec{\theta}_{-i} \sim \dist_{-i}}\left[u_{i,M}\left(\theta_i, \hat{\theta}_i, \vec{\theta}_{-i}\right)\right] - \frac{1}{N}\sum_{j = 1}^N u_{i,M}\left(\theta_i, \hat{\theta}_i, \vec{\theta}_{-i}^{(j)}\right)\\
+ \text{ } & \frac{1}{N}\sum_{j = 1}^N u_{i,M}\left(\theta_i, \hat{\theta}_i, \vec{\theta}_{-i}^{(j)}\right) - u_{i,M}\left(\theta_i, \theta_i, \vec{\theta}_{-i}^{(j)}\right)\\
+ \text{ } & \frac{1}{N}\sum_{j = 1}^N u_{i,M}\left(\theta_i, \theta_i, \vec{\theta}_{-i}^{(j)}\right) - \E_{\vec{\theta}_{-i} \sim \dist_{-i}}\left[u_{i,M}\left(\theta_i, \theta_i, \vec{\theta}_{-i}\right)\right]\\
\leq \text{ }& \max_{\theta_i, \hat{\theta}_i \in \Theta_i}\left\{\frac{1}{N} \sum_{j = 1}^N u_{i,M}\left(\theta_i, \hat{\theta}_i, \vec{\theta}_{-i}^{(j)}\right) - u_{i,M}\left(\theta_i, \theta_i, \vec{\theta}_{-i}^{(j)}\right)\right\} + \epsilon_{\cM}(N, \delta).
\end{align*}

Since the above inequality holds for all $i \in [n]$ with probability $1-\delta/n$, a union bound implies that with probability $1-\delta$, for all $i \in [n]$ and all mechanism $M \in \mclass$, \begin{align*}&\E_{\vec{\theta}_{-i} \sim \dist_{-i}}\left[u_{i,M}\left(\theta_i, \hat{\theta}_i, \vec{\theta}_{-i}\right) - u_{i,M}\left(\theta_i, \theta_i, \vec{\theta}_{-i}\right)\right]\\
 \leq &\max_{\theta_i, \hat{\theta}_i \in \Theta_i}\left\{\frac{1}{N} \sum_{j = 1}^N u_{i,M}\left(\theta_i, \hat{\theta}_i, \vec{\theta}_{-i}^{(j)}\right) - u_{i,M}\left(\theta_i, \theta_i, \vec{\theta}_{-i}^{(j)}\right)\right\} + \epsilon_{\cM}(N, \delta),\end{align*} where $\epsilon_{\cM}(N, \delta) = 2\sqrt{\frac{2 d_i}{N} \ln \frac{eN}{d_i}} + 2\sqrt{\frac{1}{2N}\ln\frac{2n}{\delta}}$ and $d_i = \pdim\left(\cF_{i, \cM}\right)$.
\end{proof}

\greedy*
\begin{proof}
Let $\theta_i^*$ and $\hat{\theta}_i^*$ be the types in $\Theta_i$ that maximize \[\sum_{j = 1}^N u_{i,M}\left(\theta_i, \hat{\theta}_i, \vec{\theta}_{-i}^{(j)}\right) - u_{i,M}\left(\theta_i, \theta_i, \vec{\theta}_{-i}^{(j)}\right).\] By definition of the set $\cG\left(\sample_{-i}, M, \epsilon\right)$, we know there exists a pair $\left(\theta_i', \hat{\theta}_i'\right) \in \cG\left(\sample_{-i}, M, \epsilon\right)$ such that \begin{align*}
&\max_{\theta_i, \hat{\theta}_i \in \Theta_i} \left\{\frac{1}{N} \sum_{j = 1}^N u_{i,M}\left(\theta_i, \hat{\theta}_i, \vec{\theta}_{-i}^{(j)}\right) - u_{i,M}\left(\theta_i, \theta_i, \vec{\theta}_{-i}^{(j)}\right)\right\}\\
= &\frac{1}{N} \sum_{j = 1}^N u_{i,M}\left(\theta_i^*, \hat{\theta}_i^*, \vec{\theta}_{-i}^{(j)}\right) - u_{i,M}\left(\theta_i^*, \theta_i^*, \vec{\theta}_{-i}^{(j)}\right)\\
\leq & \frac{1}{N} \sum_{j = 1}^N u_{i,M}\left(\theta_i', \hat{\theta}_i', \vec{\theta}_{-i}^{(j)}\right) - u_{i,M}\left(\theta_i', \theta_i', \vec{\theta}_{-i}^{(j)}\right) + \epsilon\\
&\leq \max_{\theta_i, \hat{\theta}_i \in \cG\left(\sample_{-i}, M, \epsilon\right)} \left\{\frac{1}{N} \sum_{j = 1}^N u_{i,M}\left(\theta_i, \hat{\theta}_i, \vec{\theta}_{-i}^{(j)}\right) - u_{i,M}\left(\theta_i, \theta_i, \vec{\theta}_{-i}^{(j)}\right)\right\} + \epsilon.\\
\end{align*} Therefore, by this inequality and Theorem~\ref{thm:AIC}, we know that with probability at least $1-\delta$, \begin{align*}&\E_{\vec{\theta}_{-i} \sim \dist_{-i}}\left[u_{i,M}\left(\theta_i, \hat{\theta}_i, \vec{\theta}_{-i}\right) - u_{i,M}\left(\theta_i, \theta_i, \vec{\theta}_{-i}\right)\right]\\
 - &\max_{\left(\theta_i, \hat{\theta}_i\right) \in \cG\left(\sample_{-i}, M, \epsilon\right)}\left\{\frac{1}{N} \sum_{j = 1}^N u_{i,M}\left(\theta_i, \hat{\theta}_i, \vec{\theta}_{-i}^{(j)}\right) - u_{i,M}\left(\theta_i, \theta_i, \vec{\theta}_{-i}^{(j)}\right)\right\}\\
 = & \E_{\vec{\theta}_{-i} \sim \dist_{-i}}\left[u_{i,M}\left(\theta_i, \hat{\theta}_i, \vec{\theta}_{-i}\right) - u_{i,M}\left(\theta_i, \theta_i, \vec{\theta}_{-i}\right)\right]\\
 - &\max_{\theta_i, \hat{\theta}_i \in \Theta_i}\left\{\frac{1}{N} \sum_{j = 1}^N u_{i,M}\left(\theta_i, \hat{\theta}_i, \vec{\theta}_{-i}^{(j)}\right) - u_{i,M}\left(\theta_i, \theta_i, \vec{\theta}_{-i}^{(j)}\right)\right\}\\
 + &\max_{\theta_i, \hat{\theta}_i \in \Theta_i}\left\{\frac{1}{N} \sum_{j = 1}^N u_{i,M}\left(\theta_i, \hat{\theta}_i, \vec{\theta}_{-i}^{(j)}\right) - u_{i,M}\left(\theta_i, \theta_i, \vec{\theta}_{-i}^{(j)}\right)\right\}\\
 - &\max_{\left(\theta_i, \hat{\theta}_i\right) \in \cG\left(\sample_{-i}, M, \epsilon\right)}\left\{\frac{1}{N} \sum_{j = 1}^N u_{i,M}\left(\theta_i, \hat{\theta}_i, \vec{\theta}_{-i}^{(j)}\right) - u_{i,M}\left(\theta_i, \theta_i, \vec{\theta}_{-i}^{(j)}\right)\right\}\\
 &\leq \epsilon + \tilde O \left(\sqrt{\frac{d_i}{N}}\right).\end{align*} The bound on $\left|\cG\left(\sample_{-i}, M, \epsilon\right)\right|$ follows from Lemma~\ref{lem:cover}.
\end{proof}

\begin{lemma}\label{lem:cover}
For any $\cG$ returned by Algorithm~\ref{alg:greedy}, $|\cG| \leq \left(8eN/\left(\epsilon d_i\right)\right)^{2d_i}$, where $d_i = \pdim\left(\cF_{i, \cM}\right)$.
\end{lemma}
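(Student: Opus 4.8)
The plan is to recognize the greedy procedure in Algorithm~\ref{alg:greedy} as the classical construction of a maximal $\epsilon$-separated set (a packing), and then to bound the size of such a packing in terms of the pseudo-dimension of $\cF_{i,\cM}$. First I would observe that $|\cG| = |V|$: each pass through the while loop adds exactly one vector to $V$ and exactly one pair to $\cG$, and every newly selected vector lies outside $\bigcup_{\vec{v}\in V} B_1(\vec{v},\epsilon)$ and is therefore distinct from all previously chosen ones, so the two sets have equal cardinality. Next I would record that $V$ is $\epsilon$-separated in the $\ell_1$ norm: if $\vec{v},\vec{v}'\in V$ are distinct, say $\vec{v}'$ was added after $\vec{v}$, then at the moment $\vec{v}'$ was selected we had $\vec{v}\in V$ and $\vec{v}'\in U\setminus\bigcup_{\vec{v}''\in V}B_1(\vec{v}'',\epsilon)$, hence $\vec{v}'\notin B_1(\vec{v},\epsilon)$, i.e.\ $\norm{\vec{v}-\vec{v}'}_1>\epsilon$.

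The next step is to reinterpret these $\ell_1$ distances as empirical $L_1$ distances between functions. Every vector of $U$ is $\tfrac{1}{N}$ times the vector $\bigl(u_{i,M,\theta_i,\hat{\theta}_i}(\vec{\theta}_{-i}^{(j)}) - u_{i,M,\theta_i,\theta_i}(\vec{\theta}_{-i}^{(j)})\bigr)_{j\in[N]}$ for some $\theta_i,\hat{\theta}_i\in\Theta_i$, so $\norm{\vec{v}-\vec{v}'}_1$ is exactly the $L_1$ distance, with respect to the uniform distribution on the sample $\sample_{-i}$, between two functions of the form $u_{i,M,\theta_i,\hat{\theta}_i} - u_{i,M,\theta_i,\theta_i}$, each a difference of two members of $\cF_{i,\cM}$. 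Thus $|V|$ is at most the $\epsilon$-packing number, in this empirical $L_1$ metric, of the difference class $\bigl\{u_{i,M,\theta_i,\hat{\theta}_i} - u_{i,M,\theta_i,\theta_i} : \theta_i,\hat{\theta}_i\in\Theta_i\bigr\}$. I would bound this through covering numbers of $\cF_{i,\cM}$ itself: letting $\mathcal{N}$ be a minimum $(\epsilon/4)$-cover of $\cF_{i,\cM}$ in the same metric, the triangle inequality shows that $\{a-b : a,b\in\mathcal{N}\}$ is an $(\epsilon/2)$-cover of the difference class, and since an $\epsilon$-separated set contains at most one point within $\epsilon/2$ of any fixed cover element, $|V|\le|\mathcal{N}|^2$. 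Finally, the classic covering-number bound of \citet{Anthony09:Neural} in terms of pseudo-dimension gives $|\mathcal{N}|\le\bigl(8eN/(\epsilon d_i)\bigr)^{d_i}$ once one accounts for the range $[-1,1]$ of the utilities and the scale $\epsilon/4$, and squaring yields $|\cG|=|V|\le\bigl(8eN/(\epsilon d_i)\bigr)^{2d_i}$.

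I expect the only real obstacle to be bookkeeping: tracking the halving of the covering radius when passing from $\cF_{i,\cM}$ to the difference class, and quoting precisely the form of the Anthony--Bartlett covering-number estimate that produces the stated constant $8e$ and exponent $2d_i$ (different standard statements differ in these constants). Neither is conceptually hard. An alternative route that sidesteps the covering-of-differences argument is to bound the pseudo-dimension of the difference class directly by $O(d_i)$ and apply a packing-number bound once; the covers-squared argument above is slightly cleaner because it needs no composition result for pseudo-dimension under subtraction.
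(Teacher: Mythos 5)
Your proposal is correct and follows essentially the same route as the paper's proof: both identify $V$ as an $\epsilon$-packing of $U$ in the empirical $\ell_1$ metric, pass to an $(\epsilon/4)$-cover of the base class $\cF_{i,\cM}$, take pairwise differences to obtain an $(\epsilon/2)$-cover of the difference set containing $U$, and invoke the Anthony--Bartlett covering-number bound in terms of pseudo-dimension before squaring. The only cosmetic difference is that the paper routes the packing-to-covering step through the abstract inequality $\cP(\epsilon,U)\le\cN(\epsilon/2,U)$ and then bounds $\cN(\epsilon/2,U)\le\cN(\epsilon/4,U')^2$, whereas you construct the $(\epsilon/2)$-cover of differences directly and apply the one-packing-point-per-ball observation in place.
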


\begin{proof}
Let $\cN_1(\epsilon, U)$ be the $\epsilon$-covering number of $U$ and let $\cP_1(\epsilon, U)$ be the $\epsilon$-packing number of $U$. In other words, $\cN_1(\epsilon, U)$ is the size of the smallest set $V' \subseteq \R^N$ such that for all $\vec{v} \in U$, there is a vector $\vec{v}' \in V'$ such that $\norm{\vec{v} - \vec{v}'}_1 \leq \epsilon$, and $\cP_1(\epsilon, U)$ is the size of the largest set $P \subseteq U$ such that for all $\vec{v}, \vec{v}' \in P$, if $\vec{v} \not= \vec{v}'$, then $\norm{\vec{v} - \vec{v}'}_1 \geq \epsilon$. As \citet{Anthony09:Neural} prove, $\cP(\epsilon, U) \leq \cN(\epsilon/2, U)$.

By construction, for every pair of vectors $\vec{v}$ and $\vec{v}'$ in the set $V$ defined in Algorithm~\ref{alg:greedy}, we know that $\norm{\vec{v} - \vec{v}'}_1 \geq \epsilon$. Therefore, $|V| \leq \cP(\epsilon, U) \leq \cN(\epsilon/2, U)$. In the following claim, we prove that $\cN(\epsilon/2, U) \leq \left(8eN/\left(\epsilon d_i\right)\right)^{2d_i}$, which proves the theorem.

\begin{claim}
$\cN(\epsilon/2, U)\leq\left(8eN/\left(\epsilon d_i\right)\right)^{2d_i}$.
\end{claim}

\begin{proof}
Let $U'$ be the set of vectors \[U' = \left\{\frac{1}{N} \begin{pmatrix} u_{i,M}\left(\theta_i, \hat{\theta}_i, \vec{\theta}_{-i}^{(1)}\right)\\
\vdots\\
u_{i,M}\left(\theta_i, \hat{\theta}_i, \vec{\theta}_{-i}^{(N)}\right)\end{pmatrix}: \theta_i, \hat{\theta}_i \in \Theta_i\right\}.\] Note that $U \subseteq \left\{ \vec{v} - \vec{v}' : \vec{v}, \vec{v}' \in U'\right\}$. We claim that $\cN(\epsilon/2, U) \leq \cN(\epsilon/4, U')^2\leq \left(8eN/\left(\epsilon d_i\right)\right)^{2d_i}$, where the second inequality follows from a theorem by \citet{Anthony09:Neural}. To see why the first inequality holds, suppose $C'$ is an $\frac{\epsilon}{4}$-cover of $U'$ with minimal size. Let $C = \left\{\vec{c} - \vec{c}' : \vec{c}, \vec{c}' \in C'\right\}$. Then for all $\vec{v} \in U$, we know that $\vec{v} = \vec{v}' - \vec{v}''$ for some $\vec{v}', \vec{v}'' \in U'$. Moreover, we know that there are vectors $\vec{c}', \vec{c}'' \in C'$ such that $\norm{\vec{v}' - \vec{c}'}_1 \leq \epsilon/4$ and $\norm{\vec{v}'' - \vec{c}''}_1 \leq \epsilon/4$. Therefore, $\norm{\vec{v} - \left(\vec{c}' - \vec{c}''\right)}_1 = \norm{\left(\vec{v}' - \vec{v}''\right) - \left(\vec{c}' - \vec{c}''\right)}_1 \leq \epsilon/2$. Since $\vec{c}' - \vec{c}'' \in C$, we have that $C$ is a cover of $U$. Therefore, $\cN(\epsilon/2, U) \leq |C| \leq |C'|^2 = \cN(\epsilon/4, U')^2\leq \left(8eN/\left(\epsilon d_i\right)\right)^{2d_i}$.
\end{proof}
\end{proof}

\begin{lemma}\label{lem:grid_dispersion}
Let $\cM$ be a class of mechanisms. For each $i \in [n]$, let $\sample_{-i} =  \left\{\vec{\theta}^{(1)}_{-i}, \dots, \vec{\theta}^{(N)}_{-i}\right\}$ be a set of type profiles for all agents except agent $i$ and let $L_i, k_i, w_i \in \R$ be real values such that for each mechanism $M \in \cM$, the following conditions hold:
\begin{enumerate}
\item For any $\vec{\theta}_i \in [0,1]^m$, the functions $u_{i, M}\left(\vec{\theta}_i, \cdot, \vec{\theta}_{-i}^{(1)}\right), \dots, u_{i, M}\left(\vec{\theta}_i, \cdot, \vec{\theta}_{-i}^{(N)}\right)$ are piecewise $L_i$-Lipschitz and $(w_i,k_i)$-dispersed.
\item For any $\hat{\vec{\theta}}_i \in [0,1]^m$, the functions $u_{i, M}\left(\cdot, \hat{\vec{\theta}}_i, \vec{\theta}_{-i}^{(1)}\right), \dots, u_{i, M}\left(\cdot, \hat{\vec{\theta}}_i, \vec{\theta}_{-i}^{(N)}\right)$ are piecewise $L_i$-Lipschitz and $(w_i,k_i)$-dispersed.
\end{enumerate}
Then for all $M \in \cM$ and $i \in [n]$, \begin{align}\max_{\vec{\theta}_i, \hat{\vec{\theta}}_i \in [0,1]^m}&\left\{\frac{1}{N} \sum_{j = 1}^N u_{i,M}\left(\vec{\theta}_i, \hat{\vec{\theta}}_i, \vec{\theta}_{-i}^{(j)}\right) - u_{i,M}\left(\vec{\theta}_i, \vec{\theta}_i, \vec{\theta}_{-i}^{(j)}\right)\right\}\nonumber\\
\leq \max_{\vec{\theta}_i, \hat{\vec{\theta}}_i \in \cG_{w_i}}&\left\{\frac{1}{N} \sum_{j = 1}^N u_{i,M}\left(\vec{\theta}_i, \hat{\vec{\theta}}_i, \vec{\theta}_{-i}^{(j)}\right) - u_{i,M}\left(\vec{\theta}_i, \vec{\theta}_i, \vec{\theta}_{-i}^{(j)}\right)\right\} + 4L_iw_i + \frac{8k_i}{N}\label{eq:grid_dispersion}.\end{align}
\end{lemma}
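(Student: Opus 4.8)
The plan is to prove Lemma~\ref{lem:grid_dispersion} as the deterministic core of Lemma~\ref{lem:grid_dispersion_prob}: the dispersion hypotheses are now assumed outright (not with probability $1-\delta$), so no union bound is needed and we simply fix an arbitrary $M \in \cM$ and agent $i \in [n]$ and argue pointwise.

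First, I would convert the two dispersion hypotheses into ``Lipschitz-on-average'' statements, exactly the two conditions used in the proof sketch of Lemma~\ref{lem:grid_dispersion_prob}. Fix a true type $\vec{\theta}_i$ and reported types $\hat{\vec{\theta}}_i, \hat{\vec{\theta}}_i'$ with $\norm{\hat{\vec{\theta}}_i - \hat{\vec{\theta}}_i'}_1 \leq w_i$. The closed $\ell_1$-ball of radius $w_i$ around $\hat{\vec{\theta}}_i$ contains $\hat{\vec{\theta}}_i'$, and by $(w_i,k_i)$-dispersion of $u_{i,M}\left(\vec{\theta}_i, \cdot, \vec{\theta}_{-i}^{(1)}\right), \dots, u_{i,M}\left(\vec{\theta}_i, \cdot, \vec{\theta}_{-i}^{(N)}\right)$ this ball is split by at most $k_i$ of the associated partitions. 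For the at least $N-k_i$ indices $j$ whose partition does not split the ball, the piecewise $L_i$-Lipschitz function $u_{i,M}\left(\vec{\theta}_i, \cdot, \vec{\theta}_{-i}^{(j)}\right)$ is $L_i$-Lipschitz on the ball, so the two reported types differ in value by at most $L_i\norm{\hat{\vec{\theta}}_i - \hat{\vec{\theta}}_i'}_1 \leq L_iw_i$; for the remaining at most $k_i$ indices the values differ by at most $2$ since utilities lie in $[-1,1]$. Averaging yields
\[
\left|\frac{1}{N}\sum_{j=1}^N u_{i,M}\left(\vec{\theta}_i, \hat{\vec{\theta}}_i, \vec{\theta}_{-i}^{(j)}\right) - u_{i,M}\left(\vec{\theta}_i, \hat{\vec{\theta}}_i', \vec{\theta}_{-i}^{(j)}\right)\right| \leq \frac{(N-k_i)L_iw_i + 2k_i}{N} \leq L_iw_i + \frac{2k_i}{N},
\]
which is Condition~1 from the proof sketch of Lemma~\ref{lem:grid_dispersion_prob}. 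The symmetric argument using hypothesis~2 (varying the true type with the reported type held fixed) gives Condition~2.

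Second, I would run the ``snapping'' argument. Fix an arbitrary pair $\vec{\theta}_i, \hat{\vec{\theta}}_i \in [0,1]^m$; by definition of the grid there are points $\vec{p}, \hat{\vec{p}} \in \cG_{w_i}$ with $\norm{\vec{\theta}_i - \vec{p}}_1 \leq w_i$ and $\norm{\hat{\vec{\theta}}_i - \hat{\vec{p}}}_1 \leq w_i$. I would then bound the change in
\[
\frac{1}{N}\sum_{j=1}^N u_{i,M}\left(\vec{\theta}_i, \hat{\vec{\theta}}_i, \vec{\theta}_{-i}^{(j)}\right) - u_{i,M}\left(\vec{\theta}_i, \vec{\theta}_i, \vec{\theta}_{-i}^{(j)}\right)
\]
as the pair $(\vec{\theta}_i, \hat{\vec{\theta}}_i)$ is replaced by $(\vec{p}, \hat{\vec{p}})$, decomposed into four elementary coordinate moves, each controlled by Condition~1 or~2 by $L_iw_i + 2k_i/N$: (i) in the first term, move the reported type $\hat{\vec{\theta}}_i \to \hat{\vec{p}}$ (Condition~1, true type $\vec{\theta}_i$ fixed); (ii) in the first term, move the true type $\vec{\theta}_i \to \vec{p}$ (Condition~2, reported type $\hat{\vec{p}}$ fixed); (iii) in the subtracted second term, move the reported slot $\vec{\theta}_i \to \vec{p}$ (Condition~1, true type $\vec{\theta}_i$ fixed); (iv) in the second term, move the true slot $\vec{\theta}_i \to \vec{p}$ (Condition~2, reported type $\vec{p}$ fixed). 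By the triangle inequality the total change is at most $4L_iw_i + 8k_i/N$, so the displayed expression at $(\vec{\theta}_i, \hat{\vec{\theta}}_i)$ is at most its value at $(\vec{p}, \hat{\vec{p}}) \in \cG_{w_i} \times \cG_{w_i}$ plus $4L_iw_i + 8k_i/N$. Maximizing over $\vec{\theta}_i, \hat{\vec{\theta}}_i$ on the left and bounding the value at $(\vec{p},\hat{\vec{p}})$ by the maximum over the grid on the right gives Inequality~\eqref{eq:grid_dispersion}.

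The main obstacle—really the only step needing care—is (ii)–(iv): verifying that each elementary move is a legitimate instance of the appropriate condition. This works precisely because hypotheses~1 and~2 are stated for \emph{every} true type / reported type in $[0,1]^m$, not just for grid points, so each of the four moves is valid; one must simply be careful to always keep fixed the coordinate that the invoked condition requires fixed, and to note that the two moves inside the subtracted second term contribute to the error with the sign that still only adds $L_iw_i + 2k_i/N$ each. Everything else is routine bookkeeping.
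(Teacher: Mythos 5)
Your proposal is correct and follows essentially the same route as the paper: the paper's proof establishes exactly your Conditions~1 and~2 (citing ``by definition of dispersion'' where you spell out the $(N - k_i)L_iw_i + 2k_i$ averaging calculation) and then runs the same four-step snap from $(\vec{\theta}_i, \hat{\vec{\theta}}_i)$ to $(\vec p, \hat{\vec p})$, written there as a six-point telescoping chain $u_{i,M}(\vec{\theta}_i,\vec{\theta}_i,\cdot) \to u_{i,M}(\vec{\theta}_i,\vec p,\cdot) \to u_{i,M}(\vec p,\vec p,\cdot)$ and $u_{i,M}(\vec{\theta}_i,\hat{\vec{\theta}}_i,\cdot) \to u_{i,M}(\vec{\theta}_i,\hat{\vec p},\cdot) \to u_{i,M}(\vec p, \hat{\vec p},\cdot)$, which is exactly your moves (i)--(iv). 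The only cosmetic discrepancy is that the paper's statement of Conditions~1 and~2 inside the deterministic lemma's proof writes $\lVert\cdot\rVert_2$ where, consistent with the dispersion definition and the grid, it should (and in every other occurrence does) say $\lVert\cdot\rVert_1$, as you use.
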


\begin{proof}
By definition of dispersion, we know that the following conditions hold:

\paragraph{Condition 1.} For all mechanisms $M \in \cM$, all agents $i \in [n]$, all types $\vec{\theta}_i \in [0,1]^m$, and all reported types $\hat{\vec{\theta}}_i, \hat{\vec{\theta}}_i' \in [0,1]^m$, if $\norm{\hat{\vec{\theta}}_i - \hat{\vec{\theta}}_i'}_2 \leq w_i$, then \begin{equation}\left|\frac{1}{N}\sum_{j = 1}^N u_{i, M}\left(\vec{\theta}_i, \hat{\vec{\theta}}_i, \vec{\theta}_{-i}^{(j)}\right) - u_{i, M}\left(\vec{\theta}_i, \hat{\vec{\theta}}_i', \vec{\theta}_{-i}^{(j)}\right)\right| \leq L_iw_i + \frac{2k_i}{N}.\label{eq:change_b}\end{equation}

\paragraph{Condition 2.} For all mechanisms $M \in \cM$, all agents $i \in [n]$, all reported types $\hat{\vec{\theta}}_i \in [0,1]^m$, and all types $\vec{\theta}_i, \vec{\theta}_i' \in [0,1]^m$, if $\norm{\vec{\theta}_i - \vec{\theta}_i'}_2 \leq w_i$, then \begin{equation}\left|\frac{1}{N}\sum_{j = 1}^N u_{i, M}\left(\vec{\theta}_i, \hat{\vec{\theta}}_i, \vec{\theta}_{-i}^{(j)}\right) - u_{i, M}\left(\vec{\theta}_i', \hat{\vec{\theta}}_i, \vec{\theta}_{-i}^{(j)}\right)\right| \leq L_iw_i + \frac{2k_i}{N}.\label{eq:change_v}\end{equation}

We claim that the inequality from the lemma statement (Equation~\eqref{eq:grid_dispersion}) holds so long as Conditions 1 and 2 hold.

\begin{claim}\label{claim:conditions}If Conditions 1 and 2 hold, then for all $M \in \cM$ and $i \in [n]$, \begin{align*}\max_{\vec{\theta}_i, \hat{\vec{\theta}}_i \in [0,1]^m}&\left\{\frac{1}{N} \sum_{j = 1}^N u_{i,M}\left(\vec{\theta}_i, \hat{\vec{\theta}}_i, \vec{\theta}_{-i}^{(j)}\right) - u_{i,M}\left(\vec{\theta}_i, \vec{\theta}_i, \vec{\theta}_{-i}^{(j)}\right)\right\}\\
\leq \max_{\vec{\theta}_i, \hat{\vec{\theta}}_i \in \cG_{w_i}}&\left\{\frac{1}{N} \sum_{j = 1}^N u_{i,M}\left(\vec{\theta}_i, \hat{\vec{\theta}}_i, \vec{\theta}_{-i}^{(j)}\right) - u_{i,M}\left(\vec{\theta}_i, \vec{\theta}_i, \vec{\theta}_{-i}^{(j)}\right)\right\} + 4L_iw_i + \frac{8k_i}{N}.\end{align*}
\end{claim}

\begin{proof}[Proof of Claim~\ref{claim:conditions}]
Fix an agent $i \in [n]$ and let $\vec{\theta}_i$ and $\hat{\vec{\theta}}_i$ be two fixed, arbitrary vectors in $[0,1]^m$.
By definition of $\cG_w$, there must be a point $\vec{p} \in \cG_w$ such that $\norm{\vec{\theta}_i - \vec{p}}_1 \leq w_i$. By Equation~\eqref{eq:change_b}, \[\left|\frac{1}{N}\sum_{j = 1}^N u_{i, M}\left(\vec{\theta}_i, \vec{\theta}_i, \vec{\theta}_{-i}^{(j)}\right) - u_{i, M}\left(\vec{\theta}_i, \vec{p}, \vec{\theta}_{-i}^{(j)}\right)\right| \leq L_iw_i + \frac{2k_i}{N}\] and by Equation~\eqref{eq:change_v}, \[\left|\frac{1}{N}\sum_{j = 1}^N u_{i, M}\left(\vec{\theta}_i, \vec{p}, \vec{\theta}_{-i}^{(j)}\right) - u_{i, M}\left(\vec{p},\vec{p}, \vec{\theta}_{-i}^{(j)}\right)\right| \leq L_iw_i + \frac{2k_i}{N}.\]

Similarly, there must be a point $\hat{\vec{p}} \in \cG_w$ such that $\norm{\hat{\vec{p}} - \hat{\vec{\theta}}_i}_1 \leq w_i$. By Equation~\eqref{eq:change_b}, \[\left|\frac{1}{N}\sum_{j = 1}^N u_{i, M}\left(\vec{\theta}_i, \hat{\vec{\theta}}_i, \vec{\theta}_{-i}^{(j)}\right) - u_{i, M}\left(\vec{\theta}_i, \hat{\vec{p}}, \vec{\theta}_{-i}^{(j)}\right)\right| \leq L_iw_i + \frac{2k_i}{N}\] and by Equation~\eqref{eq:change_v}, \[\left|\frac{1}{N}\sum_{j = 1}^N u_{i, M}\left(\vec{\theta}_i, \hat{\vec{p}}, \vec{\theta}_{-i}^{(j)}\right) - u_{i, M}\left(\vec{p}, \hat{\vec{p}}, \vec{\theta}_{-i}^{(j)}\right)\right| \leq L_iw_i + \frac{2k_i}{N}.\]

Therefore, \begin{align*}&\left|\frac{1}{N}\sum_{j = 1}^N u_{i, M}\left(\vec{\theta}_i, \vec{\theta}_i, \vec{\theta}_{-i}^{(j)}\right) - u_{i, M}\left(\vec{\theta}_i, \hat{\vec{\theta}}_i, \vec{\theta}_{-i}^{(j)}\right)\right|\\
\leq \text{ }&\left|\frac{1}{N}\sum_{j = 1}^Nu_{i, M}\left(\vec{\theta}_i, \vec{\theta}_i, \vec{\theta}_{-i}^{(j)}\right) - u_{i, M}\left(\vec{\theta}_i, \vec{p}, \vec{\theta}_{-i}^{(j)}\right)\right|\\
+\text{ }& \left|\frac{1}{N}\sum_{j = 1}^Nu_{i, M}\left(\vec{\theta}_i, \vec{p}, \vec{\theta}_{-i}^{(j)}\right) -u_{i, M}\left(\vec{p}, \vec{p}, \vec{\theta}_{-i}^{(j)}\right)\right|\\
+\text{ }&\left|\frac{1}{N}\sum_{j = 1}^Nu_{i, M}\left(\vec{p}, \vec{p}, \vec{\theta}_{-i}^{(j)}\right) - u_{i, M}\left(\vec{p}, \hat{\vec{p}}, \vec{\theta}_{-i}^{(j)}\right)\right|\\
+ \text{ }&\left|\frac{1}{N}\sum_{j = 1}^Nu_{i, M}\left(\vec{p}, \hat{\vec{p}}, \vec{\theta}_{-i}^{(j)}\right) - u_{i, M}\left(\vec{\theta}_i, \hat{\vec{p}}, \vec{\theta}_{-i}^{(j)}\right)\right|\\
+\text{ }&\left|\frac{1}{N}\sum_{j = 1}^Nu_{i, M}\left(\vec{\theta}_i, \hat{\vec{p}}, \vec{\theta}_{-i}^{(j)}\right) - u_{i, M}\left(\vec{\theta}_i, \hat{\vec{\theta}}_i, \vec{\theta}_{-i}^{(j)}\right)\right|\\
\leq \text{ }&\left|\frac{1}{N}\sum_{j = 1}^Nu_{i, M}\left(\vec{p}, \vec{p}, \vec{\theta}_{-i}^{(j)}\right) - u_{i, M}\left(\vec{p}, \hat{\vec{p}}, \vec{\theta}_{-i}^{(j)}\right)\right| + 4L_iw_i + \frac{8k_i}{N}.\end{align*} Since $\vec{p} , \hat{\vec{p}} \in \cG_{w_i}$, the claim holds.
\end{proof}

The lemma thus follows from Claim~\ref{claim:conditions}.
\end{proof}

\begin{theorem}\label{thm:main}
Let $\cM$ be a mechanism class. Given $n$ sets of samples $\sample_{-i} =  \left\{\vec{\theta}^{(1)}_{-i}, \dots, \vec{\theta}^{(N)}_{-i}\right\} \sim \dist_{-i}^N$, for each $i \in [n]$ let $L_i, k_i, w_i \in \R$ be defined such that for each $M \in \cM$, the following conditions hold:
\begin{enumerate}
\item For any $\vec{\theta}_i \in [0,1]^m$, the functions $u_{i, M}\left(\vec{\theta}_i, \cdot, \vec{\theta}_{-i}^{(1)}\right), \dots, u_{i, M}\left(\vec{\theta}_i, \cdot, \vec{\theta}_{-i}^{(N)}\right)$ are piecewise $L_i$-Lipschitz and $(w_i,k_i)$-dispersed.
\item For any $\hat{\vec{\theta}}_i \in [0,1]^m$, the functions $u_{i, M}\left(\cdot, \hat{\vec{\theta}}_i, \vec{\theta}_{-i}^{(1)}\right), \dots, u_{i, M}\left(\cdot, \hat{\vec{\theta}}_i, \vec{\theta}_{-i}^{(N)}\right)$ are piecewise $L_i$-Lipschitz and $(w_i,k_i)$-dispersed.
\end{enumerate} Then with probability $1-\delta$, for every $M \in \cM$, every agent $i \in [n]$, and every pair $\theta_i, \hat{\theta}_i \in \Theta_i$, \begin{align*}&\E_{\vec{\theta}_{-i} \sim \dist_{-i}}\left[u_{i,M}\left(\vec{\theta}_i, \hat{\vec{\theta}}_i, \vec{\theta}_{-i}\right) - u_{i,M}\left(\vec{\theta}_i, \vec{\theta}_i, \vec{\theta}_{-i}\right)\right]\nonumber\\
 \leq &\max_{\vec{\theta}_i, \hat{\vec{\theta}}_i \in \cG_{w_i}}\left\{\frac{1}{N} \sum_{j = 1}^N u_{i,M}\left(\vec{\theta}_i, \hat{\vec{\theta}}_i, \vec{\theta}_{-i}^{(j)}\right) - u_{i,M}\left(\vec{\theta}_i, \vec{\theta}_i, \vec{\theta}_{-i}^{(j)}\right)\right\} + 4L_iw_i + \frac{8k_i}{N} + \epsilon_{\mclass}(N, \delta)\end{align*} where $\epsilon_{\mclass}(N, \delta) = 2\sqrt{\frac{2 d_i}{N} \ln \frac{eN}{d_i}} + 2\sqrt{\frac{1}{2N}\ln\frac{2n}{\delta}}$ and $d_i = \pdim\left(\cF_{i, \cM}\right)$.
\end{theorem}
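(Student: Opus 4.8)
The plan is to chain Theorem~\ref{thm:AIC} with Lemma~\ref{lem:grid_dispersion}, paying a single failure probability $\delta$. The key observation is that in this statement the dispersion Conditions 1 and 2 are assumed to hold \emph{deterministically} for the drawn sample sets $\sample_{-1}, \dots, \sample_{-n}$ (they are hypotheses on the samples, not high-probability events), so Lemma~\ref{lem:grid_dispersion} applies to these samples without consuming any probability. Consequently the only randomness we invoke is that of Theorem~\ref{thm:AIC}, which is why the final guarantee holds with probability $1-\delta$ rather than $1-2\delta$ as in Theorem~\ref{thm:main_prob}.

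Concretely, I would first apply Theorem~\ref{thm:AIC}: with probability $1-\delta$ over the draw of the $n$ sets $\sample_{-i} \sim \dist_{-i}^N$, simultaneously for every mechanism $M \in \cM$ and every agent $i \in [n]$,
\[
\max_{\vec{\theta}_i, \hat{\vec{\theta}}_i \in \Theta_i}\left\{\E_{\vec{\theta}_{-i} \sim \dist_{-i}}\left[u_{i,M}\left(\vec{\theta}_i, \hat{\vec{\theta}}_i, \vec{\theta}_{-i}\right) - u_{i,M}\left(\vec{\theta}_i, \vec{\theta}_i, \vec{\theta}_{-i}\right)\right]\right\} \leq \max_{\vec{\theta}_i, \hat{\vec{\theta}}_i \in \Theta_i}\left\{\frac{1}{N} \sum_{j = 1}^N u_{i,M}\left(\vec{\theta}_i, \hat{\vec{\theta}}_i, \vec{\theta}_{-i}^{(j)}\right) - u_{i,M}\left(\vec{\theta}_i, \vec{\theta}_i, \vec{\theta}_{-i}^{(j)}\right)\right\} + \epsilon_{\cM}(N, \delta),
\]
with $\epsilon_{\cM}(N,\delta) = 2\sqrt{\frac{2d_i}{N}\ln\frac{eN}{d_i}} + 2\sqrt{\frac{1}{2N}\ln\frac{2n}{\delta}}$. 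Condition on this event. Since by hypothesis Conditions 1 and 2 hold for $\sample_{-i}$, Lemma~\ref{lem:grid_dispersion} bounds the empirical maximum over the full type space by the empirical maximum over the grid: for every $M \in \cM$ and $i \in [n]$,
\[
\max_{\vec{\theta}_i, \hat{\vec{\theta}}_i \in [0,1]^m}\left\{\frac{1}{N} \sum_{j = 1}^N u_{i,M}\left(\vec{\theta}_i, \hat{\vec{\theta}}_i, \vec{\theta}_{-i}^{(j)}\right) - u_{i,M}\left(\vec{\theta}_i, \vec{\theta}_i, \vec{\theta}_{-i}^{(j)}\right)\right\} \leq \max_{\vec{\theta}_i, \hat{\vec{\theta}}_i \in \cG_{w_i}}\left\{\frac{1}{N} \sum_{j = 1}^N u_{i,M}\left(\vec{\theta}_i, \hat{\vec{\theta}}_i, \vec{\theta}_{-i}^{(j)}\right) - u_{i,M}\left(\vec{\theta}_i, \vec{\theta}_i, \vec{\theta}_{-i}^{(j)}\right)\right\} + 4L_iw_i + \frac{8k_i}{N}.
\]
Because $\Theta_i = [0,1]^m$, the two displayed right-hand/left-hand sides match, so chaining them gives that, on the good event, for every $M \in \cM$, every $i \in [n]$, and \emph{any} fixed pair $\vec{\theta}_i, \hat{\vec{\theta}}_i \in \Theta_i$ (dropping the outer maximum, which only weakens the left side),
\[
\E_{\vec{\theta}_{-i} \sim \dist_{-i}}\left[u_{i,M}\left(\vec{\theta}_i, \hat{\vec{\theta}}_i, \vec{\theta}_{-i}\right) - u_{i,M}\left(\vec{\theta}_i, \vec{\theta}_i, \vec{\theta}_{-i}\right)\right] \leq \max_{\vec{\theta}_i, \hat{\vec{\theta}}_i \in \cG_{w_i}}\left\{\frac{1}{N} \sum_{j = 1}^N u_{i,M}\left(\vec{\theta}_i, \hat{\vec{\theta}}_i, \vec{\theta}_{-i}^{(j)}\right) - u_{i,M}\left(\vec{\theta}_i, \vec{\theta}_i, \vec{\theta}_{-i}^{(j)}\right)\right\} + 4L_iw_i + \frac{8k_i}{N} + \epsilon_{\cM}(N, \delta),
\]
which is exactly the claimed inequality, holding with probability $1-\delta$.

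There is no genuine obstacle here; the proof is purely a composition of the two earlier results, and the only subtlety to flag is bookkeeping on the probabilities. Specifically, one must note that the event supplied by Theorem~\ref{thm:AIC} and the (deterministic) applicability of Lemma~\ref{lem:grid_dispersion} require no union bound between them, so a single $\delta$ suffices---contrasting with Theorem~\ref{thm:main_prob}, where the dispersion hypotheses are themselves only $(1-\delta)$-events and a second $\delta$ must be budgeted. I would also remark, as the paper does in the sentence following the theorem, that if one can \emph{measure} valid dispersion parameters $L_i, k_i, w_i$ from the realized samples, then the hypotheses are verified for those samples and the conclusion follows for that run.
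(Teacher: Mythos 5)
Your proposal is correct and follows essentially the same route as the paper: apply Theorem~\ref{thm:AIC} for the $1-\delta$ generalization event, then note that since the dispersion Conditions 1 and 2 are assumed to hold (deterministically) for the realized samples, Lemma~\ref{lem:grid_dispersion} applies without additional probability cost, and chain the two inequalities. You also correctly flag the one subtlety---that the single $\delta$ (rather than $2\delta$) comes from the dispersion hypotheses being assumptions on the samples rather than probabilistic events---which is precisely why this theorem differs from Theorem~\ref{thm:main_prob}.
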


\begin{proof}
From Theorem~\ref{thm:AIC}, we know that with probability $1-\delta$, for every mechanism $M \in \cM$ and every agent $i \in [n]$, \begin{align*}&\max_{\vec{\theta}_i, \hat{\vec{\theta}}_i \in [0,1]^m}\left\{\E_{\vec{\theta}_{-i} \sim \dist_{-i}}\left[u_{i,M}\left(\vec{\theta}_i, \hat{\vec{\theta}}_i, \vec{\theta}_{-i}\right) - u_{i,M}\left(\vec{\theta}_i, \vec{\theta}_i, \vec{\theta}_{-i}\right)\right]\right\}\\
 \leq &\max_{\vec{\theta}_i, \hat{\vec{\theta}}_i \in [0,1]^m}\left\{\frac{1}{N} \sum_{j = 1}^N u_{i,M}\left(\vec{\theta}_i, \hat{\vec{\theta}}_i, \vec{\theta}_{-i}^{(j)}\right) - u_{i,M}\left(\vec{\theta}_i, \vec{\theta}_i, \vec{\theta}_{-i}^{(j)}\right)\right\} + \epsilon_{\mclass}(N, \delta).\end{align*}
Moreover, from Lemma~\ref{lem:grid_dispersion}, we know that the following inequality holds:
\begin{align*}\max_{\vec{\theta}_i, \hat{\vec{\theta}}_i \in [0,1]^m}&\left\{\frac{1}{N} \sum_{j = 1}^N u_{i,M}\left(\vec{\theta}_i, \hat{\vec{\theta}}_i, \vec{\theta}_{-i}^{(j)}\right) - u_{i,M}\left(\vec{\theta}_i, \vec{\theta}_i, \vec{\theta}_{-i}^{(j)}\right)\right\}\nonumber\\
\leq \max_{\vec{\theta}_i, \hat{\vec{\theta}}_i \in \cG_{w_i}}&\left\{\frac{1}{N} \sum_{j = 1}^N u_{i,M}\left(\vec{\theta}_i, \hat{\vec{\theta}}_i, \vec{\theta}_{-i}^{(j)}\right) - u_{i,M}\left(\vec{\theta}_i, \vec{\theta}_i, \vec{\theta}_{-i}^{(j)}\right)\right\} + 4L_iw_i + \frac{8k_i}{N}.\end{align*}

Therefore, with probability $1-\delta$, for every mechanism $M \in \cM$ and every agent $i \in [n]$, \begin{align*}&\max_{\vec{\theta}_i, \hat{\vec{\theta}}_i \in [0,1]^m}\left\{\E_{\vec{\theta}_{-i} \sim \dist_{-i}}\left[u_{i,M}\left(\vec{\theta}_i, \hat{\vec{\theta}}_i, \vec{\theta}_{-i}\right) - u_{i,M}\left(\vec{\theta}_i, \vec{\theta}_i, \vec{\theta}_{-i}\right)\right]\right\}\nonumber\\
 = &\max_{\vec{\theta}_i, \hat{\vec{\theta}}_i \in [0,1]^m}\left\{\E_{\vec{\theta}_{-i} \sim \dist_{-i}}\left[u_{i,M}\left(\vec{\theta}_i, \hat{\vec{\theta}}_i, \vec{\theta}_{-i}\right) - u_{i,M}\left(\vec{\theta}_i, \vec{\theta}_i, \vec{\theta}_{-i}\right)\right]\right\}\\
- &\max_{\vec{\theta}_i, \hat{\vec{\theta}}_i \in [0,1]^m}\left\{\frac{1}{N} \sum_{j = 1}^N u_{i,M}\left(\vec{\theta}_i, \hat{\vec{\theta}}_i, \vec{\theta}_{-i}^{(j)}\right) - u_{i,M}\left(\vec{\theta}_i, \vec{\theta}_i, \vec{\theta}_{-i}^{(j)}\right)\right\}\\
+ &\max_{\vec{\theta}_i, \hat{\vec{\theta}}_i \in [0,1]^m}\left\{\frac{1}{N} \sum_{j = 1}^N u_{i,M}\left(\vec{\theta}_i, \hat{\vec{\theta}}_i, \vec{\theta}_{-i}^{(j)}\right) - u_{i,M}\left(\vec{\theta}_i, \vec{\theta}_i, \vec{\theta}_{-i}^{(j)}\right)\right\}\\
- &\max_{\vec{\theta}_i, \hat{\vec{\theta}}_i \in \cG_{w_i}}\left\{\frac{1}{N} \sum_{j = 1}^N u_{i,M}\left(\vec{\theta}_i, \hat{\vec{\theta}}_i, \vec{\theta}_{-i}^{(j)}\right) - u_{i,M}\left(\vec{\theta}_i, \vec{\theta}_i, \vec{\theta}_{-i}^{(j)}\right)\right\}\\
+ &\max_{\vec{\theta}_i, \hat{\vec{\theta}}_i \in \cG_{w_i}}\left\{\frac{1}{N} \sum_{j = 1}^N u_{i,M}\left(\vec{\theta}_i, \hat{\vec{\theta}}_i, \vec{\theta}_{-i}^{(j)}\right) - u_{i,M}\left(\vec{\theta}_i, \vec{\theta}_i, \vec{\theta}_{-i}^{(j)}\right)\right\}\\
\leq&\max_{\vec{\theta}_i, \hat{\vec{\theta}}_i \in \cG_{w_i}}\left\{\frac{1}{N} \sum_{j = 1}^N u_{i,M}\left(\vec{\theta}_i, \hat{\vec{\theta}}_i, \vec{\theta}_{-i}^{(j)}\right) - u_{i,M}\left(\vec{\theta}_i, \vec{\theta}_i, \vec{\theta}_{-i}^{(j)}\right)\right\} + 4L_iw_i + \frac{8k_i}{N} + \epsilon_{\mclass}(N, \delta).\end{align*}
\end{proof}

\dispersion*

\begin{proof}
By definition of dispersion, we know that with probability $1-\delta$ over the draw of the $n$ sets $\sample_{-i} = \left\{\vec{\theta}^{(1)}_{-i}, \dots, \vec{\theta}^{(N)}_{-i}\right\} \sim \dist_{-i}^N$, the following conditions hold:

\paragraph{Condition 1.} For all mechanisms $M \in \cM$, all agents $i \in [n]$, all types $\vec{\theta}_i \in [0,1]^m$, and all reported types $\hat{\vec{\theta}}_i, \hat{\vec{\theta}}_i' \in [0,1]^m$, if $\norm{\hat{\vec{\theta}}_i - \hat{\vec{\theta}}_i'}_1 \leq w_i$, then \begin{equation}\left|\frac{1}{N}\sum_{j = 1}^N u_{i, M}\left(\vec{\theta}_i, \hat{\vec{\theta}}_i, \vec{\theta}_{-i}^{(j)}\right) - u_{i, M}\left(\vec{\theta}_i, \hat{\vec{\theta}}_i', \vec{\theta}_{-i}^{(j)}\right)\right| \leq L_iw_i + \frac{2k_i}{N}.\label{eq:change_b_prob}\end{equation}

\paragraph{Condition 2.} For all mechanisms $M \in \cM$, all agents $i \in [n]$, all reported types $\hat{\vec{\theta}}_i \in [0,1]^m$, and all types $\vec{\theta}_i, \vec{\theta}_i' \in [0,1]^m$, if $\norm{\vec{\theta}_i - \vec{\theta}_i'}_1 \leq w_i$, then \begin{equation}\left|\frac{1}{N}\sum_{j = 1}^N u_{i, M}\left(\vec{\theta}_i, \hat{\vec{\theta}}_i, \vec{\theta}_{-i}^{(j)}\right) - u_{i, M}\left(\vec{\theta}_i', \hat{\vec{\theta}}_i, \vec{\theta}_{-i}^{(j)}\right)\right| \leq L_iw_i + \frac{2k_i}{N}.\label{eq:change_v_prob}\end{equation}

From Claim~\ref{claim:conditions}, we know that Equation~\eqref{eq:grid_dispersion_prob} holds so long as Conditions 1 and 2 hold. Since Conditions 1 and 2 hold with probability $1-\delta$, the lemma statement holds.
\end{proof}

\main*

\begin{proof}
Theorem~\ref{thm:AIC} guarantees that with probability at most $\delta$, the extent to which any agent $i$ can improve his utility by misreporting his type, averaged over all profiles in $\sample_{-i}$, does not approximate the true incentive compatibility approximation factor, as summarized below:

\paragraph{Bad event 1.} For some mechanism $M \in \cM$ and agent $i \in [n]$,\begin{align*}&\max_{\vec{\theta}_i, \hat{\vec{\theta}}_i \in \Theta_i}\left\{\E_{\vec{\theta}_{-i} \sim \dist_{-i}}\left[u_{i,M}\left(\vec{\theta}_i, \hat{\vec{\theta}}_i, \vec{\theta}_{-i}\right) - u_{i,M}\left(\vec{\theta}_i,\vec{\theta}_i, \vec{\theta}_{-i}\right)\right]\right\}\nonumber\\
 > &\max_{\vec{\theta}_i, \hat{\vec{\theta}}_i \in [0,1]^m}\left\{\frac{1}{N} \sum_{j = 1}^N u_{i,M}\left(\vec{\theta}_i,\hat{\vec{\theta}}_i, \vec{\theta}_{-i}^{(j)}\right)
 - u_{i,M}\left(\vec{\theta}_i, \vec{\theta}_i, \vec{\theta}_{-i}^{(j)}\right)\right\} + \epsilon_{\cM}(N, \delta),\end{align*} where $\epsilon_{\cM}(N, \delta) = 2\sqrt{\frac{2 d_i}{N} \ln \frac{eN}{d_i}} + 2\sqrt{\frac{1}{2N}\ln\frac{2n}{\delta}}$ and $d_i = \pdim\left(\cF_{i, \cM}\right)$.
 
By Lemma~\ref{lem:grid_dispersion_prob}, we also know that with probability at most $\delta$, we cannot approximate Equation~\eqref{eq:full_type} by discretizing the agent's type space, as summarized by the following bad event:
 
 \paragraph{Bad event 2.} For some mechanism $M \in \cM$ and agent $i \in [n]$, \begin{align*}\max_{\vec{\theta}_i, \hat{\vec{\theta}}_i \in [0,1]^m}&\left\{\frac{1}{N} \sum_{j = 1}^N u_{i,M}\left(\vec{\theta}_i,\hat{\vec{\theta}}_i, \vec{\theta}_{-i}^{(j)}\right)
 - u_{i,M}\left(\vec{\theta}_i, \vec{\theta}_i, \vec{\theta}_{-i}^{(j)}\right)\right\}\nonumber\\
> \max_{\vec{\theta}_i, \hat{\vec{\theta}}_i \in \cG_{w_i}}&\left\{\frac{1}{N} \sum_{j = 1}^N u_{i,M}\left(\vec{\theta}_i,\hat{\vec{\theta}}_i, \vec{\theta}_{-i}^{(j)}\right)
 - u_{i,M}\left(\vec{\theta}_i, \vec{\theta}_i, \vec{\theta}_{-i}^{(j)}\right)\right\} + 4L_iw_i + \frac{8k_i}{N}.\end{align*}

By a union bound, the probability that either Bad Event 1 or Bad Event 2 occurs is at most $2\delta$. Therefore, the probability that neither occurs is at least $1-2\delta$. If neither occurs, then for every mechanism $M \in \cM$ and every agent $i \in [n]$, \begin{align*}&\max_{\vec{\theta}_i, \hat{\vec{\theta}}_i \in [0,1]^m}\left\{\E_{\vec{\theta}_{-i} \sim \dist_{-i}}\left[u_{i,M}\left(\vec{\theta}_i, \hat{\vec{\theta}}_i, \vec{\theta}_{-i}\right) - u_{i,M}\left(\vec{\theta}_i, \vec{\theta}_i, \vec{\theta}_{-i}\right)\right]\right\}\\
 \leq &\max_{\vec{\theta}_i, \hat{\vec{\theta}}_i \in \cG_{w_i}}\left\{\frac{1}{N} \sum_{j = 1}^N u_{i,M}\left(\vec{\theta}_i, \hat{\vec{\theta}}_i, \vec{\theta}_{-i}^{(j)}\right) - u_{i,M}\left(\vec{\theta}_i, \vec{\theta}_i, \vec{\theta}_{-i}^{(j)}\right)\right\} + \epsilon,\end{align*} where $\epsilon = 4L_iw_i + \frac{4k_i}{N} + 2\sqrt{\frac{2 d_i}{N} \ln \frac{eN}{d_i}} + 2\sqrt{\frac{1}{2N}\ln\frac{2n}{\delta}}$ and $d_i = \pdim\left(\cF_{i, \cM}\right)$.
\end{proof}

\section{Proofs about dispersion and pseudo-dimension (Section~\ref{sec:applications})}\label{app:applications}
\FP*

\begin{proof}
Consider an arbitrary bidder, and without loss of generality, suppose that bidder is bidder 1. Next, choose an arbitrary sample $\vec{\theta}_{-1}^{(j)} = \left(\theta_2^{(j)}, \dots, \theta_n^{(j)}\right)$. For any value $\theta_1 \in [0,1]$ and bid $\hat{\theta}_1 \in [0,1]$, we know that $u_{1, M}\left(\theta_1, \hat{\theta}_1, \vec{\theta}_{-1}^{(j)}\right) = \textbf{1}_{\left\{\hat{\theta}_1 > \norm{\vec{\theta}_{-1}^{(j)}}_{\infty}\right\}} \left(\theta_1 - \hat{\theta}_1\right)$, as illustrated in Figure~\ref{fig:FP}. Therefore, if $\hat{\theta}_1 \leq \norm{\vec{\theta}_{-1}^{(j)}}_{\infty}$, then $u_{1, M}\left(\theta_1, \hat{\theta}_1, \vec{\theta}_{-1}^{(j)}\right)$ is a constant function of $\hat{\theta}_1$, whereas if $\hat{\theta}_1 > \norm{\vec{\theta}_{-1}^{(j)}}_{\infty}$, then $u_{1, M}\left(\theta_1, \hat{\theta}_1, \vec{\theta}_{-1}^{(j)}\right)$ is a linear function of $\hat{\theta}_1$ with a slope of $-1$. Therefore, for all $\theta_i \in [0,1]$, the function $u_{1, M}\left(\theta_1, \cdot, \vec{\theta}_{-1}^{(j)}\right)$ is piecewise $1$-Lipschitz with a discontinuity at $\norm{\vec{\theta}_{-1}^{(j)}}_{\infty}$.

We now prove that with probability $1 - \frac{\delta}{n}$, for all $\theta_1 \in [0,1]$, the 
functions \[u_{1, M}\left(\theta_1, \cdot, \vec{\theta}_{-1}^{(1)}\right), \dots, u_{1, M}\left(\theta_1, \cdot, \vec{\theta}_{-1}^{(N)}\right)\] are $(w,k)$-dispersed. Since for any $\theta_1 \in [0,1]$ the 
function $u_{1, M}\left(\theta_1, \hat{\theta}_1, \vec{\theta}_{-1}^{(j)}\right)$ will only have a 
discontinuity at a point in the set $\left\{\theta_2^{(j)}, \dots, \theta_n^{(j)}\right\}$, it is 
enough to prove that with probability $1-\frac{\delta}{n}$, at most $k$ points in the set $\cB = \bigcup_{j = 1}^N \left\{\theta_2^{(j)}, \dots, \theta_n^{(j)}\right\}$ fall within any interval of 
width $w$. The theorem statement then holds by a union bound over all $n$ bidders.

\begin{claim}\label{claim:fp_dispersed}
With probability $1-\frac{\delta}{n}$, at most $k$ points in the set $\bigcup_{j = 1}^N \left\{\theta_2^{(j)}, \dots, \theta_n^{(j)}\right\}$ fall within any interval of 
width $w$.
\end{claim}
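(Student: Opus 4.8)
The plan is to apply the second part of Lemma~\ref{lem:dispersion}, which bounds the number of independent $\kappa$-bounded samples that can land in a small interval even when those samples come from several different distributions. First I would observe that the collection $\cB = \bigcup_{j=1}^N \left\{\theta_2^{(j)}, \dots, \theta_n^{(j)}\right\}$ consists of exactly $(n-1)N$ sample points, one for each pair $(i,j)$ with $i \in \{2, \dots, n\}$ and $j \in [N]$. I would then partition $\cB$ into the $P = n-1$ buckets $\cB_i = \left\{\theta_i^{(1)}, \dots, \theta_i^{(N)}\right\}$, where bucket $\cB_i$ collects agent $i$'s reported value across all $N$ samples.

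Next I would verify the hypotheses of Lemma~\ref{lem:dispersion}, part 2, for this bucketing. Since $\sample_{-1} = \left\{\vec{\theta}_{-1}^{(1)}, \dots, \vec{\theta}_{-1}^{(N)}\right\}$ is a set of $N$ i.i.d.\ draws from $\dist_{-1}$, the values $\theta_i^{(1)}, \dots, \theta_i^{(N)}$ inside a single bucket $\cB_i$ are mutually independent, and each is distributed according to $\phi_i$, which by hypothesis has a $\kappa$-bounded density. (It is irrelevant whether the coordinates within one sample $\vec{\theta}_{-1}^{(j)}$ are independent of each other, since each bucket contains at most one coordinate from each sample; this is exactly why the bucketing is the right tool.) Hence each of the $P = n-1$ buckets contains $M = N$ independent $\kappa$-bounded samples.

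Finally I would invoke Lemma~\ref{lem:dispersion}, part 2, with $\alpha = \tfrac{1}{2}$ and failure probability $\delta/n$, to conclude that with probability at least $1 - \delta/n$, every interval of width $w = 1/(\kappa\sqrt{N}) = O\!\left(1/(\kappa\sqrt{N})\right)$ contains at most $k = O\!\left((n-1)\sqrt{N}\,\sqrt{\log(n(n-1)/\delta)}\right) = \tilde O(n\sqrt{N})$ points of $\cB$, which is precisely the claim. There is no genuine obstacle here: the only things to get right are choosing the bucketing so that the independence premise of Lemma~\ref{lem:dispersion} is satisfied, and reserving a $\delta/n$ slice of the failure budget so that the outer union bound over the $n$ bidders in Theorem~\ref{thm:FP_disp} still yields overall probability $1-\delta$.
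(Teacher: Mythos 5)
Your proposal is correct and matches the paper's proof exactly: both partition $\cB = \bigcup_{j=1}^N \{\theta_2^{(j)}, \dots, \theta_n^{(j)}\}$ into the $n-1$ per-agent buckets $\cB_i = \{\theta_i^{(j)}\}_{j\in[N]}$ and invoke part~2 of Lemma~\ref{lem:dispersion} with $\alpha = 1/2$, $P = n-1$, $M = N$, and failure budget $\delta/n$. The additional remark about within-sample independence being irrelevant to the bucketing is a correct aside that the paper does not spell out but does not affect the argument.
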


\begin{proof}[Proof of Claim~\ref{claim:fp_dispersed}]
For $i \in \{2, \dots, n\}$, let $\cB_i = \left\{\theta_i^{(j)}\right\}_{j \in [N]}$. The claim follows from Lemma~\ref{lem:dispersion} in Appendix~\ref{app:dispersion}.
\end{proof}
\end{proof}

\FPpdim*

\begin{proof}
First, we prove that $\pdim\left(\cF_{i, M}\right) \leq 2.$

\begin{claim}\label{claim:fp_pdim_ub}
The pseudo-dimension of $\cF_{i, M}$ is at most 2.
\end{claim}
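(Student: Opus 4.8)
The plan is to prove $\pdim\left(\cF_{i, M}\right) \leq 2$ by contradiction: assume some set $\sample_{-i} = \left\{\vec{\theta}_{-i}^{(1)}, \vec{\theta}_{-i}^{(2)}, \vec{\theta}_{-i}^{(3)}\right\}$ of three type profiles is shattered, and derive a contradiction. Since shattering only depends on each profile through the value $\norm{\vec{\theta}_{-i}^{(j)}}_\infty$ (by Lemma~\ref{lem:FP_discontinuities}, the utility $u_{i, M}\left(\theta_i, \hat{\theta}_i, \vec{\theta}_{-i}^{(j)}\right) = \mathbf{1}_{\{\hat{\theta}_i > \norm{\vec{\theta}_{-i}^{(j)}}_\infty\}}(\theta_i - \hat{\theta}_i)$), we may relabel so that $\norm{\vec{\theta}_{-i}^{(1)}}_\infty < \norm{\vec{\theta}_{-i}^{(2)}}_\infty < \norm{\vec{\theta}_{-i}^{(3)}}_\infty$ (we can also assume these are distinct, since otherwise two of the utility functions coincide as functions of $(\theta_i, \hat{\theta}_i)$ and the set cannot be shattered). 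Let $z^{(1)}, z^{(2)}, z^{(3)}$ witness the shattering.

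The main case split is on the sign of $z^{(3)}$. First I would handle $z^{(3)} > 0$. To realize the subset $T = \{1, 3\}$ (i.e., hit or exceed $z^{(1)}$ and $z^{(3)}$ but fall strictly below $z^{(2)}$), we need $\theta_i, \hat{\theta}_i$ with $u_{i,M}\left(\theta_i,\hat{\theta}_i,\vec{\theta}_{-i}^{(3)}\right) \geq z^{(3)} > 0$; since that utility is $0$ unless $\hat{\theta}_i > \norm{\vec{\theta}_{-i}^{(3)}}_\infty$, we must have $\hat{\theta}_i > \norm{\vec{\theta}_{-i}^{(3)}}_\infty > \norm{\vec{\theta}_{-i}^{(2)}}_\infty > \norm{\vec{\theta}_{-i}^{(1)}}_\infty$, so all three utilities equal $\theta_i - \hat{\theta}_i$. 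Thus $\theta_i - \hat{\theta}_i \geq z^{(1)}$ and $\theta_i - \hat{\theta}_i < z^{(2)}$, forcing $z^{(1)} < z^{(2)}$. Symmetrically, realizing $T = \{2, 3\}$ requires (again because $z^{(3)} > 0$ forces $\hat{\theta}_i' > \norm{\vec{\theta}_{-i}^{(3)}}_\infty$) that all three utilities equal $\theta_i' - \hat{\theta}_i'$, giving $\theta_i' - \hat{\theta}_i' < z^{(1)}$ and $\theta_i' - \hat{\theta}_i' \geq z^{(2)}$, hence $z^{(1)} > z^{(2)}$ — a contradiction.

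For the case $z^{(3)} \leq 0$, I would argue analogously but using subsets that force the relevant utility to be $\leq 0$. Note that whenever $\hat{\theta}_i \leq \norm{\vec{\theta}_{-i}^{(j)}}_\infty$ the utility on sample $j$ is exactly $0$, and whenever $\hat{\theta}_i > \norm{\vec{\theta}_{-i}^{(j)}}_\infty$ it is $\theta_i - \hat{\theta}_i$, which can be negative, zero, or positive. The plan is to look at subsets $T$ that include index $3$: since $z^{(3)} \leq 0$, the constraint $u_{i,M}(\theta_i,\hat{\theta}_i,\vec{\theta}_{-i}^{(3)}) \geq z^{(3)}$ can be met either by $\hat{\theta}_i \leq \norm{\vec{\theta}_{-i}^{(3)}}_\infty$ (utility $0 \geq z^{(3)}$) or by $\hat{\theta}_i$ larger with $\theta_i - \hat{\theta}_i \geq z^{(3)}$; and subsets $T$ excluding $3$ need $u_{i,M}(\theta_i,\hat{\theta}_i,\vec{\theta}_{-i}^{(3)}) < z^{(3)} \leq 0$, which forces $\hat{\theta}_i > \norm{\vec{\theta}_{-i}^{(3)}}_\infty$ and $\theta_i - \hat{\theta}_i < z^{(3)} \leq 0$, hence also $\hat{\theta}_i > \norm{\vec{\theta}_{-i}^{(2)}}_\infty$ and $\hat{\theta}_i > \norm{\vec{\theta}_{-i}^{(1)}}_\infty$, so the utilities on samples $1$ and $2$ are both $\theta_i - \hat{\theta}_i < 0$. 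Applying this to $T = \{1\}$ and $T = \{2\}$ forces both $\theta_i - \hat{\theta}_i \geq z^{(1)}$ with $\theta_i - \hat{\theta}_i < z^{(2)}$, and a companion pair $\theta_i' - \hat{\theta}_i' \geq z^{(2)}$ with $\theta_i' - \hat{\theta}_i' < z^{(1)}$, again yielding $z^{(1)} < z^{(2)}$ and $z^{(1)} > z^{(2)}$, a contradiction. (I expect the precise bookkeeping of which subsets to use here to be the fiddly part, but the structural idea — that once $\hat{\theta}_i$ is pushed above all three thresholds the three utilities collapse to the single quantity $\theta_i - \hat{\theta}_i$, which cannot straddle two thresholds $z^{(j)}$ in contradictory ways — is what drives every case.) This establishes $\pdim(\cF_{i,M}) \leq 2$; the matching lower bound $\pdim(\cF_{i,M}) \geq 2$ follows by exhibiting two profiles with distinct $\infty$-norms and choosing witnesses and types realizing all four subsets, which is a direct construction.
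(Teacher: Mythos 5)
Your proof is correct and follows essentially the same route as the paper's: order the three profiles by $\norm{\cdot}_\infty$, split on the sign of $z^{(3)}$, and in each case pick the two subsets ($\{1,3\},\{2,3\}$ when $z^{(3)}>0$; $\{1\},\{2\}$ when $z^{(3)}\leq 0$) whose requirement on sample $3$ forces $\hat{\theta}_i$ above all three thresholds, collapsing all three utilities to $\theta_i-\hat{\theta}_i$ and yielding contradictory orderings of $z^{(1)},z^{(2)}$. The only addition you make beyond the paper is the explicit remark that the $\infty$-norms can be assumed distinct (otherwise two utility functions coincide and no shattering is possible), which the paper folds into its ``without loss of generality.''
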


\begin{proof}[Proof of Claim~\ref{claim:fp_pdim_ub}]
For a contradiction, suppose there exists a set $\sample_{-i} = \left\{\vec{\theta}_{-i}^{(1)}, \vec{\theta}_{-i}^{(2)}, \vec{\theta}_{-i}^{(3)}\right\}$ that is shattered by $\cF_{i, M}$. Without loss of generality, assume that $\norm{\vec{\theta}_{-i}^{(1)}}_{\infty} < \norm{\vec{\theta}_{-i}^{(2)}}_{\infty} < \norm{\vec{\theta}_{-i}^{(3)}}_{\infty}$. Since $\sample_{-i}$ is shatterable, there exists three values $z^{(1)}, z^{(2)}, z^{(3)} \in \R$ that witness the shattering of $\sample_{-i}$ by $\cF_{i, M}$. We split the proof into two cases: one where $z^{(3)} > 0$ and one where $z^{(3)} \leq 0$.

\paragraph{Case 1: $z^{(3)} > 0$.} Since $\sample_{-i}$ is shatterable, there is a value $\theta_i \in [0,1]$ and bid $\hat{\theta}_i \in [0,1]$ such that $u_{i, M}\left(\theta_i, \hat{\theta}_i, \vec{\theta}_{-i}^{(1)}\right) \geq z^{(1)}$, $u_{i, M}\left(\theta_i, \hat{\theta}_i, \vec{\theta}_{-i}^{(2)}\right) < z^{(2)}$, and $u_{i, M}\left(\theta_i, \hat{\theta}_i, \vec{\theta}_{-i}^{(3)}\right) \geq z^{(3)}.$ Recall that for any $\vec{\theta}_{-i} \in [0,1]^{n-1}$, $u_{i, M}\left(\theta_i, \hat{\theta}_i, \vec{\theta}_{-i}\right) = \textbf{1}_{\left\{\hat{\theta}_i > \norm{\vec{\theta}_{-i}}_{\infty}\right\}} \left(\theta_i - \hat{\theta}_i\right).$ Since $u_{i, M}\left(\theta_i, \hat{\theta}_i, \vec{\theta}_{-i}^{(3)}\right) \geq z^{(3)} > 0$, it must be that $\hat{\theta}_i > \norm{\vec{\theta}_{-i}^{(3)}}_{\infty} > \norm{\vec{\theta}_{-i}^{(2)}}_{\infty} > \norm{\vec{\theta}_{-i}^{(1)}}_{\infty}$. Therefore, $\theta_i - \hat{\theta}_i \geq z^{(1)}$ and $\theta_i-\hat{\theta}_i < z^{(2)}$, which means that $z^{(1)} < z^{(2)}$.

Next, there must also be a value $\theta_i' \in [0,1]$ and bid $\hat{\theta}_i' \in [0,1]$ such that $u_{i, M}\left(\theta_i', \hat{\theta}_i', \vec{\theta}_{-i}^{(1)}\right) < z^{(1)},$ $u_{i, M}\left(\theta_i', \hat{\theta}_i', \vec{\theta}_{-i}^{(2)}\right) \geq z^{(2)}$, and $u_{i, M}\left(\theta_i', \hat{\theta}_i', \vec{\theta}_{-i}^{(3)}\right) \geq z^{(3)}$. Again, since $u_{i, M}\left(\theta_i', \hat{\theta}_i', \vec{\theta}_{-i}^{(3)}\right) \geq z^{(3)} > 0,$ it must be that $\hat{\theta}_i' > \norm{\vec{\theta}_{-i}^{(3)}}_{\infty} > \norm{\vec{\theta}_{-i}^{(2)}}_{\infty} > \norm{\vec{\theta}_{-i}^{(1)}}_{\infty}$. Therefore, $\theta_i' - \hat{\theta}_i' < z^{(1)}$ and $\theta_i'-\hat{\theta}_i' \geq z^{(2)}$, which means that $z^{(2)} > z^{(1)}$, which is a contradiction.

\paragraph{Case 2: $z^{(3)} \leq 0$.} Since $\sample_{-i}$ is shatterable, there is a value $\theta_i \in [0,1]$ and bid $\hat{\theta}_i \in [0,1]$ such that $u_{i, M}\left(\theta_i, \hat{\theta}_i, \vec{\theta}_{-i}^{(1)}\right) \geq z^{(1)}$, $u_{i, M}\left(\theta_i, \hat{\theta}_i, \vec{\theta}_{-i}^{(2)}\right) < z^{(2)}$, and $u_{i, M}\left(\theta_i, \hat{\theta}_i, \vec{\theta}_{-i}^{(3)}\right) < z^{(3)}.$ Since $u_{i, M}\left(\theta_i, \hat{\theta}_i, \vec{\theta}_{-i}^{(3)}\right) < z^{(3)} \leq 0$, it must be that $\hat{\theta}_i > \norm{\vec{\theta}_{-i}^{(3)}}_{\infty} > \norm{\vec{\theta}_{-i}^{(2)}}_{\infty} > \norm{\vec{\theta}_{-i}^{(1)}}_{\infty}$. Therefore, $\theta_i - \hat{\theta}_i \geq z^{(1)}$ and $\theta_i-\hat{\theta}_i < z^{(2)}$, which means that $z^{(1)} < z^{(2)}$.

Next, there must also be a value $\theta_i' \in [0,1]$ and bid $\hat{\theta}_i' \in [0,1]$ such that $u_{i, M}\left(\theta_i', \hat{\theta}_i', \vec{\theta}_{-i}^{(1)}\right) < z^{(1)},$ $u_{i, M}\left(\theta_i', \hat{\theta}_i', \vec{\theta}_{-i}^{(2)}\right) \geq z^{(2)}$, and $u_{i, M}\left(\theta_i', \hat{\theta}_i', \vec{\theta}_{-i}^{(3)}\right) < z^{(3)}$. Again, since $u_{i, M}\left(\theta_i', \hat{\theta}_i', \vec{\theta}_{-i}^{(3)}\right) < z^{(3)} \leq 0,$ it must be that $\hat{\theta}_i' > \norm{\vec{\theta}_{-i}^{(3)}}_{\infty} > \norm{\vec{\theta}_{-i}^{(2)}}_{\infty} > \norm{\vec{\theta}_{-i}^{(1)}}_{\infty}$. Therefore, $\theta_i' - \hat{\theta}_i' < z^{(1)}$ and $\theta_i'-\hat{\theta}_i' \geq z^{(2)}$, which means that $z^{(2)} > z^{(1)}$, which is a contradiction.

Since we arrive at a contradiction in both cases, we conclude that $\pdim\left(\cF_{i, M}\right) \leq 2.$
\end{proof}

We now prove that $\pdim\left(\cF_{i, M}\right) \geq 2.$

\begin{claim}\label{claim:fp_pdim_lb}
The pseudo-dimension of $\cF_{i, M}$ is at least 2.
\end{claim}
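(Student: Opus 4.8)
The plan is to produce a two-element set $\sample_{-i}$ together with a witness vector $\vec z=(z^{(1)},z^{(2)})$ that is shattered by $\cF_{i,M}$; combined with Claim~\ref{claim:fp_pdim_ub} this yields $\pdim(\cF_{i,M})=2$. Concretely, I would take $\vec{\theta}_{-i}^{(1)},\vec{\theta}_{-i}^{(2)}\in[0,1]^{n-1}$ with $\norm{\vec{\theta}_{-i}^{(1)}}_\infty=\frac14$ and $\norm{\vec{\theta}_{-i}^{(2)}}_\infty=\frac34$ (e.g.\ $(\frac14,0,\dots,0)$ and $(\frac34,0,\dots,0)$), and set $z^{(1)}=-\frac12$, $z^{(2)}=-\frac14$.

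The key structural fact to record first is that for a fixed pair $(\theta_i,\hat\theta_i)$, the value pair $\bigl(u_{i,M}(\theta_i,\hat\theta_i,\vec{\theta}_{-i}^{(1)}),\,u_{i,M}(\theta_i,\hat\theta_i,\vec{\theta}_{-i}^{(2)})\bigr)$ equals $(0,0)$ when $\hat\theta_i\le\frac14$, equals $(\theta_i-\hat\theta_i,\,0)$ when $\frac14<\hat\theta_i\le\frac34$, and equals $(\theta_i-\hat\theta_i,\,\theta_i-\hat\theta_i)$ when $\hat\theta_i>\frac34$ (this is immediate from $u_{i,M}(\theta_i,\hat\theta_i,\vec{\theta}_{-i})=\textbf{1}_{\{\hat\theta_i>\norm{\vec{\theta}_{-i}}_\infty\}}(\theta_i-\hat\theta_i)$, as in Lemma~\ref{lem:FP_discontinuities}). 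Using this, I would write down one explicit function for each $T\subseteq\{1,2\}$: for $T=\emptyset$ take $\theta_i=\hat\theta_i=0$, giving values $(0,0)$; for $T=\{1\}$ take $\theta_i=0,\hat\theta_i=\frac12$, giving $(-\frac12,0)$; for $T=\{2\}$ take $\theta_i=\frac34,\hat\theta_i=1$, giving $(-\frac14,-\frac14)$; and for $T=\{1,2\}$ take $\theta_i=\frac12,\hat\theta_i=1$, giving $(-\frac12,-\frac12)$. Checking each against $z^{(1)}=-\frac12,z^{(2)}=-\frac14$ confirms that the realized value is $\le z^{(j)}$ precisely for $j\in T$. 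Hence $\vec z$ witnesses the shattering of $\sample_{-i}$ and $\pdim(\cF_{i,M})\ge 2$, which together with Claim~\ref{claim:fp_pdim_ub} proves the theorem.

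The only real subtlety — the part worth thinking about rather than grinding through — is the choice of the witness $\vec z$. A symmetric choice such as $z^{(1)}=z^{(2)}$ cannot work: whenever $\hat\theta_i$ exceeds both thresholds the value pair lies on the diagonal $v_1=v_2$, and whenever $\hat\theta_i$ lies below the larger threshold the second coordinate is pinned to $0$, so with equal targets there is no way to place $v_1$ and $v_2$ on opposite sides of the targets (indeed the sub-family of constant functions in $\cF_{i,M}$ alone has pseudo-dimension only $1$). Taking the two targets negative and distinct, with the larger threshold chosen so that $\frac34-1=-\frac14$ and $\frac12-1=-\frac12$ land exactly at the targets, lets the ``wins both slots'' branch supply the patterns $T=\{2\}$ and $T=\{1,2\}$, the ``wins only on sample $1$'' branch supply $T=\{1\}$, and the trivial branch supply $T=\emptyset$.
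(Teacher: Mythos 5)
Your proof is correct and follows the same strategy as the paper's: exhibit an explicit two-point sample set and witness targets $(z^{(1)},z^{(2)})$ that realize all four sign patterns, then combine with the matching upper bound from Claim~\ref{claim:fp_pdim_ub}. The paper happens to choose thresholds $1/3$ and $2/3$, fixes $\theta_i=1$ throughout, and uses positive witnesses $(3/16,1/16)$, whereas you use thresholds $1/4$ and $3/4$, vary $\theta_i$, and use negative witnesses $(-1/2,-1/4)$ — these are merely different numerical instantiations of the identical idea, and your side remark that equal targets $z^{(1)}=z^{(2)}$ cannot work is also correct.
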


\begin{proof}[Proof of Claim~\ref{claim:fp_pdim_lb}]
To prove this claim, we exhibit a set of size 2 that is shattered by $\cF_{i, M}$. Let $\vec{\theta}_{-i}^{(1)}$ and $\vec{\theta}_{-i}^{(2)}$ be two sets of values such that $\norm{\vec{\theta}_{-i}^{(1)}}_{\infty} = 1/3$ and $\norm{\vec{\theta}_{-i}^{(2)}}_{\infty} = 2/3$. Table~\ref{tab:shatter} illustrates that $z^{(1)} = 3/16$ and $z^{(2)} = 1/16$ witness the shattering of $\left\{\vec{\theta}_{-i}^{(1)}, \vec{\theta}_{-i}^{(2)}\right\}$ by $\cF_{i, M}$.
\begin{table}
\centering
\begin{tabular}{@{}llll@{}}
\toprule
$\theta_i$ & $\hat{\theta}_i$ & $u_{i, M}\left(\theta_i, \hat{\theta}_i, \vec{\theta}_{-i}^{(1)}\right)$ & $u_{i, M}\left(\theta_i, \hat{\theta}_i, \vec{\theta}_{-i}^{(2)}\right)$ \\ \midrule
1   & 1   & 0                                       & 0                                       \\\cmidrule{2-4}
&7/8   & 1/8                                     & 1/8                                     \\\cmidrule{2-4}
&3/4   & 1/4                                     & 1/4                                     \\\cmidrule{2-4}
&1/2   & 1/2                                     & 0                                       \\ \bottomrule
\end{tabular}
\caption{Shattering of the example from Claim~\ref{claim:fp_pdim_lb}.}\label{tab:shatter}
\end{table}
\end{proof}

Together, Claims~\ref{claim:fp_pdim_ub} and \ref{claim:fp_pdim_lb} prove that $\pdim\left(\cF_{i, M}\right) = 2.$
\end{proof}

\FPcomb*

\begin{proof}
Fix an arbitrary agent, and without loss of generality, suppose that agent is agent 1. Next, fix an arbitrary sample $\vec{\theta}^{(j)}_{-1}$ and pair of allocations $\left(b_1, \dots, b_n\right)$ and $\left(b_1', \dots, b_n'\right)$. We know that $\left(b_1', \dots, b_n'\right)$ will \emph{not} be the allocation of the first-price combinatorial auction so long as agent 1's reported type $\hat{\vec{\theta}}_1 \in [0,1]^{2^{\ell}}$ is chosen such that \[\hat{\theta}_1(b_1) + \sum_{i = 2}^n \theta^{(j)}_i(b_i) > \hat{\theta}_1(b_1') + \sum_{i = 2}^n \theta^{(j)}_i(b_i').\] This means that across all $\vec{\theta}_1 \in [0,1]^{2^{\ell}}$, there is fixed a set $\Psi_j$ of ${(n+1)^\ell \choose 2}$ hyperplanes (one per pair of allocations) such that for any connected component $C$ of $[0,1]^{2^{\ell}} \setminus \Psi_j$, the allocation of the first-price combinatorial auction is invariant across all $\hat{\vec{\theta}}_1 \in C$. Namely, \[\Psi_j = \left\{\hat{\theta}_1(b_1) - \hat{\theta}_1(b_1') = \sum_{i = 2}^n \theta^{(j)}_i(b_i') - \theta^{(j)}_i(b_i) : \left(b_1, \dots, b_n\right) \text{ and }\left(b_1', \dots, b_n'\right) \text{ are allocations}\right\}.\] So long as the allocation is fixed, agent 1's utility is 1-Lipschitz in $\hat{\vec{\theta}}_1$.

For each pair of allocations $\vec{b} = \left(b_1, \dots, b_n\right)$ and $\vec{b}' = \left(b_1', \dots, b_n'\right)$, let \[\cB_{\vec{b}, \vec{b}'} = \left\{\hat{\theta}_1(b_1) - \hat{\theta}_1(b_1') = \sum_{i = 2}^n \theta^{(j)}_i(b_i') - \theta^{(j)}_i(b_i) : j \in [N]\right\}.\] Within each set, the hyperplanes are parallel with probability 1. By Lemmas~\ref{lem:sum_bounded} and \ref{lem:difference_bounded} in Appendix~\ref{app:dispersion}, their offsets are independently drawn from distributions with $\kappa$-bounded density functions. Therefore, by Lemma~\ref{lem:hyperplanes} in Appendix~\ref{app:dispersion}, with probability $1 - \frac{\delta}{n}$, for all $\vec{\theta}_1 \in [0,1]^{2^{\ell}}$, the functions $u_{1, M}\left(\vec{\theta}_1, \cdot, \vec{\theta}_{-1}^{(1)}\right), \dots, u_{1, M}\left(\vec{\theta}_1, \cdot, \vec{\theta}_{-1}^{(N)}\right)$ are $(w,k)$-dispersed with $w = O\left(\frac{1}{\kappa \sqrt{N}}\right)$ and $k = O\left((n+1)^{2\ell} \sqrt{N \log\frac{n(n+1)^{2\ell}}{\delta}}\right).$ The theorem statement holds by a union bound over all $n$ agents.
\end{proof}

\GSP*

\begin{proof}
Let $w = O\left(\frac{1}{r \kappa \sqrt{N}}\right)$ and $k = O\left(n^{3/2} \sqrt{N \log\frac{nr}{\delta}}\right)$. Consider an arbitrary agent, and without loss of generality, suppose that agent is agent 1. Next, choose an arbitrary sample $\vec{\theta}_{-1}^{(j)} = \left(\theta_2^{(j)}, \dots, \theta_n^{(j)}\right)$ and mechanism $M \in \cM_r$ with agent weights $\vec{\omega} = \left(\omega_1, \dots, \omega_n\right) \in \left\{\frac{1}{r}, \frac{2}{r}, \dots, 1\right\}^n$. Let $\omega_{i_1}\theta_{i_1}^{(j)} \leq \cdots \leq \omega_{i_{n-1}}\theta_{i_{n-1}}^{(j)}$ be the weighted types of all agents except agent $1$. Consider the $n$ intervals delineated by these $n-1$ weighted values: $I_1 = \left[0, \omega_{i_1}\theta_{i_1}^{(j)}\right), \dots, I_n = \left[\omega_{i_{n-1}}\theta_{i_{n-1}}^{(j)},1\right]$ with \[I_\tau = \left[\omega_{i_{\tau-1}}\theta_{i_{\tau-1}}^{(j)}, \omega_{i_{\tau}}\theta_{i_{\tau}}^{(j)}\right).\] Suppose we vary $\hat{\theta}_1$ in such a way that $\omega_1\hat{\theta}_1$ remains within a single interval $I_{\tau}$. Said another way, consider varying $\hat{\theta}_1$ over the interval \[\left[\frac{\omega_{i_{\tau-1}}\theta_{i_{\tau-1}}^{(j)}}{\omega_1}, \frac{\omega_{i_{\tau}}\theta_{i_{\tau}}^{(j)}}{\omega_1}\right).\] The allocation will be constant, which means agent 1's utility will be a constant function of $\hat{\theta}_1$. Therefore, no matter the value of $\theta_1$, $u_{1, M}\left(\theta_1, \cdot, \vec{\theta}_{-1}^{(1)}\right), \dots, u_{1, M}\left(\theta_1, \cdot, \vec{\theta}_{-1}^{(N)}\right)$ are piecewise constant with discontinuities in the set \[\cB_{1, \vec{\omega}} = \left\{\frac{\omega_{i'}\theta_{i'}^{(j)}}{\omega_1} : i' \in \{2, \dots, n\}, j \in [N]\right\}.\]
This means that in order to prove the theorem, it is enough to show that with probability $1-\delta$, for all $\vec{\omega} \in \left\{\frac{1}{r}, \frac{2}{r}, \dots, 1\right\}^n$ and all agents $i \in [n]$, at most $k$ of the values in each set \[\cB_{i,\vec{\omega}} = \left\{\frac{\omega_{i'}\theta_{i'}^{(j)}}{\omega_i} : i'\in [n]\setminus \{i\}, j \in [N]\right\}\] fall within any interval of length $w$. This follows from the following claim and a union bound.

\begin{claim}\label{claim:omega}
For a fixed $\vec{\omega} = (\omega_1, \dots, \omega_n) \in \left\{\frac{1}{r}, \frac{2}{r}, \dots, 1\right\}^n$ and agent $i \in [n]$, with probability $1 - \frac{\delta}{nr^n}$, at most $k$ points from $\cB_{i,\vec{\omega}}$ fall within any interval of width $w$.
\end{claim}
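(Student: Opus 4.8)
The plan is to prove Claim~\ref{claim:omega} by partitioning the multiset $\cB_{i,\vec{\omega}}$ into $n-1$ buckets of independent samples and invoking the bucketed dispersion bound (the second part of Lemma~\ref{lem:dispersion}). For each $i' \in [n]\setminus\{i\}$, define the bucket $\cB^{(i')} = \left\{\omega_{i'}\theta_{i'}^{(j)}/\omega_i : j \in [N]\right\}$, so that $\cB_{i,\vec{\omega}} = \bigcup_{i' \neq i} \cB^{(i')}$ and each bucket has exactly $N$ elements. Since the samples $\vec\theta_{-i}^{(1)}, \dots, \vec\theta_{-i}^{(N)}$ are drawn i.i.d.\ from $\dist_{-i}$, for a fixed coordinate $i'$ the values $\theta_{i'}^{(1)}, \dots, \theta_{i'}^{(N)}$ are independent, and hence so are the rescaled values comprising $\cB^{(i')}$; thus each bucket consists of independent draws, exactly as the lemma requires (no cross-bucket independence is needed).

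Next I would bound the density governing each element of a bucket. Each point of $\cB^{(i')}$ is the fixed positive constant $\omega_{i'}/\omega_i$ times $\theta_{i'}^{(j)}$, which has a $\kappa$-bounded density by hypothesis, so by Lemma~\ref{lem:constant_mult} the rescaled point has a $\bigl(\kappa\,\omega_i/\omega_{i'}\bigr)$-bounded density. Because every weight lies in $\{1/r, 2/r, \dots, 1\}$ we have $\omega_i \leq 1$ and $\omega_{i'} \geq 1/r$, hence $\omega_i/\omega_{i'} \leq r$, and so each bucket consists of $N$ independent samples from a $(\kappa r)$-bounded distribution.

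With these two facts in hand, I would apply the second part of Lemma~\ref{lem:dispersion} with $P = n-1$ buckets, bucket size $M = N$, density bound $\kappa r$, exponent $\alpha = 1/2$, and failure probability $\delta/(nr^n)$ (this is the per-$(\text{agent},\vec{\omega})$ budget needed so that a union bound over the $n$ agents and the $r^n$ weight vectors costs only $\delta$). This yields that with probability at least $1 - \delta/(nr^n)$, every interval of width $w = O\bigl(1/(\kappa r \sqrt{N})\bigr)$ contains at most
\[
k = O\!\left((n-1)\sqrt{N}\,\sqrt{\log\tfrac{(n-1)nr^n}{\delta}}\right)
\]
points of $\cB_{i,\vec{\omega}}$. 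Using $\log\bigl((n-1)nr^n/\delta\bigr) = O\bigl(n\log(nr/\delta)\bigr)$ then gives $k = O\bigl(n^{3/2}\sqrt{N\log(nr/\delta)}\bigr) = \tilde{O}\bigl(n^{3/2}\sqrt{N}\bigr)$, as claimed. The main point requiring care is the accounting that produces the stated parameters: the weight ratio $\omega_i/\omega_{i'}$ inflates the $\kappa$-bound by a factor of up to $r$, which is precisely what widens the admissible grid cell to $O(1/(r\kappa\sqrt{N}))$; and the union bound over the $r^n$ weight vectors contributes a $\log(r^n) = n\log r$ term inside the square root, so the bucket count $n-1$ multiplies $\sqrt{n\log r}$ and yields the $n^{3/2}$ dependence rather than a linear one. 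Everything else is a routine instantiation of the lemmas already in the excerpt.
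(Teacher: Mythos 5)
Your proof is correct and follows essentially the same route as the paper: the same partition of $\cB_{i,\vec{\omega}}$ into per-agent buckets $\cB_{i,i',\vec{\omega}}$, the same invocation of Lemma~\ref{lem:constant_mult} to establish that each bucket consists of i.i.d.\ draws from a $(r\kappa)$-bounded distribution, and the same application of the bucketed part of Lemma~\ref{lem:dispersion}. The explicit parameter accounting you carry out (producing $w = O(1/(r\kappa\sqrt{N}))$ and $k = \tilde O(n^{3/2}\sqrt{N})$) is consistent with the values stated in the enclosing theorem, though the paper leaves that arithmetic implicit in the claim's proof.
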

\begin{proof}[Proof of Claim~\ref{claim:omega}]
For each $i' \in [n]\setminus \{i\}$, let \[\cB_{i, i', \vec{\omega}} = \left\{\frac{\omega_{i'}\theta_{i'}^{(j)}}{\omega_i} : j \in [N]\right\}.\] Since $\frac{\omega_{i'}}{\omega_{i}} \geq \frac{1}{r}$, Lemma~\ref{lem:constant_mult} in Appendix~\ref{app:dispersion} implies that the points in $\cB_{i, i', \vec{\omega}}$ are independent draws from a $r\kappa$-bounded distribution. The claim then follows from Lemma~\ref{lem:dispersion} in Appendix~\ref{app:dispersion}.
\end{proof}
\end{proof}

\GSPpdim*

\begin{proof}
Fix an arbitrary agent, and without loss of generality, suppose that agent is agent 1.
Fix a vector $\vec{\theta}_{-1} = \left(\theta_2, \dots, \theta_n\right) \in [0,1]^{n-1}$.
We denote the utility of agent 1 when the agents' types are $\left(\theta_1, \vec{\theta}_{-1}\right)$ and reported types are $\left(\hat{\theta}_1, \vec{\theta}_{-1}\right)$ as $u_{\vec{\theta}_{-1}}\left(\theta_1, \hat{\theta}_1, \omega_1, \dots, \omega_n\right)$, which maps $\R^{n+2}$ to $[-1,1]$.

We now show that we can split $\R^{n+2}$ into regions $R$ where the allocation of the generalized second-price auction is fixed across all $\left(\theta_1, \hat{\theta}_1, \omega_1, \dots, \omega_n\right) \in R$, given the fixed set of reported types $\vec{\theta}_{-1}= \left(\theta_2, \dots, \theta_n\right)$. Let $\bar{\pi}$ be a permutation of the $n$ agents, and let $R_{\bar{\pi}} \subseteq [0,1]^{n+2}$ be the set of all vectors $\left(\theta_1, \hat{\theta}_1, \omega_1, \dots, \omega_n\right)$ where the ordering of the elements $\left\{\omega_1\hat{\theta}_1, \omega_2\theta_2, \dots, \omega_n\theta_n\right\}$ matches $\bar{\pi}$. Specifically, for all $\left(\theta_1, \hat{\theta}_1, \omega_1, \dots, \omega_n\right) \in R_{\bar{\pi}}$, if $i' = \argmax \left\{\omega_1\hat{\theta}_1, \omega_2\theta_2, \dots, \omega_n\theta_n\right\}$, then $\bar{\pi}(1) = i'$ and if $i'' = \argmin \left\{\omega_1\hat{\theta}_1, \omega_2\theta_2, \dots, \omega_n\theta_n\right\}$, then $\bar{\pi}(n) = i''$. By definition of the generalized second-price auction, this means that the agent who receives the first slot is agent $\bar{\pi}(1)$ and the agent who receives the $m^{th}$ is agent $\bar{\pi}(m)$. Therefore, for all $\left(\theta_1, \hat{\theta}_1, \omega_1, \dots, \omega_n\right) \in R_{\bar{\pi}}$, the allocation of the generalized second-price auction given agent weights $\left(\omega_1, \dots, \omega_n\right)$ and reported types $\left(\hat{\theta}_1, \vec{\theta}_{-1}\right)$ is invariant. Next, consider the transformation $\phi : \left(\theta_1, \hat{\theta}_1, \omega_1, \dots, \omega_n\right) \mapsto \left(\theta_1, \frac{\omega_2}{\omega_1}, \dots, \frac{\omega_n}{\omega_1}\right)$. Since the vector $\vec{\theta}_{-1}$ is fixed, the function $u_{\vec{\theta}_{-1}}\left(\theta_1, \hat{\theta}_1, \omega_1, \dots, \omega_n\right)$ is a fixed linear function of $\phi\left(\theta_1, \hat{\theta}_1, \omega_1, \dots, \omega_n\right)$ across all $\left(\theta_1, \hat{\theta}_1, \omega_1, \dots, \omega_n\right) \in R_{\bar{\pi}}$.

Next, let $\sample_{-1} = \left\{\vec{\theta}_{-1}^{(1)}, \dots, \vec{\theta}_{-1}^{(N)}\right\}$ be a set of vectors that is shatterable by the set of functions $\cF_{1, \cM_r}$, where $\vec{\theta}_{-1}^{(j)} = \left(\theta_2^{(j)}, \dots, \theta_n^{(j)}\right)$. This means there is a set of values $z^{(1)}, \dots, z^{(N)} \in \R$ that witnesses the shattering of $\sample_{-1}$ by $\cF_{1, \cM_r}$. Therefore, for every $T \subseteq [N]$, there exists a pair of values $\theta_T, \hat{\theta}_T$ and a mechanism $M_T \in \cM_r$ such that $u_{1, M_T, \theta_T, \hat{\theta}_T}\left(\vec{\theta}_{-1}^{(j)}\right) \leq z^{(j)}$ if and only if $j \in T$. Let $\left(\omega_{1, T}, \dots, \omega_{n, T}\right)$ be the set of agent weights corresponding to each such mechanism $M_T$. We define the set $\cR$ to be the set of $2^N$ vectors $\cR = \left\{\left(\theta_T, \hat{\theta}_T, \omega_{1, T}, \dots, \omega_{n, T}\right) : T \subseteq [N]\right\} \subset \R^{n+2}$.

For each $j \in [N]$, let $\cH^{(j)}$ be the set of ${n \choose 2} + n-1$ hypersurfaces \[\cH^{(j)} = \left\{\omega_1\hat{\theta}_1 - \omega_i\theta_i^{(j)} = 0 : i \in \{2, \dots, n\}\right\} \cup \left\{\omega_i\theta_i - \omega_{i'}\theta_{i'}^{(j)} = 0 : i, i' \in \{2, \dots, n\}\right\}.\] As we saw, as we range $\left(\theta_1, \hat{\theta}_1, \omega_1, \dots, \omega_n\right)$ over a single connected component of $\R^{n+2} \setminus \cH^{(j)}$, the allocation of the generalized second-price auction with agent weights $\left(\omega_1, \dots, \omega_n\right)$ and reported types $\left(\hat{\theta}_1, \vec{\theta}_{-1}^{(j)}\right)$ is invariant. If we define $\cH = \bigcup_{j =1}^N \cH^{(j)}$, 
then as we range $\left(\theta_1, \hat{\theta}_1, \omega_1, \dots, \omega_n\right)$ over a single connected component of $\R^{n+2} \setminus \cH$, the allocations are fixed across all $j \in [N]$. By Lemma~\ref{lem:poly} in Appendix~\ref{app:helpful}, the number of connected components is at most \[2\left(\frac{4eN\left({n \choose 2} + n-1\right)}{n+2}\right)^{n+2} = 2\left(2eN(n-1)\right)^{n+2}.\]

Next, fix a connected component $C$ of $\R^{n+2} \setminus \cH$. We know that for each $j \in [N]$, there is a constant $\alpha_j \in [0,1]$ and a bidder $i_j \not=1$ such that $u_{\vec{\theta}_{-1}^{(j)}}\left(\theta_1, \hat{\theta}_1, \omega_1, \dots, \omega_n\right) = \alpha_j\left(\theta_1 - \frac{\omega_{i_j}\theta_{i_j}}{\omega_1}\right)$ (where $\alpha_j = 0$ if agent $i$ does not receive any slot). Consider the $N$ hypersurfaces $\cH_C = \left\{\alpha_j\left(\theta_1 - \frac{\omega_{i_j}t^{(j)}_{i_j}}{\omega_1}\right) = z^{(j)} : j \in [N]\right\}$. We know that at most one vector in $\cR$ can come from each connected component of $C \setminus \cH_C$, of which there are at most $2\left(\frac{4eN}{n}\right)^n$. In total, this means that $2^N = |\cR| \leq 4\left(2eN(n-1)\right)^{n+2}\left(\frac{4eN}{n}\right)^n$, so by Lemma~\ref{lem:log_ineq} in Appendix~\ref{app:helpful}, $N = O(n \log n)$.
\end{proof}

\discrim*

\begin{proof}
Fix an arbitrary agent, and without loss of generality, suppose that agent is agent 1.
Let $\theta_1' \leq \theta_2' \leq \cdots \leq \theta_{Nm(n-1)}'$ be the sorted values of $\left\{\theta_i^{(j)}[\mu] : i \in \{2, \dots, n\}, j \in [N], \mu \in [m]\right\}$, and define $\theta_0' = 0$ and $\theta_{Nm(n-1) + 1}' = 1$. So long as agent 1's bids fall between these sorted bids, the allocation will be fixed across all samples. More formally, for each $\mu \in [m]$, so long as $\theta_i' < \hat{\theta}_1[\mu] < \theta_{i+1}'$ for some $i \in \{0, \dots, Nm(n-1)\}$, the resulting allocation will be fixed across all $N$ samples.
 Now, for a given sample $\vec{\theta}_{-1}^{(j)}$, consider a region $R \subseteq [0,1]^m$ where the allocation of the discriminatory auction given bids $\left(\hat{\vec{\theta}}_1, \vec{\theta}_{-1}^{(j)}\right)$ is invariant across all $\hat{\vec{\theta}}_1 \in R$. In particular, let $m_R$ be the number of units allocated to agent 1. For a fixed $\vec{\theta}_1 \in [0,1]^m$, agent 1's utility will be linear in $\hat{\vec{\theta}}_1$, because across all $\hat{\vec{\theta}}_1 \in R$, $u_{i, M}\left(\vec{\theta}_1, \hat{\vec{\theta}}_1, \vec{\theta}_{-1}^{(j)}\right) = \sum_{\mu = 1}^{m_R} \theta_1[\mu] - \hat{\theta}_1[\mu]$. Therefore, the functions $u_{1, M}\left(\vec{\theta}_1, \cdot, \vec{\theta}_{-1}^{(1)}\right), \dots, u_{1, M}\left(\vec{\theta}_1, \cdot, \vec{\theta}_{-1}^{(N)}\right)$ are piecewise $1$-Lipschitz.

 Note that across all $\vec{\theta}_1 \in [0,1]^m$, the partition $\cP_j$ splitting $u_{1, M}\left(\vec{\theta}_1, \cdot, \vec{\theta}_{-1}^{(j)}\right)$ into Lipschitz portions is invariant. In particular, no matter the value of $\vec{\theta}_1,$ the partition $\cP_j$ is delineated by the set of $m^2(n-1)$ hyperplanes $\cH_j = \left\{\theta_i^{(j)}[\mu] - \hat{\theta}_1[\mu'] = 0 : i \in \{2, \dots, n\}, \mu, \mu' \in [m]\right\}$. For each $i \in \{2, \dots, n\}$ and pair $\mu, \mu' \in [m]$, let $\cB_{i, \mu, \mu'}$ be the set of hyperplanes \[\cB_{i, \mu, \mu'} = \left\{\theta_i^{(j)}[\mu] - \hat{\theta}_1[\mu'] = 0 : j \in [N]\right\}.\] Within each set, the hyperplanes are parallel with probability 1 and their offsets are independently drawn from $\kappa$-bounded distributions. Therefore, by Lemma~\ref{lem:hyperplanes} in Appendix~\ref{app:dispersion}, with probability $1- \frac{\delta}{n}$, for all $\vec{\theta}_1 \in [0,1]^m$, the functions $u_{1, M}\left(\vec{\theta}_1, \cdot, \vec{\theta}_{-1}^{(1)}\right), \dots, u_{1, M}\left(\vec{\theta}_1, \cdot, \vec{\theta}_{-1}^{(N)}\right)$ are $\left(O\left(\frac{1}{\kappa \sqrt{N}}\right),\tilde{O}\left(nm^2 \sqrt{N}\right)\right)$-dispersed. The theorem holds by a union bound over the agents.
\end{proof}

\SPdisc*

\begin{proof}
Consider an arbitrary bidder, and without loss of generality, suppose that bidder is bidder 1. Next, choose an arbitrary sample $\vec{\theta}_{-1}^{(j)} = \left(\theta_2^{(j)}, \dots, \theta_n^{(j)}\right)$. Letting $j^* = \argmax\left\{\theta_2^{(j)}, \dots, \theta_n^{(j)}\right\}$ and letting $s^{(j)}$ be the second-largest component of $\vec{\theta}_{-1}^{(j)}$, we know that either agent 1 will win and pay $\norm{\vec{\theta}_{-1}^{(j)}}_{\infty}$ or agent $j^*$ will win and pay the maximum of $\hat{\theta}_1$ and $s^{(j)}$. Therefore, for any value $\theta_1 \in [0,1]$ and bid $\hat{\theta}_1 \in [0,1]$, we know that 
\begin{equation}u_{1, M}\left(\theta_1, \hat{\theta}_1, \vec{\theta}_{-1}^{(j)}\right) = \begin{cases}
\left(\alpha_1 - 1\right)\left(\norm{\vec{\theta}_{-1}^{(j)}}_{\infty} - s^{(j)}\right)	&\text{if } \hat{\theta}_1 \leq s^{(j)}\\
\left(\alpha_1 - 1\right)\left(\norm{\vec{\theta}_{-1}^{(j)}}_{\infty} - \hat{\theta}_1\right)	&\text{if } s^{(j)} < \hat{\theta}_1 \leq \norm{\vec{\theta}_{-1}^{(j)}}_{\infty}\\
\alpha_1\left(\theta_1 - \norm{\vec{\theta}_{-1}^{(j)}}_{\infty}\right)	&\text{if } \hat{\theta}_1 > \norm{\vec{\theta}_{-1}^{(j)}}_{\infty}.
\end{cases}\label{eq:spite_utility}\end{equation}
This means that if $\hat{\theta}_1 > \norm{\vec{\theta}_{-1}^{(j)}}_{\infty}$, then $u_{1, M}\left(\theta_1, \hat{\theta}_1, \vec{\theta}_{-1}^{(j)}\right)$ is a constant function of $\hat{\theta}_1$, whereas if $\hat{\theta}_1 \leq \norm{\vec{\theta}_{-1}^{(j)}}_{\infty}$, then $u_{1, M}\left(\theta_1, \hat{\theta}_1, \vec{\theta}_{-1}^{(j)}\right)$ is a continuous function of $\hat{\theta}_1$ with a slope of either 0 (if $\hat{\theta}_1 < s^{(j)}$) or $\alpha_1 - 1$ (if $\hat{\theta}_1 \geq s^{(j)}$). (See Figure~\ref{fig:spiteful} in Appendix~\ref{app:applications}.) Since $\left|\alpha_1 - 1\right| \leq 1$, this means that for all $\theta_i \in [0,1]$, the function $u_{1, M}\left(\theta_1, \cdot, \vec{\theta}_{-1}^{(j)}\right)$ is piecewise $1$-Lipschitz with a discontinuity at $\norm{\vec{\theta}_{-1}^{(j)}}_{\infty}$.
\end{proof}

\subsection{Uniform-price auctions}\label{app:uniform}
Under the uniform-price auction, the allocation rule is the same as in the discriminatory auction (Section~\ref{sec:discriminatory}).
However, all $m$ units are sold at a ``market-clearing'' price,
meaning that the total amount demanded is equal to the total amount supplied.
We adopt the convention~\citep{Krishna02:Auction} that the
market-clearing price equals the highest losing bid.
Let $\vec{c}_{-i} \in \R^m$ be the vector $\hat{\vec{\theta}}_{-i}$ of competing bids facing agent $i$, sorted in decreasing order and limited to the top $m$ bids. This means that $c_{-i}[1] = \norm{\hat{\vec{\theta}}_{-i}}_{\infty}$
is the highest of the other bids, $c_{-i}[2]$ is
the second-highest, and so on. Agent $i$ will win
exactly one unit if and only if $\hat{\theta}_i[1] > c_{-i}[m]$ and $\hat{\theta}_i[2] < c_{-i}[m-1]$; that is, his bid must be higher than the lowest competing bid but not the second-lowest. Similarly, agent $i$ will win
exactly two units if and only if $\hat{\theta}_i[2] > c_{-i}[m-1]$ and $\hat{\theta}_i[3] < c_{-i}[m-2]$. More generally, agent $i$ will win exactly $m_i > 0$ units if and
only if $\hat{\theta}_i\left[m_i\right] > c_{-i}\left[m - m_i + 1\right]$ and $\hat{\theta}_i\left[m_i + 1\right] < c_{-i}\left[m - m_i\right]$.
The market-clearing price, which equals the highest losing bid, is
$p = \max\left\{\hat{\theta}_i\left[m_i + 1\right], c_{-i}\left[m - m_i + 1\right]\right\}$. In a uniform-price auction, agent $i$ pays $m_ip$.

\begin{theorem}
Suppose that each agent's value for each marginal unit has a $\kappa$-bounded density function.
With probability $1-\delta$ over the draw of the $n$ sets $\sample_{-i} = \left\{\vec{\theta}^{(1)}_{-i}, \dots, \vec{\theta}^{(N)}_{-i}\right\} \sim \dist_{-i}^N$, we have that for all agents $i \in [n]$ and types $\vec{\theta}_i \in [0,1]^m$, the functions \[u_{i, M}\left(\vec{\theta}_i, \cdot, \vec{\theta}_{-i}^{(1)}\right), \dots, u_{i, M}\left(\vec{\theta}_i, \cdot, \vec{\theta}_{-i}^{(N)}\right)\] are piecewise $1$-Lipschitz and $\left(O\left(\frac{1}{\kappa \sqrt{N}}\right),O\left(nm^2 \sqrt{N \log\frac{n}{\delta}}\right)\right)$-dispersed.
\end{theorem}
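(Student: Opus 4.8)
The plan is to mirror the proof of Theorem~\ref{thm:discriminatory_vary_theta_hat} almost verbatim, exploiting the fact that the uniform-price and discriminatory auctions have \emph{the same allocation rule}. Fix an agent, say agent $1$, and fix a sample $\vec{\theta}_{-1}^{(j)}$. The number $m_1$ of units awarded to agent $1$ is determined solely by where the coordinates $\hat{\theta}_1[\mu']$ of his bid vector fall relative to the competing bid values $\theta_i^{(j)}[\mu]$; hence, exactly as in the discriminatory case, the partition of $[0,1]^m$ into regions of constant allocation is delineated by the set of $m^2(n-1)$ hyperplanes $\cH_j = \left\{\theta_i^{(j)}[\mu] - \hat{\theta}_1[\mu'] = 0 : i \in \{2,\dots,n\},\ \mu,\mu' \in [m]\right\}$, and this set does not depend on agent $1$'s true type $\vec{\theta}_1$ (the mechanism never sees it).

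The only point where the argument departs from the discriminatory proof is the payment rule: when agent $1$ wins $m_1$ units he now pays $m_1 p$ with market-clearing price $p = \max\left\{\hat{\theta}_1[m_1+1],\, c_{-1}[m - m_1 + 1]\right\}$. First I would observe that $c_{-1}[m - m_1 + 1]$ is itself one of the competing values $\theta_i^{(j)}[\mu]$, so the hyperplane $\hat{\theta}_1[m_1+1] = c_{-1}[m - m_1 + 1]$ already belongs to $\cH_j$. Consequently, on any connected component of $[0,1]^m \setminus \cH_j$ not only is the allocation $m_1$ fixed but the active branch of the $\max$ is fixed as well, so agent $1$'s utility $\sum_{\mu = 1}^{m_1} \theta_1[\mu] - m_1 p$ is an affine (hence piecewise Lipschitz) function of $\hat{\vec{\theta}}_1$ there, and this holds for every $\vec{\theta}_1 \in [0,1]^m$ simultaneously. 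Thus $\cH_j$ delineates the Lipschitz partition $\cP_j$ of $u_{1,M}\!\left(\vec{\theta}_1, \cdot, \vec{\theta}_{-1}^{(j)}\right)$, with $|\cH_j| = m^2(n-1)$, just as before.

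With this in hand the rest follows the template of Theorem~\ref{thm:discriminatory_vary_theta_hat}: partition the multiset $\bigcup_{j} \cH_j$ into $m^2(n-1)$ buckets $\cB_{i,\mu,\mu'} = \left\{\theta_i^{(j)}[\mu] - \hat{\theta}_1[\mu'] = 0 : j \in [N]\right\}$; within each bucket the hyperplanes are parallel (common normal $e_{\mu'}$) and their offsets are independent draws from a $\kappa$-bounded density by the marginal-density assumption. Applying Lemma~\ref{lem:hyperplanes} with $P = m^2(n-1)$ buckets, each of size at most $N$, and replacing $\delta$ by $\delta/n$, gives that with probability $1 - \delta/n$ the functions $u_{1,M}\!\left(\vec{\theta}_1, \cdot, \vec{\theta}_{-1}^{(1)}\right), \dots, u_{1,M}\!\left(\vec{\theta}_1, \cdot, \vec{\theta}_{-1}^{(N)}\right)$ are $\left(O\!\left(1/(\kappa\sqrt{N})\right),\, O\!\left(nm^2 \sqrt{N \log(n/\delta)}\right)\right)$-dispersed for all $\vec{\theta}_1$; a union bound over the $n$ agents yields the theorem. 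The main obstacle, and really the only new content beyond the discriminatory analysis, is the second paragraph: confirming that the market-clearing $\max$ introduces no discontinuity that is not already a hyperplane of $\cH_j$, so that the bucketing structure and the dispersion count carry over unchanged.
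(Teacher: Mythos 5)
Your proof takes the same route as the paper's own, which is a one-line deferral to Theorem~\ref{thm:discriminatory_vary_theta_hat}, and your second paragraph supplies the justification for why that deferral is safe: the market-clearing $\max$ is resolved on each connected component of $[0,1]^m \setminus \cH_j$ because both branches are compared via a hyperplane already in $\cH_j$, so the hyperplane set, the bucketing into $m^2(n-1)$ parallel families, and the application of Lemma~\ref{lem:hyperplanes} carry over unchanged. That is exactly the content the paper leaves implicit, and it is correct.

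There is, however, one step that neither you nor the paper checks and that does not actually go through as stated: the claimed piecewise \emph{$1$}-Lipschitz constant. In the discriminatory auction the payment on a component with allocation $m_1$ is $\sum_{\mu\le m_1}\hat{\theta}_1[\mu]$, so the gradient of the utility with respect to $\hat{\vec{\theta}}_1$ has $\ell_\infty$-norm $1$, giving $1$-Lipschitz with respect to the $\ell_1$-norm required by Definition~\ref{def:dispersion}. In the uniform-price auction, on a component where the max is achieved at $\hat{\theta}_1[m_1+1]$, the payment is $m_1\hat{\theta}_1[m_1+1]$; the gradient is $-m_1$ in coordinate $m_1+1$ and zero elsewhere, so its $\ell_\infty$-norm is $m_1$, which can be as large as $m$. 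The correct Lipschitz constant is therefore $L_i = m$ rather than $1$, and your phrase ``affine (hence piecewise Lipschitz)'' glosses over exactly this point. Propagating $L_i = m$ through Lemma~\ref{lem:grid_dispersion_prob} turns the $4L_iw_i$ term into $O\left(m/(\kappa\sqrt{N})\right)$; since any density supported on $[0,1]$ must have $\kappa \ge 1$, this is still dominated by the $\tilde{O}\left(nm^2/\sqrt{N}\right)$ contribution from $k_i$, so the final estimation-error bound in Table~\ref{tab:results} is unaffected, but the ``piecewise $1$-Lipschitz'' claim in the theorem statement is not one your argument (or the paper's one-line proof) establishes.
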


\begin{proof}
This theorem follows by the exact same reasoning as Theorem~\ref{thm:discriminatory_vary_theta_hat}.
\end{proof}

\begin{theorem}
For all agents $i \in [n]$, all $\hat{\vec{\theta}}_i \in [0,1]^m$, and all $\vec{\theta}_{-i} \in [0,1]^{(n-1)m}$, the function $u_{i, M}\left(\cdot, \hat{\vec{\theta}}_i, \vec{\theta}_{-i}\right)$ is 1-Lipschitz.
\end{theorem}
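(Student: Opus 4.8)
The plan is to mimic the short argument used for the discriminatory auction (and the analogous ``$u_{i,M}(\cdot,\hat{\vec\theta}_i,\vec\theta_{-i})$ is $1$-Lipschitz'' theorems throughout Section~\ref{sec:applications}): observe that once the reported types are pinned down, the true type of agent $i$ affects only the \emph{value} agent $i$ assigns to her allocated bundle, not the allocation or the payment, so her utility is an affine function of $\vec\theta_i$ with a controlled gradient.

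In more detail, first I would note that since the other agents report truthfully, the complete vector of bids submitted to the auction is $\left(\hat{\vec\theta}_i,\vec\theta_{-i}\right)$, which is fixed. Consequently the number $m_i$ of units allocated to agent $i$ is fixed, and the market-clearing price $p = \max\left\{\hat\theta_i[m_i+1],\, c_{-i}[m-m_i+1]\right\}$ (using the notation of Appendix~\ref{app:uniform}) is fixed: neither quantity references the true type $\vec\theta_i$, only the bids. I would then write out agent $i$'s utility explicitly as $u_{i, M}\left(\vec\theta_i, \hat{\vec\theta}_i, \vec\theta_{-i}\right) = \sum_{\mu = 1}^{m_i} \theta_i[\mu] - m_i p$ (with the empty sum and $m_i p = 0$ when $m_i = 0$), which, with $\hat{\vec\theta}_i$ and $\vec\theta_{-i}$ held fixed, is an affine function of $\vec\theta_i \in [0,1]^m$.

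Finally I would bound its Lipschitz constant with respect to the $\ell_1$-norm, the norm used in the dispersion framework (Definition~\ref{def:dispersion}). The gradient of $\vec\theta_i \mapsto \sum_{\mu=1}^{m_i}\theta_i[\mu] - m_i p$ is the indicator vector of the coordinate set $\{1,\dots,m_i\}$, whose $\ell_\infty$-norm is $1$ if $m_i \geq 1$ and $0$ if $m_i = 0$; since the $\ell_\infty$-norm of the gradient is at most $1$, the function is $1$-Lipschitz with respect to $\ell_1$. (Equivalently, it is a constant function if agent $i$ wins nothing and otherwise changes by at most $\norm{\vec\theta_i - \vec\theta_i'}_1$ when the true type changes from $\vec\theta_i$ to $\vec\theta_i'$.)

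There is essentially no obstacle here: the only points requiring a moment of care are (i) confirming that the market-clearing price, being the highest \emph{losing bid}, is a function of the submitted bids alone and hence invariant as $\vec\theta_i$ varies, and (ii) matching the norm convention, so that ``$1$-Lipschitz'' is read with respect to $\ell_1$ as in the rest of the paper. Both are immediate from the definitions in Appendix~\ref{app:uniform}, so the proof will be only a few lines, parallel to the discriminatory-auction case.
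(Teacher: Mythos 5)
Your proof is correct and is essentially the paper's argument: the paper's proof is the single sentence ``So long as all bids are fixed, the allocation is fixed, so $u_{i,M}(\cdot,\hat{\vec{\theta}}_i,\vec{\theta}_{-i})$ is $1$-Lipschitz.'' You simply unpack what that sentence leaves implicit — that with bids fixed the allocation $m_i$ and clearing price $p$ are fixed, the utility is the affine map $\vec{\theta}_i \mapsto \sum_{\mu=1}^{m_i}\theta_i[\mu] - m_i p$, and its gradient has $\ell_\infty$-norm at most $1$, giving $1$-Lipschitzness in $\ell_1$.
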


\begin{proof}
So long as all bids are fixed, the allocation is fixed, so $u_{i, M}\left(\cdot, \hat{\vec{\theta}}_i, \vec{\theta}_{-i}\right)$ is 1-Lipschitz function as a function of the true type $\vec{\theta}_i \in [0,1]^m$.
\end{proof}

Next, we prove the following pseudo-dimension bound.

\begin{theorem}
For any agent $i \in [n]$, the pseudo-dimension of the class $\cF_{i, M}$ is $O\left(m\log (nm)\right)$.
\end{theorem}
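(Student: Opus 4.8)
The plan is to mirror the pseudo-dimension argument for the discriminatory auction (Theorem~\ref{thm:discriminatory_pdim}): I will show that $\cF_{i,M}$ is $(m,t)$-delineable with $t = O(m^2 n)$ and then invoke Theorem~\ref{thm:delineable} in Appendix~\ref{app:delineable}, which yields $\pdim(\cF_{i,M}) = O(m\log(mt)) = O(m\log(nm))$. The agent's type space is a subset of $[0,1]^m$, so the first delineability condition is immediate; the work is entirely in producing the hyperplanes for the second condition.

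Fix a competing-type profile $\vec{\theta}_{-i} \in [0,1]^{(n-1)m}$; as in Appendix~\ref{app:uniform} this determines the fixed sorted vector $\vec{c}_{-i}$ of the top $m$ competing bids. I will take $\hyp$ to consist of three groups of hyperplanes over the $2m$ variables $(\vec{\theta}_i, \hat{\vec{\theta}}_i)$: (i) the ${m \choose 2}$ hyperplanes $\hat{\theta}_i[\mu] = \hat{\theta}_i[\mu']$, which fix the internal order of agent $i$'s own bids on each region (equivalently, one may assume the reported bids are submitted in sorted order and drop this group); (ii) the $m^2(n-1)$ cross-comparison hyperplanes $\hat{\theta}_i[\mu] = \theta_{i'}[\mu']$ for $i' \neq i$ and $\mu,\mu'\in[m]$, exactly as in the proof of Theorem~\ref{thm:discriminatory_vary_theta_hat}; together with group (i) these fix, for each of agent $i$'s bids, how many of the $nm$ bids exceed it, hence fix which of agent $i$'s bids lie among the top $m$, and in particular fix the number $m_i$ of units agent $i$ wins; and (iii) the $m^2$ hyperplanes $\hat{\theta}_i[\mu] = c_{-i}[\nu]$ for $\mu,\nu\in[m]$. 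On any connected component $C$ of $[0,1]^{2m}\setminus\hyp$, the allocation $m_i$ and the internal order of $\hat{\vec{\theta}}_i$ are constant, so agent $i$'s value $\sum_{\mu=1}^{m_i}\theta_i[\mu]$ is a fixed linear function of $\vec{\theta}_i$; moreover the market-clearing price $p = \max\{\hat{\theta}_i[m_i+1],\, c_{-i}[m-m_i+1]\}$ equals, on $C$, either the constant $c_{-i}[m-m_i+1]$ or a single fixed coordinate of $\hat{\vec{\theta}}_i$ (this is precisely what group (iii) decides), so the payment $m_i p$ is a fixed linear function of $\hat{\vec{\theta}}_i$. Hence $u_{\vec{\theta}_{-i}}$ is linear on $C$, and $|\hyp| = {m \choose 2} + m^2(n-1) + m^2 = O(m^2 n) =: t$, establishing $(m,t)$-delineability; applying Theorem~\ref{thm:delineable} completes the argument.

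The main new point relative to the discriminatory auction is step (iii): one must check carefully that, once the allocation $m_i$ and the internal order of the reported bids are fixed, the market-clearing price is genuinely a single coordinate or a constant on each region --- in particular that ``the $(m_i+1)$-st largest of agent $i$'s bids'' refers to a well-defined fixed coordinate inside $C$ rather than a piecewise-defined object whose identity changes within $C$. The boundary and tie-breaking conventions play no role, since delineability constrains only the open connected components of $[0,1]^{2m}\setminus\hyp$.
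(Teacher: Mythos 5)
Your proposal is correct and follows exactly the route the paper takes: the paper's own proof simply says ``follows by the exact same reasoning as Theorem~\ref{thm:discriminatory_pdim},'' which in turn establishes $(m,t)$-delineability with $t = O(m^2 n)$ hyperplanes and invokes Theorem~\ref{thm:delineable}. The one genuinely new wrinkle in the uniform-price case is your step~(iii) --- checking that the market-clearing price $\max\{\hat{\theta}_i[m_i+1],\, c_{-i}[m-m_i+1]\}$ collapses to a single linear expression on each cell --- and you handle it correctly; it is worth noting that since each $c_{-i}[\nu]$ is already some fixed coordinate $\theta_{i'}[\mu']$ once $\vec{\theta}_{-i}$ is fixed, your group~(iii) hyperplanes are in fact a subset of group~(ii), so the count and the argument both go through, matching the paper's implicit reasoning.
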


\begin{proof}
This theorem follows by the exact same reasoning as Theorem~\ref{thm:discriminatory_pdim}.
\end{proof}

\end{document}